\documentclass[11pt]{article}
\usepackage[T1]{fontenc}
\usepackage{lmodern}
\usepackage{multicol}
\usepackage{caption}
\usepackage{float,needspace}
\usepackage{comment}
\usepackage{amsmath}
\usepackage{tabularx}
\usepackage{bm}
\usepackage[caption=false]{subfig}

\usepackage{amsmath,amsfonts,amsthm,amssymb,graphics,graphicx,mathtools,epstopdf,etoolbox,indentfirst,enumitem,mleftright,xcolor,booktabs,footnote,relsize,microtype}

\usepackage{tikz}
\usetikzlibrary{arrows.meta}

\makeatletter
\renewcommand{\paragraph}{%
  \@startsection{paragraph}{4}%
  {\z@}{1.25ex \@plus 1ex \@minus .2ex}{-1em}%
  {\normalfont\normalsize\bfseries}%
}
\makeatother

\let\originalleft\left
\let\originalright\right
\renewcommand{\left}{\mathopen{}\mathclose\bgroup\originalleft}
\renewcommand{\right}{\aftergroup\egroup\originalright}

\usepackage{hyperref} 
\usepackage{doi} 

\definecolor{darkmagenta}{rgb}{0.85, 0, 0.45}
\hypersetup{
breaklinks=true,
colorlinks = true,
citecolor= blue,
urlcolor= blue,
linkcolor = blue
}

\usepackage{pict2e,picture}
\makeatletter
\DeclareRobustCommand{\customarrow}[1][]{%
  \check@mathfonts
  \if\relax\detokenize{#1}\relax
    \settowidth{\dimen@}{$\m@th\rightarrow$}%
  \else
    \setlength{\dimen@}{#1}%
  \fi
  \sbox\z@{\usefont{U}{lasy}{m}{n}\symbol{41}}%
  \begin{picture}(\dimen@,\ht\z@)
  \roundcap
  \put(\dimexpr\dimen@-.7\wd\z@,0){\usebox\z@}
  \put(0,\fontdimen22\textfont2){\line(1,0){\dimen@}}
  \end{picture}%
}
\makeatother

\newcommand{\floor}[1]{\left\lfloor #1 \right\rfloor}
\newcommand{\ket}[1]{\left| #1 \right>}
\newcommand{\bra}[1]{\left< #1 \right|}
\newcommand{\ketbra}[2]{\ket{#1} \!\! \bra{#2}}
\newcommand{\pure}[1]{\ketbra{#1}{#1}}
\newcommand{\tr}[2][]{\operatorname{Tr}_{#1}\left[#2\right]} 
\newcommand{\Tr}{\operatorname{Tr}} 
\newcommand{\reg}{\mathrm{reg}} 
\newcommand{\binh}{h_{\mathrm{bin}}} 

\newcommand{\CPTP}[2]{\operatorname{CPTP}(#1 
,
#2)} 
\newcommand{\defvar}{=}
\newcommand{\dop}[1]{\operatorname{S}_{#1}} 
\newcommand{\eps}{\varepsilon}
\newcommand{\expval}{\mathop{\mathbb{E}}} 
\newcommand{\freq}{\operatorname{freq}}
\newcommand{\trnorm}[1]{\left\lVert#1\right\rVert_{\mathrm{tr}}}
\newcommand{\id}{\mathbb{I}} 
\newcommand{\idmap}{\operatorname{id}} 

\newcommand{\norm}[1]{\left\lVert#1\right\rVert} 
\newcommand{\pd}{P} 
\newcommand{\pert}{\mu} 
\newcommand{\Pos}{\operatorname{Pos}} 
\newcommand{\smf}[1]{\log({1}/{#1}^2)} 
\newcommand{\vanf}{\mathtt{van}} 
\newcommand{\suchthat}{\text{ s.t.}} 
\newcommand{\supp}{\operatorname{supp}} 
\newcommand{\fullsupp}{\mathrm{supp}} 

\newcommand{\Renyi}{R\'{e}nyi}
\newcommand{\mbf}[1]{\mathbf{#1}} 
\newcommand{\bsym}[1]{\boldsymbol{#1}} 
\newcommand{\Dargs}[2]{\left( #1 \middle\Vert #2\right)} 
\newcommand{\mapA}{\mathcal{A}}
\newcommand{\mapD}{\mathcal{D}}
\newcommand{\mapE}{\mathcal{E}}
\newcommand{\mapF}{\mathcal{F}}
\newcommand{\mapL}{\mathcal{L}}
\newcommand{\mapM}{\mathcal{M}}
\newcommand{\mapN}{\mathcal{N}}
\newcommand{\mapP}{\mathcal{P}}
\newcommand{\mapR}{\mathcal{R}}
\newcommand{\mapT}{\mathcal T}

\newcommand{\Dmax}{D_\mathrm{max}}

\DeclareMathOperator*{\mylimsup}{limsup}
\DeclareMathOperator*{\myliminf}{liminf}

\newcommand{\esmooth}{\overline{\eps}}
\newcommand{\realstate}{\rho^\mathrm{real}}
\newcommand{\idealstate}{\rho^\mathrm{ideal}}
\newcommand{\inM}{M} 
\newcommand{\inN}{N}
\newcommand{\inQ}{Q}
\newcommand{\allN}{\mbf{\inN}}
\newcommand{\allQ}{\mbf{\inQ}}
\newcommand{\outQ}{O}
\newcommand{\stabE}{R}

\newcommand{\commN}{\widehat{\inN}}
\newcommand{\commO}{\outQ}
\newcommand{\commC}{\CS}

\newcommand{\CS}{C}

\newcommand{\cS}{c}
\newcommand{\alphCS}{\mathcal C}

\newcommand{\outputset}{\mathcal{O}_{\mathrm{any}}}
\newcommand{\lse}{\operatorname{logsumexp}}
\newcommand{\ubndaf}[1][f]{R_{\alpha,\mapE,\mapF}(#1)} 
\newcommand{\alphastar}{\alpha} 

\newcommand{\ptest}{\gamma}
\newcommand{\gen}{\mathtt{gen}}
\newcommand{\test}{\mathtt{test}}
\newcommand{\testQ}{\varnothing}
\newcommand{\Acc}{\mathsf{Acc}}
\newcommand{\Bad}{\mathsf{Bad}}
\newcommand{\ancA}{M^{\prime\A}}
\newcommand{\ancB}{M^{\prime\B}}

\newcommand{\constval}{k}

\newcommand{\base}{b}

\DeclareRobustCommand{\rchi}{{\mathpalette\irchi\relax}}
\newcommand{\irchi}[2]{\raisebox{\depth}{$#1\chi$}} 

\newtheorem{theorem}{Theorem}[section]
\newtheorem{lemma}[theorem]{Lemma}
\newtheorem{corollary}[theorem]{Corollary}

\newtheorem{fact}[theorem]{Fact}
\newtheorem{postulates}[theorem]{Postulate Set}
\newtheorem{remark}[theorem]{Remark}
\newtheorem*{informal*}{Informal Statement}

\theoremstyle{definition} 
\newtheorem{definition}{Definition}[section]

\floatstyle{ruled}
\newfloat{prot}{tbp}{lop}
\floatname{prot}{Protocol}
\definecolor{revisionblue}{rgb}{0,0.25,0.65}

\newcommand{\A}{\mathrm{A}}
\newcommand{\B}{\mathrm{B}}

\newcommand{\memA}{M^\A}
\newcommand{\memB}{M^\B}
\newcommand{\idealA}{S^\A}
\newcommand{\idealB}{S^\B}
\newcommand{\junkA}{J^\A}
\newcommand{\junkB}{J^\B}

\newcommand{\sfP}{\mathsf{P}}
\newcommand{\sfQ}{\mathsf{Q}}
\newcommand{\sfV}{\mathsf{V}}
\newcommand{\game}{\mathsf{G}}
\newcommand{\strat}{\mathfrak{S}}
\newcommand{\ddP}{\mathfrak{P}_m}
\newcommand{\diP}{\mathfrak{P}'}
\newcommand{\bCS}{\mbf C}
\newcommand{\bcS}{\mbf c}
\newcommand{\boutQ}{\mbf O}
\newcommand{\bK}{\mbf K}
\newcommand{\bX}{\mbf X}
\newcommand{\bA}{\mbf A}
\newcommand{\bk}{\mbf k}
\newcommand{\bx}{\mbf x}
\newcommand{\ba}{\mbf a}

\newcommand{\bmapE}{\bsym{\mapE}}
\newcommand{\bmapF}{\bsym{\mapF}}
\newcommand{\Dsharp}{D^\#}
\newcommand{\Qsharp}{Q^\#}
\newcommand{\compexp}[1]{\operatorname{CompExp}^{\#}_{#1}(\bmapE,\bmapF;S_\Omega)}
\newcommand{\compexpsand}[1]{\operatorname{CompExp}_{#1}(\mapE^{\otimes n},\mapF^{\otimes n};S_\Omega)}
\newcommand{\epstol}{\eps_{\mathrm{tol}}}
\newcommand{\hon}{\mathrm{hon}}
\newcommand{\dhon}{\delta_{\hon}}

\newcommand{\bC}{\mbf C}

\newcommand{\bR}{\mbf R}
\newcommand{\bL}{\mbf L}
\newcommand{\bT}{\mbf T}

\newcommand{\bB}{\mbf B}
\newcommand{\bY}{\mbf Y}
\newcommand{\bmsg}{\mbf m}
\newcommand{\by}{\mbf y}
\newcommand{\bv}{\mbf v}
\newcommand{\br}{\mbf r}

\newcommand{\bc}{\mbf c}

\newcommand{\bz}{\mbf z}
\newcommand{\bZ}{\mbf Z}
\newcommand{\renD}{\mathbb{D}_\alpha}

\newcommand{\poly}{\mathrm{poly}}

\newcommand{\Dgeni}[1]{\Dsharp_2\left(\sigma^{{#1},\mapE_{#1}}_{|\gen}\middle\Vert\sigma^{{#1},\mapF_{#1}}_{|\gen}\right)}
\newcommand{\Dgen}{\Dsharp_2\left(\sigma^{\oplus,\mapE}_{|\gen}\middle\Vert\sigma^{\oplus,\mapF}_{|\gen}\right)}
\newcommand{\copper}{{estimator}} 
\newcommand{\vecD}{v^\mathrm{Div}}
\newcommand{\vecDc}[1]{\vecD(#1)}

\newcommand{\suppOmega}{\Omega_{\mathrm{supp}}}
\newcommand{\cmapM}{\overline{\mapM}}
\newcommand{\newmapE}{\widetilde{\mapE}}
\newcommand{\newmapF}{\widetilde{\mapF}}

\allowdisplaybreaks

\begin{document}

\title{Flexible rigidity: Compiling generic trusted-preparation protocols into sequential device-independent protocols}

\author{
Amir Arqand \\
\small Institute for Quantum Computing \\
\small University of Waterloo \\
\small \texttt{aarqand@uwaterloo.ca} \\
\and
Srijita Kundu \\
\small Quantum Computing Research Centre \\
\small Hon Hai (Foxconn) Research Institute \\
\small \texttt{srijita.kundu@foxconn.com.sg} \\
\and
Ernest Y.-Z. Tan \\
\small Department of Physics, \\
\small National University of Singapore \\
\small \texttt{phytyze@nus.edu.sg}
}
\date{}

\maketitle

\begin{abstract}
We give a generic compiler for turning a wide class of device-dependent protocols with trusted state preparation into sequential device-independent protocols. In order for our compiler to apply, the trusted-preparation protocol needs to satisfy certain minimal conditions: having an entanglement-based realization in the first place, and the entanglement-based realization corresponding to the optimal strategy of any non-local game that satisfies a fairly weak rigidity condition. We apply our generic compiler to give sequential device-independent protocols for uncloneable encryption, copy-protection of point functions, and blind randomness expansion.

Our technical contribution is a new class of rigidity theorems based on R\'{e}nyi divergences, with flexible cryptographic applications. They allow our compiler to achieve a constant \emph{rate} (i.e., the ratio of trusted states prepared to device uses), while tolerating a constant level of noise per device usage. This overcomes a fundamental scaling obstacle faced by previous rigidity theorems for trace distance or fidelity. For the generic sequential case, our proof is based on the sharp-divergence chain rule by Fawzi and Fawzi (2021), which suffices for a constant but fairly low rate. In the special case when the devices retain no memory, we improve the rates by deriving a novel subadditivity property for a modified channel sandwiched divergence, which may be of independent interest. Finally, as a prospect for future improvements, we also prove a chain rule for the iterated-mean divergences from Brown, Fawzi and Fawzi (2021).
\end{abstract}

\section{Introduction}\label{sec:intro}
\subsection{Background}
Device-independent (DI) cryptography aims to establish security from the observed behaviour of untrusted quantum devices, subject to assumptions about their allowed communication. A central idea used in many device-independent security proofs is that if the input-output distribution of a quantum device is found to win a non-local game with close to optimal probability, and the non-local game has a property called \emph{rigidity}, then the devices must be implementing states and measurements close to the optimal strategy for that game. Therefore, if the devices pass a test indicating that they can win the game with high probability, a security statement that holds for the ideal states and measurements for the game must hold for whatever is being implemented by the unknown device.

The basic form of rigidity suffices when we want to implement a protocol that only needs to use the devices once, corresponding to playing a single copy of the non-local game. But for most protocols, multiple uses of the devices (or the use of a larger device that plays many copies of a game in parallel) is required. A central difficulty in such a setting is that the devices may retain quantum memory and correlate their behaviour across an entire protocol.\footnote{Throughout the entirety of this work, when discussing ``device memory'' we mean only memory across the rounds of a single protocol instance. We do not claim security against devices that are reused for a new protocol instance while still retaining memory from the previous instance, which faces issues discussed in~\cite{BCK13}.} Consequently, a security statement for one use of a device does not immediately give a security statement for many uses over a protocol. A general method for making this transition is particularly valuable: it allows the analysis of individual rounds to be separated from the problem of controlling correlations across rounds.

\paragraph{Entropy accumulation.} Entropy accumulation provides such a method for various security proofs that can be based on conditional entropy. The entropy accumulation theorem of Dupuis, Fawzi and Renner~\cite{DFR20} bounds the smooth min-entropy generated by a sequential process in terms of single-round conditional entropies, under an appropriate Markov condition. Its applications to device-independent quantum key distribution (QKD) and randomness expansion yield modular security proofs against general attacks~\cite{ADF+18,ARV19, DF19}. The entropy accumulation framework has been refined in subsequent years: relaxing the Markov condition~\cite{MFSR24}, improving bounds by studying R\'enyi entropy directly~\cite{arx_AHT24}, and incorporating constraints on the marginal states entering the channels~\cite{arx_AT25}.

These theorems reduce a global entropy estimate to single-round quantities, but applying them to a cryptographic task still requires a reduction from its security criterion to a conditional entropy bound. This is sufficient for applications like QKD and randomness expansion, but there are other quantum cryptographic tasks that involve different security requirements. For example, uncloneable encryption \cite{BL20} asks whether two separated recipients can both recover a message after a quantum ciphertext has been split between them and the decryption key has been released. This security condition cannot be reduced to an entropy bound, due to involving the joint guessing probability of two spatially separated parties.  

\paragraph{Parallel repetition.} Existing DI protocols for general cryptographic tasks such as certified deletion \cite{KT23}, uncloneable encryption \cite{KT25}, and bounded-storage oblivious transfer \cite{BCJK24} have thus used parallel repetition arguments instead. Like in entropy accumulation-based approaches, their security proofs use self-testing results at a single-copy game level, but require parallel repetition theorems for non-standard forms of non-local games that capture the corresponding attacks, for the global analysis. Although the general parallel repetition approach remains similar, proving a custom parallel repetition theorem for each task is quite challenging. The proof techniques also influence the formulations of the tasks: the certified deletion construction in \cite{KT23} retains some permanently private randomness for its anchoring argument, while the uncloneable encryption construction in \cite{KT25} uses a key-release procedure with independent fresh randomness for the two recipients, yielding potentially different decryption keys.

\paragraph{Computational remote state preparation.} A different approach to replacing trusted state preparation was developed by Gheorghiu, Metger and Poremba~\cite{GMP22}. Their remote state preparation protocol allows a fully classical client to certify the preparation of a batch of Wiesner states by a quantum server, assuming the quantum hardness of learning with errors. This provides a modular way to replace trusted preparation in several cryptographic protocols, including uncloneable encryption and point-function copy protection, but does require the computational assumption. Moreover, their construction has an inverse-polynomial \emph{rate}, i.e., polynomially many device uses are required in order to prepare one trusted Wiesner state, and the amount of noise their protocols can tolerate is also inverse-polynomial in the number of device uses.

These developments motivate a general method for converting a trusted-preparation cryptographic protocol to a device-independent one, such that the security condition of the DI protocol may be directly related to that of the device-dependent (DD) one. 

\paragraph{Limitations of self-testing results.} We note that parallel rigidity or self-testing results \cite{Col17, McK17} do give a generic way to convert a security condition on trusted states to a security condition on untrusted states that sufficiently violate a Bell inequality, albeit one that is not very practically useful for cryptographic purposes. The idea for parallel self-testing results is that if some joint state and measurements win $(\omega^*(\game)-\epstol)n$ copies out of $n$ parallel copies of a game $\game$ (which has quantum value $\omega^*(\game)$) with high probability, then the states and measurements must be close (in trace distance or vector $\ell_2$ distance) to $n$ copies of the ideal state and measurements for $\game$. Theoretically, this means that the probability of any event on the real states in the protocol is close to the probability on these ideal states, as long as the real states pass the self-test with high probability.

However, there are challenges in trying to apply a parallel self-testing based approach to a cryptographic protocol. First, ``close'' in the context of these results means $\poly(n,\epstol)$ close. This means that the result is only non-trivial when $\epstol$ is at least polynomially small in $n$, and in many cases, even that is not sufficient. This is because the number of device uses often coincides with the security parameter $\lambda$ of a protocol, and in cryptographic applications, we want the probability of a ``bad'' event to be exponentially small in $\lambda$. This means the tolerance $\epstol$ of the protocol has to in fact be exponentially small in $n$.

But this tolerance parameter $\epstol$ determines how much noise we can tolerate from devices that are producing noisy versions of the intended states, but are behaving honestly otherwise. Suppose a non-local game $\game$ requires perfect EPR pairs for its optimal strategy, and a real device is preparing states that are $\delta$-noisy EPR pairs instead. The winning probability for a single copy of $\game$ is then going to be $\omega^*(\game)-\delta$, and it is only going to win, say, $(\omega^*(\game)-2\delta)n$ out of $n$ games with high probability. This means that if we don't want to abort all the time\footnote{DI protocols typically abort if the self-testing check does not pass.} in protocols using real-world devices, we need to set $\epstol=2\delta$, where $\delta$ is going to be a constant independent of $n$ (since real-world devices do not know or care how many times you use them in a cryptographic protocol).
This makes the existing parallel self-testing results unusable in such scenarios.

We emphasize an important point that the $\poly(n, \epstol)$ closeness we get from these results is not merely an artifact of their proof techniques, but a fundamental limitation of trace-distance-based measures in this context. 
Specifically, observe that the trace distance between $n$ IID $\delta$-noisy EPR pairs and $n$ perfect EPR pairs approaches $1$ ``exponentially fast'' (in $n$), in the sense of the quantum Chernoff bound~\cite{ACM+07,NS09}. Despite this, the probability of a ``bad'' event on $n$ $\delta$-noisy EPR pairs is not always different from the probability of the same event on noiseless EPR pairs by an additive amount near $1$, as we shall soon see. A similar point holds for fidelity-based measures such as purified or angular distance; in the subsequent discussions we shall mostly only mention trace distance, with the understanding that purified or angular distance have similar behaviour.

So the general method we are looking for to convert DD security to DI security should ideally tolerate constant noise while having a high rate, and in order to do so, it must incorporate self-testing in a way that does not simply involve the trace distance between the $n$-copy real and ideal states. We note that other prospects have been suggested based on the ``averaged'' single-copy distance~\cite{GSD22}, or by only keeping a few copies for later cryptographic use~\cite{BWB25} --- however, thus far these approaches still face challenges arising from issues similar to those described above.

\subsection{Our results}
\subsubsection{DD-to-DI compilers}
In this work, we give the first generic compiler from device-dependent protocols with trusted preparation, to sequential device-independent protocols, which have both constant rate, and tolerate a constant amount of noise in the honest case.

The type of DD protocol we consider is one where the honest party (Alice) samples some classical randomness, prepares $m$ copies of some trusted pure state corresponding to this, sends these to Bob (who is the potentially dishonest party), and then Alice and Bob run some kind of security experiment. We assume $\Bad$ is some bad event in the security experiment whose probability we are trying to minimize in a protocol design. This is a fairly generic template that many cryptographic protocols with one honest party fall into, as we shall see with our applications. We denote this type of protocol when Alice sends $m$ copies of the trusted state by $\ddP$. 

$\ddP$ needs to satisfy a few additional conditions in order for the compiler to work, which we state informally below.
\begin{postulates}[Informal version of Postulate Sets~\ref{post:DD} and \ref{post:selftest}]\label{post:informal}
~\\[-5mm]
\begin{enumerate}[label=\textup{(\roman*)},ref=\textup{(\roman*)}]
\item \label{post:dep-informal}The security of $\ddP$ must hold if the trusted states are acted upon by depolarizing noise. Specifically, for any depolarizing parameter $0< \mu < 1$, the probability of the bad event in the security experiment under this depolarizing noise, and under all allowed attacks by Bob, must satisfy
\[ \Pr_{\ddP}[\Bad] \leq q_{\mathrm{dep}}(\mu)^m\]
for some $q_{\mathrm{dep}}(\mu) < 1$.
\item\label{post:ent-informal} The ensemble of states used by Alice in $\ddP$ must have an entanglement-based realization.
\item\label{post:selftest-informal} The entanglement-based realization must correspond to the optimal strategy for a self-testable non-local game $\game$. In particular, the self-testing result for this $\game$ must give extraction isometries that depend only on the measurements in a given strategy for $\game$.
\end{enumerate}
\end{postulates}
We now explain the intuition behind these postulates. Postulate~\ref{post:informal}\ref{post:dep-informal} is not necessarily required for any DD-to-DI compiler, but is an artifact of our proof technique, for reasons we explain later. However, this condition is not actually hard to satisfy, and in fact it holds by default whenever the depolarizing noise channel is an allowed attack by Bob (and Alice's marginal is invariant under depolarizing noise). This is because, instead of Alice sending states that have been depolarized to Bob, Bob could himself apply the depolarizing noise channel to the states he receives from Alice, and then carry out the continuation of the security experiment. In such a case, for all $\mu$, we simply have $q_{\mathrm{dep}}(\mu)=q_{\mathrm{dep}}(0)$.

Postulates~\ref{post:informal}\ref{post:ent-informal} and \ref{post:informal}\ref{post:selftest-informal} on the other hand are necessary in some form, in order for a DI implementation of $\ddP$ to exist at all, since device-independence essentially requires self-testing of an entangled state. Postulate~\ref{post:selftest-informal} is a self-testing statement, though with a slightly unusual framing. Namely, in much of the existing literature, self-testing results say that there must exist local isometries that bring the state and measurements of a strategy that wins $\game$ with close-to-ideal probability, close to the ideal state and measurements for $\game$, tensored with a junk state. We require that the isometries only depend on the measurements in the strategy, though the junk state and closeness parameter can depend on the state in the strategy. But we observe that many self-testing results in the literature do actually have this stronger form, even if they are not explicitly stated to be so. Self-testing for the CHSH game is explicitly stated in this form in \cite{RUV13, RKB18}; it can be checked that self-testing results for some other games such as Magic Square also hold in this form, by examining their proofs, e.g. in \cite{WBM+16}.

We now state our main result for compiling protocols $\ddP$ of the above form into DI protocols.
\begin{theorem}[Sequential DD-to-DI conversion; informal version of Theorems~\ref{th:complete} and \ref{th:DI}]\label{th:DI-informal}
Let $\ddP$ be a DD protocol satisfying Postulate Set~\ref{post:informal}, with a non-local game $\game$. Then, there is a sequential DI implementation $\diP$ of $\ddP$ which first uses a sequential state preparation procedure using devices compatible with $\game$, which either accepts or aborts. If $\diP$ accepts, it then does the cryptographic continuation of $\ddP$ (on which the event $\Bad$ is defined). For every $k>0$, there exist $\mu = \Theta(k)$, $\gamma \in (1/2,1)$ with $\gamma\to1$ as $k\to0$, $\epstol \in (0, (1-\gamma)/2)$, such that $\diP$ uses the devices $n$ times, with $m=\lfloor(1-\gamma)n/2\rfloor$, and
\begin{itemize}
\item Honest devices that implement the ideal strategy for $\game$ with $\dhon$ noise, such that $\gamma\dhon<\epstol$, satisfy
\[ \Pr_{\diP}[\Acc] \geq 1-\exp(-\Omega(m)).\]
\item For any state and measurements implemented by the device,
\[ \Pr_{\diP}[\Acc\land\Bad] \leq (2^kq_{\mathrm{dep}}(\mu))^{m/2}.\]
\end{itemize}
If $\limsup_{\pert\to0}q_{\mathrm{dep}}(\pert)\leq q$ for some $0<q<1$, then for every $0<k<-\log q$ the constants can be chosen so that $2^kq_{\mathrm{dep}}(\pert)<1$, giving an exponentially small soundness error in $m$, just like $\ddP$.
\end{theorem}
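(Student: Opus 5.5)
Here is how I would prove Theorem~\ref{th:DI-informal}. Completeness and soundness are handled separately, and soundness is the substantive part.

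For completeness, the protocol $\diP$ works as follows. In each of the $n$ rounds it independently labels the round $\test$ with probability $\gamma$ and $\gen$ otherwise. In test rounds it plays $\game$ and records a win/loss bit. It accepts if the fraction of lost test rounds is at most roughly $\epstol$, and if at least $m=\lfloor(1-\gamma)n/2\rfloor$ generation rounds occur. The first $m$ generation rounds are then used as the trusted states for the continuation of $\ddP$. Honest devices with noise $\dhon$ are IID. Each round loses a test with probability $\gamma\dhon<\epstol$ and is a generation round with probability $1-\gamma>2m/n$. A Hoeffding/Chernoff bound on both counts then gives $\Pr[\Acc]\ge 1-\exp(-\Omega(n))=1-\exp(-\Omega(m))$, since $m=\Theta(n)$ for fixed $\gamma$.

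For soundness, I would follow the R\'enyi-divergence route advertised in the abstract. The aim is a change-of-measure inequality. Let $\rho$ be the real state on the accepted generation registers together with Bob's side, after applying the extraction isometries of Postulate~\ref{post:informal}\ref{post:selftest-informal}; because those isometries depend only on the measurements, they can be applied round by round. Let $\sigma$ be the ideal state obtained by passing $m$ copies of the ideal state through depolarizing noise $\pert$. Data processing for sandwiched R\'enyi divergence of order $\alpha$ (H\"older-type) gives
\[
\Pr_\rho[\Acc\land\Bad]\le \Pr_\sigma[\Bad]^{(\alpha-1)/\alpha}\,2^{\frac{\alpha-1}{\alpha}D_\alpha(\rho_{\land\Acc}\Vert\sigma)}.
\]
Take $\alpha=2$, or use $\Dsharp_2$ as an upper bound. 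Postulate~\ref{post:informal}\ref{post:dep-informal} gives $\Pr_\sigma[\Bad]\le q_{\mathrm{dep}}(\pert)^m$, since Bob's continuation attack is arbitrary. The exponent $1/2$ yields the $m/2$ in the statement, provided $D(\rho_{\land\Acc}\Vert\sigma)\le km$.

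The divergence bound $D\le km$ is where the work lies. I would obtain it from the sharp-divergence chain rule of Fawzi--Fawzi, which applies to sequential channels with memory. It reduces $\Dsharp_2$ between the $n$-round real process and the ideal process to a sum of single-round channel divergences, each restricted to the event set defined by accepted test statistics. The accept condition enters through a tradeoff-function (min-tradeoff style) term: a test-round penalty weighted by the observed loss fraction. The single-round ingredient is a \emph{R\'enyi rigidity} statement. It says that for any strategy winning $\game$ with probability $\omega^*(\game)-\eps$, the extracted generation-round state $\sigma_{|\gen}$ satisfies $\Dsharp_2(\cdot\Vert\text{$\pert$-depolarized ideal})\le f(\eps)$, with $f(0)\to0$ as $\pert$ is fixed. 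The point is that the depolarized reference has full support, so a divergence closeness of order $O(\eps/\pert)$ can be converted from the trace/fidelity closeness given by the self-test. This is why $\pert=\Theta(k)$ is chosen.

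Summing over rounds with the tradeoff weighting, and choosing $\gamma$ close to $1$ so that the test penalty dominates, gives total divergence at most $km$ once $\epstol$ is small. The final step is to rescale constants; the remark about $\limsup q_{\mathrm{dep}}$ follows immediately. I expect the main obstacle to be the single-round R\'enyi rigidity bound together with its compatibility with the chain-rule channel divergence, in particular controlling $\Dsharp_2$ uniformly over memory states entering each round.
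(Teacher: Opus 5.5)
\textbf{Genuine gap: the ideal process is never defined in a usable way.} Your overall architecture matches the paper's: the Fawzi--Fawzi chain rule for $\Dsharp_2$, a score-function tilt plus duality to impose acceptance, a single-round bound against a $\mu$-depolarized ideal, $\gamma\to1$ so that the KL penalty beats the trivial bound $\log(|S|^2/\mu)$ when the statistics are far from ideal, and order-$2$ probability transfer giving the exponent $m/2$.

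The problem is the two states you compare. Your $\rho$ is the real state ``after applying the extraction isometries round by round''. As stated this is not well defined. $V_i^\A\otimes V_i^\B$ acts on the round-$i$ memory, which later rounds keep processing, so the final real state has no separate extracted generation registers. You would also have to say how to return to the device memory, and you do not. Your $\sigma$, ``$m$ copies of the depolarized ideal state'', lives on different registers from the real output, which contains Bob's evolved memory and transcript, so $D(\rho_{\land\Acc}\Vert\sigma)$ is not meaningful. Two things are actually needed. For the chain rule, you need ideal channels $\mapF_i$ with exactly the input and output registers of $\mapE_i$. For the step ``$\Pr_\sigma[\Bad]\le q_{\mathrm{dep}}(\mu)^m$ since Bob's attack is arbitrary'', the ideal output must be literally producible by $m$ noisy trusted preparations plus maps acting only on Bob's side. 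Neither follows from what you wrote.

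\textbf{The missing idea: a replace-and-recover ideal channel.} In a generation round, $\mapF_i^\gen$ does the following:
\begin{enumerate}
\item applies Alice's measurement and the measurement-dependent isometries;
\item replaces only $S_i=A_iS_i^\A S_i^\B$ by $\omega^{\mu,x_i}_{S_i}$, keeping the \emph{actual} marginal on the reference and $J_i^\A J_i^\B$;
\item undoes the isometries with left-inverse recovery channels;
\item runs the real continuations $\mapL_i^{\A,x_i}\otimes\mapL_i^{\B,\gen}$.
\end{enumerate}
Test rounds are left unchanged.

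Keeping the actual junk marginal, rather than the input-dependent $\ket\tau$ from the self-test, is what makes $\mapF_i$ a state-independent channel. It also gives the domination $\mapE_i^\gen\le(|S|^2/\mu)\,\mapF_i^\gen$. And it removes the obstacle you flag: Lemma~\ref{lem:depol} bounds $\Dsharp_2(\rho_{SJ}\Vert\omega^\mu_S\otimes\rho_J)$ independently of $|J|$ via $\rho_{SJ}\le|S|\,\id_S\otimes\rho_J$, which needs the second argument to carry $\rho_J$ itself.

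The same construction drives Lemma~\ref{lem:DD}. There one applies a continuation map $\mapM$ that selects $E$ from the $U_i$ without checking acceptance, so that $\Bad$ is an effect on the unconditioned ideal state. Tracing out Alice's registers then leaves one copy of the noisy source per selected round and only Bob-local maps elsewhere.

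\textbf{Two smaller slips.} First, at zero deficit the single-round divergence is $\log a_\mu^{-1}=\Theta(\mu)$, not $0$; that baseline is the real reason $\mu=\Theta(k)$. Second, the honest probability of a lost test round is $\gamma(1-\omega^\ast+\dhon)$, not $\gamma\dhon$. Your acceptance threshold must therefore be centred at $\gamma(1-\omega^\ast)$; otherwise honest CHSH devices are always rejected.
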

We emphasize that in the above theorem, the parameters $\mu, \gamma, \epstol$ depend only on $k$, whose exponentiation provides the multiplicative factor in the base of the soundness probability, and not on $n$. However, as $k\to0$, the amount of honest noise $\dhon < \frac{1-\gamma}{2\gamma}$ that the protocol can tolerate also goes to $0$ (independently of $n$). This behaviour of $\dhon$ with $k$ is necessary: if the amount of noise is high enough that the devices only win $\game$ with the classical value $\omega(\game)$, then no self-testing is possible, and therefore security cannot hold. Therefore, the more secure we want the protocol to be, the less noise it can tolerate.

We also note that the rate of the protocol goes to $0$ as well when $k$ goes to $0$. $\diP$ makes $n$ uses of the device and continues with the protocol $\ddP$, which means it is producing $m$ usable copies of the trusted state in DD protocol (with the other device uses being consumed for self-testing). The rate of the protocol is therefore
\[ \frac{m}n = \frac{1-\gamma}{2}.\]
Since $1-\gamma\to 0$ as $k\to0$, the rate must also go to $0$. Unlike in the case of noise tolerance, this behaviour is not necessary. Indeed, there exist sequential DI protocols for QKD where the rate remains constant even as the security parameter goes to $0$ \cite{ADF+18,ARV19, DF19, arx_AHT24}.\footnote{Technically, the way \emph{rate} is defined in QKD is not quite analogous to how we use it here. In QKD, if we extract $m$ bits that are $\lambda$-close to being uniform and unknown to the adversary, from $n$ uses of the devices, then the rate is $m/n$ and the security parameter is $\lambda$. QKD protocols typically give us $m = \Omega(n)-O(\log(1/\lambda))$, which means the rate is independent of the security parameter as long as we pick $\lambda=2^{-O(n)}$, which is typically the case.}

We have an alternative result where the rate does stay constant as $k$ goes to $0$, but we currently need some extra assumptions in this case, specifically limiting the memory of the devices between uses.
\begin{postulates}[Informal version of Postulate Set~\ref{post:no-memory}] \label{post:memory-informal}
The devices act on an initial state consisting of $n$ registers that can be arbitrarily entangled, and additionally purified to some arbitrary external register. In each round, the devices act solely on one of these registers, and all registers generated as a result of that round are not supplied as ``memory'' into the device actions in any other round. 
\end{postulates}

\begin{remark}\label{remark:no-memory}
We highlight that our no-memory condition in this work does \emph{not} assume the device behaviour is independent across rounds. Rather, we do allow the devices to act on states that are arbitrarily entangled across the rounds, but we do not allow the devices to pass memory from round to round via post-measurement states. Abstractly, we model this by declaring that the channels act via a tensor-product structure rather than sequential composition; this is the standard model used in e.g.~device-dependent QKD, and which we depict in Fig.~\ref{fig:tensor-product-channels} below.
\end{remark}

We have an improved DD-to-DI conversion theorem when the DI implementation satisfies the above postulate.
\begin{theorem}[Memoryless DD-to-DI conversion; informal version of Theorem~\ref{th:scalingdemo}]\label{th:memory-informal}
Let $\ddP$ be a DD protocol satisfying Postulate Set~\ref{post:informal}, with a non-local game $\game$, and let $\diP$ be its sequential implementation as in Theorem~\ref{th:DI-informal}. If the implementation of $\diP$ satisfies Postulate~\ref{post:memory-informal}, then, for every $k>0$, and every $\ptest \in (0,1)$, there exist $\mu = \Theta(k), \epstol \in (0, (1-\gamma)/2)$, $\base\in (0,1)$ with $\base\to1$ as $k\to0$, such that $\diP$ uses the devices $n$ times, with $m=\lfloor(1-\gamma)n/2\rfloor$, and
\begin{itemize}
\item Honest devices that implement the ideal strategy for $\game$ with $\dhon$ noise, such that $\gamma\dhon<\epstol$, satisfy
\[ \Pr_{\diP}[\Acc] \geq 1-\exp(-\Omega(m)).\]
\item For any state and measurements implemented by the device,
\[ \Pr_{\diP}[\Acc\land\Bad] \leq (2^kq_{\mathrm{dep}}(\mu))^m + b^m.\]
\end{itemize}
\end{theorem}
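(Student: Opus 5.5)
The plan is to reuse the protocol structure and the completeness argument of Theorem~\ref{th:DI-informal} unchanged, and to rebuild only the soundness bound using the tensor-product structure of Postulate~\ref{post:memory-informal}.

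\textbf{Completeness.} This does not depend on memory at all. Honest devices implement IID $\dhon$-noisy copies of the ideal strategy. Each test round therefore fails independently with probability at most $\dhon$, and the expected failure fraction is at most $\gamma\dhon<\epstol$. A Chernoff--Hoeffding bound then gives $\Pr[\Acc]\ge 1-\exp(-\Omega(n))$. Since $m=\Theta(n)$, this is $1-\exp(-\Omega(m))$.

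\textbf{Soundness: divergence bound via subadditivity.} The aim is to compare the real joint state on the $m$ generation rounds with an ideal reference. The reference is the ideal entanglement-based realization passed through depolarizing noise of strength $\mu$. The steps are as follows.

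\begin{itemize}
\item Apply the self-testing postulate round by round. Because the extraction isometries depend only on the measurements, they can be fixed as maps on each round's register before the state is specified. This gives a channel $\mapE_i$ describing the real device in round $i$ (composed with extraction) and an ideal channel $\mapF_i$ (ideal strategy followed by $\mu$-depolarization).
\item Bound a single-round R\'enyi-type channel divergence between $\mapE_i$ and $\mapF_i$, restricted to generation rounds, in terms of the round-$i$ winning probability. The point of the depolarization is that it makes $\mapF_i$ full-rank, so the max/sandwiched divergence is finite. The rigidity statement then yields a bound that is $\le k$ whenever the winning deficiency is small, with $\mu=\Theta(k)$.
\item Since the channels act in tensor product on an arbitrarily entangled input, the sequential chain rule is not needed. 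Instead, use the paper's subadditivity property of the modified channel sandwiched divergence: the $n$-round divergence is at most the sum of the per-round values. This is where the memoryless assumption enters.
\end{itemize}

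\textbf{Soundness: conditioning on acceptance and splitting.} Conditioning on $\Acc$ restricts attention to strategies whose empirical test score is at least $\omega^*(\game)-\epstol$. I would handle this by splitting on the average per-round divergence parameter.

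\begin{itemize}
\item \emph{Good case.} The average winning deficiency is small, so the total divergence is at most $km$. The data-processing inequality for the R\'enyi divergence, applied to the two-outcome measurement $\{\Bad,\neg\Bad\}$ with $\alpha\to\infty$, or a suitable $\alpha$ with a H\"older-type step, gives
\[
\Pr[\Acc\land\Bad]\le 2^{km}\Pr_{\mathrm{ideal}}[\Bad].
\]
The ideal probability is at most $q_{\mathrm{dep}}(\mu)^m$ by Postulate~\ref{post:informal}\ref{post:dep-informal}, so this case contributes $(2^kq_{\mathrm{dep}}(\mu))^m$. Because there is no chain-rule loss, no square root appears in the exponent, unlike the $m/2$ of Theorem~\ref{th:DI-informal}.
\item \emph{Bad case.} The average deficiency exceeds the threshold. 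The probability that random test rounds (chosen with probability $\ptest$) still show a pass fraction above $\omega^*-\epstol$ is exponentially small. This follows from a Chernoff-type argument that is valid without IID structure, either by a martingale argument or via R\'enyi/sandwiched entropy of the test outcomes. It contributes $b^m$ with $b<1$. As $k\to0$ the gap between the threshold and $\epstol$ shrinks, so $b\to1$.
\end{itemize}

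\textbf{Main obstacle.} I expect the main difficulty to be combining these two cases rigorously rather than via a union bound over per-round strategies. The per-round channels are fixed, but the entangled input makes individual rounds' deficiencies random. I would try to incorporate the test outcome directly into the divergence: compare the real state restricted to $\Acc$ against a reweighted ideal using a $Q_\alpha$-type bound, and optimize $\alpha$. The term $b^m$ would then emerge from the Chernoff exponent.
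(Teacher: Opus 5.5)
Your completeness argument is fine, but the soundness argument has a genuine gap. It sits exactly at the point you flag as the ``main obstacle'', and that point is the core of the proof, not a technicality.

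First, the subadditivity you invoke is not what is available. Theorem~\ref{th:almostadd} bounds $D^{\downarrow}_\alpha$ of $\mapM\circ(\mapE\otimes\idmap_{\stabE})^{\otimes n}$ against $\mapM\circ(\mapF\otimes\idmap_{\stabE})^{\otimes n}$ by $n\,D_\alpha\Dargs{\mapE\otimes\idmap_{\stabE}}{\mapF\otimes\idmap_{\stabE}}$. The left side carries a \emph{minimization} over the ideal input and a final global measurement. The right side is a supremum over all single-round inputs. It is not a sum of per-round values evaluated at the actual entangled input; that would be a chain rule for the sandwiched divergence, which is precisely what is missing. Because the supremum includes high-deficit inputs, it only yields the trivial $\log(|S|^2/\pert)$ per round.

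Consequently your split by ``average per-round deficiency'' has nothing to act on. The real state is not a mixture of a low-deficit part and a high-deficit part, and there is no event on which the divergence is $\le km$. A Chernoff or martingale tail for the ``bad case'' cannot be glued to a divergence bound for the ``good case''. The transfer step also does not give what you claim. At $\alpha\to\infty$ you have no control of $\Dmax$. The H\"older-type step at fixed $\alpha$ only gives $\Pr[\Acc\land\Bad]\le(2^kq_{\mathrm{dep}}(\pert))^{m(\alpha-1)/\alpha}$, not $2^{km}q_{\mathrm{dep}}(\pert)^m$.

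The missing idea is to build the test statistics into the divergence \emph{before} applying subadditivity. The near/far split is then done on single-round trial states, where the deficit is well defined.
\begin{itemize}
\item In Lemma~\ref{lemma:fbound}, for arbitrary $f$ on $\alphCS$, the paper appends registers $Z$ via a read-and-prepare channel $\mapD$ with $D_\alpha\Dargs{\mapD[\rho]_{Z|\cS}}{\id_Z/|Z|}=K-f(\cS)$, and $\id_Z/|Z|$ on the ideal side. It applies Theorem~\ref{th:almostadd} to $\newmapE=\mapD\circ\mapE$ and $\newmapF=\mapA\circ\mapF$ with the measurement $\mapP\otimes\cmapM$. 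It then strips off $Z$ using the Markov structure after conditioning on $\Omega$, at a cost of $\frac{\alpha}{\alpha-1}\log(1/p_\Omega)$ plus $n\sup_{\sfQ\in S_\Omega}f\cdot\sfQ$. This is the ``reweighted ideal'' you gesture at, realized with CPTP maps so that $D^\downarrow_\alpha$ applies.
\item Optimizing over $f$ by Sion's theorem (Lemma~\ref{lemma:dual_sand}) gives $\compexpsand{\alpha}$. Its penalty $\frac{1}{\alpha-1}D\Dargs{\sfQ}{\sfP}$ replaces your Chernoff argument.
\item In Corollary~\ref{corr:practicaltest}, trial states near $S_\Acc$ are handled by Lemma~\ref{lem:round}. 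Far ones are handled by Pinsker, with $\alpha-1$ small enough that the penalty beats $\log(|S|^2/\pert)$. This is why $\alpha$ must be taken close to $1$, not $\infty$, and why $\ptest$ can be fixed in advance.
\item Since $\alpha$ is near $1$, the transfer to $\Bad$ goes through Fact~\ref{th:renyi-properties}\ref{th:smoothing} with $\esmooth=\base^m$ and $\base=2^{-(\alpha-1)k/8}$, together with a split on $p_\Acc\ge\base^m$ versus $p_\Acc<\base^m$.
\end{itemize}
So the additive $\base^m$ is the smoothing and low-acceptance cost, not a Chernoff tail. Similarly, the $m/2$ in Theorem~\ref{th:DI} comes from using $D_2$ with Lemma~\ref{lem:probability}(i), not from chain-rule loss; cf.\ Remark~\ref{rem:smooth}.

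Finally, the minimization may produce an ideal input $\nu^0$ different from $\rho^0$. The ideal channel must therefore be the replacement channel, for which the DD reduction of Lemma~\ref{lem:DD} holds uniformly in the input.
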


\begin{remark}
Aside from the Postulate Sets~\ref{post:informal} and \ref{post:memory-informal} required for the DD protocol, the DI protocols corresponding to Theorems~\ref{th:DI-informal} and \ref{th:memory-informal} need to satisfy some conditions at the implementation level. First, the protocols have to be sequential: which means the inputs and outputs for the $i$-th use of the devices (called the $i$-th round) have to be obtained before the $(i+1)$-th round begins. Second, Alice and Bob's devices have to be shielded from each other during the implementation of each round, so their devices only act on their respective local quantum registers --- this is generally a requirement for DI protocols, and more broadly it is usually also needed for the security of the ``base'' DD protocol as well.
\end{remark}

\subsubsection{Comparison theorems}\label{subsec:compare-result}
Our DD-to-DI conversion theorems are based on abstract theorems for comparing the output of two sets of channels that act on a starting state in sequence, in the style of entropy accumulation theorems. Specifically, we bound divergence measures between a ``real state'', which is the product of applying a channel sequence $\mapE_1, \ldots, \mapE_n$ to a starting state, and an ``ideal state'' which is the product of applying another channel sequence $\mapF_1,\ldots,\mapF_n$ on a starting state. It was observed in works such as~\cite{arx_MD23,arx_MD24} that such divergence bounds can have powerful operational implications, and this served as the inspiration for our work, though our proof methods and applications are rather different.

We have two such abstract comparison theorems, one which is used to obtain Theorem~\ref{th:DI-informal} and one used to obtain Theorem~\ref{th:memory-informal}. The channel structures for both theorems are depicted in Figures~\ref{fig:sequential-channels} and \ref{fig:tensor-product-channels} respectively.

\paragraph{Sequential comparison theorem.} Figure~\ref{fig:sequential-channels} depicts the general sequential channel structure. There is a common starting state $\rho^0_{\inM_1R}$ on which both the real and ideal sequence of channels act; this state need not be pure, and no channel in the sequence acts on $R$. Both channels $\mapE_i$ and $\mapF_i$ act on the round-$i$ state in the register $M_i$, and produce two ``protected'' registers $C_i$ and $O_i$, and a continuing ``memory'' $M_{i+1}$ for the next round. The register $C_i$ is classical and all $C_i$-s lie in the same set $\mathcal{C}$, while $O_i$ could be anything. Neither of these protected registers is touched by any channel $\mapE_j, \mapF_j$ for $j > i$. The real and ideal state are defined as
\[ \realstate = \left((\mapE_n\circ\ldots\circ\mapE_1)\otimes\idmap_R\right)[\rho^0_{\inM_1R}], \qquad \idealstate = \left((\mapF_n\circ\ldots\circ\mapF_1)\otimes\idmap_R\right)[\rho^0_{\inM_1R}].\]

At the end of the entire sequence, we define a global event $\Omega$ on the frequency distribution observed in the $C_1, \ldots, C_n$ registers (i.e.~the number of occurrences of each symbol, divided by $n$). $S_\Omega$ is used to denote the set of probability distributions on $\mathcal{C}$, which would give frequency distributions consistent with the event $\Omega$. We condition the real state on the event $\Omega$.

\begin{figure}[!ht]
\centering
\begin{tikzpicture}[
  x=1cm, y=1cm, font=\footnotesize,
  channel/.style={draw, fill=black!8, line width=.5pt, minimum width=.82cm, minimum height=.62cm, inner sep=2pt},
  state/.style={draw, rounded corners=2pt, line width=.5pt, fill=white, inner sep=3pt},
  wire/.style={-{Stealth[length=1.4mm,width=1mm]}, line width=.5pt},
  register/.style={inner sep=1.5pt},
  panel/.style={anchor=west, font=\footnotesize\bfseries}
]
\begin{scope}
\node[panel] at (-1.2,2.1) {(a) Real process};
\node[state, minimum width=1.6cm, minimum height=1.35cm] (input) at (-.4,-.45) {$\rho^0_{\inM_1R}$};
\node[channel] (round1) at (2.1,0) {$\mapE_{1}$};
\node[channel] (round2) at (4.3,0) {$\mapE_{2}$};
\node[channel] (roundn) at (8.4,0) {$\mapE_{n}$};
\node[draw,dashed,line width=.5pt,minimum width=7.7cm,minimum height=.45cm,inner sep=2pt] (condition) at (5.25,1) {condition on $\Omega$};
\foreach \i/\x in {1/2.1,2/4.3,n/8.4} {
  \draw[wire] ([xshift=-2.2mm]round\i.north) -- (\x-.22,.775);
  \draw[wire] ([xshift=2.2mm]round\i.north) -- (\x+.22,.775);
  \draw[wire] (\x-.22,1.225) -- (\x-.22,1.55) node[register,above] {$\CS_{\i}$};
  \draw[wire] (\x+.22,1.225) -- (\x+.22,1.55) node[register,above] {$\outQ_{\i}$};
}
\draw[wire] (.4,0) -- node[register,above] {$\inM_1$} (round1.west);
\draw[wire] (round1.east) -- node[register,above] {$\inM_2$} (round2.west);
\draw[wire] (round2.east) -- node[register,above] {$\inM_3$} (5.65,0);
\node at (6.25,0) {$\cdots$};
\draw[wire] (6.85,0) -- node[register,above] {$\inM_n$} (roundn.west);
\draw[wire] (roundn.east) -- node[register,above] {$\inM_{n+1}$} (10.3,0);
\draw[wire] (.4,-.9) -- node[register,above,pos=.5] {$R$} (10.3,-.9);
\end{scope}
\begin{scope}[yshift=-3.1cm]
\node[panel] at (-1.2,1.3) {(b) Ideal process};
\node[state, minimum width=1.6cm, minimum height=1.35cm] (input) at (-.4,-.45) {$\rho^0_{\inM_1R}$};
\node[channel] (round1) at (2.1,0) {$\mapF_{1}$};
\node[channel] (round2) at (4.3,0) {$\mapF_{2}$};
\node[channel] (roundn) at (8.4,0) {$\mapF_{n}$};
\foreach \i in {1,2,n} {
  \draw[wire] ([xshift=-2.2mm]round\i.north) -- ++(0,.55) node[register,above] {$\CS_{\i}$};
  \draw[wire] ([xshift=2.2mm]round\i.north) -- ++(0,.55) node[register,above] {$\outQ_{\i}$};
}
\draw[wire] (.4,0) -- node[register,above] {$\inM_1$} (round1.west);
\draw[wire] (round1.east) -- node[register,above] {$\inM_2$} (round2.west);
\draw[wire] (round2.east) -- node[register,above] {$\inM_3$} (5.65,0);
\node at (6.25,0) {$\cdots$};
\draw[wire] (6.85,0) -- node[register,above] {$\inM_n$} (roundn.west);
\draw[wire] (roundn.east) -- node[register,above] {$\inM_{n+1}$} (10.3,0);
\draw[wire] (.4,-.9) -- node[register,above,pos=.5] {$R$} (10.3,-.9);
\end{scope}
\end{tikzpicture}
\caption{Sequential channel structure. The real and ideal processes start from the same state $\rho^0_{\inM_1R}$, which need not be pure. All channels act as the identity on $R$. Each round passes $\inM_{i+1}$ to the next round and produces the protected registers $\CS_i\outQ_i$, of which $\CS_i$ is classical. The dashed box denotes conditioning the real output on $\Omega$, which is defined on $\CS_1, \ldots, \CS_n$ only.
}
\label{fig:sequential-channels}
\end{figure}
Our abstract comparison theorem, for the general sequential case, which is stated below, compares the sandwiched Renyi divergence $D_\alpha$ of order $\alpha$, for $\alpha>1$, between the real and ideal states. The full set of postulates required on the channels in order for the Theorem to apply are stated in Postulate Set~\ref{posts:seq}. In the subsequent discussion, we will use the somewhat informal terminology that an operator $A$ is ``support-contained in'' an operator $B$ if $\supp(A) \subseteq \supp(B)$.
\begin{theorem}[Sequential comparison theorem; informal version of Theorem~\ref{th:main}]\label{th:main-informal}
Suppose for the sequence of channels described above, $\mapE_i[\sigma]$ is support-contained in $\mapF_i[\sigma]$ for any state $\sigma$ and all $i\in[n]$, and some other mild conditions are satisfied.
Then, if $p_\Omega=\Pr_{\realstate}[\Omega]>0$, and $\alpha >1$, we have
\[ D_\alpha(\realstate_{|\Omega}\Vert\idealstate) \leq
\sup_{S_\Omega}\left[\sum_{i=1}^nd^{\#}_\alpha(\mapE_i, \mapF_i)\right] +\frac\alpha{\alpha-1}\log\frac1{p_\Omega}.\]
Here $d^{\#}_\alpha(\mapE_i, \mapF_i)$ is a divergence-based quantity depending only on the $i$-th round channels $\mapE_i, \mapF_i$, and a distribution over the set $\mathcal{C}$; a supremum over such distributions that are contained in $S_\Omega$ is taken on the outside.
\end{theorem}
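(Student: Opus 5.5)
The proof has three steps: remove the conditioning, reduce the sandwiched divergence to the sharp divergence, and then split the sharp divergence round by round with the Fawzi--Fawzi chain rule, using a classical frequency register to pass the event $\Omega$ into the per-round terms.

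\textbf{Step 1: removing the conditioning.} Since $\Omega$ is an event on the classical registers $\CS_1\ldots\CS_n$, we have $\realstate \geq p_\Omega\, \realstate_{|\Omega}$ in operator order. The standard conditioning bound for sandwiched R\'enyi divergences of order $\alpha>1$ then gives
\[
D_\alpha(\realstate_{|\Omega}\Vert\idealstate) \leq D_\alpha(\realstate_{\wedge\Omega}\Vert\idealstate) + \frac{\alpha}{\alpha-1}\log\frac1{p_\Omega},
\]
where $\realstate_{\wedge\Omega}$ is the subnormalized state with the projector onto $\Omega$ applied. This produces exactly the penalty term $\frac{\alpha}{\alpha-1}\log\frac1{p_\Omega}$ in the statement.

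\textbf{Step 2: passing to the sharp divergence.} Next I would use $D_\alpha \leq \Dsharp_\alpha$ to move to the sharp R\'enyi divergence of Fawzi and Fawzi. The task is then to bound $\Dsharp_\alpha(\realstate_{\wedge\Omega}\Vert\idealstate)$ by the supremum over $S_\Omega$ of $\sum_i d^\#_\alpha(\mapE_i,\mapF_i)$.

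\textbf{Step 3: the chain rule.} The key tool is the Fawzi--Fawzi chain rule for $\Dsharp_\alpha$:
\[
\Dsharp_\alpha(\mapE[\rho]\Vert\mapF[\sigma]) \leq \Dsharp_\alpha(\rho\Vert\sigma) + \Dsharp_\alpha(\mapE,\mapF),
\]
where the last term is the regularized channel divergence. Applied round by round with common input $\rho^0$ (so the initial divergence is zero), this gives a sum of channel divergences. The support-containment hypothesis $\supp\mapE_i[\sigma]\subseteq\supp\mapF_i[\sigma]$ is what keeps every term finite.

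\textbf{Main obstacle: injecting the event $\Omega$.} A plain chain rule ignores $\Omega$, and the worst-case channel divergence would be far too large. I would handle this by adjoining to the real channels a classical frequency register tracking the empirical counts of $\CS_1\ldots\CS_i$, updated each round. The final projection onto $\Omega$ then acts only on this register. The natural way to exploit it is a ``tilting'' argument: define modified channels $\mapE_i'$ that reweight the outcome $c$ of $\CS_i$ by $2^{\lambda(c)}$ for a suitable function $\lambda$. The weight $\prod_i 2^{\lambda(c_i)} = 2^{n\langle\lambda,\freq\rangle}$ is then controlled on $\Omega$ by $\sup_{q\in S_\Omega} n\langle\lambda,q\rangle$. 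The per-round quantity $d^\#_\alpha$ would be defined as a channel divergence with this affine-in-distribution correction, optimized over $\lambda$ by a minimax (Sion-type) exchange.

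Concretely, for any $\lambda$ I would bound
\[
\Dsharp_\alpha(\realstate_{\wedge\Omega}\Vert\idealstate) \leq \sum_i \Dsharp_\alpha(\mapE'_i,\mapF_i) + \sup_{q\in S_\Omega} \sum_i \langle\lambda,q\rangle.
\]
The remaining delicate points are checking that the tilted maps are still valid for the chain rule (they are completely positive, and subnormalization can be absorbed), and carrying out the minimax swap that moves the supremum over $S_\Omega$ outside the sum. I expect that swap, together with the convexity/concavity properties it needs, to be the hardest step; the ``other mild conditions'' in the statement are presumably what make it possible.
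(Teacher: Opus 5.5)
Your plan is essentially the paper's proof of Theorem~\ref{th:main}. You tilt the $\CS_i=c$ blocks by an arbitrary function of $c$, chain Fact~\ref{th:FF} over the rounds from the common input $\rho^0$, and use the classical direct-sum decomposition to discard rejected blocks so the accumulated tilt is bounded by $n\sup_{\sfQ\in S_\Omega}\sum_c\sfQ(c)f(c)$. You then eliminate $f$ by a variational (KL-penalty) rewriting plus Sion's minimax theorem, which the paper carries out in Lemma~\ref{lemma:dual_sharp}, using the domination constant $\Lambda$ for continuity.

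The slips are minor:
\begin{itemize}
\item Your displayed bound holds if you tilt $\mapE_i$ by $2^{-(\alpha-1)\lambda(c)/\alpha}$, or equivalently $\mapF_i$ by $2^{\lambda(c)}$ as the paper does (which also keeps $\mapE_i$ trace-preserving). It does not hold if you tilt $\mapE_i$ by $2^{\lambda(c)}$.
\item The channel term in the chain rule is the single-copy stabilized channel sharp-divergence, not a regularized quantity, which could be strictly smaller.
\item No frequency register is needed, since $\Omega$ is already a function of the protected registers $\CS_i$.
\end{itemize}
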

This theorem upper bounds the $D_\alpha$ between the real state conditioned on $\Omega$ and the ideal state in terms of the sum of single-round divergences, as well as the real probability of $\Omega$. This is analogous to entropy accumulation theorems, which work for similar channel structures, and \emph{lower bound} the conditional entropy of a conditioned real state after a sequential process, in terms of the sum of single-round conditional entropies, and the same conditioning probability. The single-round quantities $d^{\#}_\alpha(\mapE_i, \mapF_i)$ are defined in terms of a quantity called the sharp-divergence, which was introduced in~\cite{FF21sharp}.

\paragraph{Flexible rigidity.} For the application to Theorem~\ref{th:DI-informal}, the real sequence of channels $\mapE_1, \ldots, \mapE_n$ is taken to be the real channels implemented by the devices in the protocol. The ideal channels $\mapF_1, \ldots, \mapF_n$ on the other hand are taken to be channels that replace part of the states in these devices with (a depolarized version of) the ideal state from the game $\game$. The depolarization condition is needed in order to make sure that the condition that $\mapE_i[\sigma]$ is support-contained in $\mapF_i[\sigma]$ is satisfied, since depolarized states have full support, and this is ultimately why we need Postulate~\ref{post:informal}\ref{post:dep-informal}.

The event $\Omega$ that we use in the application (which corresponds to the $\Acc$ event in Theorem~\ref{th:DI-informal}) is that on a sufficiently high fraction of the rounds in which we play $\game$, the winning condition of $\game$ is satisfied. Thus this theorem gives us an upper bound on the $D_\alpha$ between the real states in the protocol conditioned on them empirically winning $\game$ with high probability, and the state of the protocol if it had been carried out with $m$ copies of the depolarized ideal state for $\game$.

This is precisely the kind of self-testing result for multi-copy states we lacked before. The reason is that unlike trace-distance-based measures, divergences are actually good at handling small deviations in each of the $n$ copies. If the device state in each generation (i.e., non-test) round is $\delta$-far from the ideal state, the $D_\alpha$ between the conditioned real and ideal states would be $\poly(\delta)m$, just like in the case of trace distance. However, this bound on $D_\alpha$ is much more useful in terms of bounding an event probability on the conditioned real state in terms of the same probability on the ideal state. Specifically for $\alpha=2$,
\[ \Pr_{\realstate_{|\Omega}}[\Bad] \leq \sqrt{2^{D_2(\realstate_{|\Omega}\Vert\idealstate)}\Pr_{\idealstate}[\Bad]}.\]
For other values of $\alpha$, we can get a roughly similar relation between event probabilities w.r.t.~the two states by routing through a quantity known as the smoothed max-divergence. As long as $\Pr_{\idealstate}[\Bad]$ is exponentially small in $m$, and the divergence can be made ${km}$ for an arbitrarily small constant $k$, this lets us get the exponentially small upper bounds on $\Pr[\Acc\land\Bad]$ in Theorems~\ref{th:DI-informal} and \ref{th:memory-informal}. As we saw in the statements of those theorems, $k$ can be taken to be arbitrarily small, as long as we're willing to work with a small (but still independent-of-$n$) rate and error tolerance. Therefore, with the real and ideal channel identifications described above, Theorem~\ref{th:main-informal} in the sequential case, and Theorem~\ref{th:tensor-informal} that we state next for the tensor-product case, are the rigidity results with flexible cryptographic applications that we were looking for.

\paragraph{Tensor-product comparison theorem.} Figure~\ref{fig:tensor-product-channels} depicts the tensor-product channel structure, for which we prove our next comparison theorem. In this case, there is a (not necessarily pure) starting state $\rho^0_{\allN\bR}$ on registers $\allN=N_1\ldots N_n, \bR=R_1\ldots R_n$, and channels $\mapE, \mapF$ that act on the $N_i$ registers. The figure depicts map $\mapE_i$ and $\mapF_i$ acting on register $N_i$, but these channels are all isomorphic copies of the same two channels. $\rho^0_{\allN\bR}$ may be entangled across the $\allN$ registers, but the channels are tensor-product across these. Each channel $\mapE_i$ and $\mapF_i$ produces the equivalent of the protected registers $C_iO_i$ in the sequential setting, but no continuing memory registers for the other channels to act on.

The real final state $\realstate$ we consider is still
\[ \realstate = (\mapE^{\otimes n}\otimes\idmap_{\bR})[\rho^0_{\allN\bR}],\]
but the ideal state is instead of the form
\[ \idealstate = (\mapF^{\otimes n}\otimes\idmap_\bR)[\nu^0_{\allN\bR}],\] 
where the initial state $\nu^0_{\allN\bR}$ can depend on $\rho^0_{\allN\bR}$ and all the channels involved --- essentially, the comparison theorem will show that there \emph{exists} some such $\nu^0_{\allN\bR}$ such that the divergence between the real and ideal states can be bounded. As in the sequential setup, we will have an event $\Omega$ here defined on the frequency distribution on the $C_i$ registers, and here we will also consider a final measurement channel $\mapM$ on both $\realstate$ and $\idealstate$.

\begin{figure}[!ht]
\centering
\begin{tikzpicture}[
  x=1cm, y=1cm, font=\footnotesize,
  channel/.style={draw, fill=black!8, line width=.5pt, minimum width=.82cm, minimum height=.62cm, inner sep=2pt},
  state/.style={draw, rounded corners=2pt, line width=.5pt, fill=white, inner sep=3pt},
  wire/.style={-{Stealth[length=1.4mm,width=1mm]}, line width=.5pt},
  register/.style={inner sep=1.5pt},
  panel/.style={anchor=west, font=\footnotesize\bfseries}
]
\begin{scope}
\node[panel] at (-1.2,1.4) {(a) Real process};
\node[state, minimum width=1.8cm, minimum height=2.6cm] (input) at (-.3,-.35) {$\rho^0_{\allN\bR}$};
\node[channel, minimum height=2.6cm] (measurement) at (8.2,-.35) {$\mapM$};
\node[draw,dashed,line width=.5pt,align=center,minimum width=1.35cm,minimum height=2.6cm,inner sep=2pt] (condition) at (6,-.35) {condition\\on $\Omega$};
\node[channel] (round1) at (3,.55) {$\mapE_{1}$};
\draw[wire] (.6,.55) -- node[register,above] {$\inN_{1}$} (round1.west);
\draw[wire] (round1.east) -- node[register,above] {$\CS_{1}\outQ_{1}$} (5.325,.55);
\draw[wire] (6.675,.55) -- (7.79,.55);
\node[channel] (roundn) at (3,-.65) {$\mapE_{n}$};
\draw[wire] (.6,-.65) -- node[register,above] {$\inN_{n}$} (roundn.west);
\draw[wire] (roundn.east) -- node[register,above] {$\CS_{n}\outQ_{n}$} (5.325,-.65);
\draw[wire] (6.675,-.65) -- (7.79,-.65);
\node at (3,-.05) {$\vdots$};
\draw[wire] (.6,-1.35) -- node[register,above,pos=.38] {$\bR$} (5.325,-1.35);
\draw[wire] (6.675,-1.35) -- (7.79,-1.35);
\draw[wire] (measurement.east) -- node[register,above] {$F$} (10.3,-.35);
\end{scope}
\begin{scope}[yshift=-3.65cm]
\node[panel] at (-1.2,1.4) {(b) Ideal process};
\node[state, minimum width=1.8cm, minimum height=2.6cm] (input) at (-.3,-.35) {$\nu^0_{\allN\bR}$};
\node[channel, minimum height=2.6cm] (measurement) at (8.2,-.35) {$\mapM$};
\node[channel] (round1) at (3,.55) {$\mapF_{1}$};
\draw[wire] (.6,.55) -- node[register,above] {$\inN_{1}$} (round1.west);
\draw[wire] (round1.east) -- node[register,above] {$\CS_{1}\outQ_{1}$} (7.79,.55);
\node[channel] (roundn) at (3,-.65) {$\mapF_{n}$};
\draw[wire] (.6,-.65) -- node[register,above] {$\inN_{n}$} (roundn.west);
\draw[wire] (roundn.east) -- node[register,above] {$\CS_{n}\outQ_{n}$} (7.79,-.65);
\node at (3,-.05) {$\vdots$};
\draw[wire] (.6,-1.35) -- node[register,above,pos=.38] {$\bR$} (7.79,-1.35);
\draw[wire] (measurement.east) -- node[register,above] {$F$} (10.3,-.35);
\end{scope}
\end{tikzpicture}
\caption{Tensor-product channel structure. The round channels are isomorphic copies of $\mapE$ or $\mapF$, acting on distinct input registers $\inN_i$. There are different input states $\rho^0_{\allN\bR}$ and $\nu^0_{\allN\bR}$; the latter depends on the channels, $\rho^0_{\allN\bR}$ and $\mapM$. The dashed box denotes conditioning the real output on $\Omega$, before applying the same measurement channel $\mapM$ in both processes.}
\label{fig:tensor-product-channels}
\end{figure}

We now state our comparison theorem for tensor-product channels. The full set of conditions required for this theorem to apply are given in Postulate Set~\ref{posts:overall}.
\begin{theorem}[Tensor-product comparison theorem; informal version of Theorem~\ref{th:robustselftest}]\label{th:tensor-informal}
Suppose the channels $\mapE$ and $\mapF$ described are such that $\mapE[\sigma]$ is support-contained in $\mapF[\sigma]$ for all $\sigma$, and some other mild conditions are satisfied. Then, for every initial state $\rho^0_{\allN\bR}$, every measurement channel $\mapM$, there exists a state $\nu^0_{\allN\bR}$ such that, with $\idealstate$ defined using $\nu^0_{\allN\bR}$, and $p_\Omega=\Pr_{\realstate}[\Omega] > 0$, and $\alpha>1$,
\[ D_\alpha(\mapM[\realstate_{|\Omega}]\Vert\mapM[\idealstate]) \leq n\cdot\sup_{S_\Omega}d_\alpha(\mapE,\mapF) + \frac\alpha{\alpha-1}\log\frac1{p_\Omega}.\]
Here $d_\alpha(\mapE,\mapF)$ is a divergence-based quantity depending on a distribution over the set $\mathcal{C}$; a supremum over such distributions that are contained in $S_\Omega$ is taken on the outside. 
\end{theorem}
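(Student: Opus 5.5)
The plan is to split the bound into a conditioning cost and an additive single-round cost. The single-round cost will be controlled through a modified channel sandwiched divergence that is optimized over the ideal input, and the $n$-round statement will follow from a subadditivity (tensorization) property of that quantity.

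\emph{Step 1: removing the conditioning.} Write $\realstate=p_\Omega\realstate_{|\Omega}+(1-p_\Omega)\realstate_{|\overline\Omega}$, with $\Omega$ defined on the classical registers $\CS_1\ldots\CS_n$. Then $\realstate_{|\Omega}\leq\realstate/p_\Omega$, and the standard conditioning lemma for sandwiched R\'enyi divergences gives
\[
D_\alpha\big(\mapM[\realstate_{|\Omega}]\big\Vert\mapM[\idealstate]\big)\leq D_\alpha\big(\mapM[\realstate_{\wedge\Omega}]\big\Vert\mapM[\idealstate]\big)+\tfrac{\alpha}{\alpha-1}\log\tfrac1{p_\Omega},
\]
where $\realstate_{\wedge\Omega}$ is the subnormalized state. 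This accounts for the $\frac{\alpha}{\alpha-1}\log\frac1{p_\Omega}$ term.

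\emph{Step 2: bringing in $S_\Omega$.} To produce the supremum over $S_\Omega$, I would use the usual entropy-accumulation trick and attach a tradeoff function $f:\alphCS\to\mathbb R$ to the $\CS_i$ registers. Concretely, replace $\mapF$ by the unnormalized channel $\newmapF$ that multiplies the $\CS=c$ branch by $2^{-f(c)(\alpha-1)/\alpha}$, or an equivalent rescaling. On the event $\Omega$ the product $\prod_i 2^{f(c_i)}$ is controlled by $n\sup_{q\in S_\Omega}\sum_c q(c)f(c)$, so
\[
D_\alpha\big(\mapM[\realstate_{\wedge\Omega}]\big\Vert\mapM[\idealstate]\big)\leq D_\alpha\big(\mapM[\realstate]\big\Vert\mapM\circ\newmapF^{\otimes n}[\nu^0]\big)+n\sup_{S_\Omega}\textstyle\sum_c q(c)f(c).
\]
Optimizing over $f$ is what will turn $d_\alpha(\mapE,\mapF)$ into a distribution-dependent quantity. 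This step needs the $\CS_i$ to be classical and readable from both the real and ideal outputs, and $\mapM$ should act compatibly with them. I would either fold this into the ``mild conditions'' or first reduce to $\mapM$ commuting with the measurement of the $\CS_i$.

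\emph{Step 3: the subadditivity claim.} Define the modified channel quantity
\[
d_\alpha(\mapE,\newmapF)=\sup_{\rho}\inf_{\nu}D_\alpha\big((\mapE\otimes\idmap)[\rho]\big\Vert(\newmapF\otimes\idmap)[\nu]\big),
\]
with the infimum over inputs sharing $\rho$'s reference marginal. The core claim is that for every $\rho^0_{\allN\bR}$ and every $\mapM$,
\[
\inf_{\nu^0}D_\alpha\big(\mapM\circ\mapE^{\otimes n}[\rho^0]\big\Vert\mapM\circ\newmapF^{\otimes n}[\nu^0]\big)\leq n\,d_\alpha(\mapE,\newmapF).
\]
The minimizing $\nu^0$ is the state asserted to exist in the theorem. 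I would first use data processing to drop $\mapM$. Support containment, $\mapE[\sigma]$ supported in $\mapF[\sigma]$, keeps everything finite.

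\emph{Main obstacle: proving the subadditivity.} Because $\rho^0$ may be entangled across the $\inN_i$, one cannot simply choose $\nu^0$ to be a product state. My first attempt would write $Q_\alpha=2^{(\alpha-1)D_\alpha}$ through its variational (H\"older-dual) formula. In that form, $\inf_\nu$ of $Q_\alpha$ becomes an expression resembling a completely bounded $1\to\alpha$ norm of a map built from $\mapE$ and $\newmapF^{-1/2\cdot}$-type weights. Sion's minimax theorem, applicable by joint convexity and concavity in the appropriate variables, would then swap $\sup$ and $\inf$. After that, the multiplicativity of completely bounded $1\to\alpha$ norms under tensor products (Devetak--Junge--King--Ruskai) would deliver the factor $n$ in the exponent, with the cross-round entanglement of $\rho^0$ absorbed into the reference system. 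If a clean norm identification fails, the fallback is to peel off one round at a time: write $\rho^0$ in the reference-extended form, apply a one-round bound with the other rounds treated as part of the reference system, and induct. Even then, the crux remains showing that the optimal one-round ideal input can be chosen consistently across rounds, which is exactly where the minimax and tensor-multiplicativity argument is needed.
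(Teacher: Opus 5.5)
The genuine gap is your Step~3. Steps~1--2 match the paper's framing: a conditioning penalty, a tradeoff function $f$ on the $\CS_i$, and then a duality over $f$. The paper implements $f$ slightly differently. It uses an auxiliary classical register $Z$ and a read-and-prepare channel so that all maps stay CPTP. It also replaces $\mapM$ by a channel that first copies $\bCS$, so the final map is still a measurement.

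\textbf{Dropping $\mapM$.} You propose to remove $\mapM$ by data processing, but the final measurement is the one ingredient the paper cannot do without. For $D^\downarrow_\alpha$ (sup over the real input, inf over the ideal input), no inequality $D^\downarrow_\alpha\le D^{\reg,\downarrow}_\alpha$ is available for general channels. The paper obtains it only for measurement channels. There, the commuting-case optimality conditions of~\cite{arx_RT24} show that the infimum over $\nu$ tensorizes (Lemmas~\ref{lemma:classicalinf}--\ref{lemma:measadd}).

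\textbf{Your target is too strong.} With a sup--inf $d_\alpha$, the bound $n\,d_\alpha$ asks for subadditivity of $D^\downarrow_\alpha$ itself. The paper proves nothing that strong. Its bound ends at $n\,D_\alpha(\mapE\otimes\idmap_\stabE\Vert\mapF\otimes\idmap_\stabE)$, the ordinary channel divergence with the \emph{same} input in both arguments. That common input is also what makes the two outputs share the distribution $\sfP$ on $\CS$, which the later duality over $f$ relies on.

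\textbf{The proposed tools do not apply as stated.} Devetak--Junge--King--Ruskai multiplicativity controls optimizations over a reference marginal of the form $\id\otimes\sigma$. It does not cover an infimum over the image of an arbitrary channel $\newmapF^{\otimes n}$. The Sion swap between $\sup_\rho$ and $\inf_\nu$ also lacks the required structure. For $\alpha>1$, $Q_\alpha(\mapE[\rho]\Vert\newmapF[\nu])$ is jointly convex, so it is convex rather than concave in the maximized variable $\rho$. Purified-input concavity would help, but it needs the same input in both arguments. The round-by-round fallback fails because sandwiched divergences have no channel chain rule of the needed form; this is exactly why the sequential case uses $\Dsharp_\alpha$.

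\textbf{The missing idea is regularization plus de~Finetti.}
\begin{enumerate}
\item Keep the measurement, which gives $D^\downarrow_\alpha\le D^{\reg,\downarrow}_\alpha$.
\item Then drop the measurement by data processing.
\item The regularization of the $n$-round channel is a subsequence of the single-round one, so it is bounded by $n$ times the single-round regularized quantity.
\item Bound that quantity by symmetrizing the input. Permutation-invariant inputs suffice even with the infimum, by~\eqref{eq:alltermsequal}.
\item Apply the postselection bound $\tau\geq\rho/\operatorname{poly}(m)$ and choose $\nu=\tau$. The resulting $\frac{\alpha}{\alpha-1}\log\operatorname{poly}(m)$ penalty vanishes after dividing by $m$.
\end{enumerate}
The ideal input $\nu^0$ then arises only as the minimizer. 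No per-round consistent choice of ideal input is ever needed.
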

Unlike Theorem~\ref{th:main-informal}, the single-round quantities $d_\alpha$ in the above theorem are defined in terms of sandwiched divergence. Unlike sharp-divergences, sandwiched divergences are better behaved in the limit $\alpha\to1$, which is the main property that allows us to not have to take $\gamma\to1$ when we want $k\to0$, when we apply Theorem~\ref{th:tensor-informal} to get Theorem~\ref{th:memory-informal}. This is akin to how analysis is done in e.g.~\cite{DFR20, arx_AHT24} to get good asymptotic key rates for QKD from entropy accumulation theorems, while using a small fraction of rounds for self-testing.

We further comment on how Theorem~\ref{th:tensor-informal} can be interpreted as a flexible rigidity theorem like Theorem~\ref{th:main-informal}, in order to get Theorem~\ref{th:memory-informal}. Firstly, the formal version of Theorem~\ref{th:memory-informal} actually allows distinct channels in each round, but this can be modelled by the single pair of channels in Theorem~\ref{th:tensor-informal}; see Remark~\ref{remark:IIDchanns} for further details. Secondly, the ideal state in Theorem~\ref{th:memory-informal} is different from the ideal state in Theorem~\ref{th:main-informal}, since it does not use the original state $\rho^0_{\allN\bR}$. But recall that we took the ideal channel in the rigidity application to be a channel that replaces some of the registers of its input state with the ideal state for $\game$, and we only needed this replacement property in our argument. The fact that the ideal channel was acting on the same initial state as the real channel was not really important. Finally, the fact that Theorem~\ref{th:tensor-informal} has a measurement channel at the end is also fine, since we ultimately want to relate the probabilities of $\Bad$ under $\realstate_{|\Omega}$ and $\idealstate$ anyway; therefore, we can use Theorem~\ref{th:tensor-informal} in the memoryless rigidity application in the same way as Theorem~\ref{th:main-informal}.

\subsubsection{Applications}
In order to show the applicability of our DD-to-DI conversion theorems, we construct DI protocols for three cryptographic tasks: uncloneable encryption, copy-protection of multi-bit point functions, and blind randomness expansion. The analysis of all these protocols is done in the general sequential setting, applying Theorem~\ref{th:DI-informal}, though it is also possible to do the analysis in the memoryless case using Theorem~\ref{th:memory-informal}, in order to get better rates.

For all three applications, we give sequential protocols, which have a sequential state preparation phase in the style of the protocol from Theorem~\ref{th:DI-informal}, followed by a cryptographic continuation corresponding to the protocol in question. The sequential state preparation phase necessarily requires Alice to send inputs and get outputs from Bob round by round, which means the number of rounds of interaction is linear in the security parameter. This is a limitation of our sequential approach over the parallel repetition-based approach to DI cryptography~\cite{KT23, KT25, BCJK24}, which require only constantly many rounds, and can handle the devices getting all the inputs and producing all the outputs at once.

\paragraph{Uncloneable encryption.} In uncloneable encryption (UE), introduced in~\cite{BL20}, Alice, who holds a secret key, sends a quantum encryption of a message to Bob using the key. Bob may later split the encrypted ciphertext between two decoders, and the idea is that even if both decoders then receive the secret key, both of them should not be able to recover the message from it. There are two notions of security in UE: search security (also known as simply uncloneability), and decision security (also known as uncloneable-indistinguishability). It has been known how to achieve search security for UE since~\cite{BL20}; decision security was only shown recently in a number of works, of which we use~\cite{CLX26}.

Our result for uncloneable encryption is stated below.
\begin{theorem}[Combination of Theorems~\ref{th:UE} and \ref{cor:decision}]\label{th:UE-informal}
There exist sequential DI protocols for UE of arbitrary-length messages with search security, and for one-bit messages with decision security.
\end{theorem}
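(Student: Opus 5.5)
The plan is to obtain Theorem~\ref{th:UE-informal} by applying the sequential compiler of Theorem~\ref{th:DI-informal} to known device-dependent uncloneable encryption schemes built on Wiesner (BB84) states. Throughout, the DD protocol $\ddP$ is taken to be the standard conjugate-coding scheme. For search security this is the scheme of~\cite{BL20}: the key is a basis string $\theta$ together with a one-time pad, the ciphertext is $m$ BB84 states $H^{\theta}\ket{x}$, and the message is masked by an extractor/hash of $x$. For one-bit decision security it is the construction of~\cite{CLX26}. The event $\Bad$ is, respectively, that both decoders recover the message, or that both decoders guess the bit with advantage. The work then reduces to checking the three items of Postulate Set~\ref{post:informal} and converting the resulting bound into the UE security notions.

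First, for item~\ref{post:ent-informal}, BB84 states have the usual entanglement-based realization. Alice holds halves of EPR pairs and measures in basis $\theta_i$, which steers Bob's half to $H^{\theta_i}\ket{x_i}$ with uniform $x_i$. For item~\ref{post:selftest-informal}, I will use a game whose optimal strategy is EPR pairs with Alice measuring in $Z/X$. The cleanest choice is CHSH with Alice given an extra generation input, or Magic Square, which self-tests EPR pairs with Pauli $Z,X$ on Alice's side. The self-testing statement with measurement-only isometries is taken from~\cite{RUV13,RKB18}, as noted after Postulate Set~\ref{post:informal}. Item~\ref{post:dep-informal} holds because depolarizing noise on Bob's half is a channel Bob could apply himself, and Alice's marginal is maximally mixed and hence invariant. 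Therefore $q_{\mathrm{dep}}(\mu)=q_{\mathrm{dep}}(0)$. We need $q_{\mathrm{dep}}(0)<1$ exponentially, i.e.\ $\Pr[\Bad]\le q^{m}$ for some $q<1$ (possibly plus a negligible additive term, which we absorb). For search security this is the monogamy-of-entanglement game bound of Tomamichel et al., $(1/2+1/(2\sqrt2))^{m}$, composed with the message encoding as in~\cite{BL20}. Here the key release of $\theta$ corresponds to Alice announcing her measurement bases.

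Next, Theorem~\ref{th:DI-informal} is invoked with $0<k<-\log q$. This gives honest-case completeness $1-\exp(-\Omega(m))$ at constant noise and soundness $\Pr[\Acc\land\Bad]\le(2^kq)^{m/2}$, which is exponentially small. For search security, arbitrary-length messages are handled by encoding $\ell$-bit messages with the appropriate extractor so that the guessing bound remains $2^{-\ell}+$ exponentially small, following~\cite{BL20}.

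The main obstacle is decision security. Here the DD guarantee is an advantage bound of the form $\Pr[\Bad]\le 1/2+\mathrm{negl}$ rather than a small probability, and multiplying by $2^{km}$ would destroy it. I will first try to recast the~\cite{CLX26} argument so that the relevant quantity is a small-probability event, namely the excess over the trivial $1/2$. One way is to use the fact that their reduction goes through a search-type monogamy bound, which can be compiled directly, and then rerun their search-to-decision step on the DI side. If that fails, I will apply the $D_2$ inequality $\Pr_{\realstate_{|\Omega}}[\Bad]\le\sqrt{2^{D_2}\Pr_{\idealstate}[\Bad]}$ only to the exponentially small component, and handle the $1/2$ baseline by a symmetrization over the message bit.
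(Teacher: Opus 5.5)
Your overall architecture matches the paper's. You compile the Broadbent--Lord Wiesner-state scheme with Theorem~\ref{th:DI} using CHSH self-testing, where depolarizing noise is an allowed attack so $q_{\mathrm{dep}}(\pert)=\omega^\ast$. You then get one-bit decision security not by compiling a decision scheme, but by compiling the search scheme and rerunning the simultaneous Goldreich--Levin reduction of \cite{CLX26} on the DI side. However, two steps of the search part fail as written.

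\emph{The message encoding.} The monogamy-game bound $(\omega^\ast)^m$ controls the probability that both decoders output the \emph{entire} measured string. It is therefore the search bound only when an $m$-bit message is one-time-padded with the raw string, $\bmsg\oplus\ba_E$; arbitrary length then comes from choosing $n$. Masking an $\ell$-bit message with an extractor or hash of that string is not covered. The hashed variant in \cite{BL20} relies on a random oracle, and bounding \emph{simultaneous} guessing of extracted bits in the plain model is nontrivial: it is essentially what \cite{CLX26} achieves for a single bit, with a fourth-root loss. Moreover, an additive term cannot just be absorbed. Even a DD bound $2^{-\ell}+q^m$ would not survive compilation, because the transfer $\Pr[\Acc\land\Bad]\le 2^{km/2}\sqrt{\Pr_{\mathrm{ideal}}[\Bad]}$ turns a $2^{-\ell}$ floor into roughly $2^{(km-\ell)/2}$.

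\emph{Correctness under noise.} At constant honest noise $\dhon$, Bob's raw $m$ bits are almost surely not all correct. A correct scheme must therefore publish an error-correction syndrome of $\ba_E$ of length $s$ linear in $m$. This costs a factor $2^s$ in the DD bound, since decoders lacking the syndrome can guess it. The result is $2^{s/2}(2^k\omega^\ast)^{m/2}$, which requires $s/m+k<-\log\omega^\ast$. Your acceptance-completeness bound alone says nothing about decryption. Separately, CHSH needs no extra generation input: Alice's optimal CHSH observables are already $Z$ and $X$, and Bob simply stores on generation rounds. Magic Square rows would prepare correlated two-qubit bases rather than the per-qubit Wiesner ensemble.

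\emph{Decision security.} Your first option is the right one, but you leave the abort implicit. The DI search guarantee bounds $\Pr[\Acc\land\Bad]$, whereas the Goldreich--Levin theorem concerns a game without abort. Fix the preparation and message-choice operations and condition on $\Acc$. Any post-acceptance splitting and decoding strategy composed with these operations is an allowed DI search attack, so the conditional search success is at most $\delta_{\mathrm{UE}}/p_\Acc$. With key $(\bk,\br,z)$ and public bit $e=d\oplus z\oplus(\br\cdot\bmsg)$, this gives $\Pr[\Bad\mid\Acc]\le\frac12+\frac12(\delta_{\mathrm{UE}}/p_\Acc)^{1/4}$. That is an advantage of at most $\frac12p_\Acc^{3/4}\delta_{\mathrm{UE}}^{1/4}$ over the baseline $p_\Acc/2$. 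Discard your fallback: the divergence bound only transfers event probabilities one-sidedly and multiplicatively, so no splitting of events or symmetrization lets it certify an additive advantage over $1/2$.
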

The search-secure protocol is obtained by using the construction of Theorem~\ref{th:DI-informal} as a blackbox on the Wiesner-state-based protocol from~\cite{BL20}, along with self-testing for the CHSH game. The decision-secure protocol is obtained by applying the construction of~\cite{CLX26} to the search-secure protocol.

This is the first DI protocol for standard uncloneable encryption with information-theoretic security. \cite{KT25} gave a DI protocol for a variant known as uncloneable encryption with variable keys, although as mentioned before, their construction did work in the parallel setting.

\paragraph{Copy-protection of point functions.}  In quantum copy protection, which was introduced in \cite{Aar09}, Alice sends a quantum state $\sigma_f$ as a copy-protected program for a function $f$ to Bob, who should be able to evaluate the $f$ on any input with high probability by doing a measurement on $\sigma_f$. The security requirement is that if Bob splits the copy-protected $\sigma_f$ between two decoders, both of them should not be able to evaluate the function on their respective registers with probability higher than a trivial value, when inputs come from a specified distribution.

We follow \cite{GMP22} in getting copy protection for a specific class of functions (multi-bit point functions) under a specific distribution, using UE as a blackbox. Our result for copy protection, using the search UE protocol from Theorem~\ref{th:UE-informal}, is stated below.
\begin{theorem}[Informal version of Theorem~\ref{th:copy}]
There exists a sequential DI protocol for quantum copy protection of multi-bit point functions.
\end{theorem}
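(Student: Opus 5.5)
The plan is to follow \cite{GMP22} and build a copy-protection scheme for multi-bit point functions by using an uncloneable encryption scheme as a black box. For the UE scheme we take the sequential DI search-secure protocol from Theorem~\ref{th:UE-informal}.

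For a point function $f_{y,z}$, which outputs $z$ on input $y$ and $0$ otherwise, Alice treats the point $y$ as the UE key, or derives the key from $y$ through an appropriate encoding. She then runs the sequential DI state-preparation phase of the UE protocol with Bob's devices, so that the ciphertext encrypting $z$, together with some authentication or check string, stays stored in Bob's devices. The copy-protected program is whatever Bob's devices now hold. To evaluate on input $x$, Bob uses $x$ as the decryption key and runs UE decryption. If the check string is recovered he outputs the decrypted $z$; otherwise he outputs $0$.

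Correctness for honest devices comes from the completeness part of Theorem~\ref{th:DI-informal}. With probability $1-\exp(-\Omega(m))$ the state preparation accepts, and the honest decryption then succeeds on $x=y$. On $x\neq y$ the check fails except with negligible probability.

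For security we take the challenge distribution from \cite{GMP22}, in which each decoder independently receives either the point $y$ or a uniformly random input. We then reduce to search security of UE in three steps.
\begin{enumerate}
\item Any pair of pirates succeeding on copy protection yields two decoders that, given the key $y$, both recover $z$.
\item The random-input branch contributes only the trivial success probability, because for an unrelated input the correct answer is $0$.
\item The gap above the trivial value is therefore bounded by the UE cloning probability.
\end{enumerate}

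Since the UE security from Theorem~\ref{th:UE-informal} is stated for $\Pr[\Acc\land\Bad]$, we will carry the acceptance event through the reduction. The copy-protection security statement then also holds conditioned on acceptance, or jointly with it.

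We expect the main obstacle to be the mismatch of key distributions. In the UE game the key is uniform and revealed identically to both decoders. In copy protection the two decoders may receive different inputs, and one of them may receive an input unrelated to $y$. We would first try the hybrid argument of \cite{GMP22}, which guesses which decoder holds the correct point and loses only a constant factor. If that fails, we would use the fact that a decoder with a random input can be simulated without the key. We must also check that the sequential DI setting does not break this reduction: the pirate is the device together with Bob, and the reduction only post-processes its outputs, so the bound of Theorem~\ref{th:DI-informal} transfers.
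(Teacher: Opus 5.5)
\textbf{There is a genuine gap in your wrong-key detection.}

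With the Wiesner-state UE behind Theorem~\ref{th:UE-informal}, decrypting with an input $x\neq y$ only randomizes the recovered key-stream bits at the positions where $x$ and $y$ differ. Every other plaintext bit comes out exactly right. If $x$ is Hamming-close to $y$, the published error-correction syndrome may even repair the randomized positions. So an encrypted check string, or any redundancy placed inside the plaintext, still passes for such $x$ whenever the randomized positions miss it. Evaluation then outputs a (possibly corrupted) $z$ instead of $0^m$. Your claim that ``on $x\neq y$ the check fails except with negligible probability'' is therefore false for the direct instantiation.

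Your fallback of ``deriving the key from $y$ through an appropriate encoding'' creates a different problem. It makes the UE key non-uniform, and the check string makes the plaintext non-uniform. The search bound of Theorem~\ref{th:UE} covers neither case directly. You would need explicit loss factors and would have to show they can be balanced against $(2^k\omega^\ast)^{m/2}$.

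The paper avoids all of this. It keeps the key $k=y$ and the plaintext $v$ uniform, and moves the check outside the ciphertext. Alice publishes a uniform binary $\tau\times m$ matrix $R$ together with $Ry$. Evaluation first tests whether $Rx=Ry$ and returns $0^m$ if the test fails. For every fixed $x\neq y$ the collision probability is exactly $2^{-\tau}$, regardless of Hamming distance. In the security proof, the $\tau$ leaked bits are paid for by the classical guessing argument, which costs a factor $2^\tau$. Choosing $\tau=\lfloor((-\log\omega^\ast-k)m-s)/4\rfloor$ keeps both the collision error and the piracy advantage exponentially small.

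\textbf{Your reduction steps 2--3 are too vague to give the $\tfrac12$ baseline.} A single decoder on an unmarked input succeeds trivially with probability $1$ by outputting $0^m$, so the real difficulty is the mixed challenge patterns. A ``guess which decoder holds the point'' hybrid is neither needed nor what makes this work.

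The paper splits into four challenge patterns, MM, MU, UM and UU (decoder~1's challenge first). It then argues as follows:
\begin{itemize}
\item UU success is bounded by $p_{\mathsf{Acc}}$.
\item MM success is $p_{1,2}$, the accepting probability that both decoders output $v$. Identifying MM with the UE search game, plus the guessing argument, gives $p_{1,2}\le 2^\tau\delta_{\mathrm{UE}}$.
\item The decoders cannot signal, so MU success is at most $p_1$, the accepting probability in MM that decoder~1 outputs $v$. Likewise UM success is at most $p_2$.
\item Inclusion--exclusion gives $p_1+p_2\le p_{\mathsf{Acc}}+p_{1,2}$.
\end{itemize}
Combining these, $\Pr[\mathsf{Acc}\land\mathsf{Bad}_{\mathrm{copy}}]\le\tfrac14(p_{\mathsf{Acc}}+p_1+p_2+p_{1,2})\le p_{\mathsf{Acc}}/2+2^{\tau-1}\delta_{\mathrm{UE}}$, with no constant factor lost.
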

This is the first DI protocol for copy protection of any class of functions with information-theoretic security. Unlike \cite{GMP22}, we do not use a random oracle in our construction and instead give a hash function in the plain. It is possible to get security with this modification, under the standard security definition for copy protection.

\paragraph{Blind randomness expansion.} In blind randomness expansion, which was introduced in \cite{MS17}, Alice uses an untrusted device to generate randomness that remains secret from Bob, who supplies the device and participates in the protocol.  Metger, Fawzi, Sutter, and Renner~\cite[Section~5.1]{MFSR24} gave the first blind randomness expansion protocol secure against general sequential attacks, using generalized entropy accumulation. Their CHSH-based construction produces linearly many output bits from a polylogarithmic seed. Our construction uses the Magic Square game instead, and consumes linear input randomness. We do not obtain an improvement in randomness consumption or security guarantees over \cite{MFSR24}; the purpose of our construction is only to show an application for our divergence-based approach. As in~\cite[Definition~11 of the full version]{JK22}, our security definition for randomness expansion needs only to hold conditional on the input randomness, without requiring the output alone to be longer than the seed.

Unlike UE, the security of our DI blind randomness expansion protocol is not obtained from a DD protocol in a blackbox way by applying Theorem~\ref{th:DI-informal}: the concept of blind randomness expansion may not be very sensible in a DD setting. Instead our security proof is done by using a property of the ideal Magic Square states shown in~\cite{FM18}, and identifying the real and ideal channels (which replaces some registers with the ideal Magic Square states) so that we can apply Theorem~\ref{th:main}, just like in the proof of Theorem~\ref{th:DI-informal}.

\subsection{Our techniques}
\label{subsec:techniques}
We briefly describe how we obtain our main results. First we will give a rough sketch of how both comparison theorems, Theorems~\ref{th:tensor-informal} and \ref{th:main-informal} are proved (using, for consistency across this subsection $\Omega=\Acc$). Then we will describe how these comparison theorems are used to prove the DD-to-DI compilers, Theorems~\ref{th:DI-informal} and \ref{th:memory-informal}: this will consist of upper bounding the single-round quantities from the comparison theorems under the channel identifications as described in Section~\ref{subsec:compare-result}.

\subsubsection{Chain rules}
The first step is finding an appropriate \emph{chain rule}. Consider some quantum registers $\bA = A_1 A_2 \dots A_n$, and consider some {\Renyi} entropy or divergence defined on them --- let us conceptually express this via the highly informal notation $\renD(\bA)$ (ignoring, for instance, that divergences are defined on {pairs} of states rather than a single state). For most {\Renyi} divergences $\renD(\bA)$, it is ``usually'' (though not always) possible to use it to upper bound the smooth max-divergence, which provides a path towards rigidity theorems of our desired form.
A chain rule typically consists of statements of the form that, if the registers $\bA$ have some structure such as e.g.~being produced sequentially by a series of quantum channels, then\footnote{One might hope for a stronger statement with some kind of equality rather than inequality;
unfortunately this does not usually hold outside specific cases such as (classical) Kullback-Leibler divergence and (quantum) von Neumann entropy.}
\begin{align}\label{eq:vibe_chain}
\renD(\bA) \leq \sum_i \sup \renD(A_i),
\end{align}
where the supremum is usually taken over some rather ``large'' set of states: for instance, the set of all possible output states of some channels. This breaks down the global term $\renD(\bA)$ into contributions from the local terms $\renD(A_i)$. Technically, for the channel structure in this work the latter terms would also have further ``memory'' registers $M_i$ to be passed into the later channels, but we ignore this point for this overview.

Unfortunately,~\eqref{eq:vibe_chain} is too optimistic a goal for many quantum {\Renyi} divergences. For instance, the sandwiched divergence typically instead has a \emph{regularized} quantity on the right-hand-side of that bound, defined via a somewhat elaborate asymptotic limit that we find challenging to analyze. Therefore, the first conceptual point in our analysis is to identify a suitable quantum {\Renyi} divergence and chain rule, which has sufficiently tractable quantities on both sides of~\eqref{eq:vibe_chain}. 

The first option we consider is the sharp-divergence $\Dsharp_\alpha$~\cite{FF21sharp}, which has such a chain rule and has been applied for cryptography in e.g.~\cite{arx_MD23}. It has the drawback of converging to the Belavkin--Staszewski (BS) divergence as $\alpha\to 1$, which is slightly ill-behaved and places some barriers on the protocol rates we can obtain this way.
(The geometric divergence~\cite{FF21geom} also has such a chain rule, but yields even worse bounds as it is larger than the BS divergence for all $\alpha>1$.)

As a second option, we turn to prospects for overcoming the regularization challenges for the sandwiched divergence. Our key realization here is to instead try slightly modifying the \emph{left}-hand-side of~\eqref{eq:vibe_chain} in that case. Specifically, on that side we grant ourselves the freedom to choose an arbitrary ``initial state'' in the second argument, as discussed above in Fig.~\ref{fig:tensor-product-channels}. 

From an information-theory standpoint, this basically corresponds to proposing a modified notion of channel divergence between $\mapE^{\otimes n}\otimes\idmap_{\bR}$ and $\mapF^{\otimes n}\otimes\idmap_\bR$. Usually, the channel divergence is defined as the largest divergence (between output states) obtained by maximizing over all input states to both channels; here, we still maximize over input states to the first channel but \emph{minimize} over input states to the second channel.\footnote{This is distinct from the ``minimax divergences'' proposed in~\cite{FC26}, which has a significantly different structure of the input registers and optimizations.} This grants the resulting quantity some properties similar to one version of {\Renyi} conditional entropy, which minimizes over a state in the second argument. Specifically, for sandwiched divergences it allows us to apply a de~Finetti argument similar to~\cite{GW21,MFSR24} to tame its regularization behaviour. The price we pay for this idea is that this channel divergence is no longer ``manifestly'' smaller than its regularized version, unlike the usual channel divergence; we handle this by introducing a suitable final measurement to induce the desired behaviour. 

Overall, our main contribution here is the fact that this modified channel divergence is, apart from a final measurement, basically upper bounded by $n$ times of the usual channel divergence between $\mapE$ and $\mapF$ \emph{without regularization}. We call this an ``almost-subadditivity'' property, since the left- and right-hand sides of~\eqref{eq:vibe_chain} correspond here to slightly different quantities, but this suffices for our intended applications. An interesting feature is that it is ``non-asymptotic'' despite the de~Finetti argument, in that it holds for every $n$ with no fudge terms --- in particular, it evades the explicit dimension dependency that is usually induced by de~Finetti arguments.

\subsubsection{Incorporating testing}

A bound of the form~\eqref{eq:vibe_chain} still usually does not suffice for cryptography, since the right-hand-side is often an ``unrestricted'' supremum over all output states of some channels, yielding trivial bounds in protocol contexts. The key insight of a series of works in the topic of \emph{entropy accumulation}~\cite{DFR20,arx_AHT24} is that such apparently-unhelpful chain rules can usually be converted into a property of ``statistical estimators'' in a protocol. By conducting a ``test'' involving these estimators, one can then obtain nontrivial bounds on $\renD(\bA)$ conditioned on the test accepting. In the context of the protocol, this $\renD(\bA)$ will be the divergence between the real state and a suitable ideal state.

More precisely: suppose that alongside the quantum registers $\bA$, we have classical registers $\bC = C_1 C_2 \dots C_n$, jointly generated sequentially by a series of quantum channels. In a cryptographic protocol, we can use these channels to model ``rounds'' of the protocol; e.g.~we take some entangled state and with some probability ``test'' it by playing a single-copy non-local game on it (storing the outcome in $C_i$).
Now consider \emph{any} functions $f_i : \mathcal{C}_i \to \mathbb{R}$, where $\mathcal{C}_i$ denotes the alphabet of $C_i$. Suppose we could use the chain rule~\eqref{eq:vibe_chain} to show
\begin{align}\label{eq:vibe_statlinear}
\expval_{\bcS} \left[\renD(\bA)_{|\bcS} - \sum_i f_i(c_i) \right] 
\overset{?}{\leq}
\sum_i \sup \expval_{c_i} \left[\renD(A_i)_{|c_i} - f_i(c_i) \right],
\end{align}
where by $\renD(\bA)_{|\mathbf{c}}$ we mean the divergence of the state(s) conditioned on $\bC = \mathbf{c}$, and analogously for $\renD(A_i)_{|c_i}$. The suprema are taken over similar sets as in~\eqref{eq:vibe_chain}, inheriting this from the chain rule.

Observe that each expectation term on the right can be informally viewed as a vague ``quality measure'' of how well $ f_i(c_i)$ estimates $\renD(A_i)_{|c_i}$ in that round --- if the expectation is large, it ``underestimates'' $\renD(A_i)_{|c_i}$; if the expectation is zero, it is a ``good estimate'' of $\renD(A_i)_{|c_i}$, and so on (in expectation over $c_i$). By taking the supremum of this expectation in each round, we are quantifying how this estimate performs for the ``worst possible'' outputs from the channel in that round. 

Critically,~\eqref{eq:vibe_statlinear} would then tell us that as long as we know the ``quality'' of the single-round estimators $f_i(c_i)$ (via the terms on the right-hand-side), then we can bound the ``quality'' of the global estimator $\sum_i f_i(c_i)$ for $\renD(\bA)_{|\bcS}$ (on the left-hand-side). Here, the fact that the inequality is in the $\leq$ direction reflects that in this work we mainly want ``upper estimators'' on $\renD(\bA)_{|\bcS}$, so we mostly only need to bound the ``quality'' in that direction. Keep in mind that $\sum_i f_i(c_i)$ is an entirely classical value that could be physically observed and computed in the protocol: this would thus be a very practical prospect for estimating $\renD(\bA)_{|\bcS}$. Our task is hence reduced to choosing ``good'' single-round estimators $f_i(c_i)$, which seems more tractable as they can be understood by studying the rounds individually, even if we do have to account for their performance on the ``worst-case'' output states each round. 

At this point we should clarify that for most {\Renyi} divergences, we do not literally obtain a bound of the form~\eqref{eq:vibe_statlinear}, other than at $\alpha=1$. Rather, the chain rules~\eqref{eq:vibe_chain} instead usually yield bounds of the form
\begin{align}\label{eq:vibe_statlme}
\log_{B_\alpha} \expval_{\bcS} \left[B_\alpha^{\renD(\bA)_{|\bcS} - \sum_i f_i(c_i) }\right] 
\leq
\sum_i \sup \log_{B_\alpha} \expval_{c_i} \left[B_\alpha^{\renD(A_i)_{|c_i} - f_i(c_i)} \right],
\end{align}
i.e.~we have ``exponential mean'' or ``log-mean-exponential'' expressions with respect to some base ${B_\alpha}>0$ (which depends on $\alpha$). However, since logarithm and exponential are just an inverse pair of monotone functions, this does not particularly affect any of the intuition we discussed above. For readers more familiar with martingale proof techniques, we note that (for ${B_\alpha}>1$) this is equivalent to a multiplicative bound
\begin{align}
\expval_{\bcS} \left[B_\alpha^{\renD(\bA)_{|\bcS} - \sum_i f_i(c_i) }\right] 
\leq
\prod_i \sup \expval_{c_i} \left[B_\alpha^{\renD(A_i)_{|c_i} - f_i(c_i)} \right],
\end{align}
a form that was used in e.g.~the \emph{quantum probability estimation} framework of~\cite{ZFK20}. To emphasize the estimator interpretation, the functions $f_i$ were called ``quantum estimation score-systems'' in~\cite{arx_AHT24}, via analogy to ``quantum estimation factors'' (based on the multiplicative form) in~\cite{ZFK20}.

Returning to the protocol, let us suppose it conducts a ``test'' that decides whether to accept based on the observed values $\bcS$. (In fact one could even consider more general prospects, but we defer this to Sec.~\ref{subsec:futurework}.) 
We would expect that the resulting value of $\renD(\bA)_{|\Acc}$ could then be bounded in terms of the $\sum_i f_i(c_i)$ value conditioned on accepting, if it is a ``good'' estimator.
It turns out this can be turned into an actual proof strategy. We first modify the left-hand-side of~\eqref{eq:vibe_statlme} by ``truncating'' the expectation (viewed as a sum over $\bcS$) to only the $\bcS$ values that the protocol accepts, which is basically possible at the cost of a correction that depends on $\Pr[\Acc]$. With this we can then crudely bound all the surviving $\sum_i f_i(c_i)$ terms with the worst (largest) value that would be accepted. We then flip this largest value onto the right-hand-side, obtaining  
\begin{align}
\renD(\bA)_{|\Acc}
\leq
&\sum_i \sup \log_{B_\alpha} \expval_{c_i} \left[B_\alpha^{\renD(A_i)_{|c_i} - f_i(c_i)} \right] \nonumber\\
&\quad+ \left[\text{largest $\sum_i f_i(c_i)$ value conditioned on accepting}\right] \nonumber\\
&\quad+ \frac{\alpha}{\alpha-1}\log\frac1{\Pr[\Acc]}. \label{eq:vibe_truncated}
\end{align}
Even putting aside the reasoning leading here, the result is somewhat intuitive --- the first term is inherited from~\eqref{eq:vibe_statlme} and reflects that a ``low-quality'' {\copper} would yield a worse bound, the second term imposes that we have conditioned on some particular $\sum_i f_i(c_i)$ values, and the third term corrects for conditioning on an event.

\subsubsection{Choosing the estimators}
\label{subsubsec:vibes_opt}

Of course, the catch is that we have not yet specified a recipe for how to actually find choices of $f_i$ that are ``good'' single-round estimators. While one could hand-craft them, it is surprisingly easy to find the ``best'' choice (in a certain sense) via minimax theorems. 
The important point to observe is that the bound~\eqref{eq:vibe_truncated} holds \emph{for any} $f_i$, regardless of how ``bad'' they are as estimators --- their ``quality'' has been captured simply by the presence of the first term. We can thus minimize the right-hand-side over all possible choices of $f_i$. It turns out that under fairly mild conditions on the accept criterion, the resulting minimax problem cleanly simplifies using techniques from convex duality, yielding a rather intuitive bound.

Specifically, let us suppose for simplicity that the accept condition in the protocol is that it first computes the frequency distribution observed on the classical registers $\bCS$, then accepts if and only if this frequency distribution lies in some closed convex neighbourhood $S_\Acc$ of the ``noiseless'' ideal distribution.\footnote{This suffices to include accept conditions such as e.g.~the frequency distribution being within trace distance $\epstol$ of the ideal distribution. Though more broadly, any accept condition involving only the $\bCS$ values can be ``coarsened'' into this form by setting $S_\Acc$ to be the convex hull of all accepted frequency distributions.} For this accept condition, minimizing over the estimators $f_i$ yields a clean bound of the form
\begin{align}\label{eq:vibe_Dbound}
\renD(\bA)_{|\Acc}
\leq
\sup \sum_i \left[\text{single-round quantities}\right] + \frac{\alpha}{\alpha-1}\log\frac1{\Pr[\Acc]},
\end{align}
where the supremum is ``effectively restricted'' in a certain informal sense, which we shall now discuss. Note that in this expression all the dependency on $f_i$ has been ``optimized out'', i.e.~we never need to consider explicit $f_i$ values from this point forward (though doing so can have applications we mention in Sec.~\ref{subsec:futurework}). 

\begin{remark}
The supremum being outside the sum in~\eqref{eq:vibe_Dbound} reflects a subtle tightening in our results compared to previous work such as~\cite{DFR20,arx_AHT24}. In slightly more detail: throughout~\eqref{eq:vibe_chain}--\eqref{eq:vibe_statlme}, the various suprema were originally ``uncoupled'', in that there was a separate supremum for each channel/round. In~\eqref{eq:vibe_Dbound} however, all of these suprema become ``coupled'' in the process of a minimax swap, resulting in a single supremum of a sum across the rounds. For the purposes of~\eqref{eq:vibe_penalty} below, we implicitly relax it slightly to a coarser picture by defining a single ``representative'' channel that suffices to describe any of the individual channels, an approach that was also taken in e.g.~\cite{DFR20,arx_AHT24} to simplify the analysis.
\end{remark}

We now elaborate on the ``single-round quantities'' in~\eqref{eq:vibe_Dbound}. Following our protocol descriptions above, let us say that in test rounds (where the nonlocal game is played) the game outcome is recorded in $\CS_i$, and for the remaining rounds, we call them generation rounds and the $\CS_i$ register for that round is set to a special symbol $\gen$. 
After a few relaxations and simplifications, including using a single ``representative'' channel to decouple the optimizations across rounds, the supremum in~\eqref{eq:vibe_Dbound} basically reduces to the form
\begin{align}
\label{eq:vibe_penalty}
\sup_{\rchi\in \outputset}  \sup_{\sfQ \in S_\Acc} n\,  \left[ 
\sfQ(\gen) \renD(A)_{|\gen} 
- \frac{1}{\alpha-1} D\Dargs{\sfQ}{\sfP}  \right],
\end{align}
where the terms can be, informally, interpreted as follows. The set $\outputset$ denotes all output states that could have been produced in \emph{any single round} of the protocol (ignoring a convex-hull issue we explain in Sec.~\ref{subsec:bound_compexp}); we thus omit the round subscripts $i$ in the remaining notation. For each state $\rchi \in \outputset$, the term $\renD(A)_{|\gen}$ denotes its corresponding {\Renyi} divergence quantity conditioned on being a generation round, and $\sfP$ denotes the distribution it induces on $\CS$. Finally, since each $\sfQ \in S_\Acc$ is a probability distribution, we write $\sfQ(\gen)$ to denote its weight on the $\gen$ term, and $D\Dargs{\sfQ}{\sfP}$ is the Kullback-Leibler (KL) divergence between $\sfQ$ and $\sfP$.

We can intuitively understand~\eqref{eq:vibe_penalty} by noting that it can be viewed as a relaxed version of the constrained optimization
\begin{align}
\sup_{\rchi \in \outputset} \quad & n \, \sfP(\gen) \renD(A)_{|\gen} \nonumber\\
\suchthat \quad & \sfP\in S_\Acc,
\label{eq:vibe_constrained}
\end{align}
i.e.~maximizing \emph{only} over states $\rchi$ that produce distributions $\sfP$ in the accepted set $S_\Acc$. Basically, notice that in~\eqref{eq:vibe_penalty} we do not directly impose ``hard constraints'' on the supremum over $\rchi$, but we do impose that $\sfQ$ lies within $S_\Acc$. Therefore, the $\frac{1}{\alpha-1} D\Dargs{\sfQ}{{\sfP}}$ term instead acts as a ``soft penalty'' on any states $\rchi$ that don't satisfy $\sfP \in S_\Acc$, ``nudging'' the optimal $\rchi$ to stay near that set. 

For sufficiently ``well-behaved'' {\Renyi} divergences $\renD$ and a nonlocal game with single-round rigidity properties, the hard-constrained optimization~\eqref{eq:vibe_constrained} can ``obviously'' be forced to be $O(n)$ with an arbitrarily small constant, simply by choosing $S_\Acc$ to be sufficiently small (but still independent of $n$). Hence the only remaining task is to ensure that the actual soft-penalty version in~\eqref{eq:vibe_penalty} can be made sufficiently close to~\eqref{eq:vibe_constrained}. Intuitively, this should be the case whenever the penalty term is ``large'' relative to the $\sfQ(\gen) \renD(A)_{|\gen}$ term --- we can indeed formalize this intuition by splitting the optimization domain $\rchi\in\outputset$ into those states where $\sfP$ is ``near'' $S_\Acc$ (so $\sfQ(\gen) \renD(A)_{|\gen}$ is already small) and those where $\sfP$ is ``far from'' $S_\Acc$ (so the penalty term is sufficiently large).

In this work, we use two approaches to enforce that the penalty is sufficiently large relative to $\sfQ(\gen) \renD(A)_{|\gen}$. First, for the generic sequential analysis that chooses $\renD$ to be the sharp-divergence $\Dsharp_\alpha$, we simply note that since $\sfQ$ is constrained to lie in $S_\Acc$, we can have $S_\Acc$ impose that the fraction of generation rounds lies below some sufficiently small (but constant) value, which already suffices to force $\sfQ(\gen) \renD(A)_{|\gen}$ to be small relative to the penalty term. 
The main limitation of this approach is that we cannot freely choose the testing probability --- in fact, heuristically we expect that in most cases it will have to be set close to $1$, yielding a poor albeit constant rate.

For the no-memory case that chooses $\renD$ to be the sandwiched divergence $D_\alpha$, we take a different approach that avoids this issue. Specifically, we follow a more ``standard'' entropy-accumulation tactic and take $\alpha$ sufficiently close to $1$ to make the penalty term $\frac{1}{\alpha-1} D\Dargs{\sfQ}{{\sfP}}$ large. We could not do this for the sharp-divergence $\Dsharp_\alpha$ because that is \emph{not} monotone in $\alpha$, and the $\alpha\to 1$ limit is the BS divergence, which exhibits an unpleasant ``discontinuity'' behaviour discussed in~\cite[Appendix~B]{arx_MD23}. 

{Thus both approaches can bound the comparison exponent by $km$ for any desired $k>0$, but only the no-memory approach, with the sandwiched-divergence analysis, lets us choose the generation fraction freely.}\footnote{The ``price'' of taking $\alpha$ near $1$ instead shows up elsewhere, in terms of making the final bound on bad-event probabilities worse. But more broadly, we have a simple general fact that our bound for the no-memory case is always tighter, just because $D_\alpha \leq \Dsharp_\alpha$. Hence any scaling behaviour achievable with the former is achievable with the latter; here we are simply arguing that the latter has ``more available options''.}

\subsection{Future work}
\label{subsec:futurework}

\subsubsection{Iterated-mean divergences}

Ideally, we would like to have a result that can combine the advantages of both approaches we discussed above; i.e., it handles the sequential scenario with memory, while still allowing us to freely choose the testing fraction. 
To the best of our knowledge, currently there is no quantum {\Renyi} divergence that is explicitly known to have all the required properties (putting aside the prospect of finding a dramatically different proof approach from those we have used here). 
However, as an interesting potential candidate, we study the iterated-mean (IM) divergence introduced in~\cite{BFF21}. 

The IM divergence is rather similar in spirit to the sharp-divergence, but is defined by ``iterating'' the usual geometric mean $\mathbin{\#}$ (see Definition~\ref{def:IM}), rather than considering the generalized geometric mean $\mathbin{\#_t}$ used to define the sharp-divergence (see Definition~\ref{def:divergences}). As a consequence, it easily satisfies the property of being monotone in its {\Renyi} parameter $\alpha$ (albeit one that can only take on a discrete set of values, corresponding to the number of ``iterations''), and moreover does not face the issue of converging to the BS-divergence as $\alpha \to 1$ (see~\cite[Sec.~5]{BFF21}, or~\eqref{eq:IM_vs_sharp2} in the appendix). These properties were useful for its original purpose of evaluating certain optimization problems that arise in DI protocols. 
However, it is not known whether it satisfies an analogous chain rule, which would be useful for our goals in this work.

While we have not managed to fully resolve the question of whether it suffices for our intended applications, we do show that it satisfies a form of chain rule similar but not identical to that for the sharp-divergence. 
We leave for future work a more in-depth analysis of whether this version suffices for cryptographic applications, or whether one can improve the steps in the analysis to obtain a tighter chain rule.

We give the proof of this chain rule in Appendix~\ref{app:IM-divergence}. It follows a similar argument used for the sharp-divergence in~\cite{FF21sharp}, after first rewriting the IM divergence in a form manifestly similar to the sharp-divergence. With this form, we are able to reduce the computation of the (stabilized) channel IM divergence from two nested optimizations to a single optimization of an operator norm, closely analogous to a result for the sharp-divergence. This is almost sufficient to obtain an analogous chain rule; however, we unfortunately need to introduce a relaxation at a critical step, and thus the final chain rule is only in terms of a ``relaxed channel IM divergence'' that we were unable to completely characterize. Still, we show that at least for measure-and-prepare channels using a common rank-1 projective measurement, this relaxed version coincides with the usual channel IM divergence.

\subsubsection{Other potential applications}

In the above presentation of our results, we have focused on acceptance sets satisfying some mild niceness conditions, allowing us to effectively ``optimize out'' the choices of the estimators $f_i$. However, as discussed in~\cite{ZFK20,arx_AHT24}, this is not strictly necessary --- for any general acceptance condition, one can obtain bounds on $\renD(\bA)_{|\Acc}$ (in the sense of~\eqref{eq:vibe_truncated}) in terms of explicit choices of these estimators. Furthermore, as presented in those works, this more elaborate perspective potentially allows for ``adaptive'' choices of these estimators as the protocol progresses, for instance to handle scenarios where the device performance drifts over time. In such an approach, the duality analysis for optimizing $f_i$ also implicitly yields a numerical approach for finding the optimal choices, even if the final result is less simple.
These ``adaptive'' options do seem to require some properties from the underlying chain rule, such as being able to handle sequential memory as with the sharp-divergences, but it seems promising to investigate in future work.

Moreover, in QKD applications, it has been found that it is unnecessary to restrict ourselves to  protocols that make a binary accept/abort decision: one can simply use the estimator perspective on $\sum_i f_i(c_i)$ to obtain protocols that return ``variable'' bounds on the entropy or divergence quantities, which suffice for QKD applications at least.\footnote{This method was initially described in presentations~\cite{inprep_weightentropy} without a public manuscript; see~\cite{arx_FKR+25,arx_KBT25} for expositions of the proofs.} 
It seems viable to consider whether analogous applications could be found in our setting. However, we do not attempt to address this here, as the QKD application here does seem to need some properties of sandwiched conditional entropy specifically (e.g.~the precise form of the leftover hashing lemma) --- we leave it as a prospect for future work.

Apart from changing up the estimator perspectives, we can also consider other cryptographic protocols --- the examples of applications we present here should not be considered as exhaustive. As something with a different flavour from these examples, we observe that~\cite{FM25,LFM26} made the critical observation that in spirit, an $n$-copy self-testing property ``should'' also imply some degree of independence between the classical outputs in different rounds. They drew on this insight to develop DI seedless extractors, based on the notion that the outputs across the rounds should be sufficiently independent, so to speak. It seems worth studying whether our approach could yield a similar result, since our ideal state also has independent outputs across (non-tested) rounds. However, one of the important gaps to close here is that seedless extraction does not rely solely on exponentially-small bounds on some ``bad'' event probability, so some adjustment to our approach may be needed. 

\subsubsection{Depolarizing noise versus infrequent-badness}

In this work, to impose the desired support-containment properties, we have chosen to define the ideal channels' behaviour in generation rounds as basically outputting slightly depolarized versions of the ideal states for the single-copy game, roughly similar to the results in the work~\cite{arx_MD24} (which we could not directly apply here because it introduced dimension dependencies). However, the work~\cite{arx_MD23} used a different approach, where for each round the ideal channel is instead defined by ``mixing'' with a small component of the real channel, e.g.~with probability $\pert$ it behaves like the real channel and with probability $1-\pert$ it outputs the ``perfect'' state \emph{without} depolarization. Our divergence bounds all straightforwardly translate to this version as well, though possibly with different scaling with respect to the single-round rigidity properties. 

However, we chose not to present that approach in that work because the resulting ideal state seems slightly more delicate to analyze --- rather than simply having depolarizing noise on every round, it is ``infrequently bad'' in each round, in the sense of behaving like the real channel. Still, since this infrequent-badness occurs in an IID fashion across the rounds, there should be no fundamental barriers to using a typical-sequence argument to say that with high probability a sufficiently large number of the rounds are ``good'', and then use them for DD protocols. (The main potential concern in the application is that the ``bad'' rounds are hidden, in that we do not know their locations.) It does have the slight advantage that in the ``good'' rounds, one has the perfectly noiseless ideal state, in case the DD protocol cannot in fact tolerate any depolarizing noise. Still, we note that depolarizing noise \emph{can} also in fact be viewed as a form of infrequent-badness behaviour, so any analysis that applies to that scenario should also be translatable to our depolarizing-noise model as well, albeit perhaps with different rates or performance.

\subsection{Organization of the paper}
In Section~\ref{sec:notation} we introduce quantum information concepts and state some known results we will be using throughout the work. In Section~\ref{sec:comparison}, we prove our sequential comparison theorem (Theorem~\ref{th:main-informal}). In Section~\ref{sec:DI} we give our generic DD-to-DI compiler for the general sequential case (Theorem~\ref{th:DI-informal}). In Section~\ref{sec:memoryless}, we prove our tensor-product comparison theorem (Theorem~\ref{th:tensor-informal}) and use it to derive the DD-to-DI compiler in the memoryless case (Theorem~\ref{th:memory-informal}). In Section~\ref{sec:applications}, we give our three applications for Theorem~\ref{th:DI-informal}, constructing DI uncloneable encryption, copy protection and blind randomness expansion. 

\section{Preliminaries}\label{sec:notation}

This section gathers the notation and analytic tools used throughout the manuscript. All Hilbert spaces are finite-dimensional, all logarithms and entropies are base $2$, while $\ln$ denotes the natural logarithm. $\mathbb{N}$ denotes natural numbers excluding zero. 

\subsection{Classical probability distributions}

Throughout the paper, registers and classical random variables are denoted by capital letters and their values by the corresponding lower-case letters; for a classical register of a quantum state, we use the same letter for its corresponding random variable. Bold symbols are reserved for strings or nontrivial subcollections. Thus
\[
\bCS=C_1\cdots C_n,
\qquad
\bcS=(c_1,\ldots,c_n),
\qquad
\bK_E=(K_i)_{i\in E}.
\]
The notation $\bCS_{<i}$ means $C_1\cdots C_{i-1}$.

We use $\sfP$ and $\sfQ$ for probability distributions.
The Kullback-Leibler (KL) divergence between these two distributions is defined as
\[
D(\sfQ\Vert\sfP)=\sum_c\sfQ(c)\log\frac{\sfQ(c)}{\sfP(c)},
\]
with value $+\infty$ if $\sfQ(c)>0=\sfP(c)$ for some $c$.

\begin{definition}[Frequency distribution]\label{def:freq}
For $\bz=(z_1,\ldots,z_n)\in\mathcal Z^n$, its frequency distribution is
\[
\freq_{\bz}(z)=\frac1n\bigl|\{i:z_i=z\}\bigr|.
\]
\end{definition}

\begin{fact}[Hoeffding's inequality, {\cite{seqHoeffding63}}]\label{th:Hoeffding}
For independent random variables $Y_i\in[0,1]$ and $u>0$,
\[
\Pr\left[\sum_{i=1}^n(Y_i-\mathbb E Y_i)\geq nu\right]\leq e^{-2nu^2},
\qquad
\Pr\left[\sum_{i=1}^n(Y_i-\mathbb E Y_i)\leq-nu\right]\leq e^{-2nu^2}.
\]
\end{fact}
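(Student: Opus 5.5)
The plan is the standard Chernoff--Cram\'er argument: prove the upper tail through the moment generating function, then get the lower tail by symmetry. Put $X_i = Y_i - \mathbb E Y_i$. Each $X_i$ has mean zero and takes values in the interval $[a_i, a_i+1]$, where $a_i = -\mathbb E Y_i$; this interval has length $1$. For any $s>0$, Markov's inequality applied to the nonnegative random variable $e^{s\sum_i X_i}$ gives
\[
\Pr\left[\sum_{i=1}^n X_i \geq nu\right] \leq e^{-snu}\,\expval\left[e^{s\sum_i X_i}\right] = e^{-snu}\prod_{i=1}^n \expval\left[e^{sX_i}\right].
\]
The factorization of the expectation into a product uses the independence of the $Y_i$.

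The central step is Hoeffding's lemma: $\expval[e^{sX}] \leq e^{s^2/8}$ for any mean-zero $X$ supported on an interval of length $1$. I would prove it in two moves.

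\emph{Convexity reduction.} Since $e^{sx}$ is convex, it lies below its chord on $[a,b]$:
\[
e^{sx} \leq \frac{b-x}{b-a}e^{sa} + \frac{x-a}{b-a}e^{sb}.
\]
Taking expectations and using $\expval X=0$, we get $\expval[e^{sX}] \leq e^{L(h)}$, where $h = s(b-a) = s$, $p = -a/(b-a)$, and $L(h) = -hp + \ln(1-p+pe^h)$.

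\emph{Taylor bound.} One checks that $L(0)=L'(0)=0$. The second derivative is $L''(h) = t(1-t)$ with $t = pe^h/(1-p+pe^h)\in[0,1]$, so $L''(h)\le 1/4$. Taylor's theorem then gives $L(h)\le h^2/8$.

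This calculus check is the only mildly delicate step, and it is routine.

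Combining the lemma with the product bound gives
\[
\Pr\left[\sum_i X_i\ge nu\right]\le \exp\left(-snu + ns^2/8\right).
\]
Minimizing the exponent at $s = 4u$ yields $e^{-2nu^2}$, which is the first inequality. For the second inequality, apply the first one to $Y_i' = 1 - Y_i$. These are still independent and take values in $[0,1]$, and $\sum_i (Y_i' - \mathbb E Y_i') = -\sum_i (Y_i - \mathbb E Y_i)$, so the lower-tail event for the $Y_i$ is exactly the upper-tail event for the $Y_i'$.
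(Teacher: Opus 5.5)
Your proof is correct. The Chernoff bound with the factorized moment generating function, Hoeffding's lemma via the chord bound and $L''(h)=t(1-t)\le 1/4$, the choice $s=4u$, and the reflection $Y_i\mapsto 1-Y_i$ for the lower tail all check out. The paper gives no proof of its own: it states this as a cited fact from Hoeffding's 1963 paper, and your argument is the standard proof from that reference.
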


\begin{definition}[Vanishing function]
A nonnegative, nondecreasing function $v:[0,a]\to\mathbb R$ is vanishing if $v(0)=0$ and $\lim_{\delta\to0^+}v(\delta)=0$. The notation $\vanf(\delta)$ denotes an unspecified vanishing function; subscripts indicate fixed auxiliary parameters.
\end{definition}

\begin{fact}[Minimax theorem, {\cite{seqSion58}}]\label{th:minimax}
Let $\mathcal X$ be compact and convex and $\mathcal Y$ convex. Suppose $L(x,y)$ is upper semicontinuous and concave in $x$ for each $y$, and lower semicontinuous and convex in $y$ for each $x$. Then
\[
\inf_{y\in\mathcal Y}\sup_{x\in\mathcal X}L(x,y)
=\sup_{x\in\mathcal X}\inf_{y\in\mathcal Y}L(x,y).
\]
\end{fact}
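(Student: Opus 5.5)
The statement immediately above is the minimax theorem (Fact~\ref{th:minimax}), Sion's theorem. I would prove it by showing the two inequalities separately.

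The inequality $\sup_x\inf_y L\le\inf_y\sup_x L$ is immediate, since $\inf_y L(x,y)\le L(x,y')\le\sup_x L(x,y')$ for every $x$ and $y'$. For the reverse direction, I would argue by contradiction. Suppose there is a real $c$ with $\sup_x\inf_y L<c<\inf_y\sup_x L$. For each $y$, the superlevel set $\{x: L(x,y)\ge c\}$ is closed, by upper semicontinuity in $x$. Because $\inf_y L(x,y)<c$ for every $x$, these sets have empty intersection over $y\in\mathcal Y$. Compactness of $\mathcal X$ then yields finitely many $y_1,\ldots,y_k$ with $\min_j L(x,y_j)<c$ for all $x$. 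On the other hand, $\sup_x L(x,y)>c$ for every $y$.

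So it suffices to show the finite statement: $\sup_x\min_j L(x,y_j)\ge\inf_{y\in\operatorname{conv}\{y_j\}}\sup_x L(x,y)$. I would prove this by induction on $k$, and the key case is $k=2$.

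For $k=2$, suppose $\max_x\min(L(x,y_1),L(x,y_2))<c'<\inf_{y\in[y_1,y_2]}\sup_x L(x,y)$. For $z$ on the segment $[y_1,y_2]$, consider the closed sets $A_z=\{x:L(x,z)\ge c'\}$. Each $A_z$ is nonempty. Quasiconcavity in $x$ makes $A_z$ convex. Quasiconvexity in $y$ gives $A_z\subseteq A_{y_1}\cup A_{y_2}$, and $A_{y_1},A_{y_2}$ are disjoint. Since $A_z$ is convex, hence connected, it lies entirely in one of them.

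The set of $z$ with $A_z\subseteq A_{y_1}$ is closed along the segment. This uses lower semicontinuity in $y$, applied at a point $x$ with $L(x,z)>c''>c'$, after shrinking to an intermediate level $c''$ to obtain the needed openness. By symmetry the complementary set is also closed. Both are nonempty (they contain $y_1$ and $y_2$ respectively), which contradicts the connectedness of the segment.

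For the inductive step, apply the $k-1$ case to the restriction of $L$ to $\{x: L(x,y_k)\ge c\}$, which is compact and convex.

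I expect the main obstacle to be the connectedness argument in the $k=2$ case. Semicontinuity must be used with care there, which is why the two thresholds $c'<c''$ are needed. Since the paper cites this result as a known fact, its proof would simply reference \cite{seqSion58}.
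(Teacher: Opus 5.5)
Your proof is correct. The paper gives no argument of its own here and simply invokes Sion's theorem via \cite{seqSion58}, so there is nothing to match step by step. What you reconstruct is essentially Komiya's elementary proof, mirrored to the present orientation, where compactness and upper semicontinuity sit on the maximized variable $x$. Its three ingredients are a finite-intersection reduction using compactness of $\mathcal X$, a two-point lemma proved by a connectedness argument on the segment $[y_1,y_2]$, and an induction on the number of points. Sion's original argument goes through a Knaster--Kuratowski--Mazurkiewicz-type intersection theorem. Your route is self-contained and uses only connectedness of convex sets and of intervals. It also only ever uses quasi-concavity in $x$ and quasi-convexity in $y$, so it proves a stronger form of the theorem than the concave/convex version stated in the paper.

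Two details should be tightened.

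First, your inductive step stops too early. Fix a level $c>\max_x\min_{j\le k}L(x,y_j)$ and set $X'=\{x:L(x,y_k)\ge c\}$. Applying the $(k-1)$ case on $X'$ only yields some $z'\in\operatorname{conv}\{y_1,\ldots,y_{k-1}\}$ with $\sup_{x\in X'}L(x,z')<c$. You then need to observe that $\min\{L(x,z'),L(x,y_k)\}<c$ for every $x\in\mathcal X$, using $L(x,y_k)<c$ off $X'$. Applying the $k=2$ case to the pair $(z',y_k)$ then gives $z\in[z',y_k]\subseteq\operatorname{conv}\{y_1,\ldots,y_k\}$ with $\sup_x L(x,z)<c$. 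The case $X'=\emptyset$ is handled by taking $z=y_k$, since the maximum of the upper semicontinuous $L(\cdot,y_k)$ on compact $\mathcal X$ is attained.

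Second, the intermediate level $c''$ is harmless but unnecessary, so your remark that two thresholds are needed is slightly off. You already chose $c'<\inf_{z\in[y_1,y_2]}\sup_x L(x,z)$ strictly. Hence each $A_z$ contains a point with $L(x,z)>c'$, and lower semicontinuity at that point directly gives $L(x,z_n)>c'$ for all large $n$. A second threshold is needed only in Komiya's original formulation, where nonemptiness of the relevant level set comes from a non-strict inequality.
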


\subsection{Quantum states and measurements}
We use $\dop{=}(A)$ and $\dop{\leq}(A)$ for normalized and subnormalized states on the register $A$, and $\Pos(A)$ for positive semidefinite operators. $\id_A$ is the identity operator on $A$ and $\idmap_A$ for the identity channel on $A$. The dimension of a register $A$ is denoted by $|A|$, so its maximally mixed state is $\id_A/|A|$.

We will use $\CPTP{Q}{Q'}$ to denote the set of CPTP maps from the quantum register $Q$ to the register $Q'$. Throughout this work we will often leave tensor products with identity channels implicit; e.g.~given a channel $\mathcal{E}\in\CPTP{Q}{Q'}$, we often use the compact notation
\begin{align}\label{eq:impliedidmap}
\mathcal{E}[\rho_{QR}] = (\mathcal{E} \otimes \idmap_R)[\rho_{QR}].
\end{align}

For a state $\rho_A$, the notation $\ket\rho_{AR}$ always means a normalized purification of $\rho_A$. We will sometimes talk about extensions of a state that are not pure, i.e., a mixed state $\rchi_{AW}$ such that $\Tr_{W}[\rchi_{AW}] = \rho_A$. We will generally use different letters for such extensions.
 
\begin{definition}[CQ states and conditioning]\label{def:cond}
A state on registers $CQ$ is classical on $C$ when it has the form
\[
\rho_{CQ}=\sum_c\pure c_C\otimes\rho^c_Q,
\]
where the blocks $\rho^c_Q$ are positive and subnormalized. If $\rho$ is normalized, write $\sfP(c)=\tr{\rho^c_Q}$ for its classical distribution and, when $\sfP(c)>0$, write $\rho_{Q|c}=\rho^c_Q/\sfP(c)$. Equivalently,
\[
\rho_{CQ}=\sum_c\sfP(c)\pure c_C\otimes\rho_{Q|c}.
\]
We refer to the vector $(\sfP(c))_c$ as the distribution induced by $\rho$ on $C$.

For an event $\Omega$ on $C$, we define its partial state and probability by
\[
\rho_{\land\Omega}=\sum_{c\in\Omega}\pure c_C\otimes\rho^c_Q,
\qquad
p_\Omega=\tr{\rho_{\land\Omega}}.
\]
When $p_\Omega>0$, the normalized conditional state is $\rho_{|\Omega}=\rho_{\land\Omega}/p_\Omega$. Intersections of classical events are read associatively, for example $\rho_{\land\Omega\land\Omega'}=\rho_{\land(\Omega\cap\Omega')}$. We never assign an artificial conditional state to a zero-probability event.
\end{definition}

\begin{definition}[Types of channels]
A measurement channel $\mapM\in\CPTP{Q}{C}$ has a classical output and can be written as
\[
\mapM[\rho_Q]=\sum_j\tr{M_j\rho_Q}\pure j_C
\]
for a POVM $\{M_j\}_j$.

A pinching channel is a special case of a measurement channel where $\mapM[\rho]=\sum_j\pure j\rho\pure j$, i.e., the $M_j$-s are measurements in the same computational basis the outcome is recorded in.

A projective measure-and-prepare channel is a channel of the form
\[
\mapE[\rho_Q]=\sum_j(P_j\rho_QP_j)\otimes\sigma_{O|j},
\]
where $(P_j)_j$ is a projective measurement. If $Q$ is classical and the projectors read its classical basis, we call this a read-and-prepare channel. Tracing out the prepared register then returns the original state on $Q$.
\end{definition}

\begin{definition}[Generalized trace distance]\label{def:generalized-trace-distance}
For subnormalized states $\rho$ and $\sigma$, their generalized trace distance is
\[
\trnorm{\rho-\sigma}
=\frac12\norm{\rho-\sigma}_1
+\frac12\left|\tr{\rho}-\tr{\sigma}\right|.
\]
For normalized states this reduces to the usual trace distance
$\trnorm{\rho-\sigma}=\frac12\norm{\rho-\sigma}_1$. When $\rho$ and $\sigma$ are classical, i.e., two probability distributions $\sfP$ and $\sfQ$, the trace distance has the simple form
\[ \trnorm{\sfP-\sfQ} = \frac12\sum_c|\sfP(c)-\sfQ(c)|.\]
\end{definition}

\begin{definition}[Fidelity and purified distance]
For subnormalized states, their fidelity and purified distance are distance measures defined as
\[
\overline F(\rho,\sigma)=\norm{\sqrt\rho\sqrt\sigma}_1+
\sqrt{(1-\tr{\rho})(1-\tr{\sigma})},
\qquad
\pd(\rho,\sigma)=\sqrt{1-\overline F(\rho,\sigma)^2}.
\]
\end{definition}

\begin{fact}[Fuchs--van de Graaf inequalities, {\cite{Tom16}}]\label{th:FvdG}
For subnormalized states $\rho$ and $\sigma$,
\[
1-\overline F(\rho,\sigma)
\leq \trnorm{\rho-\sigma}
\leq \pd(\rho,\sigma).
\]
\end{fact}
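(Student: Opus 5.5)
The plan is to reduce the subnormalized case to the usual normalized Fuchs--van de Graaf inequalities by a direct-sum embedding. Given subnormalized $\rho,\sigma$ on $A$, I will define normalized states on $A\oplus\mathbb{C}$:
\[
\hat\rho=\rho\oplus(1-\tr{\rho}),\qquad \hat\sigma=\sigma\oplus(1-\tr{\sigma}).
\]
Because the two blocks are orthogonal, $\frac12\norm{\hat\rho-\hat\sigma}_1=\frac12\norm{\rho-\sigma}_1+\frac12\left|\tr{\rho}-\tr{\sigma}\right|$. This is exactly the generalized trace distance $\trnorm{\rho-\sigma}$. Similarly, $\sqrt{\hat\rho}\sqrt{\hat\sigma}$ is block diagonal, so
\[
\norm{\sqrt{\hat\rho}\sqrt{\hat\sigma}}_1=\norm{\sqrt\rho\sqrt\sigma}_1+\sqrt{(1-\tr{\rho})(1-\tr{\sigma})}=\overline F(\rho,\sigma).
\]
It also follows that $\pd(\rho,\sigma)=\sqrt{1-F(\hat\rho,\hat\sigma)^2}$. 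Hence it suffices to prove $1-F\le T\le\sqrt{1-F^2}$ for normalized states.

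For the lower bound, I will use the Fuchs--Caves fact that there exists a POVM $\{M_j\}$ whose outcome distributions $p_j=\tr{M_j\hat\rho}$ and $q_j=\tr{M_j\hat\sigma}$ satisfy $\sum_j\sqrt{p_jq_j}=F(\hat\rho,\hat\sigma)$. One can take the projective measurement in the eigenbasis of $\hat\sigma^{-1/2}\,|\sqrt{\hat\sigma}\sqrt{\hat\rho}|\,\hat\sigma^{-1/2}$, with a perturbation argument when $\hat\sigma$ is not invertible. Data processing for trace distance gives $T(\hat\rho,\hat\sigma)\ge\frac12\sum_j|p_j-q_j|$. The classical inequality
\[
1-\sum_j\sqrt{p_jq_j}=\tfrac12\sum_j(\sqrt{p_j}-\sqrt{q_j})^2\le\tfrac12\sum_j|\sqrt{p_j}-\sqrt{q_j}|(\sqrt{p_j}+\sqrt{q_j})=\tfrac12\sum_j|p_j-q_j|
\]
then finishes this direction.

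For the upper bound, I will invoke Uhlmann's theorem to pick purifications $\ket{\psi},\ket{\phi}$ of $\hat\rho,\hat\sigma$ with $|\langle\psi|\phi\rangle|=F(\hat\rho,\hat\sigma)$. I will compute that for pure states $T(\psi,\phi)=\sqrt{1-|\langle\psi|\phi\rangle|^2}$, since the difference has rank two and eigenvalues $\pm\sqrt{1-|\langle\psi|\phi\rangle|^2}$. Monotonicity of trace distance under partial trace then gives $T(\hat\rho,\hat\sigma)\le\sqrt{1-F^2}$.

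The main obstacle is the Fuchs--Caves optimal-measurement step, particularly the handling of singular $\hat\sigma$. If needed, I would bypass it by perturbing: replace $\hat\sigma$ with $(1-\epsilon)\hat\sigma+\epsilon\,\id/d$ and use continuity of $F$ and $T$ as $\epsilon\to0$.
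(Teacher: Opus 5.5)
Your proof is correct and is essentially the standard argument behind the paper's citation (the paper gives no proof of its own): the extension $\hat\rho=\rho\oplus(1-\tr{\rho})$ turns $\trnorm{\cdot}$, $\overline F$ and $\pd$ into their normalized counterparts, after which the Fuchs--Caves measurement gives $1-F\leq T$, and Uhlmann's theorem plus monotonicity gives $T\leq\sqrt{1-F^2}$. One slip needs fixing: with the usual convention $|X|=\sqrt{X^\dagger X}$, your operator $\hat\sigma^{-1/2}|\sqrt{\hat\sigma}\sqrt{\hat\rho}|\hat\sigma^{-1/2}=\hat\sigma^{-1/2}(\sqrt{\hat\rho}\,\hat\sigma\sqrt{\hat\rho})^{1/2}\hat\sigma^{-1/2}$ has the inner factors in the wrong order, and its eigenbasis generally yields a classical fidelity strictly larger than $F$; the correct choice is $M=\hat\sigma^{-1/2}(\hat\sigma^{1/2}\hat\rho\,\hat\sigma^{1/2})^{1/2}\hat\sigma^{-1/2}$, which satisfies $M\hat\sigma M=\hat\rho$, so that $p_j=m_j^2q_j$ and $\sum_j\sqrt{p_jq_j}=\tr{M\hat\sigma}=F(\hat\rho,\hat\sigma)$.
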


\begin{fact}[Transfer of event probabilities]\label{lem:event-transfer}
Let $\rho$ and $\sigma$ be subnormalized states and let $0\leq H\leq\id$ be an effect. Then
\[
\left|\tr{H\rho}-\tr{H\sigma}\right|
\leq\trnorm{\rho-\sigma}.
\]
\end{fact}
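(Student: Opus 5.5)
The plan is to reduce the statement to the variational characterization of the trace norm. First, write $\Delta=\rho-\sigma$, which is Hermitian, and let $\Delta=\Delta_+-\Delta_-$ be its Jordan decomposition into orthogonal positive parts. Then $\frac12\norm{\rho-\sigma}_1=\frac12(\tr{\Delta_+}+\tr{\Delta_-})$ and $\tr{\rho}-\tr{\sigma}=\tr{\Delta_+}-\tr{\Delta_-}$. Adding these, I get
\[
\trnorm{\rho-\sigma}=\frac12\bigl(\tr{\Delta_+}+\tr{\Delta_-}\bigr)+\frac12\bigl|\tr{\Delta_+}-\tr{\Delta_-}\bigr|=\max\bigl\{\tr{\Delta_+},\tr{\Delta_-}\bigr\}.
\]
This identity is the key step, and it is also where the generalized definition earns its extra term: for unequal traces, $\frac12\norm{\cdot}_1$ alone would not suffice. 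Everything after it is routine.

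Next I bound the event-probability difference. Since $0\leq H\leq\id$, the difference is
\[
\tr{H\rho}-\tr{H\sigma}=\tr{H\Delta_+}-\tr{H\Delta_-}.
\]
Both terms are nonnegative because $H$ and $\Delta_\pm$ are positive semidefinite. Each is at most the corresponding $\tr{\Delta_\pm}$, since $\id-H\geq0$.

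Hence the difference lies in the interval $[-\tr{\Delta_-},\,\tr{\Delta_+}]$. Its absolute value is therefore at most $\max\{\tr{\Delta_+},\tr{\Delta_-}\}$, which equals $\trnorm{\rho-\sigma}$ by the identity above. There is no real obstacle beyond checking that identity carefully, in particular the case analysis on the sign of $\tr{\Delta_+}-\tr{\Delta_-}$.
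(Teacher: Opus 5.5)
The paper records this as a standard fact and gives no proof of its own, so there is nothing to compare step by step. Your argument is correct and complete: the identity $\trnorm{\rho-\sigma}=\max\{\tr{\Delta_+},\tr{\Delta_-}\}$ holds, the bracketing $0\leq\tr{H\Delta_\pm}\leq\tr{\Delta_\pm}$ holds, and your remark that $\frac12\norm{\rho-\sigma}_1$ alone would not suffice is accurate (take $\sigma=\rho/2$ and $H=\id$).

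The usual textbook route is different. It takes the generalized trace distance from \cite{Tom16} and embeds the subnormalized states as normalized ones, $\hat\rho=\rho\oplus(1-\tr{\rho})$ and $\hat\sigma=\sigma\oplus(1-\tr{\sigma})$. It then observes that $\trnorm{\rho-\sigma}=\frac12\norm{\hat\rho-\hat\sigma}_1$ and applies the normalized-state bound to the effect $H\oplus 0$. That reduction handles all such statements uniformly, including Fact~\ref{th:FvdG}. Your direct Jordan-decomposition computation instead gives an explicit closed form. It also shows at once that the bound is tight: it is attained by the projector onto the support of $\Delta_+$ or of $\Delta_-$, whichever has larger trace. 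Hence $\trnorm{\rho-\sigma}$ is exactly the maximum of $\left|\tr{H\rho}-\tr{H\sigma}\right|$ over all effects $H$.
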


\subsection{Sandwiched divergences and entropies}

\begin{definition}[Sandwiched \Renyi{} divergence]\label{def:sandwiched divergence}
For $\rho, \sigma\in\Pos(A)$, and $\alpha\in(0,1)\cup(1,\infty)$, define (for $\rho \neq 0$)
\[
Q_\alpha(\rho\Vert\sigma)=\tr{(\sigma^{(1-\alpha)/(2\alpha)}\rho\sigma^{(1-\alpha)/(2\alpha)})^\alpha},
\qquad
D_\alpha(\rho\Vert\sigma)=\frac1{\alpha-1}\log\frac{Q_\alpha(\rho\Vert\sigma)}{\tr{\rho}},
\]
and define $Q_\alpha(0\Vert\sigma)=0$.
For $\alpha>1$, both quantities are set to $+\infty$ unless $\supp(\rho)\subseteq\supp(\sigma)$; negative powers use the Moore--Penrose pseudoinverse. For $0<\alpha<1$, the formula for $Q_\alpha$ applies without a support restriction, and $D_\alpha$ is finite exactly when $\rho\not\perp\sigma$; otherwise $Q_\alpha=0$ and $D_\alpha=+\infty$. The divergence is extended to $\alpha\in\{0,1,\infty\}$ by the respective limits. At $\alpha=1$, it becomes the Umegaki divergence
\[
D(\rho\Vert\sigma)=
\frac{\tr{\rho\log\rho-\rho\log\sigma}}{\tr{\rho}}
\]
when $\supp(\rho)\subseteq\supp(\sigma)$, and $+\infty$ otherwise.
For $\alpha=\infty$, we have
\[
\Dmax(\rho\Vert\sigma) = \inf\{\lambda: \rho \leq 2^\lambda \sigma\}.
\]
The same notation is used for classical distributions by identifying them with diagonal states; for classical distributions, the Umegaki divergence is the same as the KL divergence.
\end{definition}

\begin{definition}[Smooth divergences and entropies]
Throughout this work, we will consider smoothing in purified distance. For $\alpha\in[0,\infty]$, nonzero $\rho\in\dop{\leq}(A)$, $\sigma \in \Pos(A)$, and $\esmooth \in [0, \sqrt{\Tr[\rho]})$, the $\esmooth$-smoothed $\alpha$-divergence between $\rho$ and $\sigma$ is given by
\[ D_\alpha^{\esmooth}(\rho\Vert\sigma) = \inf_{\substack{\rho'\in \dop{\leq}(A) \\ \pd(\rho',\rho)\leq {\esmooth}}}D_\alpha(\rho'\Vert\sigma).\]
\end{definition}

We now list some well-known properties of the sandwiched divergences that we will be using.  We will state some of the properties with the intermediate quantity $Q_\alpha$ since this will be useful for subnormalized states.
\begin{fact}[Sandwiched \Renyi{} divergence properties, {\cite{Tom16}}]\label{th:renyi-properties}
The sandwiched Renyi divergences satisfy the following properties, assuming $\alpha > 1$ unless otherwise specified.
\begin{enumerate}[label=\textup{(\roman*)},ref=\textup{(\roman*)}]
\item\label{prop:renyi-scaling} \emph{Scaling}: For $a,b>0$, the scaling identities are
\[
Q_\alpha(a\rho\Vert b\sigma)=a^\alpha b^{1-\alpha}Q_\alpha(\rho\Vert\sigma),
\qquad
D_\alpha(a\rho\Vert b\sigma)=D_\alpha(\rho\Vert\sigma)+\log(a/b).
\]

\item\label{fact:DPI} \emph{Data processing}: For positive $\rho,\sigma$ with $\rho$ nonzero, $\alpha\in[1/2,\infty]$, and a channel $\mapE$,
\[
D_\alpha(\mapE[\rho]\Vert\mapE[\sigma]) \leq D_\alpha(\rho\Vert\sigma),
\]
with equality holding if $\mapE$ is an isometry.

\item\label{fact:classmix} \emph{Classical direct sum}: For states that are direct sums on orthogonal blocks,
\[
Q_\alpha\left(\bigoplus_j\rho^j\middle\Vert\bigoplus_j\sigma^j\right)
=\sum_j Q_\alpha(\rho^j\Vert\sigma^j).\]
Consequently, if $\rho_{CQ}$ and $\sigma_{CQ}$ are normalized and classical on $C$, and if $\sfP$ and $\sfQ$ are their respective distributions on $C$, then
\[
D_\alpha(\rho\Vert\sigma)=\frac1{\alpha-1}\log
\sum_c\sfP(c)^\alpha\sfQ(c)^{1-\alpha}
2^{(\alpha-1)D_\alpha(\rho_{Q|c}\Vert\sigma_{Q|c})}.
\]
Here any term with $\sfP(c)=0$ is defined to be zero, and any term with $\sfQ(c)=0$ but $\sfP(c)>0$ is defined to be $+\infty$.

\item\label{fact:eventconditioning} \emph{Monotonicity}: $Q_\alpha$ is monotone increasing in the first argument and monotone decreasing in the second argument. That is,
\[ \rho'\geq\rho \; \Longrightarrow \; Q_\alpha(\rho'\Vert\sigma) \geq Q_\alpha(\rho\Vert\sigma), \qquad \sigma'\geq\sigma\; \Longrightarrow Q_\alpha(\rho\Vert\sigma') \leq Q_\alpha(\rho\Vert\sigma). \]
Consequently, $D_\alpha$ is also monotone decreasing in the second argument, and if normalized states satisfy $\rho\geq\lambda\rho'$ with $\lambda>0$, then
\[
D_\alpha(\rho'\Vert\sigma)
\leq D_\alpha(\rho\Vert\sigma)+\frac{\alpha}{\alpha-1}\log\frac1\lambda.
\]
\item\label{prop:renyi-purified-concavity} \emph{Purified-input concavity}: Let $\mapE$ and $\mapF$ be completely positive maps with the same input and output spaces. Suppose that $\rho_M=\sum_jt_j\rho^j_M$, where the $\rho^j_M$ are normalized states and the $t_j$ form a probability distribution. Choose an arbitrary purification $\ket\rho_{MR}$ of $\rho_M$, and independently choose arbitrary purifications $\ket{\rho^j}_{MR'}$ of the component states on a common sufficiently large register $R'$. Then
\begin{align*}
& Q_\alpha\!\left(
(\mapE\otimes\idmap_R)\!\left[\pure\rho\right]
\middle\Vert
(\mapF\otimes\idmap_R)\!\left[\pure\rho\right]
\right)\\
& \geq\sum_jt_j
Q_\alpha\!\left(
(\mapE\otimes\idmap_{R'})\!\left[\pure{\rho^j}\right]
\middle\Vert
(\mapF\otimes\idmap_{R'})\!\left[\pure{\rho^j}\right]
\right).
\end{align*}
\item\label{th:smoothing} \emph{\Renyi-to-max conversion}~\cite[Corollary~14, Eq.~(110)]{RLD26}:\footnote{Here we state the result from~\cite{RLD26}, which has $\smf{\esmooth}$ in place of the slightly worse value ${\log(2/\esmooth^2)}$ from earlier works. The practical improvement gained by this should be very small, but it allows a cleaner statement in our eventual Theorem~\ref{th:scalingdemo}.} For normalized $\rho$, positive $\sigma$, $\alpha>1$, and $0<\esmooth<1$,
\[
\Dmax^{\esmooth}(\rho\Vert\sigma)
\leq D_\alpha(\rho\Vert\sigma)+\frac{\smf{\esmooth}}{\alpha-1}.
\]
\end{enumerate}
\end{fact}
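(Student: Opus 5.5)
The statement collects standard properties, so the plan is to prove the elementary items directly from the definition and to reduce the two deep items, data processing~\ref{fact:DPI} and \Renyi-to-max conversion~\ref{th:smoothing}, to the cited literature~\cite{Tom16,RLD26}.

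\textbf{Scaling.} Item~\ref{prop:renyi-scaling} follows by pulling scalars through the functional calculus. Write $\gamma=(1-\alpha)/(2\alpha)$. Then $(b\sigma)^{\gamma}(a\rho)(b\sigma)^{\gamma}=ab^{2\gamma}\,\sigma^\gamma\rho\sigma^\gamma$, and raising to the power $\alpha$ gives the factor $a^\alpha b^{1-\alpha}$. The $D_\alpha$ identity then follows after dividing by $\tr{a\rho}=a\tr\rho$ and taking $\frac1{\alpha-1}\log$. The support condition is unchanged by positive scalars.

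\textbf{Direct sums.} For item~\ref{fact:classmix}, powers and products of block-diagonal operators act blockwise, and the trace is additive over blocks. The conditional formula then follows by writing $\rho^c=\sfP(c)\rho_{Q|c}$ and $\sigma^c=\sfQ(c)\sigma_{Q|c}$, applying scaling blockwise, and using $Q_\alpha(\rho_{Q|c}\Vert\sigma_{Q|c})=2^{(\alpha-1)D_\alpha(\rho_{Q|c}\Vert\sigma_{Q|c})}$ for normalized blocks. The conventions for $\sfP(c)=0$ and $\sfQ(c)=0$ are inherited from $Q_\alpha(0\Vert\cdot)=0$ and from the support rule.

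\textbf{Monotonicity.} For item~\ref{fact:eventconditioning} with $\alpha>1$:
\begin{itemize}
\item \emph{First argument.} If $\rho'\geq\rho$, then $\sigma^\gamma\rho'\sigma^\gamma\geq\sigma^\gamma\rho\sigma^\gamma$. Since $X\mapsto\Tr f(X)$ is monotone for increasing $f$, here $f(x)=x^\alpha$, we get $Q_\alpha(\rho'\Vert\sigma)\geq Q_\alpha(\rho\Vert\sigma)$.
\item \emph{Second argument.} Use the unitarily equivalent form $Q_\alpha(\rho\Vert\sigma)=\Tr\bigl[(\rho^{1/2}\sigma^{(1-\alpha)/\alpha}\rho^{1/2})^\alpha\bigr]$. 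Because $(1-\alpha)/\alpha\in[-1,0)$, the map $t\mapsto t^{(1-\alpha)/\alpha}$ is operator antitone. So $\sigma'\geq\sigma$ gives $\sigma'^{(1-\alpha)/\alpha}\leq\sigma^{(1-\alpha)/\alpha}$, and monotonicity of $\Tr(\cdot)^\alpha$ finishes. Some care is needed with supports and pseudoinverses; if $\supp\rho\not\subseteq\supp\sigma$ the right side is $+\infty$ and there is nothing to prove.
\item \emph{Event-conditioning bound.} From $\lambda\rho'\leq\rho$ we get $\lambda^\alpha Q_\alpha(\rho'\Vert\sigma)\leq Q_\alpha(\rho\Vert\sigma)$. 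Taking $\frac1{\alpha-1}\log$ for normalized states gives the extra term $\frac{\alpha}{\alpha-1}\log\frac1\lambda$.
\end{itemize}

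\textbf{Purified-input concavity.} For item~\ref{prop:renyi-purified-concavity}, form $\ket\psi_{MR'J}=\sum_j\sqrt{t_j}\ket{\rho^j}_{MR'}\ket j_J$. This is another purification of $\rho_M$, so it is related to $\ket\rho_{MR}$ by an isometry acting only on the purifying side, after embedding both into a common larger space if needed. Since $\mapE$ and $\mapF$ act only on $M$, this isometry commutes with them, and $Q_\alpha$ is invariant under it. The invariance holds because $D_\alpha$ is isometry invariant and the traces are preserved.

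Next apply the pinching on $J$. By data processing (item~\ref{fact:DPI}), $Q_\alpha$ can only decrease: the first-argument trace is preserved and $\alpha>1$. The pinched operators are direct sums over $j$ with blocks $t_j(\mapE\otimes\idmap)[\pure{\rho^j}]$ and $t_j(\mapF\otimes\idmap)[\pure{\rho^j}]$. By item~\ref{fact:classmix} and scaling, the result is $\sum_jt_j^\alpha t_j^{1-\alpha}Q_\alpha(\cdot\Vert\cdot)=\sum_jt_jQ_\alpha(\cdot\Vert\cdot)$, which is the claimed inequality.

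\textbf{Cited items.} Data processing for $\alpha\geq1/2$ is the Frank--Lieb/Beigi theorem, which we take from~\cite{Tom16}; isometric equality is immediate from the definition. Item~\ref{th:smoothing} is~\cite[Corollary~14]{RLD26}. A weaker version with $\log(2/\esmooth^2)$ could be rederived by the usual argument: take the projector $\Pi$ onto the positive part of $\rho-2^\lambda\sigma$, bound $\tr{\Pi\rho}$ via $Q_\alpha$ using a Markov-type inequality, and smooth by removing that part with gentle measurement.

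The genuine obstacle is these two cited results, data processing and the sharp smoothing constant. I would not reprove them, but import them from~\cite{Tom16,RLD26}.
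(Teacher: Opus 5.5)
Your plan matches the paper's route. The paper proves none of these items itself and imports them from \cite{Tom16} and \cite{RLD26}. The one argument it does spell out, for the sharp-divergence analogue of item (v), is yours: a channel on the purifying register producing $\sum_j t_j\pure{\rho^j}_{MR'}\otimes\pure{j}_J$, followed by data processing, scaling and direct-sum additivity. Your isometry-then-pinching is just a factorization of that channel. Items (i), (iii), (v), and the first-argument and event-conditioning parts of (iv) are fine as written.

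\textbf{A step that fails as stated.} In the second-argument part of (iv), $\sigma'\geq\sigma$ does not imply $\sigma'^{(1-\alpha)/\alpha}\leq\sigma^{(1-\alpha)/\alpha}$ for Moore--Penrose powers when $\supp\sigma\subsetneq\supp\sigma'$. For example, on a qubit with $\sigma=\pure{0}$ and $\sigma'=\id$, the claimed inequality reads $\id\leq\pure{0}$. Your $+\infty$ remark does not cover this case, because $\supp\rho\subseteq\supp\sigma$ may still hold.

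The standard repair is a limiting argument:
\begin{itemize}
\item Apply L\"owner--Heinz to the positive-definite pair $\sigma+\epsilon\id\leq\sigma'+\epsilon\id$.
\item Conjugate by $\rho^{1/2}$ and take $\Tr(\cdot)^\alpha$.
\item Let $\epsilon\to0$. This is legitimate because $\supp\rho\subseteq\supp\sigma\subseteq\supp\sigma'$ and each of $\sigma,\sigma'$ commutes with its own support projector. Hence $\rho^{1/2}(\sigma+\epsilon\id)^{(1-\alpha)/\alpha}\rho^{1/2}\to\rho^{1/2}\sigma^{(1-\alpha)/\alpha}\rho^{1/2}$, and likewise for $\sigma'$.
\end{itemize}

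\textbf{Your optional sketch for a weaker (vi).} It would not work as described. The cut state $(\id-\Pi)\rho(\id-\Pi)$ is only dominated by $2^\lambda(\id-\Pi)\sigma(\id-\Pi)$, which need not be bounded by any multiple of $\sigma$. The usual argument instead smooths with $G=(2^\lambda\sigma)^{1/2}(2^\lambda\sigma+\Delta_+)^{-1/2}$, where $\Delta_+$ is the positive part of $\rho-2^\lambda\sigma$. This gives $G\rho G^\dagger\leq2^\lambda\sigma$ exactly. Your Markov-type bound $\tr{\Delta_+}\leq\tr{\Pi\rho}\leq2^{-(\alpha-1)\lambda}Q_\alpha(\rho\Vert\sigma)$ then controls the smoothing distance. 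Since you cite \cite{RLD26} for (vi) anyway, this does not affect your proof.
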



\begin{lemma}[Probability transfer]\label{lem:probability}
Let $\rho$ be normalized, $\sigma$ positive, and $0\leq H\leq\id$.
\begin{enumerate}[label=\textup{(\roman*)},ref=\textup{(\roman*)}]
\item If $D_2(\rho\Vert\sigma)\leq d$, then
\[
\tr{H\rho}\leq 2^{d/2}\sqrt{\tr{H\sigma}}.
\]
\item If $\Dmax^{\esmooth}(\rho\Vert\sigma)\leq d$, then
\[
\tr{H\rho}\leq2^d\tr{H\sigma}+\esmooth.
\]
\end{enumerate}
\end{lemma}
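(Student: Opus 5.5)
For part (i), I will reduce to a classical two-outcome statement via data processing. Apply the measurement channel $\mapM[\omega]=\tr{H\omega}\pure0+\tr{(\id-H)\omega}\pure1$ to both $\rho$ and $\sigma$. By Fact~\ref{th:renyi-properties}\ref{fact:DPI} with $\alpha=2$, the divergence between the outputs is at most $d$. Write $p=\tr{H\rho}$, $q=\tr{H\sigma}$, $p'=1-p$, $q'=\tr{(\id-H)\sigma}$. The classical direct sum property, Fact~\ref{th:renyi-properties}\ref{fact:classmix}, stated there for $Q_\alpha$ and valid for positive blocks, then gives
\[
Q_2(\mapM[\rho]\Vert\mapM[\sigma])=\frac{p^2}{q}+\frac{p'^2}{q'}\geq\frac{p^2}{q}.
\]
Here I use the convention that a term with zero numerator is $0$. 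If $q=0<p$, the divergence is infinite, which contradicts $D_2\leq d$. Since $\tr{\rho}=1$, we have $D_2=\log Q_2$, so $p^2/q\leq 2^d$. Rearranging gives $p\leq 2^{d/2}\sqrt q$. The case $q=0$ forces $p=0$, so the bound holds trivially there.

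For part (ii), take any subnormalized $\rho'$ with $\pd(\rho',\rho)\leq\esmooth$ and $\Dmax(\rho'\Vert\sigma)\leq d+\eta$ for arbitrary $\eta>0$. Such a $\rho'$ exists by the definition of the infimum. By the definition of $\Dmax$, $\rho'\leq 2^{d+\eta}\sigma$. Since $H\geq0$, this gives $\tr{H\rho'}\leq 2^{d+\eta}\tr{H\sigma}$.

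Next I use Fact~\ref{lem:event-transfer} together with the Fuchs--van de Graaf inequality, Fact~\ref{th:FvdG}:
\[
\tr{H\rho}\leq\tr{H\rho'}+\trnorm{\rho-\rho'}\leq\tr{H\rho'}+\pd(\rho,\rho')\leq 2^{d+\eta}\tr{H\sigma}+\esmooth.
\]
Letting $\eta\to0$ gives the claim. If the infimum is attained, one can simply take $\eta=0$.

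No step here is a genuine obstacle. The only points needing care are the edge cases with zero-probability outcomes in (i), handled by the $+\infty$ convention. One should also check that data processing at $\alpha=2$ applies when $\sigma$ is merely positive rather than normalized; Fact~\ref{th:renyi-properties}\ref{fact:DPI} is stated for positive $\sigma$, so this holds.
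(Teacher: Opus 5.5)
Your proof is correct. Part~(ii) follows the paper's argument essentially step for step: take a near-optimal element of the smoothing ball, use $\rho'\leq2^{d+\eta}\sigma$, apply Fact~\ref{th:FvdG} together with Fact~\ref{lem:event-transfer}, and let $\eta\to0$.

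Part~(i) takes a different route. The paper stays at the operator level. With $V=\sigma^{1/4}H\sigma^{1/4}$ and $W=\sigma^{-1/4}\rho\sigma^{-1/4}$ on $\supp(\sigma)$, it writes $\tr{H\rho}=\tr{VW}$ and applies Cauchy--Schwarz in the Hilbert--Schmidt inner product. It then notes $\tr{W^2}=Q_2(\rho\Vert\sigma)$, and bounds $\tr{V^2}\leq\tr{V\sigma^{1/2}}=\tr{H\sigma}$ using $0\leq V\leq\sigma^{1/2}$. You instead coarse-grain to the binary measurement $\{H,\id-H\}$ via data processing and finish with a scalar computation.

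The paper's version is self-contained and does not need the nontrivial data-processing inequality for the sandwiched divergence. However, it relies on the specific exponent $-1/4$ at $\alpha=2$, which splits evenly for Cauchy--Schwarz. Your version outsources the quantum content to Fact~\ref{th:renyi-properties}\ref{fact:DPI}, but after that it is entirely classical. It also keeps the extra term $(1-p)^2/q'$, so it is marginally sharper. Finally, it extends verbatim to every $\alpha>1$: $D_\alpha(\rho\Vert\sigma)\leq d$ gives $\tr{H\rho}\leq2^{(\alpha-1)d/\alpha}\bigl(\tr{H\sigma}\bigr)^{(\alpha-1)/\alpha}$.
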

\begin{proof}
For (i), write $V=\sigma^{1/4}H\sigma^{1/4}$ and $W=\sigma^{-1/4}\rho\sigma^{-1/4}$ on the support of $\sigma$. The Cauchy--Schwarz inequality gives $(\tr{H\rho})^2\leq\tr{V^2}\tr{W^2}$. Since $0\leq V\leq\sigma^{1/2}$, we have by the cyclicity of trace, $\tr{V^2}\leq\tr{V\sigma^{1/2}}=\tr{H\sigma}$; the other factor is $2^{D_2(\rho\Vert\sigma)}$. This proves (i).

For (ii), choose a subnormalized $\tau$ in the smoothing ball with $\tau\leq2^{d+\zeta}\sigma$, allowing an arbitrary $\zeta>0$ if the infimum is not attained. Fact~\ref{th:FvdG} and Fact~\ref{lem:event-transfer} give
\[
|\tr{H(\rho-\tau)}|
\leq\trnorm{\rho-\tau}
\leq\pd(\rho,\tau)
\leq\esmooth.
\]
Therefore $\tr{H\rho}\leq2^{d+\zeta}\tr{H\sigma}+\esmooth$. Let $\zeta$ decrease to zero.
\end{proof}

\subsection{Sharp-divergences}
\begin{definition}[Sharp-divergences]\label{def:divergences}
For positive operators $\rho$ and $\sigma$ with $\rho$ nonzero, and $\alpha>1$, the sharp-divergence of order $\alpha$ between them is defined as follows:
\[
\Qsharp_\alpha(\rho\Vert\sigma)
=\inf\{\tr{H}:H\geq0,\ \rho\leq \sigma\mathbin{\#_{1/\alpha}}H\},
\qquad
\Dsharp_\alpha(\rho\Vert\sigma)
=\frac{1}{\alpha-1}\log\frac{\Qsharp_\alpha(\rho\Vert\sigma)}{\tr{\rho}}.
\]
Here $\#_t$ denotes the generalized geometric mean, defined as $\sigma\mathbin{\#_t}H=\sigma^{1/2}(\sigma^{-1/2}H\sigma^{-1/2})^t\sigma^{1/2}$ for positive-definite operators, and extended to positive operators via a decreasing sequence of positive-definite approximations. Set $\Qsharp_\alpha(0\Vert\sigma)=0$.\footnote{Fawzi and Fawzi~\cite{FF21sharp}, who defined the sharp-divergences, define $\Qsharp_\alpha$ the same way as us, but omit the factor of $\tr{\rho}$ in $\Dsharp_\alpha$.}

We will be particularly interested in $\alpha=2$, for which the optimization in $\Qsharp_\alpha$ takes the particular form
\begin{equation}\label{eq:order2}
\Qsharp_2(\rho\Vert\sigma)=\inf_{H\geq\rho}\tr{H\sigma^{-1}H},
\end{equation}
with the optimization restricted to the support of $\sigma$.
\end{definition}

We list some properties of the sharp-divergences that are similar to sandwiched divergences, as well as their comparison with the sandwiched divergences.

\begin{fact}[Sharp-divergence properties, {\cite[Section~3]{FF21sharp}}]\label{th:sharp-facts}
For $\alpha>1$, the following properties hold.
\begin{enumerate}[label=\textup{(\roman*)},ref=\textup{(\roman*)}]
\item\label{prop:sharp-scaling} \emph{Scaling}: For $a,b>0$, the scaling identities are
\[
\Qsharp_\alpha(a\rho\Vert b\sigma)=a^\alpha b^{1-\alpha}\Qsharp_\alpha(\rho\Vert\sigma),
\qquad
\Dsharp_\alpha(a\rho\Vert b\sigma)=\Dsharp_\alpha(\rho\Vert\sigma)+\log(a/b).
\]
\item\label{prop:sharp-data-processing} \emph{Data processing}: For a channel $\mapE$,
\[
\Qsharp_\alpha(\mapE[\rho]\Vert\mapE[\sigma])
\leq\Qsharp_\alpha(\rho\Vert\sigma).
\]
\item\label{prop:sharp-direct-sum} \emph{Classical direct sum:} For states that are direct sums on orthogonal blocks,
\[
\Qsharp_\alpha\left(\bigoplus_j\rho^j\middle\Vert\bigoplus_j\sigma^j\right)
=\sum_j\Qsharp_\alpha(\rho^j\Vert\sigma^j).
\]
Consequently, if $\rho_{CQ}$ and $\sigma_{CQ}$ are normalized and classical on $C$, with $\sfP$ and $\sfQ$ being their respective distributions on $C$, then
\[
\Dsharp_\alpha(\rho\Vert\sigma)=\frac1{\alpha-1}\log
\sum_c\sfP(c)^\alpha\sfQ(c)^{1-\alpha}
2^{(\alpha-1)\Dsharp_\alpha(\rho_{Q|c}\Vert\sigma_{Q|c})}.
\]
Here any term with $\sfP(c)=0$ is defined to be zero, and any term with $\sfQ(c)=0$ but $\sfP(c)>0$ is defined to be $+\infty$.
\item\label{prop:sharp-monotonicity} \emph{Monotonicity}: 
For $\alpha > 1$, $\Qsharp_\alpha$ is monotone increasing in the first argument and monotone decreasing in the second argument. That is,
\[ \rho'\geq\rho \; \Longrightarrow \; \Qsharp_\alpha(\rho'\Vert\sigma) \geq \Qsharp_\alpha(\rho\Vert\sigma), \qquad \sigma'\geq\sigma\; \Longrightarrow \Qsharp_\alpha(\rho\Vert\sigma') \leq \Qsharp_\alpha(\rho\Vert\sigma). \]
Consequently, $\Dsharp_\alpha$ is also monotone decreasing in the second argument, and if normalized states satisfy $\rho\geq\lambda\rho'$ with $\lambda>0$, then for $\alpha>1$,
\[
\Dsharp_\alpha(\rho'\Vert\sigma)
\leq \Dsharp_\alpha(\rho\Vert\sigma)+\frac{\alpha}{\alpha-1}\log\frac1\lambda.
\]
\item\label{prop:sharp-purified-concavity} \emph{Purified-input concavity}: Let $\mapE$ and $\mapF$ be completely positive maps with the same input and output spaces. Suppose that $\rho_M=\sum_jt_j\rho^j_M$, where the $\rho^j_M$ are normalized states and the $t_j$ form a probability distribution. Choose an arbitrary purification $\ket\rho_{MR}$ of $\rho_M$, and independently choose arbitrary purifications $\ket{\rho^j}_{MR'}$ of the component states on a common sufficiently large register $R'$. Then
\begin{align*}
& \Qsharp_\alpha\!\left(
(\mapE\otimes\idmap_R)\!\left[\pure\rho\right]
\middle\Vert
(\mapF\otimes\idmap_R)\!\left[\pure\rho\right]
\right)\\
& \geq\sum_jt_j
\Qsharp_\alpha\!\left(
(\mapE\otimes\idmap_{R'})\!\left[\pure{\rho^j}\right]
\middle\Vert
(\mapF\otimes\idmap_{R'})\!\left[\pure{\rho^j}\right]
\right).
\end{align*}

\item\label{prop:sharp-renyi-comparison} \emph{Relationship between {\Renyi} divergences}: For nonzero $\rho$, and for normalized $\rho$ in the second inequality,
\[
D_\alpha(\rho\Vert\sigma)\leq\Dsharp_\alpha(\rho\Vert\sigma),
\qquad
\Dsharp_\alpha(\rho\Vert\sigma)\leq\Dmax(\rho\Vert\sigma).
\]
\end{enumerate}
\end{fact}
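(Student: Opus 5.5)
All six items follow directly from the variational definition of $\Qsharp_\alpha$. We use standard properties of the weighted geometric mean $\#_t$ with $t=1/\alpha\in(0,1)$:
\begin{itemize}
\item \emph{joint homogeneity}: $(b\sigma)\#_t(cH)=b^{1-t}c^t(\sigma\#_tH)$;
\item \emph{monotonicity} in each argument;
\item \emph{multiplicativity under tensor products};
\item \emph{Ando's transformer inequality}: $\mapE[\sigma\#_tH]\leq\mapE[\sigma]\#_t\mapE[H]$ for positive maps $\mapE$;
\item the identity $\sigma\#_t\sigma=\sigma$.
\end{itemize}
The plan is to verify each item by exhibiting explicit feasible points $H$ in the infimum.

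\textbf{Scaling and data processing.} For scaling, suppose $\rho\leq\sigma\#_{1/\alpha}H$. By homogeneity,
\[
a\rho\leq(b\sigma)\#_{1/\alpha}(a^\alpha b^{1-\alpha}H),
\]
because $b^{1-1/\alpha}(a^\alpha b^{1-\alpha})^{1/\alpha}=a$. This map $H\mapsto a^\alpha b^{1-\alpha}H$ is a bijection between the two feasible sets, so it gives the $Q$-identity. The $D$-identity then follows after dividing by $\tr{a\rho}$.

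For data processing, let $H$ be feasible for $(\rho,\sigma)$. Positivity of $\mapE$ together with Ando's inequality gives
\[
\mapE[\rho]\leq\mapE[\sigma]\#_{1/\alpha}\mapE[H].
\]
Thus $\mapE[H]$ is feasible for the processed pair, and it has the same trace as $H$.

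\textbf{Direct sums and monotonicity.} For the direct sum, ``$\leq$'' comes from taking $H=\bigoplus_jH^j$ with each $H^j$ feasible on its block. For ``$\geq$'', let $\mapP$ be the pinching onto the blocks. Ando's inequality gives
\[
\rho=\mapP[\rho]\leq\mapP[\sigma]\#\mapP[H]=\sigma\#\mapP[H].
\]
So we may assume $H$ is block diagonal without changing its trace. The mean of block-diagonal operators is computed blockwise, so the constraint and the trace both split over the blocks. The classical formula then follows by writing $\rho^c=\sfP(c)\rho_{Q|c}$ and $\sigma^c=\sfQ(c)\sigma_{Q|c}$ and applying scaling.

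For monotonicity in the first argument, the feasible sets are nested. For the second argument, use monotonicity of $\#_t$ in $\sigma$. The consequence follows because $\lambda\rho'\leq\rho$ and scaling give $\Qsharp_\alpha(\rho')\leq\lambda^{-\alpha}\Qsharp_\alpha(\rho)$. Since $\tr{\rho'}=\tr{\rho}=1$, this adds $\frac{\alpha}{\alpha-1}\log\frac1\lambda$ to the divergence.

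\textbf{Purified-input concavity.} This step is routine, but the register bookkeeping needs care. Enlarge $R$ so that there is an isometry on $R$ mapping $\ket\rho_{MR}$ to $\sum_j\sqrt{t_j}\ket{\rho^j}_{MR'}\ket j_J$. Such an isometry exists because all purifications are related by isometries on the purifying system. Isometries preserve $\Qsharp_\alpha$, by data processing applied in both directions; the inverse direction uses a channel extending the isometry's inverse. Next, apply the measurement of $J$, which is a channel commuting with $\mapE$ and $\mapF$. This can only decrease $\Qsharp_\alpha$, and it produces direct sums $\bigoplus_jt_j(\cdot)$ in both arguments. The direct-sum property and scaling with $a=b=t_j$ then give exactly $\sum_jt_j\Qsharp_\alpha(\cdots)$.

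\textbf{Comparison with other divergences.} This is the main obstacle.

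For $\Dsharp_\alpha\leq\Dmax$, suppose $\rho\leq2^\lambda\sigma$ and take $H=2^{\lambda(\alpha-1)}\rho$. Monotonicity of $\#_t$ in the first argument, together with homogeneity and $\rho\#_t\rho=\rho$, gives
\[
\sigma\#_{1/\alpha}H\geq(2^{-\lambda}\rho)\#_{1/\alpha}(2^{\lambda(\alpha-1)}\rho)=2^{-\lambda(1-1/\alpha)+\lambda(\alpha-1)/\alpha}\rho=\rho.
\]
The trace of this $H$ is $2^{\lambda(\alpha-1)}$, so $\Dsharp_\alpha\leq\lambda$.

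For $D_\alpha\leq\Dsharp_\alpha$, I would argue through regularization.
\begin{enumerate}
\item \emph{Subadditivity.} $\Qsharp_\alpha(\rho^{\otimes n}\Vert\sigma^{\otimes n})\leq\Qsharp_\alpha(\rho\Vert\sigma)^n$, using $H^{\otimes n}$ as a feasible point via multiplicativity of the mean.
\item \emph{Dominating measured divergences.} Data processing into a measurement shows that $\Dsharp_\alpha$ dominates the measured R\'enyi divergence. Here one uses that on commuting operators $\Dsharp_\alpha$ reduces to the classical R\'enyi divergence: the scalar mean is $\sigma^{1-t}H^t$, and optimizing gives $\sum\rho^\alpha\sigma^{1-\alpha}$.
\item \emph{Regularization.} Finally, apply the known fact that $D_\alpha$ is the regularized measured divergence:
\[
D_\alpha(\rho\Vert\sigma)=\lim_n\tfrac1nD^{\mathrm{meas}}_\alpha(\rho^{\otimes n}\Vert\sigma^{\otimes n})\leq\Dsharp_\alpha(\rho\Vert\sigma).
\]
\end{enumerate}
Verifying the classical reduction in step~2 carefully on supports is the step I expect to require the most care.
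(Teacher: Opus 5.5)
Your proposal is correct. For item~(v) it is essentially the paper's own argument. The paper also sends $\ket\rho_{MR}$, by an operation on the reference alone, to the extension $\sum_j t_j\pure{\rho^j}_{MR'}\otimes\pure{j}_J$, applies data processing, and finishes with direct-sum additivity and scaling with $a=b=t_j$.

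The paper does this in one step, with an arbitrary channel $R\to R'J$; such a channel always exists because both states extend $\rho_M$. That is slightly cleaner than your isometry-then-measure version, for two reasons:
\begin{itemize}
\item Only one direction of data processing is needed, so your claim that isometries preserve $\Qsharp_\alpha$ is superfluous.
\item No dimension bookkeeping arises. An isometry from $R$ into $R'J$ requires $|R|\leq|R'||J|$, so enlarging $R$ goes the wrong way; you would pad the target instead.
\end{itemize}

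For every other item the paper gives no proof and simply defers to \cite[Section~3]{FF21sharp}. Your derivations from the Kubo--Ando properties of $\#_{1/\alpha}$ (homogeneity, monotonicity, Ando's inequality for positive maps, tensor multiplicativity, $\sigma\#_t\sigma=\sigma$) are therefore a self-contained substitute, and they check out.

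The one place you import heavy machinery is $D_\alpha\leq\Dsharp_\alpha$. Your route through subadditivity, domination of measured divergences, and the characterization of $D_\alpha$ as a regularized measured divergence is valid; subadditivity is exactly the direction needed. There is, however, a direct alternative that avoids any asymptotic argument:
\begin{itemize}
\item Take any feasible $H$ and first assume $\sigma$ is invertible. Conjugate $\rho\leq\sigma\#_{1/\alpha}H$ by $\sigma^{(1-\alpha)/(2\alpha)}$.
\item Use monotonicity of $X\mapsto\Tr[X^\alpha]$ on positive operators.
\item Apply the Araki--Lieb--Thirring inequality with exponent $\alpha$ to $\sigma^{1/(2\alpha)}(\sigma^{-1/2}H\sigma^{-1/2})^{1/\alpha}\sigma^{1/(2\alpha)}$. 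This gives $Q_\alpha(\rho\Vert\sigma)\leq\Tr[\sigma^{1/2}(\sigma^{-1/2}H\sigma^{-1/2})\sigma^{1/2}]=\tr{H}$.
\item For singular $\sigma$, apply this to $\sigma+\epsilon\id$, which keeps $H$ feasible by monotonicity of the mean. Then let $\epsilon\to0$, using $Q_\alpha(\rho\Vert\sigma+\epsilon\id)\to Q_\alpha(\rho\Vert\sigma)$ under support inclusion.
\end{itemize}
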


For item~\ref{prop:sharp-purified-concavity}, a channel from $R$ to $R'J$ maps $\pure\rho_{MR}$ to the extension $\sum_jt_j\pure{\rho^j}_{MR'}\otimes\pure j_J$, since both states have marginal $\rho_M$. Apply sharp-divergence data-processing to this channel, followed by scaling and direct-sum additivity (items~\ref{prop:sharp-data-processing}, \ref{prop:sharp-scaling}, and~\ref{prop:sharp-direct-sum}, proved in~\cite[Propositions~3.2 and~3.5]{FF21sharp}), to obtain the stated weighted sum.


Finally, we define the sharp-divergence of channels, and state a chain rule for it.
\begin{definition}[Channel sharp-divergence]\label{def:channel-div}
For $\alpha>1$, a channel $\mapE\in\CPTP{\inQ}{\outQ}$, and a completely positive map $\mapF$ with the same input and output,
\[
\Dsharp_\alpha(\mapE\Vert\mapF)
=\sup_{\rho\in\dop{=}(\inQ)}
\Dsharp_\alpha\Dargs{\mapE[\rho]}{\mapF[\rho]}.
\]
We highlight that in our definition we do \emph{not} insert a purification of the input state (sometimes called a ``stabilizer'' register); see Remark~\ref{rem:stab}.
\end{definition}

\begin{remark}
\label{rem:stab}
In information theory, it is common to instead define channel divergences with an additional purifying register (a.k.a.~``stabilizer'') exactly analogous to the definition of diamond norm. That is, given a divergence $\mathbb{D}$ for states, the channel divergence is instead defined as $\sup_{\rho\in\dop{=}(\inQ R)}
\mathbb{D}\Dargs{\mapE[\rho]}{\mapF[\rho]}$ where $R$ is a register large enough to purify $\inQ$ (recall from our convention~\eqref{eq:impliedidmap} that the channels on the right-hand-side here have implied identity maps on $R$). However, throughout this work we do not inherently include a stabilizer register in our definitions of channel divergences; when we require such a register, we shall specify it by explicitly writing an identity channel on a suitable register.
\end{remark}

\begin{fact}[Sharp-divergence channel chain rule, {\cite[Proposition~4.5]{FF21sharp}}]\label{th:FF}
Let $\mapE$ be CPTP and $\mapF$ CP, with the same input and output domains. For positive $\rho_{\stabE\inQ},\sigma_{\stabE\inQ}$, with $|\stabE|\geq|\inQ|$, and $\alpha>1$,
\[
\Qsharp_\alpha\Dargs{\mapE[\rho]}{\mapF[\sigma]}
\leq
2^{(\alpha-1)\Dsharp_\alpha(\mapE\otimes\idmap_{\stabE}\Vert\mapF\otimes\idmap_{\stabE})}
\Qsharp_\alpha(\rho\Vert\sigma).
\]
For nonzero $\rho$, the equivalent trace-normalized form is
\[
\Dsharp_\alpha\Dargs{\mapE[\rho]}{\mapF[\sigma]}
\leq\Dsharp_\alpha(\rho\Vert\sigma)+\Dsharp_\alpha(\mapE\otimes\idmap_{\stabE}\Vert\mapF\otimes\idmap_{\stabE}).
\]
\end{fact}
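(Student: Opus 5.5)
The plan is to reduce the claim to a single operator inequality built from two feasible points: the one in the definition of $\Qsharp_\alpha(\rho\Vert\sigma)$, and one in a Choi-operator characterization of the stabilized channel sharp-divergence. These would be combined using the transformer inequality for the weighted geometric mean. Throughout, write $t=1/\alpha\in(0,1)$. By the scaling property (Fact~\ref{th:sharp-facts}\ref{prop:sharp-scaling}), the trace-normalized form follows from the $\Qsharp_\alpha$ form by taking logarithms and dividing by $(\alpha-1)\tr{\rho}$; this uses $\tr{\mapE[\rho]}=\tr{\rho}$, since $\mapE$ is trace preserving. So only the first inequality needs proof.

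\textbf{Step 1 (dual form of the channel quantity).} Let $J_\mapE,J_\mapF$ be the Choi operators on $\outQ\stabE'$, with $\stabE'\cong\inQ$, and set $\lambda=2^{(\alpha-1)\Dsharp_\alpha(\mapE\otimes\idmap_{\stabE}\Vert\mapF\otimes\idmap_{\stabE})}$. I would first establish that there exists $H_1\geq0$ on $\outQ\stabE'$ with
\[
J_\mapE\leq J_\mapF\mathbin{\#_t}H_1,\qquad \Tr_{\outQ}[H_1]\leq\lambda\,\id_{\stabE'}.
\]
The route is as follows. Every purified input is $(\id\otimes X)\ket\Phi$, where $\ket\Phi$ is the unnormalized maximally entangled vector and $\tr{X^\dagger X}=1$. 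Congruence by an invertible $X$ commutes with $\#_t$, so the channel quantity becomes
\[
\sup_{\omega}\ \inf_{H:\,J_\mapE\leq J_\mapF\#_t H}\tr{(\id\otimes\omega)H},
\]
with $\omega=X^TX^{\dagger T}$ ranging over density operators on $\stabE'$. The objective is linear in both variables, and the feasible set of $H$ is convex by joint concavity of $\#_t$. Sion's minimax theorem (Fact~\ref{th:minimax}) then swaps the sup and inf. Finally, $\sup_\omega\tr{(\id\otimes\omega)H}=\norm{\Tr_{\outQ}H}_\infty$, which gives the claimed $H_1$, up to an arbitrarily small slack if the infimum is not attained. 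The hypothesis $|\stabE|\geq|\inQ|$ is what makes the stabilized supremum range over all such $\omega$. I expect this step to be the main obstacle, because of support and positive-definiteness issues in $\#_t$. To handle them I would first perturb to $J_\mapF+\eps\id$ and $\sigma+\eps\id$, use the monotonicity in Fact~\ref{th:sharp-facts}\ref{prop:sharp-monotonicity}, and take $\eps\to0$ at the end.

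\textbf{Step 2 (combine).} Take $H_0\geq0$ with $\rho\leq\sigma\mathbin{\#_t}H_0$ and $\tr{H_0}\leq\Qsharp_\alpha(\rho\Vert\sigma)+\zeta$. Write the channel action through the Choi operator as $\mapE[Y]=\Psi(Y^{T}\otimes J_\mapE)$. Here $\Psi$ is the positive (indeed CP) map that contracts the transposed $\inQ$ factor with $\stabE'$ through $\ket\Phi$, while leaving $\stabE$ and $\outQ$ untouched. Since $\Psi$ is positive, transposition preserves the operator order, and the tensor product of positive operators is monotone, we get
\[
\mapE[\rho]\leq\Psi\bigl((\sigma\#_tH_0)^T\otimes(J_\mapF\#_tH_1)\bigr)=\Psi\bigl((\sigma^T\otimes J_\mapF)\#_t(H_0^T\otimes H_1)\bigr),
\]
using that $\#_t$ commutes with transpose and is multiplicative under tensor products. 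Ando's transformer inequality $\Psi(A\#_tB)\leq\Psi(A)\#_t\Psi(B)$ for positive maps then yields $\mapE[\rho]\leq\mapF[\sigma]\mathbin{\#_t}H'$ with $H'=\Psi(H_0^T\otimes H_1)$.

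\textbf{Step 3 (trace bound).} Finally compute $\tr{H'}$. The contraction with $\ket\Phi$ identifies $\inQ$ with $\stabE'$, so
\[
\tr{H'}=\tr{(H_0\otimes\id)\,(\id_{\stabE}\otimes\Tr_{\outQ}H_1)}\leq\lambda\,\tr{H_0}.
\]
The identifications of $\inQ$ with $\stabE'$ should be checked carefully against the transpose convention. Hence
\[
\Qsharp_\alpha(\mapE[\rho]\Vert\mapF[\sigma])\leq\tr{H'}\leq\lambda\bigl(\Qsharp_\alpha(\rho\Vert\sigma)+\zeta\bigr),
\]
and letting $\zeta\to0$ gives the claim.
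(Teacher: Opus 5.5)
This is essentially the proof of \cite[Proposition~4.5]{FF21sharp}, which the paper cites rather than reproduces; its IM analogue (Theorems~\ref{thrm:channel div} and~\ref{thrm:IM ch div chain rule}) follows the same route. That route is: a Choi-operator dual form of the stabilized channel quantity via Sion's theorem, then tensoring the two feasible points, contracting with the maximally entangled vector, applying the transformer inequality, and bounding the trace via $\Tr_{\outQ}H_1\leq\lambda\id$.

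Two small repairs are needed. First, drop the transpose on $Y$. With the Choi convention of Theorem~\ref{thrm:channel div}, one has $(\idmap_{\stabE}\otimes\mapE)[Y]=|\inQ|\,(\bra{\Phi^+}\otimes\id)(Y\otimes J_\mapE)(\ket{\Phi^+}\otimes\id)$ directly, whereas a partial transpose on $\inQ$ would not preserve operator order; transposes then only enter through $\tr{H'}=\tr{(\Tr_{\stabE}H_0)^T\,\Tr_{\outQ}H_1}$. Second, in Step~1 your reduction to $\sup_\omega\inf_H\tr{(\id\otimes\omega)H}$ is only immediate for invertible $\omega$, so pass to singular $\omega$ using concavity of $\omega\mapsto\inf_H\tr{(\id\otimes\omega)H}$, as in the proof of Theorem~\ref{thrm:channel div}. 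Do this instead of perturbing $J_\mapF$ and $\sigma$: that perturbation is unnecessary, and since monotonicity lowers the left-hand side too, it would also need continuity of $\Qsharp_\alpha$ from above in its second argument.
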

Note that this chain rule does not hold for many other types of divergences, such as sandwiched divergences. The fact that it does hold for sharp-divergences will turn out to be a very important property for us. 


\section{Sequential comparison theorem}\label{sec:comparison}

In this section, we prove our sequential comparison theorem, which is formally stated in Theorem~\ref{th:main}.
We first formally describe the sequential processes, and the postulate they must satisfy.
\begin{postulates}[Form of sequential channel pairs]\label{posts:seq}
For $i=1,\ldots,n$, let the $i$-th round channels be
\[
\mapE_i,\mapF_i\in\CPTP{\inM_i}{\CS_i\outQ_i\inM_{i+1}}.
\]
We require:
\begin{enumerate}[label=\textup{(\roman*)},ref=\textup{(\roman*)}]
\item\label{post:common-record} The register $\CS_i$ is classical, taking values in $\alphCS$, and has the same marginal under $\mapE_i$ and $\mapF_i$ on every input.
\item\label{post:protected} All reusable information available at the start of round $i$ is in $\inM_i$; $\CS_j\outQ_j$, for $j<i$, are protected registers that are acted on only by identity maps. But $\inM_{i+1}$ may contain copies of the current classical record $\CS_i$ and any classical information stored in $\outQ_i$.
\item\label{post:domination} For some finite constant $\Lambda$, every finite reference $R_i$ and every normalized pure input $\ket{\sigma^i}_{\inM_iR_i}$ satisfy
\[
(\mapE_i\otimes\idmap_{R_i})\!\left[\pure{\sigma^i}\right]
\leq\Lambda\,(\mapF_i\otimes\idmap_{R_i})\!\left[\pure{\sigma^i}\right].\footnote{It is enough to take $R_i\cong\inM_i$.}
\]
\item\label{post:acceptance} Let $S_\Omega$ be a nonempty closed convex subset of the probability simplex on $\alphCS$. The accept condition on the protected classical record $\bCS$ is $\freq_\bcS \in S_\Omega$, and we denote this event by $\Omega$.
\end{enumerate}
\end{postulates}

 We use $\bmapE=\mapE_n\circ\cdots\circ\mapE_1$ and $\bmapF=\mapF_n\circ\cdots\circ\mapF_1$, to refer to the whole sequential process, with identities on earlier outputs understood.


\subsection{The penalized comparison exponent}\label{sec:compexp}

 Our $\mapF_i$ produces a state which is classical on $\CS_i$. Using this, for a real function $f$ on $\alphCS$, we define another completely positive, generally non-trace-preserving map in which the $\cS$-th block of the output of $\mapF_i$ is multiplied by $2^{f(\cS)}$:
\begin{equation}\label{eq:mapF_f}
\mapF_i^f[\sigma]
=\sum_{\cS\in\alphCS}2^{f(\cS)}
\left(\pure{\cS}_{\CS_i}\otimes\id_{\outQ_i\inM_{i+1}}\right)
\mapF_i[\sigma]
\left(\pure{\cS}_{\CS_i}\otimes\id_{\outQ_i\inM_{i+1}}\right).
\end{equation}
For brevity, we use the following notation for the stabilized channel sharp-divergence between $\mapE_i$ and $\mapF_i^f$:
\begin{equation}\label{eq:R_f-1}
R^\#_{\alpha,\mapE_i,\mapF_i}(f)=\Dsharp_\alpha(\mapE_i\otimes\idmap_{R_i}\Vert\mapF_i^f\otimes\idmap_{R_i}),
\end{equation}
where $R_i$ is a reference register with $|R_i|\geq|\inM_i|$.

For a normalized input $\ket{\sigma^i}_{\inM_iR_i}$, write
\[
\sigma^{i,\mapE_i}=(\mapE_i\otimes\idmap_{R_i})\!\left[\pure{\sigma^i}\right],
\qquad
\sigma^{i,\mapF_i}=(\mapF_i\otimes\idmap_{R_i})\!\left[\pure{\sigma^i}\right],
\]
and let $\sfP_{C_i}$ be their common distribution on $\CS_i$, as required by Postulate Set~\ref{posts:seq}\ref{post:common-record}. Write $\sigma^{i,\mapE_i}_{|\cS}$ and $\sigma^{i,\mapF_i}_{|\cS}$ for the corresponding normalized conditional states on $R_i\outQ_i\inM_{i+1}$ whenever $\sfP_{C_i}(\cS)>0$. These are trial one-round states, not marginals of a global state obtained by applying $\bmapE$ and $\bmapF$ to an initial state. Finally, using these states, we define the penalized comparison exponent:
\begin{equation}\label{eq:rate}
\compexp{\alpha}
=\sup_{\substack{\ket{\sigma^1},\ldots,\ket{\sigma^n}\\
\sfQ_{C_1},\ldots,\sfQ_{C_n}:\ \frac1n\sum_i\sfQ_{C_i}\in S_\Omega}}
\sum_{i=1}^n
\left[
\sum_{\cS}\sfQ_{C_i}(\cS)\Dsharp_\alpha(\sigma^{i,\mapE_i}_{|\cS}\Vert\sigma^{i,\mapF_i}_{|\cS})
-\frac{D(\sfQ_{C_i}\Vert\sfP_{C_i})}{\alpha-1}
\right],
\end{equation}
where a summand with $\sfQ_{C_i}(\cS)>0=\sfP_{C_i}(\cS)$ is interpreted to have value $-\infty$; a zero-probability, zero-weight summand is zero.

This quantity appears in the exponent when comparing event probabilities for the real and ideal processes. We call it penalized due to the KL-divergence term, which serves as a ``soft penalty'' as discussed in Sec.~\ref{subsubsec:vibes_opt}.

\subsection{Sequential testing comparison theorem}
\label{subsec:sequential_with_testing}
We will now state and prove our sequential comparison theorem with testing. First, we fix a normalized initial state $\rho^0_{\inM_1R}$, where $R$ is arbitrary external side information and all channels act as the identity on $R$. We set
\[
\realstate=\bmapE[\rho^0],
\qquad
\idealstate=\bmapF[\rho^0].
\]
Both these states are normalized. For the accept event $\Omega$, let $p_\Omega=\Pr_{\realstate}[\Omega]$.

\begin{theorem}[Sequential testing comparison]\label{th:main}
Under Postulate Set~\ref{posts:seq}, let $\alpha>1$ and $p_\Omega>0$. Then
\[
D_\alpha\Dargs{\realstate_{|\Omega}}{\idealstate}
\leq \compexp{\alpha}+\frac{\alpha}{\alpha-1}\log\frac1{p_\Omega}.
\]
Consequently, for every $0<\esmooth<1$,
\[
\Dmax^{\esmooth}\Dargs{\realstate_{|\Omega}}{\idealstate}
\leq \compexp{\alpha}
+\frac{\smf{\esmooth}+\alpha\log(1/p_\Omega)}{\alpha-1}.
\]
\end{theorem}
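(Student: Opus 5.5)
The proof runs the estimator argument of Sec.~\ref{subsec:techniques} with the sharp-divergence chain rule (Fact~\ref{th:FF}), and then removes the estimators by a minimax swap.

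\emph{Step 1: weighted ideal process.} Fix arbitrary real functions $f_1,\ldots,f_n$ on $\alphCS$ and set $\bmapF^{\mathbf f}=\mapF_n^{f_n}\circ\cdots\circ\mapF_1^{f_1}$. Since the $\CS_i$ are protected and classical, $\bmapF^{\mathbf f}[\rho^0]$ is just $\idealstate$ with each block $\bcS$ reweighted by $2^{\sum_i f_i(c_i)}$. Iterating Fact~\ref{th:FF} $n$ times, starting from $\Qsharp_\alpha(\rho^0\Vert\rho^0)=1$, gives
\[
\Qsharp_\alpha\Dargs{\realstate}{\bmapF^{\mathbf f}[\rho^0]}\leq 2^{(\alpha-1)\sum_i R^\#_{\alpha,\mapE_i,\mapF_i}(f_i)}.
\]
Postulate~\ref{post:protected} lets the earlier outputs be absorbed into the reference, and Postulate~\ref{post:domination} keeps everything finite.

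\emph{Step 2: pass to the conditioned state.} Use monotonicity (Fact~\ref{th:sharp-facts}\ref{prop:sharp-monotonicity}) to restrict $\realstate$ to $\realstate_{\land\Omega}$. Then apply Fact~\ref{th:sharp-facts}\ref{prop:sharp-renyi-comparison} to drop from $\Qsharp_\alpha$ to $Q_\alpha$. Both states are block diagonal in $\bCS$, and on accepted blocks $\sum_i f_i(c_i)\leq n\max_{\sfQ\in S_\Omega}\langle \sfQ, f\rangle$ whenever all $f_i=f$ are equal. In general one bounds by $\sup_{\freq\in S_\Omega}$ of the corresponding linear functional, and I would keep separate $f_i$ with the constraint expressed through averaged $\sfQ_{C_i}$. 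Pulling this weight out of the second argument using scaling and direct sum gives
\[
D_\alpha(\realstate_{|\Omega}\Vert\idealstate)\leq \sum_i R^\#(f_i)+\sup_{\frac1n\sum_i\sfQ_{C_i}\in S_\Omega}\sum_i \langle\sfQ_{C_i},f_i\rangle+\tfrac{\alpha}{\alpha-1}\log\tfrac1{p_\Omega}.
\]

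\emph{Step 3: evaluate and optimize.} For a fixed one-round input, use the direct-sum formula (Fact~\ref{th:sharp-facts}\ref{prop:sharp-direct-sum}) and the common marginal $\sfP_{C_i}$ from Postulate~\ref{post:common-record}. This gives $(\alpha-1)\Dsharp_\alpha(\sigma^{i,\mapE_i}\Vert\mapF_i^{f}[\sigma^i])=\log\sum_c \sfP(c)2^{(\alpha-1)(d_c-f(c))}$, where $d_c$ is the conditional sharp-divergence. By the Gibbs variational principle (Donsker--Varadhan) this equals $\sup_{\sfQ}\bigl[\sum_c\sfQ(c)(d_c-f(c))-D(\sfQ\Vert\sfP)/(\alpha-1)\bigr]$. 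Substituting, the right-hand side becomes
\[
\inf_{f}\ \sup_{\sigma,\sfQ_{C_i},\sfQ'}\sum_i\Bigl[\sum_c\sfQ_{C_i}(c)d_c-\tfrac{D(\sfQ_{C_i}\Vert\sfP_{C_i})}{\alpha-1}+\langle \sfQ'_{C_i}-\sfQ_{C_i},f_i\rangle\Bigr],
\]
where the $\sfQ'_{C_i}$ are constrained by $\frac1n\sum_i\sfQ'_{C_i}\in S_\Omega$. A minimax swap via Fact~\ref{th:minimax} then makes the $f$-infimum force $\sfQ=\sfQ'$, which is exactly $\compexp{\alpha}$. The smoothed statement follows immediately from Fact~\ref{th:renyi-properties}\ref{th:smoothing}.

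\emph{Main obstacle.} The hardest step is justifying the minimax swap. The payoff is linear, hence convex, in $f$, and the $\sfQ'$ and $\sfQ$ variables live in compact convex simplices. The trouble is the $\sigma$-dependence: the set of induced triples $(\sfP,d)$ need not be convex, and the conditional sharp-divergences need not be concave in the input state. I would first try to convexify using purified-input concavity (Fact~\ref{th:sharp-facts}\ref{prop:sharp-purified-concavity}), which turns $\sum_c\sfP(c)^{\alpha}\cdots$ into a quantity that is concave under mixing of inputs. Failing that, I would take the supremum over the convex hull of achievable $(\sfP,\text{block-}Q)$ pairs, checking that this relaxation does not increase the exponent, or else only changes the stated definition harmlessly. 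Upper semicontinuity follows from finiteness under $\Lambda$-domination.
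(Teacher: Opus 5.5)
Your outline matches the paper's proof: tilted ideal maps with Fact~\ref{th:FF}, then truncation to accepted blocks, then the Donsker--Varadhan form of $R^\#_{\alpha,\mapE_i,\mapF_i}(f)$, and finally a minimax swap in which the infimum over $f$ enforces $\frac1n\sum_i\sfQ_{C_i}\in S_\Omega$. Using separate $f_i$ is a valid but unnecessary generalization; on an accepted block, take $\sfQ_{C_i}$ to be the point mass at $c_i$, whereas the paper uses a single $f$. Passing to $Q_\alpha$ early is harmless.

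The gap is the step you yourself flag. As you set it up, the payoff depends on pure trial inputs (a non-convex set) through the pair $(\sfP_{C_i},d)$, so Fact~\ref{th:minimax} does not apply, and neither remedy you list is carried out. Purified-input concavity is the right tool, but on its own it gives concavity of a block $\Qsharp$ quantity, not of your payoff. The convex-hull fallback, unless shown to be exact, proves a bound with a relaxed exponent rather than $\compexp{\alpha}$ as defined.

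The missing idea is to never work with $d_\cS$ directly. Define $h_{i,\cS}(\sigma)=\Qsharp_\alpha\Dargs{\sigma^{i,\mapE_i}_{\land(\CS_i=\cS)}}{\sigma^{i,\mapF_i}_{\land(\CS_i=\cS)}}=\sfP_{C_i}(\cS)2^{(\alpha-1)d_\cS}$, so that
\[
\sum_\cS\sfQ(\cS)d_\cS-\frac{D(\sfQ\Vert\sfP_{C_i})}{\alpha-1}
=\frac1{\alpha-1}\sum_\cS\sfQ(\cS)\log\frac{h_{i,\cS}(\sigma)}{\sfQ(\cS)}.
\]
Three facts then give concavity:
\begin{enumerate}
\item $h_{i,\cS}$ depends only on the marginal $\sigma_{M_i}$, by data processing on the purifying register, so the domain becomes the convex set of density operators.
\item $h_{i,\cS}$ is concave there, by purified-input concavity applied to the block-selected CP maps.
\item $(u,v)\mapsto u\log(v/u)$ is jointly concave and nondecreasing in $v$, so each summand is jointly concave in $(\sfQ,\sigma_{M_i})$.
\end{enumerate}

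Your claim that upper semicontinuity ``follows from finiteness under $\Lambda$-domination'' also needs repair. Domination gives $h_{i,\cS}\le\Lambda^{\alpha-1}$, but a finite concave function is only guaranteed continuous on the relative interior, i.e.\ at positive-definite inputs. The $-\infty$ terms from $\sfP_{C_i}(\cS)=0$ must also be controlled. The paper handles both issues as follows:
\begin{itemize}
\item It restricts to $\sigma_{M_i}\ge z\id/|M_i|$, where continuity holds and $\sfP_{C_i}(\cS)$ vanishes either for all inputs or for none.
\item It applies Sion's theorem on this restricted set.
\item It uses $h_{i,\cS}((1-z)\sigma+z\id/|M_i|)\ge(1-z)h_{i,\cS}(\sigma)$ to show the optimal values move by at most $\frac{n}{\alpha-1}\log\frac1{1-z}$, and then lets $z\to0$.
\end{itemize}
See Lemmas~\ref{lem:h-prop} and~\ref{lemma:dual_sharp}.
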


\begin{proof}
As noted in the introduction, we follow a proof technique basically developed in~\cite{DFR20,ZFK20,arx_AHT24}. First consider an arbitrary real-valued function $f$ on $\alphCS$.
We write $\bmapE_{<i}=\mapE_{i-1}\circ\cdots\circ\mapE_1$ and $\bmapF^f_{<i}=\mapF^f_{i-1}\circ\cdots\circ\mapF^f_1$, with identities on protected registers understood and the empty compositions equal to the identity. We apply Fact~\ref{th:FF} in round $i$, taking $R\bCS_{<i}\boutQ_{<i}$ as the reference and appending a fixed pure ancilla to this reference if needed. This gives
\[
\Qsharp_\alpha\left(
(\idmap\otimes\mapE_i)\!\left[\bmapE_{<i}[\rho^0]\right]
\middle\Vert
(\idmap\otimes\mapF^f_i)\!\left[\bmapF^f_{<i}[\rho^0]\right]
\right)
\leq 2^{(\alpha-1)R^\#_{\alpha,\mapE_i,\mapF_i}(f)}
\Qsharp_\alpha\left(\bmapE_{<i}[\rho^0]\middle\Vert\bmapF^f_{<i}[\rho^0]\right).
\]
Induction, starting from $\Qsharp_\alpha(\rho^0\Vert\rho^0)=1$, gives
\begin{equation}\label{eq:Qsharp-fchain}
\Qsharp_\alpha\Dargs{\realstate}{\bmapF^f[\rho^0]}
\leq2^{(\alpha-1)\sum_iR^\#_{\alpha,\mapE_i,\mapF_i}(f)}.
\end{equation}

By the protection condition in Postulate Set~\ref{posts:seq}\ref{post:protected}, $C_i$ is not touched in later rounds. Therefore,
\[ \left(\bmapF^f[\rho^0]\right)_{\land(\bCS=\bcS)}= 2^{\sum_i f(\cS_i)}\idealstate_{\land(\bCS=\bcS)}.\]
Direct-sum additivity and scaling for $\Qsharp$ (Fact~\ref{th:sharp-facts}\ref{prop:sharp-direct-sum} and \ref{prop:sharp-scaling}), along with its non-negativity, now give
\begin{align}
\Qsharp_\alpha(\realstate\Vert\bmapF^f[\rho^0])
 & = \sum_{\bcS}\Qsharp_\alpha\left(\realstate_{\land(\bCS=\bcS)}\middle\Vert\left(\bmapF^f[\rho^0]\right)_{\land(\bCS=\bcS)}\right) \nonumber \\
 & \geq \sum_{\bcS \in \Omega}\Qsharp_\alpha\left(\realstate_{\land(\bCS=\bcS)}\middle\Vert\left(\bmapF^f[\rho^0]\right)_{\land(\bCS=\bcS)}\right) \nonumber \\
 & = \sum_{\bcS \in \Omega}2^{-(\alpha-1)\sum_{i=1}^nf(c_i)}\Qsharp_\alpha\left(\realstate_{\land(\bCS=\bcS)}\middle\Vert\idealstate_{\land(\bCS=\bcS)}\right). \label{eq:Qsharp-penalty}
 \end{align}
Now for any $\bcS\in\Omega$,
\[ \sum_{i=1}^nf(c_i) = \sum_c\sum_{i:c_i=c}f(c) = n\sum_c\freq_\bcS(c)f(c) \leq n\cdot\sup_{\sfQ\in S_\Omega}\sum_c\sfQ(c)f(c),\]
where the inequality follows because $\freq_\bcS\in S_\Omega$ for any $\bcS\in\Omega$. Since $\alpha>1$, the factor $2^{-(\alpha-1)\sum_i f(c_i)}$ is decreasing in $\sum_i f(c_i)$. Substituting this upper bound into \eqref{eq:Qsharp-penalty} therefore gives
\begin{align*}
\Qsharp_\alpha(\realstate\Vert\bmapF^f[\rho^0])
 & \geq 2^{-(\alpha-1)n\sup_{\sfQ\in S_\Omega}\sum_c\sfQ(c)f(c)}\sum_{\bcS \in \Omega}\Qsharp_\alpha\left(\realstate_{\land(\bCS=\bcS)}\middle\Vert\idealstate_{\land(\bCS=\bcS)}\right) \\
 & = 2^{-(\alpha-1)n\sup_{\sfQ\in S_\Omega}\sum_c\sfQ(c)f(c)}\Qsharp_\alpha(\realstate_{\land\Omega}\Vert\idealstate_{\land\Omega}).
\end{align*}

Finally, from \eqref{eq:Qsharp-fchain}, we get,
\[
\Qsharp_\alpha(\realstate_{\land\Omega}\Vert\idealstate_{\land\Omega}) \leq 2^{(\alpha-1)\left(\sum_{i=1}^nR^\#_{\alpha,\mapE_i,\mapF_i}(f)+n\sup_{\sfQ\in S_\Omega}\sum_c\sfQ(c)f(c)\right)}.
\]
Since this holds for every real function $f$ on $\alphCS$, we can say,
\begin{align}
\Qsharp_\alpha(\realstate_{\land\Omega}\Vert\idealstate_{\land\Omega})
& \leq 2^{(\alpha-1)\inf_f\left(\sum_{i=1}^nR^\#_{\alpha,\mapE_i,\mapF_i}(f)+n\sup_{\sfQ\in S_\Omega}\sum_c\sfQ(c)f(c)\right)} \\
& = 2^{(\alpha-1)\compexp{\alpha}}, \label{eq:subminimax}
\end{align}
where the last line follows from a minimax property of $\compexp{\alpha}$: it is somewhat tedious to establish but is basically a standard result in entropy accumulation proofs, so we defer its statement and proof to Lemma~\ref{lemma:dual_sharp} in the appendices. Now, $\idealstate\geq\idealstate_{\land\Omega}$ because the latter is obtained by discarding the non-accepting classical blocks. Fact~\ref{th:sharp-facts}\ref{prop:sharp-monotonicity} thus gives
\[
\Qsharp_\alpha(\realstate_{\land\Omega}\Vert\idealstate)
\leq \Qsharp_\alpha(\realstate_{\land\Omega}\Vert\idealstate_{\land\Omega}).
\]
Dividing the first argument by $p_\Omega$ gives us the normalized state $\realstate_{|\Omega}$, and the scaling rule for $\Qsharp_\alpha$ (Fact~\ref{th:sharp-facts}\ref{prop:sharp-scaling}) gives
\[
\Qsharp_\alpha(\realstate_{|\Omega}\Vert\idealstate)
\leq p_\Omega^{-\alpha}2^{(\alpha-1)\compexp{\alpha}}.
\]
Since $\realstate_{|\Omega}$ is normalized, taking the logarithm and using the definition of $\Dsharp_\alpha$ gives
\[
\Dsharp_\alpha(\realstate_{|\Omega}\Vert\idealstate)
\leq \compexp{\alpha}+\frac{\alpha}{\alpha-1}\log\frac1{p_\Omega}.
\]

The comparison between the sandwiched and sharp divergences in
Fact~\ref{th:sharp-facts}\ref{prop:sharp-renyi-comparison} gives
\[
D_\alpha\left(\realstate_{|\Omega}\Vert\idealstate\right)
\leq \Dsharp_\alpha\left(\realstate_{|\Omega}\Vert\idealstate\right).
\]
This proves the first inequality in the theorem. Applying Fact~\ref{th:renyi-properties}\ref{th:smoothing} to $\realstate_{|\Omega}$ and $\idealstate$ then proves the smooth max-divergence bound.
\end{proof}

\subsection{Depolarization on a finite ideal system}\label{sec:depol}

Theorem~\ref{th:main} requires Postulate~\ref{posts:seq}\ref{post:domination}, which may not be true for any ideal channel sequence $\bmapF$ that we might want to apply the result to. In particular, the ideal channels we have in mind replace the real state on a finite system within the round with an ideal state, before applying the remaining operations. This system can contribute to both the protected output and the continuing memory. In particular this ideal state may be a pure state, in which case we cannot really ensure that $\mapE_i[\sigma]$ is support-contained in $\mapF_i[\sigma]$, so that the divergence is finite.

However, even if our application ideally requires a pure state, in most applications we are able to tolerate small amounts of noise in these states. In particular, we can usually tolerate depolarizing noise, which mixes the input state with the maximally mixed state. The resulting depolarized state has full support on the relevant register, and so the problem previously described may not arise. In this subsection, we prove some results to show exactly how such depolarization helps, by computing an upper bound on $\Dsharp_2$ between the states. In particular, the bounds we get here only depend on the ideal dimension and the depolarization parameter, but not on the dimension of the continuing quantum memory registers $\inM_{i+1}$. 

The bound here is similar in spirit to~\cite[Lemma~5.3]{arx_MD23}, and the proof structure is similar as well. However, there is a subtlety we discuss in Remark~\ref{rem:Dsharptotrace}. 

\begin{lemma}[Depolarized pure-state comparison]\label{lem:depol}
Let $\rho_{SJ}$ and pure $\ket\phi_S$ be normalized, let $|S|>1$, and let $0<\pert<1$. Define
\[
\omega_S^\pert=(1-\pert)\pure\phi_S+\pert\frac{\id_S}{|S|}.
\]
Then, the $\Dmax$ and $\Dsharp_2$ between $\rho_{SJ}$ and $\omega_S^\pert\otimes\rho_J$ can be upper bounded as
\begin{gather*}
 \Dmax(\rho_{SJ}\Vert\omega_S^\pert\otimes\rho_J) \leq\log\frac{|S|^2}{\pert} \\
 \Dsharp_2(\rho_{SJ}\Vert\omega_S^\pert\otimes\rho_J) \leq
\log\left(\min\left\{\frac{|S|^2}{\pert},
\left[a_\pert^{-1/3}
+\left(\frac{|S|(|S|-1)\eta}{\pert}\right)^{1/3}\right]^3\right\}\right),
\end{gather*}
where
\[ a_\mu = 1 - \mu + \frac\mu{|S|}, \qquad \eta = 1 - \bra\phi\rho_S\ket\phi.\]
\end{lemma}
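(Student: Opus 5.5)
The $\Dmax$ bound comes from a Weyl-twirl operator inequality. The $\Dsharp_2$ bound comes from the variational form~\eqref{eq:order2}, using an explicit feasible $H$ built from the spectral decomposition of $\omega_S^\pert$. Throughout, $\rho_J^{-1}$ means the pseudoinverse on $\supp(\rho_J)$. Since $\supp(\rho_{SJ})\subseteq \mathcal H_S\otimes\supp(\rho_J)$, all operators may be restricted to that subspace, so no support issues arise. I also write $\sigma=\omega_S^\pert\otimes\rho_J$ and $d=|S|$.

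\emph{Step 1 ($\Dmax$).} I will use the standard inequality $\rho_{SJ}\leq d\,\id_S\otimes\rho_J$. To prove it, let $\{U_k\}_{k=1}^{d^2}$ be the Heisenberg--Weyl unitaries on $S$. Then
\[\frac{1}{d^2}\sum_k (U_k\otimes\id)\rho_{SJ}(U_k\otimes\id)^\dagger=\frac{\id_S}{d}\otimes\rho_J.\]
This sum dominates its $U=\id$ term, which is $\rho_{SJ}/d^2$, giving the claim. Since $\omega^\pert_S\geq \pert\,\id_S/d$, we get
\[\rho_{SJ}\leq d\,\id_S\otimes\rho_J\leq \frac{d^2}{\pert}\,\sigma,\]
which is the $\Dmax$ bound. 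By Fact~\ref{th:sharp-facts}\ref{prop:sharp-renyi-comparison}, the same quantity bounds $\Dsharp_2$, giving the first entry of the minimum.

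\emph{Step 2 (feasible $H$).} Let $P=\pure\phi_S\otimes\id_J$ and $P^\perp=\id-P$. Then
\[\sigma=a_\pert\,\pure\phi\otimes\rho_J+\frac{\pert}{d}(\id_S-\pure\phi)\otimes\rho_J,\]
so $\sigma^{-1}$ is block diagonal with respect to $P,P^\perp$. For any $t>0$, the inequality $(\sqrt t P-P^\perp/\sqrt t)\,\rho\,(\sqrt t P-P^\perp/\sqrt t)\geq 0$ gives
\[H_t=(1+t)P\rho P+(1+1/t)P^\perp\rho P^\perp\geq\rho_{SJ}.\]
Because $\sigma^{-1}$ commutes with $P$, the cross terms vanish:
\[\tr{H_t\sigma^{-1}H_t}=\frac{(1+t)^2}{a_\pert}\tr{X\rho_J^{-1}X}+(1+1/t)^2\frac{d}{\pert}\tr{Y(\id\otimes\rho_J^{-1})Y}.\]
Here $X=\bra\phi\rho_{SJ}\ket\phi\leq\rho_J$ and $Y=P^\perp\rho P^\perp$, with $\tr{Y}=\eta$.

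\emph{Step 3 (the two traces).} I will use one elementary fact: if $0\leq Z\leq c\,K$, then $\tr{ZK^{-1}Z}\leq c\tr{Z}$. This holds because $K^{-1/2}ZK^{-1/2}\leq c\,\id$ implies $Z^{1/2}K^{-1}Z^{1/2}\leq c\,\id$.
\begin{itemize}
\item For $X$: since $X\leq\rho_J$, taking $c=1$ gives $\tr{X\rho_J^{-1}X}\leq\tr{X}\leq 1$.
\item For $Y$: $Y$ lives on the $(d-1)$-dimensional space $\phi^\perp$ tensored with $J$. Applying Step 1's twirl inequality on that subspace gives $Y\leq(d-1)\,\id_{\phi^\perp}\otimes\Tr_S Y$. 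Since $\Tr_S Y\leq\rho_J$, this yields $Y\leq (d-1)\,\id\otimes\rho_J$. Taking $c=d-1$ gives $\tr{Y(\id\otimes\rho_J^{-1})Y}\leq(d-1)\eta$.
\end{itemize}
Combining, $\Qsharp_2\leq (1+t)^2A+(1+1/t)^2B$ with $A=a_\pert^{-1}$ and $B=d(d-1)\eta/\pert$.

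\emph{Step 4 (optimizing $t$).} Minimizing $(1+t)^2A+(1+1/t)^2B$ over $t>0$ gives the optimum $t=(B/A)^{1/3}$. The minimum value is $(A^{1/3}+B^{1/3})^3$, which is exactly the second entry of the minimum. Since $\rho$ is normalized, $\Dsharp_2=\log\Qsharp_2$, which finishes the bound; the edge case $\eta=0$ follows by letting $t\to0$.

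The step I expect to be the main obstacle is Step 3 for $Y$: it must give a bound independent of $|J|$, and this relies on applying the twirl inequality on the $(d-1)$-dimensional subspace $\phi^\perp$ rather than on all of $S$. A cruder estimate there would lose the factor $(d-1)$ or bring in a dependence on $|J|$.
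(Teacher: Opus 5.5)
Your proof is correct and follows the paper's argument essentially step for step. Both use the same $\pure\phi\otimes\id_J$ block decomposition, the same $(1+t)$/$(1+1/t)$ feasible operator $H$, the $|S|\,\id_S\otimes\rho_J$ inequality applied on $\phi^\perp$, the $Z K^{-1} Z\le cZ$ trick, and the same optimization over $t$. The differences are cosmetic: you prove that inequality via a Weyl twirl where the paper cites it, and you keep $P\rho P$ in $H$ and bound $\tr{X\rho_J^{-1}X}\le 1$, where the paper first relaxes $P\rho P\le\pure\phi\otimes\rho_J$.
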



\begin{proof}
We need the following operator inequality (see e.g.~\cite[Lemma~5.11]{Tom16}): for any positive operator $\Gamma_{SJ}$,
\begin{equation}\label{eq:pinching_corollary}
    \Gamma_{SJ}\leq |S|\,\id_S\otimes\Gamma_J.
\end{equation}

This operator inequality gives $\rho_{SJ}\leq |S|\id_S\otimes\rho_J$, and by definition $\omega_S^\pert\geq(\pert/|S|)\id_S$. Combining them proves
\[ \rho_{SJ} \leq \frac{|S|^2}{\pert}\omega^\pert_S\otimes\rho_J \]
and thus the $\Dmax$ bound. Fact~\ref{th:sharp-facts}\ref{prop:sharp-renyi-comparison} then gives the first term inside the $\min$ for the $\Dsharp_2$ bound. We then only need to show that the $\Dsharp_2$ is upper bounded by the $\log$ of the other quantity inside the $\min$.

To do this, let $\Pi=\pure\phi_S\otimes\id_J$ and $\rho^\perp_{SJ}=(\id-\Pi)\rho_{SJ}(\id-\Pi)$. Then $\tr{\rho^\perp_{SJ}}=1-\bra\phi\rho_S\ket\phi=\eta$. 
The $\eta=0$ case holds straightforwardly, so we focus only on $\eta>0$. 
For any real number $u>0$,
\[ \left(\sqrt u\,\Pi\sqrt{\rho_{SJ}}-\frac1{\sqrt{u}}(\id-\Pi)\sqrt{\rho_{SJ}}\right)
\left(\sqrt u\,\Pi\sqrt{\rho_{SJ}}-\frac1{\sqrt{u}}(\id-\Pi)\sqrt{\rho_{SJ}}\right)^\dagger \]
is a positive operator. Expanding this operator gives
\[ u\Pi\rho_{SJ}\Pi+\frac1u(\id-\Pi)\rho_{SJ}(\id-\Pi) - \Pi\rho_{SJ}(\id-\Pi) - (\id-\Pi)\rho_{SJ}\Pi \geq 0.\]
Rearranging this and using the fact that $\rho_{SJ}=\Pi\rho_{SJ}\Pi+(\id-\Pi)\rho_{SJ}\Pi+\Pi\rho(\id-\Pi)+\rho_{SJ}^\perp$, we get,
\begin{equation}\label{eq:u-ineq}
\rho_{SJ}\leq(1+u)\Pi\rho_{SJ}\Pi+\left(1+\frac1u\right)\rho_{SJ}^\perp.
\end{equation}

Expanding $\ket{\phi}_S$ to an orthogonal basis on $S$, we can write,
\[ \rho_J = \bra{\phi}\rho_{SJ}\ket{\phi} + \sum_{a \perp \phi}\bra{a}\rho_{SJ}\ket{a} \geq \bra{\phi}\rho_{SJ}\ket{\phi}.\]
From this it follows that
\[ \pure{\phi}_S\otimes\rho_J \geq \Pi\rho_{SJ}\Pi,\]
and putting this in \eqref{eq:u-ineq},
\[ \rho_{SJ}\leq(1+u)\pure{\phi}_S\otimes\rho_J+\left(1+\frac1u\right)\rho_{SJ}^\perp\]
By the alternative definition of $\Qsharp_2$ in \eqref{eq:order2},\footnote{All inverse operations below are restricted to $\supp(\omega^\mu_S\otimes\rho_J)$. This is legitimate since we have a finite $\Dmax$ bound, so the supports of all the relevant operators are contained in $\supp(\omega^\mu_S\otimes\rho_J)$.}
\[ \Qsharp_2(\rho_{SJ}||\omega^\mu_S\otimes\rho_J) = \inf_{H\geq \rho_{SJ}}\Tr[H(\omega^\mu_S\otimes\rho_J)^{-1}H].\]
The choice
\[
H=(1+u)\pure{\phi}_S\otimes\rho_J+\left(1+\frac1u\right)\rho^\perp_{SJ}
\]
is then feasible in this optimization. It remains to upper bound $\Tr[H(\omega^\mu_S\otimes\rho_J)^{-1}H]$.

The two summands of $H$ occupy orthogonal $S$ blocks corresponding to $\ket{\phi}$ and its orthogonal space. Moreover,
\begin{align*}
(\omega_S^\pert\otimes\rho_J)^{-1}
&=(\omega^\pert_S)^{-1}\otimes\rho_J^{-1}\\
&=\left(a_\pert\pure{\phi}+\frac\pert{|S|}(\id-\pure{\phi})\right)^{-1}
\otimes\rho_J^{-1}\\
&=\left(\frac1{a_\pert}\pure{\phi}
+\frac{|S|}\pert(\id-\pure\phi)\right)\otimes\rho_J^{-1}
\end{align*}
is also block-diagonal in the same decomposition. So there are no cross terms between $\pure{\phi}$ and its orthogonal block in $\Tr[H(\omega^\mu_S\otimes\rho_J)^{-1}H]$. We can compute the block-diagonal terms as,
\begin{align*}
& \tr{H(\omega_S^\pert\otimes\rho_J)^{-1}H} \\
& =\frac{(1+u)^2}{a_\pert}\Tr\left[\pure{\phi}_S\otimes\rho_J\rho_J^{-1}\rho_J\right]
+\frac{|S|}{\pert}\left(1+\frac1u\right)^2
\tr{\rho^\perp_{SJ}((\id-\pure\phi)\otimes\rho_J^{-1})\rho^\perp_{SJ}} \\
& = \frac{(1+u)^2}{a_\pert}
+\frac{|S|}{\pert}\left(1+\frac1u\right)^2
\tr{\rho^\perp_{SJ}((\id-\pure\phi)\otimes\rho_J^{-1})\rho^\perp_{SJ}}.
\end{align*}
Now we apply the operator inequality~\eqref{eq:pinching_corollary} on the $(|S|-1)$-dimensional subspace orthogonal to $\ket{\phi}$ to get
\begin{equation}\label{eq:nu-phi-1}
\rho^\perp_{SJ}\leq(|S|-1)(\id-\pure\phi)\otimes\rho^\perp_J \leq(|S|-1)(\id-\pure\phi)\otimes\rho_J.
\end{equation}

We claim
\begin{equation}\label{eq:nu-phi-2}
\rho^\perp_{SJ}((\id-\pure\phi)\otimes\rho_J^{-1})\rho^\perp_{SJ}
\leq (|S|-1)\rho^\perp_{SJ}.
\end{equation}
In fact, consider the general operator inequality $B\leq cA$, and let $X=A^{-1/2}BA^{-1/2}$. From the inequality we get $X^2\leq cX$. Hence,
\[ BA^{-1}B = A^{1/2}X^2A^{1/2} \leq A^{1/2}cXA^{1/2} = cB.\]
This, along with, \eqref{eq:nu-phi-1}, shows \eqref{eq:nu-phi-2}. Thus,
\[ \tr{\rho^\perp_{SJ}((\id-\pure\phi)\otimes\rho_J^{-1})\rho^\perp_{SJ}} \leq (|S|-1)\Tr[\rho^\perp_{SJ}] = (|S|-1)\eta.\]
The objective function for $\Qsharp_2$ is then upper bounded by
\[
a_\pert^{-1}(1+u)^2+
\frac{|S|(|S|-1)\eta}{\pert}\left(1+\frac1u\right)^2
\]
for any $u$, and in particular it holds for the $u$ that minimizes the expression.

For positive $x,y$, the minimum of $x(1+u)^2+y(1+u^{-1})^2$ over $u>0$ occurs at $u=(y/x)^{1/3}$ and equals $(x^{1/3}+y^{1/3})^3$. Substituting this,
\[ \Qsharp_2(\rho_{SJ}||\omega^\pert_S\otimes\rho_J) \leq \Tr[H(\omega^\pert_S\otimes\rho_J)^{-1}H] \leq \left(a_\pert^{-1/3}+\left(\frac{|S|(|S|-1)\eta}{\pert}\right)^{1/3}\right)^3\]
which gives the required upper bound on $\Dsharp_2$.
\end{proof}

\begin{remark}\label{rem:Dsharptotrace}
We can also obtain a bound similar to Lemma~\ref{lem:depol} via \cite[Lemma~5.3]{arx_MD23}: while the result has worse dependency on $\eta$, it does not change any scaling conclusions in our main results. 

Note however that this is not achieved simply by applying that lemma directly to $\rho_{SJ}$ and $\omega_S^\pert\otimes\rho_J$ with our max-divergence bound (the resulting bound is not sufficently strong). Rather, let us first introduce the intermediate state
\[
\rho^\pert_{SJ}=(1-\pert)\rho_{SJ}+\pert\frac{\id_S}{|S|}\otimes\rho_J.
\]
Letting $\eps_\pert$ denote the trace distance between this state and $\omega_S^\pert\otimes\rho_J$, 
the depolarizing contributions cancel when computing the trace distance, so we have (for an arbitrary purification $\ket\rho_{SJR}$ of $\rho_{SJ}$):
\begin{align*}
\eps_\pert
=\frac12\norm{\rho^\pert_{SJ}-\omega_S^\pert\otimes\rho_J}_1&=\frac{1-\pert}{2}\norm{\rho_{SJ}-\pure\phi_S\otimes\rho_J}_1\\
&\leq\frac{1-\pert}{2}\norm{\pure\rho_{SJR}-\pure\phi_S\otimes\rho_{JR}}_1 \\
&\leq 2(1-\pert)\sqrt{\eta}.
\end{align*}
where the last inequality holds since $\ket\phi_S$ is pure, allowing us pick an Uhlmann purification of the form $\ket\phi_S\ket\tau_{JR}$ and use triangle inequality to convert from $\rho_{JR}$ to $\tau_{JR}$.
The intermediate state $\rho^\pert_{SJ}$ retains the marginal $\rho_J$, so the operator inequality~\eqref{eq:pinching_corollary} also gives
$\rho^\pert_{SJ}\leq\frac{|S|^2}{\pert}\,\omega_S^\pert\otimes\rho_J$.
Applying \cite[Lemma~5.3]{arx_MD23} to this intermediate pair bounds its $\Dsharp_2$ in terms of $\eps_\pert$. Since $\rho^\pert_{SJ}\geq(1-\pert)\rho_{SJ}$, monotonicity and homogeneity of $\Qsharp_2$ in its first argument then give
\begin{align*}
\Dsharp_2(\rho_{SJ}\Vert\omega_S^\pert\otimes\rho_J)
&\leq \Dsharp_2(\rho^\pert_{SJ}\Vert\omega_S^\pert\otimes\rho_J)
+2\log\frac1{1-\pert}\\
&\leq 2\log\frac1{1-\pert}
+3\log\left[(1+\sqrt{\eps_\pert})^{2/3}
+\left(\frac{|S|^2}{\pert}\right)^{2/3}\eps_\pert^{1/6}\right].
\end{align*}
For fixed $\pert>0$, the second term is $O_{\pert,|S|}(\eta^{1/12})$ as $\eta\to0$, compared with the $O_{\pert,|S|}(\eta^{1/3})$ vanishing term in Lemma~\ref{lem:depol}. Thus this approach gives weaker quantitative dependence, but still allows the divergence to be made arbitrarily small by first choosing $\pert$ small and then choosing $\eta$ small.
\end{remark}


\section{Device-independent replacement of trusted preparation}\label{sec:DI}

In this section, we prove our sequential DD-to-DI conversion result. We first specify the trusted preparation resource, the security statement assumed of the device-dependent protocol, and the self-testing hypothesis required on the trusted preparation in order for a device-independent replacement to be feasible. We then describe the sequential device-independent protocol and state the main theorem relating its security directly to that of the original device-dependent protocol. The remainder of the section proves the theorem.

\subsection{Requirements on the device-dependent protocol}

Let $\mathcal X$ and $\mathcal A$ be finite alphabets. Fix a distribution $\sfP_{XA}$ on pairs $(x,a)\in\mathcal X\times\mathcal A$ and a pure state $\ket{\psi_{x,a}}_{\idealB}$ for every pair. The register $\idealB$ is the single quantum register sent to Bob. We consider prepare-and-measure protocols of the form in Protocol~\ref{prot:DD}. The security experiment relevant for the protocol may impose additional restrictions on Bob at specified times; for example, it may require him to divide his memory between two noncommunicating receivers before Alice releases some classical information.

\begin{prot}[!ht]
\caption{DD protocol with trusted state preparation.}\label{prot:DD}
\begin{enumerate}[leftmargin=*,itemsep=3pt]
\item For each $i\in\{1,\ldots,m\}$, Alice samples $(x_i,a_i)$ from $\sfP_{XA}$, prepares a fresh copy of $\ket{\psi_{x_i,a_i}}_{\idealB_i}$, and sends $\idealB_i$ to Bob. Bob may maintain arbitrary private memory and apply any channels permitted by the security experiment.
\item Alice and Bob run a specified cryptographic continuation using Alice's $\bx,\ba$, Bob's final memory, and any fresh trusted randomness required by the protocol. The security experiment defines an adversarial success event, denoted by $\Bad$.
\end{enumerate}
\end{prot}

We write $\ddP$ for any protocol having the form of Protocol~\ref{prot:DD}. The state preparation procedure followed by Alice for a single $i$ produces
\begin{equation}\label{eq:trusted-source}
\rho^{\mathrm{prep}}_{XA\idealB}
=\sum_{x,a}\sfP_{XA}(x,a)\pure{x,a}_{XA}\otimes\pure{\psi_{x,a}}_{\idealB}.
\end{equation}
Thus trusted preparation means that the DD security statement applies to $m$ independent instances of the CQ state \eqref{eq:trusted-source}, together with Bob's permitted interspersed channels and the prescribed continuation.

Our security proof for the DI version of $\ddP$ will later produce the following depolarized version of the trusted states:
\begin{equation}\label{eq:noisy-source}
\rho^{\mathrm{prep},\pert}_{XA\idealB}
=(1-\pert)\rho^{\mathrm{prep}}_{XA\idealB}
+\pert\sum_x\sfP_X(x)\pure x_X\otimes
\frac{\id_A}{|A|}\otimes\frac{\id_{\idealB}}{|\idealB|},
\end{equation}
where $0<\pert<1$. The depolarizing component keeps the prescribed distribution of $X$, makes $A$ uniform, and replaces Bob's transmitted state by the maximally mixed state. This state is used only in the ideal process of the security proof. The first postulate requires the DD security bound to remain valid for it.

\Needspace{6\baselineskip}
\begin{postulates}[DD security with depolarized noise]\label{post:DD}
Let $\ddP$ be a protocol of the form given in Protocol~\ref{prot:DD}. For every relevant $0<\pert<1$, replacing its $m$ trusted preparations by independent copies of \eqref{eq:noisy-source} gives
\begin{equation}\label{eq:DD}
\sup_{\textnormal{allowed attacks}}\Pr_{\ddP}[\Bad]
\leq q_{\mathrm{dep}}(\pert)^m.
\end{equation}
The bound includes arbitrary initial auxiliary information independent of Alice's preparation, arbitrary channels by Bob interspersed between each preparation, and all continuation messages and their timing.
\end{postulates}

If $\ddP$ only has an upper bound on $\Pr[\Bad]$ when no depolarization happens, this does not automatically imply \eqref{eq:DD}. This is because the second term in \eqref{eq:noisy-source} need not have the original distribution on $XA$. If the trusted outcome is uniform for every input, then $\sfP_{XA}(x,a)=\sfP_X(x)/|A|$ and
\begin{equation}\label{eq:Bob-depol}
\rho^{\mathrm{prep},\pert}
=(\idmap_{XA}\otimes\mapD_\pert)[\rho^{\mathrm{prep}}],
\qquad \text{where} \quad
\mapD_\pert[\sigma]=(1-\pert)\sigma+\pert\,\tr{\sigma}\frac{\id_{\idealB}}{|\idealB|}.
\end{equation}
Whenever the depolarizing noise channel $\mapD_\pert$ is an allowed Bob attack, the original DD bound applies with $q_{\mathrm{dep}}(\pert)=q$. This observation applies to the UE and copy-protection protocols in Sections~\ref{sec:UE} and~\ref{sec:copy}.

To certify the state \eqref{eq:trusted-source} by a nonlocal test in the DI setting, it must first admit an entanglement-based realization. That is, Alice should be able to prepare it by measuring her share of one fixed bipartite state. Let $\sfP_X(x)=\sum_a\sfP_{XA}(x,a)$ be Alice's marginal distribution, and remove any $x$ with $\sfP_X(x)=0$ from the input alphabet. The following fact is a standard result:

\begin{fact}[Pure-state ensemble realization, {\cite{HJW93,Wat18}}]\label{th:HJW}
There are a pure state $\ket\phi_{\idealA\idealB}$ and projective measurements $\{T_a^x\}_a$ on $\idealA$ satisfying
\begin{equation}\label{eq:source-realization}
\sfP_{XA}(x,a)\pure{\psi_{x,a}}_{\idealB}
=\sfP_X(x)\tr[\idealA]{
(T_a^x\otimes\id)\pure\phi(T_a^x\otimes\id)}
\end{equation}
for every $x,a$ if and only if the average transmitted state
\begin{equation}\label{eq:common-average}
\sum_a\frac{\sfP_{XA}(x,a)}{\sfP_X(x)}\pure{\psi_{x,a}}_{\idealB}
\end{equation}
is independent of $x$. The space $\idealA$ may be enlarged to make the measurements projective.
\end{fact}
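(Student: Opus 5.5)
The plan is to prove the two directions of Fact~\ref{th:HJW} separately. Necessity is a one-line computation; sufficiency is a purification argument that uses the unitary freedom of purifications (Hughston--Jozsa--Wootters).

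\emph{Necessity.} Suppose \eqref{eq:source-realization} holds. Fix $x$, divide by $\sfP_X(x)>0$ and sum over $a$. Since $\{T_a^x\}_a$ is a complete projective measurement on $\idealA$, cyclicity of the partial trace on $\idealA$ gives $\sum_a \Tr_{\idealA}[(T_a^x\otimes\id)\pure\phi(T_a^x\otimes\id)] = \Tr_{\idealA}[\pure\phi] = \phi_{\idealB}$. Hence the average \eqref{eq:common-average} equals $\phi_{\idealB}$ for every $x$, and in particular is independent of $x$.

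\emph{Sufficiency.} Let $\tau_{\idealB}$ denote the common average \eqref{eq:common-average}, and write $p_{a|x}=\sfP_{XA}(x,a)/\sfP_X(x)$.
\begin{enumerate}
\item For each $x$, form the purification $\ket{\Phi^x}=\sum_a\sqrt{p_{a|x}}\,\ket{a}_{A'}\ket{\psi_{x,a}}_{\idealB}$ on a register $A'\cong\mathbb{C}^{|\mathcal A|}$. Each $\ket{\Phi^x}$ has $\idealB$-marginal exactly $\tau_{\idealB}$.
\item Fix one reference purification $\ket\phi_{\idealA\idealB}$ of $\tau_{\idealB}$, with $\idealA$ chosen large enough, say $|\idealA|\geq\max(|\mathcal A|,|\idealB|)$, and embed $A'$ into $\idealA$.
\item By Uhlmann's theorem (equivalently, unitary freedom of purifications), for each $x$ there is a unitary $U_x$ on $\idealA$ with $(U_x\otimes\id)\ket\phi=\ket{\Phi^x}$.
\item Define $T_a^x=U_x^\dagger(\pure a\oplus 0)U_x$ for $a\in\mathcal A$, and assign the projector onto the complement of the embedded $A'$ to some fixed outcome. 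That extra outcome has zero weight on $\ket\phi$, so the family $\{T_a^x\}_a$ is projective and complete.
\item Check \eqref{eq:source-realization}: $\Tr_{\idealA}[(T_a^x\otimes\id)\pure\phi(T_a^x\otimes\id)]=\Tr_{\idealA}[(\pure a\otimes\id)\pure{\Phi^x}(\pure a\otimes\id)]=p_{a|x}\pure{\psi_{x,a}}$. Multiplying by $\sfP_X(x)$ gives $\sfP_{XA}(x,a)\pure{\psi_{x,a}}$, as required.
\end{enumerate}

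The step needing the most care is the dimension bookkeeping: choosing $\idealA$ large enough that both the register $A'$ embeds into it and the purifications are related by unitaries rather than mere isometries. Enlarging $\idealA$ resolves this, and it also handles the remark that the measurements can always be taken to be projective.
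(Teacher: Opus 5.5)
Your argument is correct and is the standard unitary-freedom-of-purifications (Hughston--Jozsa--Wootters) proof; the paper gives no proof of its own and simply cites \cite{HJW93,Wat18} for this Fact. As a minor simplification, you could take $\idealA=A'$ and $\ket\phi=\ket{\Phi^{x_0}}$ for a fixed $x_0$. Then the measurements $\{U_x^\dagger\pure{a}U_x\}_a$ are already projective and complete, and no extra outcome is needed.
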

The entanglement-based realization of the prepare-and-measure protocol $\ddP$ will then involve Alice and Bob sharing the state $\ket{\phi}_{S^\A S^\B}$, holding the registers $S^\A$ and $S^\B$ respectively, Alice sampling $x$ according to $\sfP_X$, and measuring the $S^\A$ register with $\{T^x_a\}$. The outcome $a$ here comes out of the measurement --- Alice does not choose it classically. The probability of obtaining $a$ for a fixed $x$ is given by
\[ \sfP_{A|x}(a) = \tr{
(T_a^x\otimes\id)\pure\phi(T_a^x\otimes\id)}.\]

We will also need the state $\ket{\phi}_{S^\A S^\B}$ and Alice's preparation measurements to occur in an optimal strategy for a nonlocal game that self-tests them. We first recall the standard terminology.

\begin{definition}[Nonlocal game and quantum value]\label{def:game}
A two-player nonlocal game $\game$ is specified by finite input alphabets $\mathcal X,\mathcal Y$, finite output alphabets $\mathcal A,\mathcal B$, an input distribution $\sfP_{XY}$, and a predicate
\[
\sfV:\mathcal X\times\mathcal Y\times\mathcal A\times\mathcal B
\longrightarrow\{0,1\}.
\]
The verifier samples $(x,y)$ from $\sfP_{XY}$, sends $x$ to Alice and $y$ to Bob, receives outputs $a,b$, and accepts exactly when $\sfV(x,y,a,b)=1$. A quantum strategy
\[
\strat=(\rho_{\memA\memB},\{H_a^x\}_{x,a},\{K_b^y\}_{y,b})
\]
consists of a bipartite state and local measurements on $\memA$ and $\memB$. Its winning probability is
\[
\sum_{x,y,a,b}\sfP_{XY}(x,y)\sfV(x,y,a,b)
\tr{(H_a^x\otimes K_b^y)\rho}.
\]
The quantum value $\omega^\ast(\game)$ is the supremum of this probability over finite-dimensional quantum strategies.
\end{definition}

For the self-testing statement alone, a standard Naimark dilation lets us restrict to projective measurements on enlarged local registers without loss of generality \cite[Section~2.3]{Wat18}. The sequential protocol also retains postmeasurement device memories, so the stronger stored-output form of this reduction is recorded below.

\begin{definition}[Strategy deficit]\label{def:deficit}
For a game $\game$ and a strategy $\strat=(\rho,\{H_a^x\}_{x,a},\{K_b^y\}_{y,b})$, its deficit is
\[
\delta_{\game}(\strat)
=\omega^\ast(\game)-
\sum_{x,y,a,b}\sfP_{XY}(x,y)\sfV(x,y,a,b)
\tr{(H_a^x\otimes K_b^y)\rho}.
\]
Thus the deficit is a property of the state and both measurement families, not of the state alone.
\end{definition}

We can now state the second postulate required from the protocol $\ddP$.

\Needspace{6\baselineskip}
\begin{postulates}[Self-testable entanglement-based realization]\label{post:selftest}
Let $\ddP$ be a protocol of the form Protocol~\ref{prot:DD}.
\begin{enumerate}[label=\textup{(\roman*)},ref=\textup{(\roman*)}]
\item\label{post:realize} The distribution $\sfP_{XA}$ and states $\ket{\psi_{x,a}}_{S^\B}$ used in $\ddP$ satisfy the condition in Fact~\ref{th:HJW}. Hence they admit an entanglement-based realization with shared state $\ket{\phi}_{S^\A S^\B}$ and measurements $(\{T^x_a\}_a)_x$ on Alice's side.

\item\label{post:game} There exists a nonlocal game
\[
\game=(\mathcal X,\mathcal Y,\mathcal A,\mathcal B,\sfP_{XY},\sfV)
\]
whose Alice input and output alphabets are those of the trusted preparation in $\ddP$ and $X$-marginal is $\sfP_X$, such that
 $\ket{\phi}_{S^\A S^\B}$ and $(\{T^x_a\}_a)_x$ are part of an optimal strategy for $\game$. More precisely, there are projective measurements $\{N_b^y\}_b$ on $\idealB$ such that the strategy consisting of $\ket\phi_{\idealA\idealB}$, Alice's measurements $\{T_a^x\}_a$, and Bob's measurements $\{N_b^y\}_b$ wins with probability $\omega^\ast(\game)$:
\[
\sum_{x,y,a,b}\sfP_{XY}(x,y)\sfV(x,y,a,b)
\bra\phi(T_a^x\otimes N_b^y)\ket\phi
=\omega^\ast(\game).
\]

\item\label{post:rigid} There is a vanishing function $\vanf_{\game}$ such that, for every pair of finite-dimensional registers $\memA,\memB$ and every pair of projective measurement families $\{H_a^x\}_a,\{K_b^y\}_b$ on these registers, there are local isometries
\[
V^\A:\memA\longrightarrow\idealA\junkA,
\qquad
V^\B:\memB\longrightarrow\idealB\junkB
\]
that depend only on the measurement families. For every purified state $\ket\rho_{R\memA\memB}$, let
\[
\strat_\rho=
\left(\tr[R]{\pure\rho},\{H_a^x\}_{x,a},\{K_b^y\}_{y,b}\right).
\]
There is a normalized state $\ket\tau_{R\junkA\junkB}$ satisfying
\begin{equation}\label{eq:rigidity}
\sum_{x,a}\sfP_X(x)
\norm{(\id_R\otimes V^\A\otimes V^\B)H_a^x\ket{\rho}_{RM^\A M^\B}
-(T_a^x\otimes\id)\ket{\phi}_{S^\A S^\B}\ket{\tau}_{RJ^\A J^\B}}_2^2
\leq\vanf_{\game}(\delta_{\game}(\strat_\rho))^2.
\end{equation}
Only $\ket\tau$ may depend on the state $\ket\rho$; the isometries are fixed once the measurement families are fixed.
\end{enumerate}
\end{postulates}

Note that even though Postulate~\ref{post:selftest}\ref{post:rigid} requires the isometries $V^\A$ and $V^\B$ to not depend on the state $\ket{\rho}$, there is a sense in which it is weaker than existing self-testing results in the literature. Equation~\eqref{eq:rigidity} only requires closeness on average over $x,a$, whereas existing self-testing results usually provide this for every $x, a$.

We write $\omega^\ast=\omega^\ast(\game)$ for the fixed game $\game$ from the above postulates from now on.

\subsection{The sequential DI protocol}

Alice is the trusted party in the DI protocol. She holds one component of an untrusted device behind a shielded interface, while Bob holds the other component (we will also refer to their components as their individual devices). Immediately before the inputs in round $i$ are supplied, $\memA_i$ and $\memB_i$ denote the complete quantum memories of these two components. They include their internal workspaces and everything retained from earlier rounds. The components may begin in an arbitrary joint state and may retain arbitrary history-dependent memory; in particular, no dimension bound or tensor-product structure across rounds is assumed.

We do require a tensor-product structure between the two device components within each round. Operationally, Alice enforces this by shielding her device: she can supply an input and retrieve its output, but no unprescribed classical or quantum system can cross the shield. Alice herself may sample inputs and send the protocol's prescribed classical messages to Bob. Thus the locality assumption concerns the two untrusted device components: it does not forbid legitimate communication by Alice as the trusted party.

\begin{postulates}[Shielding]\label{post:shield}
The protocols considered below have the following properties.
\begin{enumerate}[label=\textup{(\roman*)},ref=\textup{(\roman*)}]
\item Conditional on the classical inputs supplied in round $i$, the device channel factors as $\mapE_i^{\A}\otimes\mapE_i^{\B}$ where $\mapE^\A_i$ only acts on $M^\A_i$ and $\mapE^\B_i$ only acts on $M^\B_i$. There is no additional communication between Alice and Bob's devices.
\item Alice is able to stow away her device at any point, so that future operations do not touch it.
\end{enumerate}
\end{postulates}

Our sequential device-independent version of the DD protocol $\ddP$ is given in Protocol~\ref{prot:DI}, and is denoted by $\diP$. Before describing the protocol, we introduce some notation we will be using in it.

Each round of $\diP$ has a single classical record $C_i\in\alphCS=\{\gen,0,1\}$. A generation round has record $\gen$, while a test round has record $1$ or $0$ according to the predicate $\sfV$ evaluated on the inputs and outputs $(x_i, y_i, a_i, b_i)$. For a fixed protocol parameter $\gamma$, the ideal distribution is
\[
\sfP_C(\gen)=1-\ptest,
\qquad
\sfP_C(1)=\ptest\omega^\ast,
\qquad
\sfP_C(0)=\ptest(1-\omega^\ast).
\]
The accepting set is
\begin{equation}\label{eq:accept-ball}
S_\Acc=\{\sfQ_C:\trnorm{\sfQ_C-\sfP_C}\leq\epstol\}.
\end{equation}
The corresponding event is $\Acc$. The distribution generated by Alice and Bob's devices is tested against $\sfP_C$ with the condition \eqref{eq:accept-ball}; this simultaneously tests the game-winning probability and the frequency of generation rounds.

\begin{prot}[!ht]
\caption{Sequential DI replacement of trusted preparation.}\label{prot:DI}

\textbf{Parameters.}
\begin{itemize}
\item A protocol $\ddP$ of the form given in Protocol~\ref{prot:DD}, its entanglement-based realization and game $\game$ satisfying Postulate Set~\ref{post:selftest}
\item Number of rounds $n$, testing probability $\ptest$, and tolerance $\epstol$, satisfying
\[ m\leq\floor{(1-\ptest-\epstol)n}.\]
\end{itemize}
\begin{enumerate}[leftmargin=*,itemsep=3pt]
\item Alice and Bob have devices which take inputs from the input sets $\mathcal{X}$ and $\mathcal{Y}$, and produce outputs in the output sets $\mathcal{A}$ and $\mathcal{B}$ of $\game$ respectively.
\item For each $i\in[n]$ in sequence:
\begin{enumerate}
\item Alice samples $x_i$ from $\sfP_X$ independently and supplies it as input to her device. She gets outputs $a_i$ from her device, and stores $(x_i,a_i)$ outside the device.
\item Alice samples $u_i \in \{\test, \gen\}$ s.t. $u_i=\test$ w.p. $\ptest$.
 \begin{itemize}
\item  If $u_i=\test$, Alice samples $y_i$ from $\sfP_{Y|x_i}$ and sends $(u_i,y_i)$ to Bob. Bob returns $b_i$ to Alice. Alice sets $c_i=\sfV(x_i,y_i,a_i,b_i)$ and stores this outside the device.
\item If $u_i=\gen$, Alice only sends $u_i$ to Bob. She sets $c_i=\gen$ and stores it outside.
\end{itemize}
\end{enumerate}
\item Alice checks if $\trnorm{\freq_{\bcS}-\sfP_C}\leq\epstol$ and aborts if this condition is not satisfied. She sends this decision to Bob.
\begin{itemize}
\item If Alice aborts, the protocol ends.
\item If Alice accepts:
\begin{enumerate}
\item Alice selects the lexicographically first subset $E\subseteq[n]$ of size $m$ such that $c_i=\gen$ for every $i\in E$. She sends $E$ to Bob.
\item Alice stows away her device and Alice and Bob run the continuation of $\ddP$ using $\bX_E,\bA_E$ and Bob's final memory.
\end{enumerate}
\end{itemize}
\end{enumerate}
\end{prot}

\paragraph{Honest behaviour and completeness.} We first describe the intended honest behaviour of the devices and show that if the devices behave honestly, Alice does not abort with high probability. The noiseless honest behaviour is as follows:
\begin{itemize}
\item Alice and Bob's devices share $n$ copies of the optimal state $\ket{\phi}_{S^\A S^\B}$ for $\game$. On inputs $x,y$, they perform the measurements $\{T^x_a\}_a$ and $\{N^y_b\}_b$, respectively.
\item In every test round $i$, Bob supplies the input $y_i$ received from Alice to his device and returns the resulting output $b_i$ to Alice.
\end{itemize}
We also allow noisy honest implementations. We model their noise only through the winning probability of $\game$: the devices have honest noise $\dhon$ if they use a strategy $\strat$ satisfying
\[
\delta_\game(\strat)=\dhon.
\]
The parameter $\dhon$ need not decrease with $n$.

\begin{theorem}[Completeness]\label{th:complete}
Suppose the honest rounds are independent and in every round the devices use the same strategy $\strat$ with deficit $\delta_\game(\strat)=\dhon$. If $\ptest\dhon<\epstol$, then
\begin{equation}\label{eq:complete}
\Pr_{\diP}[\trnorm{\freq_{\bcS}-\sfP_C}>\epstol]
\leq2|\alphCS|\exp\left(
-\frac{8n(\epstol-\ptest\dhon)^2}{|\alphCS|^2}
\right)
=6\exp\left(-\frac{8n(\epstol-\ptest\dhon)^2}{9}\right).
\end{equation}
On every accepting execution, the set $E$ in Protocol~\ref{prot:DI} is well defined and has size $m$.
\end{theorem}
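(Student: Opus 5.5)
The plan is to treat the records $C_1,\ldots,C_n$ as i.i.d.\ and apply Hoeffding's inequality (Fact~\ref{th:Hoeffding}) symbol by symbol. Under the completeness hypothesis each round uses the same strategy $\strat$ with fresh inputs, so the $C_i$ are i.i.d.\ with some distribution $\sfP'_C$ on $\alphCS$. This distribution is
\[
\sfP'_C(\gen)=1-\ptest,\qquad \sfP'_C(1)=\ptest(\omega^\ast-\dhon),\qquad \sfP'_C(0)=\ptest(1-\omega^\ast+\dhon).
\]
Here I use that the test decision $u_i$ is sampled independently of $x_i$, and that $y_i\sim\sfP_{Y|x_i}$, so the test-round inputs follow $\sfP_{XY}$. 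The first step is to compute the distance from the ideal distribution:
\[
\trnorm{\sfP'_C-\sfP_C}=\tfrac12\left(0+\ptest\dhon+\ptest\dhon\right)=\ptest\dhon<\epstol .
\]

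The triangle inequality then gives
\[
\trnorm{\freq_\bcS-\sfP_C}\le \trnorm{\freq_\bcS-\sfP'_C}+\ptest\dhon .
\]
So the rejection event is contained in $\{\trnorm{\freq_\bcS-\sfP'_C}>t\}$ with $t=\epstol-\ptest\dhon>0$. Because $\trnorm{\cdot}=\frac12\sum_c|\cdot|$, exceeding $t$ forces some $c\in\alphCS$ to satisfy $|\freq_\bcS(c)-\sfP'_C(c)|>2t/|\alphCS|$. Otherwise the sum would be at most $\frac12\cdot|\alphCS|\cdot 2t/|\alphCS|=t$.

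For each $c$, I take the indicators $Y_i=\mathbf 1[c_i=c]\in[0,1]$ and apply both tails of Hoeffding with $u=2t/|\alphCS|$. Each tail is at most $e^{-8nt^2/|\alphCS|^2}$. A union bound over the $|\alphCS|=3$ symbols and the two tails gives the factor $2|\alphCS|=6$ and the claimed exponent.

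For the last claim, on acceptance we have
\[
\freq_\bcS(\gen)\ge \sfP_C(\gen)-\trnorm{\freq_\bcS-\sfP_C}\ge 1-\ptest-\epstol,
\]
using that a single coordinate difference is bounded by the total variation distance. Hence the number of generation rounds is at least $\lceil(1-\ptest-\epstol)n\rceil\ge\floor{(1-\ptest-\epstol)n}\ge m$ by the parameter constraint of Protocol~\ref{prot:DI}. The lexicographically first size-$m$ subset of generation rounds therefore exists and is well defined.

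I expect no real obstacle here. The one point needing care is justifying that the honest records are i.i.d.\ with exactly the distribution $\sfP'_C$, which relies on the independence of $u_i$ from $x_i$ and on the deficit definition (Definition~\ref{def:deficit}). The other is the per-coordinate threshold $2t/|\alphCS|$, which is what produces the constant $8/|\alphCS|^2$.
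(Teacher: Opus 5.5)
Your proposal is correct and follows essentially the same route as the paper's proof. Both compute the honest record distribution and its trace distance $\ptest\dhon$ from $\sfP_C$, pass to $\trnorm{\freq_{\bcS}-\sfP'_C}>\epstol-\ptest\dhon$ by the triangle inequality, force a per-symbol deviation of $2(\epstol-\ptest\dhon)/|\alphCS|$, and then apply two-sided Hoeffding with a union bound over the three symbols; the existence of $E$ follows from $\freq_{\bcS}(\gen)\geq 1-\ptest-\epstol$ together with the constraint on $m$.
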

\begin{proof}
The winning probability of the honest strategy is $\omega^\ast-\dhon$. Since Alice independently chooses a test round with probability $\ptest$, the common distribution of each record is
\[
\sfQ_C(\gen)=1-\ptest,
\qquad
\sfQ_C(1)=\ptest(\omega^\ast-\dhon),
\qquad
\sfQ_C(0)=\ptest(1-\omega^\ast+\dhon).
\]
Consequently,
\[
\trnorm{\sfQ_C-\sfP_C}
=\frac12\left(0+\ptest\dhon+\ptest\dhon\right)
=\ptest\dhon.
\]
If Alice aborts, the triangle inequality therefore gives
\begin{equation}\label{eq:complete-triangle}
\trnorm{\freq_{\bcS}-\sfQ_C}
>\epstol-\ptest\dhon.
\end{equation}

Fix $c\in\alphCS$. The variables $\mathbf{1}\{C_i=c\}$ are independent Bernoulli variables with mean $\sfQ_C(c)$, and
\[
\freq_{\bcS}(c)=\frac1n\sum_{i=1}^n\mathbf{1}\{C_i=c\}.
\]
By the two one-sided bounds in Fact~\ref{th:Hoeffding}, for every $u>0$,
\[
\Pr_{\diP}\!\left[
\left|\freq_{\bcS}(c)-\sfQ_C(c)\right|>u
\right]
\leq 2e^{-2nu^2}.
\]
If all $|\alphCS|$ coordinates differed by at most
$2(\epstol-\ptest\dhon)/|\alphCS|$, their absolute differences would sum to at most $2(\epstol-\ptest\dhon)$, contradicting \eqref{eq:complete-triangle}. Thus an abort implies that this threshold is exceeded for at least one $c$. Substituting the threshold into the preceding bound and taking a union bound over $c\in\alphCS$ proves \eqref{eq:complete}.

Finally, on acceptance,
\[
\freq_{\bcS}(\gen)
\geq\sfP_C(\gen)-\epstol
=1-\ptest-\epstol.
\]
Hence there are at least $n(1-\ptest-\epstol)\geq m$ generation rounds, so the lexicographically first set $E$ of size $m$ exists.
\end{proof}

\paragraph{Soundness.} We now state our main soundness theorem for the protocol $\diP$, which upper bounds the probability of Alice not aborting and the event $\Bad$ occurring in the continuation of $\ddP$. The rest of the section proves this theorem.
\begin{theorem}[Generic DD-to-DI conversion]\label{th:DI}
Let $\ddP$ be a DD protocol of the form given in Protocol~\ref{prot:DD}, satisfying Postulate Sets~\ref{post:DD} and~\ref{post:selftest}. Let $\diP$ be its DI replacement from Protocol~\ref{prot:DI} in a sequential setting where Postulate Set~\ref{post:shield} holds. Then:
\begin{itemize}
\item for every $k>0$,
\item there exist constants $\pert\in\Theta(k)$, $\ptest\in(1/2,1)$ such that $\ptest\to1$ as $k\to0$, and $\epstol\in(0,(1-\ptest)/2]$, independent of $n$, such that
\item for every integer $n\geq4/(1-\ptest)$, every allowed implementation and attack, and every normalized initial state $\rho^0_{\inM_1R}$, the following holds with $m=\floor{(1-\ptest)n/2}$.
\end{itemize}
The soundness bound is
\begin{equation}\label{eq:DI-rate}
\Pr_{\diP}[\Acc\land\Bad]
\leq\left(2^kq_{\mathrm{dep}}(\pert)\right)^{m/2}.
\end{equation}
If $\limsup_{\pert\to0}q_{\mathrm{dep}}(\pert)\leq q$ for some $0<q<1$, then for every $0<k<-\log q$ the constants can be chosen so that $2^kq_{\mathrm{dep}}(\pert)<1$, giving an exponentially small soundness error in $m$.
\end{theorem}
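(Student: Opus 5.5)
We plan to apply Theorem~\ref{th:main} with $\alpha=2$ and then transfer the probability of $\Bad$ with Lemma~\ref{lem:probability}(i).

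\textbf{Channels and domination.} The real channels $\mapE_i$ are the actual round maps of Protocol~\ref{prot:DI}. Each $\mapE_i$ samples $x_i,u_i,y_i$, applies the shielded device channels, and records $c_i$ in $\CS_i$. It records $(x_i,a_i)$ in $\outQ_i$ on generation rounds, and passes both device memories forward in $\inM_{i+1}$.

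For the ideal channels $\mapF_i$, we first Naimark-dilate so that the device measurements are projective and their outcomes are stored. On test rounds we set $\mapF_i=\mapE_i$. On generation rounds we apply the isometries $V^\A\otimes V^\B$ of Postulate~\ref{post:selftest}\ref{post:rigid}, which are determined by the round-$i$ measurement families. We then replace the $\idealA\idealB$ system by $\omega^\pert=(1-\pert)\pure\phi+\pert\,\id/|\idealA\idealB|$ tensored with the junk marginal. Alice measures $T^x_a$, Bob's $\idealB$ goes into the continuing memory, and we undo the isometries as far as is needed for the junk.

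The $\CS_i$ marginals agree, since the record is $\gen$ on generation rounds regardless of the state, which gives Postulate~\ref{posts:seq}\ref{post:common-record}. Domination \ref{post:domination} holds with $\Lambda=|S|^2/\pert$ by the $\Dmax$ part of Lemma~\ref{lem:depol} together with data processing. Running $\bmapF$ on $\rho^0$ and then the continuation produces exactly the DD experiment with $m$ copies of \eqref{eq:noisy-source}. Bob's device memory and junk are then just part of an allowed attack, with Bob's side information independent of the prepared copies. Hence $\Pr_{\idealstate}[\Bad]\le q_{\mathrm{dep}}(\pert)^m$ by Postulate~\ref{post:DD}. The selection of $E$ and the continuation form a channel applied to both states, so data processing carries the divergence bound through.

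\textbf{Bounding the comparison exponent.} Theorem~\ref{th:main} gives
\[
D_2(\realstate_{|\Acc}\Vert\idealstate)\le \compexp{2}+2\log(1/p_\Acc),
\]
and it remains to bound $\compexp{2}$ by $km$. In each round only $\gen$ contributes a nonzero conditional divergence, so the summand is
\[
\sfQ(\gen)\,\Dsharp_2(\cdot_{|\gen}\Vert\cdot_{|\gen})-D(\sfQ\Vert\sfP).
\]
The rigidity bound \eqref{eq:rigidity} controls the conditional fidelity to $\phi$ through the deficit $\delta$ of the trial strategy, so $\eta\le\vanf_\game(\delta)^2$ up to Naimark bookkeeping. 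Lemma~\ref{lem:depol} then gives
\[
\Dsharp_2\le g(\pert,\delta):=\log\min\Bigl\{|S|^2/\pert,\ \bigl[a_\pert^{-1/3}+(|S|^2\vanf_\game(\delta)^2/\pert)^{1/3}\bigr]^3\Bigr\}.
\]
For $\sfQ$ in the acceptance ball and a trial $\sfP$ of deficit $\delta$, Pinsker's inequality gives
\[
D(\sfQ\Vert\sfP)\gtrsim(\ptest\delta-\epstol)_+^2 .
\]
This is also where the uncoupled per-round suprema must be handled via convexity of the feasible set.

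We split into two cases. If $\delta\le\delta_0$, the summand is at most $g(\pert,\delta_0)$, and we choose $\pert$ then $\delta_0$ so that $g(\pert,\delta_0)\le\log(1/a_\pert)+k/8$. If $\delta>\delta_0$, the summand is at most $(1-\ptest+\epstol)\log(|S|^2/\pert)-c(\ptest\delta_0-\epstol)^2$. Taking $\ptest$ close to $1$ makes the first term smaller than $k/8$ times a constant, even against the fixed $\log(1/\pert)$ with $\pert=\Theta(k)$; this is exactly why $\ptest\to1$ as $k\to0$. We take $\epstol\le\min\{(1-\ptest)/2,\ \ptest\delta_0/2\}$.

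Summing over $n\le 2m/(1-\ptest)+\text{const}$ rounds, the generation weight is at most $(1-\ptest+\epstol)n$. This yields $\compexp{2}\le km$, after absorbing $\log(1/a_\pert)=O(\pert)$. I expect the main obstacle to be this balancing of $\pert$, $\delta_0$, $\ptest$ and $\epstol$ so that everything depends only on $k$.

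\textbf{Transferring the bad event.} By Lemma~\ref{lem:probability}(i),
\[
\Pr[\Acc\land\Bad]=p_\Acc\Pr_{|\Acc}[\Bad]\le p_\Acc\cdot 2^{(km+2\log(1/p_\Acc))/2}\,q_{\mathrm{dep}}^{m/2}=(2^kq_{\mathrm{dep}}(\pert))^{m/2}.
\]
The $p_\Acc$ factors cancel exactly, which gives \eqref{eq:DI-rate}. The condition $n\ge4/(1-\ptest)$ only ensures $m\ge1$ and $m\le\floor{(1-\ptest-\epstol)n}$. The final claim follows by picking $\pert$ small enough that $q_{\mathrm{dep}}(\pert)<2^{-k}$, which is possible since $k<-\log q$.
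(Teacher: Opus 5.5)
Your overall architecture matches the paper's: Theorem~\ref{th:main} at $\alpha=2$, a single-round bound from Lemma~\ref{lem:depol}, a near/far case split forcing $\ptest\to1$, and the exact $p_\Acc$ cancellation in Lemma~\ref{lem:probability}(i). The gap is your ideal generation channel, which does not work as described.

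\textbf{The ideal channel.} You replace $\idealA\idealB$ on the \emph{pre-measurement} extracted state and then let Alice measure $T^x_a$. So $\mapE_i^\gen$ and $\mapF_i^\gen$ differ both in the state and in Alice's measurement. No single channel then maps a pair $(\rho_{SJ},\omega^\pert_S\otimes\rho_J)$ of the form in Lemma~\ref{lem:depol} to $(\mapE_i^\gen[\pure\rho],\mapF_i^\gen[\pure\rho])$. Hence ``Lemma~\ref{lem:depol} plus data processing'' yields neither domination nor the $\Dsharp_2$ bound.

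In general domination actually fails. In outcome block $a$, your ideal state is $(T^x_a\otimes\id)\omega^\pert(T^x_a\otimes\id)$ tensored with the \emph{pre}-measurement junk marginal, so the depolarizing noise no longer fills the block. The real block $VH^x_a\rho H^x_aV^\dagger$ lies only approximately in $\operatorname{range}(T^x_a)\otimes\idealB\otimes J$, and it carries the \emph{post}-measurement junk marginal. Whenever the continuation retains the post-measurement memory, the supports need not be nested, and then $\Dmax=\infty$.

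Your noise also makes $a$ distributed as $\Tr[T^x_a]/|\idealA|$ rather than uniformly, so the ideal preparation is not \eqref{eq:noisy-source} in general. If instead you recover and apply the real $H^x_a$ after replacement, you regain a common post-processing but lose the ideal preparation that Lemma~\ref{lem:DD} needs.

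The paper's fix proceeds in order:
\begin{enumerate}
\item Apply Alice's real measurement with its outcome recorded coherently.
\item Apply $V_i^\A\otimes V_i^\B$.
\item Replace the whole $S_i=A_iS_i^\A S_i^\B$, \emph{outcome register included}, by the depolarized ideal post-measurement state, tensored with the actual junk marginal.
\item Recover with the left inverses $\mapN_i^\A,\mapN_i^\B$ and run the real continuations.
\end{enumerate}
Both outputs are then images of $\zeta_{X_iS_iJ}$ and $\widetilde{\omega}^\pert_{X_iS_iJ}$ under one channel $\mapT_i$ (Lemma~\ref{lem:recover}). Because \eqref{eq:rigidity} is a statement about the post-measurement vectors $VH^x_a\ket\rho$, it directly gives $\sum_x\sfP_X(x)\eta_x\leq\vanf_\game(\delta)^2$ (Lemma~\ref{lem:coh}). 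Concavity in $\eta$ then handles the average over $x$.

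\textbf{Coupling across rounds.} Your per-round case analysis ignores the coupling in $\compexp{2}$. The constraint is only $\frac1n\sum_i\sfQ_{C_i}\in S_\Acc$, so a single $\sfQ_{C_i}$ may put weight up to $1$ on $\gen$ and need not be near $\sfP_C$. Likewise, a small average deficit does not make each $\delta_i$ small. Neither $\sfQ(\gen)\leq1-\ptest+\epstol$ nor your Pinsker bound holds round by round.

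You flag this, but it is a genuine step. The paper handles it in two parts:
\begin{itemize}
\item Lemma~\ref{lem:compexp-sum} (KL chain rule plus Jensen) reduces to one direct-sum trial state with $\overline{\sfQ}_C\in S_\Acc$ and penalty $D(\overline{\sfQ}_C\Vert\overline{\sfP}_C)$.
\item It then passes to a concave nondecreasing majorant $\widehat B_{\pert,\game}$ of the single-round bound, so that $\log\frac1n\sum_i2^{D_i}\leq\log\widehat B_{\pert,\game}(\frac1n\sum_i\delta_i)$, where $D_i$ is the round-$i$ generation divergence.
\end{itemize}

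In the far case the objective must be made nonpositive, not merely $O(k)$; the paper takes $1-\ptest<\Delta^2/(4\ln2\,\log(|S|^2/\pert))$. Since $m\approx(1-\ptest)n/2$, each round may contribute only $O(k(1-\ptest))$.

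The rest matches the paper: the DD reduction through a common continuation channel that selects $E$ from the $U_i$, and your final $p_\Acc$ cancellation.
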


\subsection{Real and ideal channels}
\label{subsec:realandideal}
We will invoke Theorem~\ref{th:main} in order to do the soundness proof. Before doing so, we need to identify what the ideal and real channels in each round of the protocol are. 
\subsubsection{Real channels}
We first note that the real channels we will be modelling here are not exclusively what is going on in the device, but also Alice's actions outside of the device. This is because the protocol $\diP$ requires Alice to sample inputs in each round randomly, and the device operations act on the respective inputs and $M^\A_i, M^\B_i$. However, the channels in Theorem~\ref{th:main} don't really take external inputs with trusted distributions and only act on their memory. Therefore, our channels $\mapE_i$ representing Alice's operations as well as the internal machinery of the device will first generate the state $\sum_{u_i,x_i,y_i}\sfP_{UXY}(u_i,x_i,y_i)\pure{u_i,x_i,y_i}_{U_iX_iY_i}$, perform a measurement depending on $u_i,x_i,y_i$ on $M^\A_iM^\B_i$ and produce some classical output registers that will not be touched in future rounds. Here we are using $\sfP_{UXY}$ to denote the joint distribution of the $u_i$ and $x_i,y_i$ that are generated in each round; technically no $y_i$ is generated if $u_i=\gen$, but it will be convenient to keep the $Y_i$ register around even in this case, with an empty value $\testQ$. Sampling these variables first gives the same channel as the protocol's ordering because Alice's device is supplied only $x_i$, and $u_i,y_i$ are sampled independently of $a_i$ conditional on $x_i$.

All the classical variables produced in round $i$ are $u_i, x_i, y_i, a_i, b_i$ and the record $c_i$. Therefore, $\mapE_i\in\CPTP{M^\A_iM^\B_i}{U_iX_iY_iA_iB_iC_iM^\A_{i+1}M^\B_{i+1}}$. Using the Postulate Set~\ref{post:shield}, and the structure of Protocol~\ref{prot:DI}, we can write the channel $\mapE_i$ as
\begin{equation}\label{eq:E_i-form}
\mapE_i[\rho_{M^\A_iM^\B_i}] = \mapR_i\left[\sum_{u_i,x_i,y_i}\sfP_{UXY}(u_i,x_i,y_i)\pure{u_i,x_i,y_i}_{U_iX_iY_i}\otimes(\mapE^{\A,x_i}_i\otimes\mapE^{\B,u_i,y_i}_i)[\rho_{M^\A_iM^\B_i}]\right].
\end{equation}
Here the recording channel $\mapR_i\in\CPTP{U_iX_iY_iA_iB_i}{U_iX_iY_iA_iB_iC_i}$ is a trusted operation whose action on the classical records is
\begin{align*}
&\mapR_i[\pure{u_i,x_i,y_i,a_i,b_i}_{U_iX_iY_iA_iB_i}]\\
&=\pure{u_i,x_i,y_i,a_i,b_i}_{U_iX_iY_iA_iB_i}\otimes\pure{g(u_i,x_i,y_i,a_i,b_i)}_{C_i},
\end{align*}
where
\[ g(u_i,x_i,y_i,a_i,b_i) = \begin{cases} \gen & \text{ if } u_i=\gen \\ \sfV(x_i,y_i,a_i,b_i) & \text{ if } u_i =\test. \end{cases} \]
The registers $A_i, B_i$ that $\mapR_i$ acts on are produced by the internal device channels
\[
\mapE^{\A,x_i}_i\in\CPTP{M^\A_i}{A_iM^\A_{i+1}},
\qquad
\mapE^{\B,u_i,y_i}_i\in\CPTP{M^\B_i}{B_iM^\B_{i+1}},
\]
by doing measurements on their internal memories (note that Alice's device channel never gets $u_i$).
For $u_i=\gen$, we write $\mapE^{\B,\gen}_i$ for $\mapE^{\B,\gen,\testQ}_i$. This is an arbitrary channel from $M_i^\B$ to $M^\B_{i+1}$, which also attaches the register $B_i$ with the empty value $\testQ$. This way the outputs of the internal device channels are on the same registers in all cases.

We now describe what the channels $\mapE^{\A,x_i}_i$ and $\mapE^{\B,\test,y_i}_i$ are doing. These channels are producing both a classical output by doing a measurement, and storing a post-measurement state as memory for future rounds. The stored-output form of Naimark dilation represents any such instrument by projective measurements followed by outcome-dependent channels \cite[Appendix~A]{Tan21}. For Alice, we first append an ancilla $\ancA_i$ in the fixed state $\ket0$, and use projectors $\{H^{x_i}_{i,a_i}\}_{a_i}$ on $M_i^\A\ancA_i$. The ancilla and its initial state are the same for all inputs $x_i$. Similarly, Bob's test measurements use an ancilla $\ancB_i$ and projectors $\{K^{y_i}_{i,b_i}\}_{b_i}$ on $M_i^\B\ancB_i$. These ancillas are internal to the round and are not additional inputs to $\mapE_i$. Therefore, we can decompose $\mapE^{\A,x_i}_i$ and $\mapE^{\B,\test,y_i}_i$ as measurement channels, including the addition of the ancillas, composed with channels
\[
\mapL_i^{\A,x_i}\in\CPTP{A_iM_i^\A\ancA_i}{A_iM_{i+1}^\A},
\qquad
\mapL_i^{\B,y_i}\in\CPTP{B_iM_i^\B\ancB_i}{B_iM_{i+1}^\B},
\]
where the latter maps use the $A_i$ and $B_i$ registers only as classical control registers. Explicitly,
\begin{equation}\label{eq:EAB-form-2}
\mapE^{\A,x_i}_i = \mapL^{\A,x_i}_i\circ\mapM^{\A,x_i}_i,
\qquad
\mapE^{\B,\test,y_i}_i = \mapL^{\B,y_i}_i\circ\mapM^{\B,y_i}_i,
\end{equation}
where
\begin{gather*}
\mapM^{\A,x_i}_i[\rho_{M_i^\A}] = \sum_{a_i}\pure{a_i}_{A_i}\otimes H^{x_i}_{i,a_i}(\rho_{M_i^\A}\otimes\pure0_{\ancA_i})H^{x_i}_{i,a_i},\\
\mapM^{\B,y_i}_i[\rho_{M_i^\B}] = \sum_{b_i}\pure{b_i}_{B_i}\otimes K^{y_i}_{i,b_i}(\rho_{M_i^\B}\otimes\pure0_{\ancB_i})K^{y_i}_{i,b_i},\\
\mapL^{\A,x_i}_i[\pure{a_i}_{A_i}\otimes\sigma] = \pure{a_i}_{A_i}\otimes\mapL_i^{\A,x_i,a_i}[\sigma],\\
\mapL^{\B,y_i}_i[\pure{b_i}_{B_i}\otimes\sigma] = \pure{b_i}_{B_i}\otimes\mapL_i^{\B,y_i,b_i}[\sigma].
\end{gather*}
Here $\mapL_i^{\A,x_i,a_i}\in\CPTP{M_i^\A\ancA_i}{M_{i+1}^\A}$ and $\mapL_i^{\B,y_i,b_i}\in\CPTP{M_i^\B\ancB_i}{M_{i+1}^\B}$ are arbitrary CPTP channels.

For use in the ideal channel, we also extend Bob's generation continuation to the enlarged memory by defining
\[
\mapL_i^{\B,\gen}\in\CPTP{M_i^\B\ancB_i}{B_iM_{i+1}^\B},
\qquad
\mapL_i^{\B,\gen}[\sigma]=\mapE_i^{\B,\gen}[\tr[\ancB_i]{\sigma}].
\]
Appending $\ket0_{\ancB_i}$ and applying this continuation gives exactly the original generation channel. Finally, we will often write
\[ \mapE_i= \gamma\pure{\test}_{U_i}\otimes\mapE^\test_i + (1-\gamma)\pure{\gen}_{U_i}\otimes\mapE^\gen_i,\]
where the maps $\mapE^\gen_i, \mapE^\test_i\in\CPTP{M^\A_iM^\B_i}{X_iY_iA_iB_iC_iM^\A_{i+1}M^\B_{i+1}}$ can be identified from \eqref{eq:E_i-form}, noticing that $\mapR_i$ leaves the register $U_i$ unchanged.

\paragraph{Register identification.}
Finally, in order to apply Theorem~\ref{th:main}, we need to identify which output registers of $\mapE_i$ will take the roles of $C_i, O_i$ and $M_{i+1}$ in the theorem. $M^\A_{i+1}M^\B_{i+1}$ need to be included as part of $M_{i+1}$, but there is some freedom in choosing what we consider its protected registers. The accept event is defined on $c_i$ in both $\diP$ and Theorem~\ref{th:main}, so we identify $C_i$ in the protocol with $C_i$ in Theorem~\ref{th:main}. The registers $U_iX_iY_iA_iB_i$ are all stored by Alice externally from the devices, so we could consider all of these the other protected register $\outQ_i$ of Theorem~\ref{th:main}. But for the purposes of the proof, we will only need the registers $U_iX_iA_i$ to be protected. Thus we include $Y_iB_i$ in the theorem's continuing memory; in the actual protocol, later channels will act identically on these registers, but this will not matter since this is a special case of what is allowed in Theorem~\ref{th:main}. Thus the mapping to Theorem~\ref{th:main} is
\begin{gather*}
C_i\mapsto C_i,\qquad U_iX_iA_i\mapsto O_i,\\
\bY_{<i}\bB_{<i}M_i^\A M_i^\B\mapsto M_i,\qquad
\bY_{\leq i}\bB_{\leq i}M_{i+1}^\A M_{i+1}^\B\mapsto M_{i+1}.
\end{gather*}
Here $\bY_{<i}\bB_{<i}$ denotes all the earlier $Y,B$ records and is trivial for $i=1$. We use the same notation $\mapE_i$ for the extension of \eqref{eq:E_i-form} by the identity on these earlier records. The incoming and outgoing device memories remain $M_i^\A M_i^\B$ and $M_{i+1}^\A M_{i+1}^\B$; the ancillas $\ancA_i,\ancB_i$ are used only within the round.

\subsubsection{Ideal channels}

We next define the ideal channels using the real channels above. To display their dependence on the depolarizing parameter, we denote them by $\mapF_i^\pert$; within the construction and its analysis below, we abbreviate $\mapF_i=\mapF_i^\pert$ for fixed $\pert$.

For each round $i$, applying Postulate Set~\ref{post:selftest}\ref{post:rigid} to the full projector families in \eqref{eq:EAB-form-2} gives fixed isometries
\[
V_i^\A:M_i^\A\ancA_i\longrightarrow S_i^\A J_i^\A,
\qquad
V_i^\B:M_i^\B\ancB_i\longrightarrow S_i^\B J_i^\B.
\]
These isometries depend only on the full measurement families, not on the incoming state or the inputs $x_i,y_i$. The rigidity bound holds for every incoming state with its corresponding deficit, including strategies with large deficit.
We also recall the state $\ket{\phi}_{S^\A_i S^\B_i}$ and projectors $T^{x_i}_{a_i}$ from the entanglement-based implementation of $\ddP$ from Postulate Set~\ref{post:selftest}.

Denote $S_i=A_iS_i^\A S_i^\B$, with dimension $|S_i|=|\mathcal A|\,|S^\A|\,|S^\B|=|S|$ --- this is fixed and independent of $i$ since $S^\A_i$ and $S^\B_i$ are copies of the registers $S^\A, S^\B$. For each $x_i$, define the state
\begin{equation}\label{eq:replace-state}
\omega_{S_i}^{x_i}
=\sum_{a_i}\pure{a_i}_{A_i}\otimes
(T_{a_i}^{x_i}\otimes\id)\pure\phi_{S_i^\A S_i^\B}(T_{a_i}^{x_i}\otimes\id).
\end{equation}
This state is normalized because $\{T_{a_i}^{x_i}\}_{a_i}$ is projective and complete, and is classical on $A_i$. We will actually use the depolarized version of the states $\omega_{S_i}^{x_i}$ for $0<\mu<1$:
\[
\omega_{S_i}^{\pert,x_i}=(1-\pert)\omega_{S_i}^{x_i}+\pert\frac{\id_{S_i}}{|S_i|}.
\]
 Using these states, define the replacement channel $\mapP_i^{\mu,x_i}\in\CPTP{S_iJ_i^\A J_i^\B}{S_iJ_i^\A J_i^\B}$ by
\begin{equation}\label{eq:replace-noisy}
\mapP_i^{\pert,x_i}[\sigma_{S_iJ_i^\A J_i^\B}]
=\omega_{S_i}^{\pert,x_i}\otimes\tr[S_i]{\sigma_{S_iJ_i^\A J_i^\B}},
\end{equation}

The actual joint marginal on $J_i^\A J_i^\B$ is retained under $\mapP_i^{\mu,x_i}$, only the state in $S_i$ is replaced.

Finally, we fix CPTP recovery channels
\[
\mapN_i^{\A}\in\CPTP{S_i^\A J_i^\A}{M_i^\A\ancA_i},
\qquad
\mapN_i^{\B}\in\CPTP{S_i^\B J_i^\B}{M_i^\B\ancB_i},
\]
which are left inverses of the isometries $V^\A_i, V^\B_i$: $\mapN_i^\A[V_i^\A\sigma(V_i^\A)^\dagger]=\sigma$, and similarly for Bob. Such channels can be completed arbitrarily on the orthogonal complements of the isometry images.

The ideal generation channel $\mapF_i^\gen$ has the same inputs and outputs as $\mapE_i^\gen$. Its operations are as follows:
\begin{enumerate}
\item Sample $x_i$ from $\sfP_X$, store it in $X_i$, and initialize $Y_i$ to the empty value $\testQ$, as in the real generation channel.
\item Apply Alice's measurement channel $\mapM_i^{\A,x_i}$ to $M_i^\A$ and append $\ket0_{\ancB_i}$ on Bob's side. Then apply $V_i^\A,V_i^\B$ to $M^\A_iM'^\A_i$ and $M^\B_iM'^\B_i$ respectively, getting the registers $A_iS_i^\A J_i^\A S_i^\B J_i^\B$.
\item Apply $\mapP_i^{\pert,x_i}$ from \eqref{eq:replace-noisy}, then apply $\mapN_i^\A\otimes\mapN_i^\B$. The registers are now $A_iM_i^\A\ancA_iM_i^\B\ancB_i$.
\item Apply the real continuation channels  $\mapL_i^{\A,x_i}\otimes\mapL_i^{\B,\gen}$  from \ref{eq:EAB-form-2}. These produce $A_iB_iM_{i+1}^\A M_{i+1}^\B$, with $B_i$ fixed to $\testQ$.
\item Apply the recording operation $\mapR_i$ with $u_i=\gen$, appending $C_i=\gen$. The output is on $X_iY_iA_iB_iC_iM_{i+1}^\A M_{i+1}^\B$.
\end{enumerate}
The register $A_i$ remains classical throughout, since the replacement state is classical on $A_i$ and the recovery channels act only on the $S_i,J^\A_iJ^\B_i$ registers. If the replacement in step~3 is omitted, each extraction isometry cancels with its recovery channel in step~3, by the left-inverse identities for $\mapN_i^\A, \mapN_i^\B$. Bob's added ancilla is discarded by $\mapL_i^{\B,\gen}$. The resulting channel is therefore exactly $\mapE_i^\gen$.

On test rounds no replacement is made. The full ideal channel is
\[
\mapF_i=\ptest\pure{\test}_{U_i}\otimes\mapE_i^\test
+(1-\ptest)\pure{\gen}_{U_i}\otimes\mapF_i^\gen.
\]
It has the same domain $M_i^\A M_i^\B$ and codomain $U_iX_iY_iA_iB_iC_iM_{i+1}^\A M_{i+1}^\B$ as $\mapE_i$. As in the real process, the channels act identically on earlier $\bY_{<i}\bB_{<i}$ when these are included in the continuing memory for Theorem~\ref{th:main}.

For any normalized initial state $\rho^0_{\memA_1\memB_1R}$, including the external side information in $R$, the sequential comparison theorem's two output states are therefore
\[
\realstate=\bmapE[\rho^0],
\qquad
\idealstate=\bmapF[\rho^0],
\]
with the identity on $R$ understood. We write $p_\Acc=\Pr_{\realstate}[\Acc]$.

\subsection{Single-round bound}
\label{subsec:single-round bound}

In order to apply Theorem~\ref{th:main}, we need to upper bound the $\Dsharp_2$ divergence between the real and ideal channels $\mapE_i$ and $\mapF_i$ we have just defined. The $\test$ components of both channels are identical, so only the $\gen$ components of the channels matter in the $\Dsharp_2$. This subsection bounds $\Dsharp_2(\mapE_i^\gen[\pure\rho]\Vert\mapF_i^\gen[\pure\rho])$ for a certain class of test states $\ket\rho$.

We assume Postulate Sets~\ref{post:selftest} and~\ref{post:shield} throughout this subsection. We first specify the game strategy associated with a round and its input state. 
Consider some round $i$ and a normalized purified input $\ket\rho_{RM_i^\A M_i^\B}$. The projectors $H_{i,a_i}^{x_i}$ and $K_{i,b_i}^{y_i}$ from \eqref{eq:EAB-form-2} act on $M_i^\A\ancA_i$ and $M_i^\B\ancB_i$, respectively. Consider the strategy 
\[
\strat=\left(
\tr[R]{\pure\rho}\otimes\pure0_{\ancA_i}\otimes\pure0_{\ancB_i},
\{H_{i,a_i}^{x_i}\}_{x_i,a_i},
\{K_{i,b_i}^{y_i}\}_{y_i,b_i}
\right).
\]
for $\game$, and its deficit $\delta=\delta_\game(\strat)$. Thus the shared state in $\strat$ is our reference state $\ket{\rho}$, with both ancillas appended so that the measurement families from our channels can act on it.

The decomposition \eqref{eq:EAB-form-2} identifies the test statistics of this strategy with those of the real instruments. Indeed, the channels $\mapL_i^{\A,x_i,a_i}$ and $\mapL_i^{\B,y_i,b_i}$ are trace preserving, and the classical outcomes $a_i,b_i$ are retained. Hence, when the real round channel is applied to the trial input, every test input pair in the support of $\sfP_{XY}$ satisfies
\[
\Pr[A_i=a_i,B_i=b_i\mid X_i=x_i,Y_i=y_i,U_i=\test]\\
 =
\norm{(\id_R\otimes H_{i,a_i}^{x_i}\otimes K_{i,b_i}^{y_i})
(\ket\rho\otimes\ket0_{\ancA_i}\otimes\ket0_{\ancB_i})}_2^2.
\]
The right side is the outcome probability in $\strat$. Since the input distribution on the test round is $\sfP_{XY}$, the definition of the game deficit gives
\[
\Pr[C_i=1\mid U_i=\test]=\omega^\ast-\delta.
\]

With this, the main result we will need can be stated as:
\begin{lemma}[Single-round bound]\label{lem:round}
Suppose Postulate Sets~\ref{post:selftest} and~\ref{post:shield} hold, and $0<\pert<1$. For every normalized purified input $\ket\rho_{RM_i^\A M_i^\B}$, let $\delta$ be the deficit of the strategy $\strat$ using $\ket{\rho}$. Then
\begin{gather*}
\mapE_i^\gen[\pure\rho]
\leq\frac{|S|^2}{\pert}\mapF_i^\gen[\pure\rho],\\
\Dsharp_2\left(\mapE_i^\gen[\pure\rho]\middle\Vert
\mapF_i^\gen[\pure\rho]\right)
\leq\log a_\pert^{-1}+\vanf_{\pert,\game}(\delta),
\end{gather*}
where $a_\pert=1-\pert+\pert/|S| \in (0,1)$ and the detailed expression for $\vanf_{\pert,\game}$ is given in \eqref{eq:round-cost}. The function $\vanf_{\pert,\game}$ is vanishing for every fixed $\pert>0$.
\end{lemma}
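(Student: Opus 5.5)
The plan is to reduce everything to Lemma~\ref{lem:depol} via data processing. The real and ideal generation channels share the same sampling of $x_i$, the same measurement $\mapM_i^{\A,x_i}$, the same ancilla appending and extraction isometries, and the same post-processing $\mathcal{N}_i^\A\otimes\mathcal{N}_i^\B$, $\mapL_i^{\A,x_i}\otimes\mapL_i^{\B,\gen}$, $\mapR_i$. They differ only in whether the replacement $\mapP_i^{\pert,x_i}$ is applied. So I define, for each $x_i$, the intermediate state
\[
\xi^{x_i}_{S_iJ_i^\A J_i^\B R}=(V_i^\A\otimes V_i^\B)\,\mapM_i^{\A,x_i}\!\left[\pure\rho\otimes\pure0_{\ancB_i}\right](V_i^\A\otimes V_i^\B)^\dagger ,
\]
which is classical on $A_i$. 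The real output is then $\mathcal{G}[\sum_x \sfP_X(x)\pure x\otimes\xi^x]$, while the ideal output is $\mathcal{G}[\sum_x\sfP_X(x)\pure x\otimes\omega^{\pert,x}_{S_i}\otimes\xi^x_{J_i^\A J_i^\B R}]$, with the same channel $\mathcal{G}$; here I am using that the left-inverse identity makes the recovery undo the isometry. Taking $J=J_i^\A J_i^\B R$, the operator inequality of Lemma~\ref{lem:depol} gives $\xi^x\le (|S|^2/\pert)\,\omega^{\pert,x}\otimes\xi^x_J$ blockwise. Since $\mathcal{G}$ is completely positive, this inequality survives, which proves the first claim.

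For the $\Dsharp_2$ bound I would use data processing (Fact~\ref{th:sharp-facts}\ref{prop:sharp-data-processing}) to drop $\mathcal{G}$. Then the classical direct sum over $x$ (item~\ref{prop:sharp-direct-sum}), together with the fact that the $X$-marginals are identical, gives
\[
\Qsharp_2\le\sum_x\sfP_X(x)\,\Qsharp_2\!\left(\xi^x\middle\Vert\omega^{\pert,x}\otimes\xi^x_J\right).
\]
Here $\omega^{\pert,x}$ is not pure: it is a depolarized mixture, classical on $A_i$. I would handle this by purifying the $A_i$ classical record. Equivalently, I apply Lemma~\ref{lem:depol} to the pure state
\[
\ket{\phi^x}=\sum_a\ket a_{A_i}\ket a_{\tilde A}(T^x_a\otimes\id)\ket\phi,
\]
viewed on $S_i$ together with a copy register, with the real state's $A_i$ copied the same way. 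Tracing out the copy register is data processing, and it maps $\pure{\phi^x}$ to $\omega^x$. However, the depolarization would then live on the larger space. A cleaner alternative is to redo the proof of Lemma~\ref{lem:depol} with $\Pi$ replaced by the projector $\sum_a\pure a\otimes T^x_a\pure\phi T^x_a/\|\cdot\|$ inside each $a$-block. I expect to take this route, or simply to assert that the lemma extends to states whose non-depolarized part is classical-block pure. Either way, each $x$ gives
\[
\Qsharp_2\le\left[a_\pert^{-1/3}+\left(|S|(|S|-1)\eta_x/\pert\right)^{1/3}\right]^3,
\]
where $\eta_x$ is one minus the overlap of $\xi^x_{S_i}$ with the ideal branch state.

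The main step is to bound $\sum_x\sfP_X(x)\eta_x$ by a vanishing function of the deficit $\delta$. Here I would use the rigidity bound \eqref{eq:rigidity}. The vectors $(V^\A\otimes V^\B)H^x_a\ket\rho\ket{00}$ are $\vanf_\game(\delta)$-close on average (in squared $\ell_2$) to $(T^x_a\otimes\id)\ket\phi\ket\tau$. From this, the fidelity of $\xi^x_{A_iS_i}$ with the ideal $\omega^x$ is at least $1-O(\vanf_\game(\delta))$ on average over $x$, by Cauchy--Schwarz and the standard bound relating inner-product deviation to $\ell_2$ distance. Hence $\sum_x\sfP_X(x)\eta_x\le 2\vanf_\game(\delta)$ up to constants. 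Finally, I push the average over $x$ inside using concavity of $t\mapsto(c+t^{1/3})^3$? That function is convex, so instead I bound it via $\eta_x\le1$ and
\[
(c+t^{1/3})^3\le c^3+7t^{1/3}\quad\text{for } t\le1,\ c\le a_\pert^{-1/3}.
\]
Then Jensen for the concave $t^{1/3}$ gives
\[
\Qsharp_2\le a_\pert^{-1}+7\left(|S|^2\cdot 2\vanf_\game(\delta)/\pert\right)^{1/3}.
\]
Taking logarithms gives $\log a_\pert^{-1}+\log\!\left(1+7a_\pert(\cdots)^{1/3}\right)$, which defines $\vanf_{\pert,\game}$. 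The obstacle I anticipate is the purity issue, i.e.\ adapting Lemma~\ref{lem:depol} to the $A_i$-classical target; the rigidity-to-$\eta$ step is routine.
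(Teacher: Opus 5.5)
\paragraph{Where the proposal stands.} Your reduction through a common post-processing channel is correct. The first claim is also fine: it uses only $\xi^x_{S_iJ}\le|S|\,\id_{S_i}\otimes\xi^x_J$ and $\omega^{\pert,x}_{S_i}\ge(\pert/|S|)\id_{S_i}$, neither of which needs a pure target, and complete positivity carries the inequality through.

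\paragraph{The gap.} It sits exactly where you anticipated, and neither proposed fix closes it.

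Copying the classical record of the \emph{already measured} $\xi^x$ into $\tilde A$ leaves the state dephased. Write $\ket{\phi_a}=(T^x_a\otimes\id)\ket\phi$ with $p_a=\norm{\ket{\phi_a}}_2^2$, and let $\xi_a$ be the $S_i^\A S_i^\B$-marginal of the $a$-block of $\xi^x$. The overlap with $\sum_a\ket a\ket a\ket{\phi_a}$ is $\sum_a\bra{\phi_a}\xi_a\ket{\phi_a}$. For a perfect strategy this equals $\sum_a p_a^2$, which is $1/2$ for CHSH, so your $\eta_x$ does not vanish.

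The blockwise-projector variant of Lemma~\ref{lem:depol} breaks at the step $\Pi\rho\Pi\le\pure\phi\otimes\rho_J$. With $\Pi=\sum_a\pure a\otimes\pure{\hat\phi_a}$ you only get $\sum_a\pure a\otimes\pure{\hat\phi_a}\otimes\rho^{(a)}_J$, where $\rho^{(a)}_J$ is the unnormalized junk marginal on outcome $a$. Bounding its $\Qsharp_2$ against $\omega^{\pert,x}\otimes\rho_J$ by roughly $a_\pert^{-1}$ needs $\Tr[\rho^{(a)}_J\rho_J^{-1}\rho^{(a)}_J]\approx p_a^2$. That is a $\chi^2$-type independence of the junk from $a$, which \eqref{eq:rigidity} does not directly provide. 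The available bound $\rho^{(a)}_J\rho_J^{-1}\rho^{(a)}_J\le\rho^{(a)}_J$ leaves a non-vanishing excess, a factor $|\mathcal A|$ for uniform $p_a$.

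\paragraph{The missing idea.} Keep Alice's outcome coherent \emph{in $A_i$ itself} and dephase only at the end.
\begin{itemize}
\item Since the $H^x_{i,a}$ are projectors, $\ket{\zeta_x}=\sum_a\ket a_{A_i}(V_i^\A\otimes V_i^\B)H^x_{i,a}\ket\rho\ket{00}$ is a normalized vector.
\item Its target $\ket{\phi_x}=\sum_a\ket a_{A_i}(T^x_a\otimes\id)\ket\phi$ is already pure on $S_i$. No extra register is needed, so Lemma~\ref{lem:depol} applies with exactly $|S|$ and $a_\pert$.
\item Pinching $A_i$ is a single channel on both arguments. It maps $\pure{\zeta_x}\mapsto\xi^x$ and $\pure{\phi_x}\mapsto\omega^x$, and it fixes $\id_{S_i}/|S|$.
\item The $J$-marginal of $\pure{\zeta_x}$ already equals $\xi^x_J$, because tracing out $A_i$ removes the coherences.
\item Hence pinching sends $\bigl(\pure{\zeta_x},\,((1-\pert)\pure{\phi_x}+\pert\id_{S_i}/|S|)\otimes\zeta^x_J\bigr)$ to your pair $(\xi^x,\omega^{\pert,x}\otimes\xi^x_J)$, and data processing finishes.
\end{itemize}
Dilating into a copy register on the \emph{pure input} also works. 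The only cost is that $|S|$ becomes $|\mathcal A||S|$, which does not give the stated constants.

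\paragraph{The rigidity step also improves.} By orthogonality of the labels, $\norm{\ket{\zeta_x}-\ket{\phi_x}\ket\tau}_2^2$ is exactly the $x$-summand of \eqref{eq:rigidity}. Moreover $\eta_x=\norm{((\id-\pure{\phi_x})\otimes\id)(\ket{\zeta_x}-\ket{\phi_x}\ket\tau)}_2^2$. Together these give $\sum_x\sfP_X(x)\eta_x\le\vanf_\game(\delta)^2$, rather than $O(\vanf_\game(\delta))$.

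\paragraph{The averaging step.} The function $(c+(K\eta)^{1/3})^3=c^3+3c^2(K\eta)^{1/3}+3c(K\eta)^{2/3}+K\eta$ is concave in $\eta$, not convex, so Jensen applies directly. Your substitute inequality does not hold as written, for two reasons:
\begin{itemize}
\item the coefficient is $3c^2+3c+1$, which exceeds $7$ once $c=a_\pert^{-1/3}>1$;
\item the argument $|S|(|S|-1)\eta_x/\pert$ need not be $\le1$.
\end{itemize}
In any case, your route would yield a different function from the specific $\vanf_{\pert,\game}$ in \eqref{eq:round-cost} that the statement fixes.
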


\begin{proof}
The proof mainly consists of building on Lemma~\ref{lem:depol} by establishing a slightly tedious sequence of continuity and data-processing properties; we defer it to Appendix~\ref{app:sharpdivbounds}.
\end{proof}

\subsection{Bounding the accumulated comparison exponent}
\label{subsec:bound_compexp}

The real channel $\mapE_i$ in \eqref{eq:E_i-form} and the channel $\mapF_i$ defined above agree on test rounds and induce the same distribution on $C_i$ for every input. On generation rounds both set $C_i=\gen$, so every trial distribution satisfies $\sfP_{C_i}(\gen)=1-\ptest$. Lemma~\ref{lem:round} verifies the domination condition with $\Lambda=|S|^2/\pert$, including after adjoining an arbitrary reference. The order of operations in Protocol~\ref{prot:DI} and the storage of $C_iU_iX_iA_i$ outside the shielded device supply the protection condition for $C_iO_i$. Finally, $S_\Acc$ in \eqref{eq:accept-ball} is nonempty, closed, convex, and contains a full-support distribution. Hence the register identification above, with $S_\Omega=S_\Acc$, verifies every condition in Postulate Set~\ref{posts:seq}.

We first simplify $\compexp{2}$ slightly: currently, it is a rather complicated optimization across all the channels at once. As is a typical tactic in DI security proofs, we relax it slightly to a simpler form, in terms of a single ``large'' channel $\mapE_\oplus$ that is essentially a direct sum of all the $\mapE_i$ channels (and analogously for the $\mapF_i$ channels), up to some domain or codomain identifications. 
This reduction uses only the agreement on test rounds and the common generation probability; it does not use any rigidity properties. 
While it is rather tedious to formalize, the result should be intuitive enough (see Lemma~\ref{lem:compexp-sum} for the main conclusion).

Explicitly, we shall construct channels $\mapE_\oplus,\mapF_\oplus$ on the direct sum of the input spaces $M_i$, with the orthogonal summands labelled by a classical register $I$. The output space is the direct sum of the spaces $C_iO_iM_{i+1}$, with the same label $I$ retained and all score registers $C_i$ identified with $C$. Both channels first remove coherence between different summands, in the sense that letting $\tau_{ii}$ denotes the diagonal block of an input $\tau$ on the summand $M_i$, their actions are
\[
\mapE_\oplus[\tau]=\sum_{i=1}^n\pure{i}_I\otimes\mapE_i[\tau_{ii}],
\qquad
\mapF_\oplus[\tau]=\sum_{i=1}^n\pure{i}_I\otimes\mapF_i[\tau_{ii}].
\]

We embed the reference spaces $R_i$ into a common reference $R$, on which both channels act identically. We consider normalized input states of the form
\[
\sigma^\oplus=\frac1n\sum_{i=1}^n\pure{i}_I\otimes\pure{\sigma^i}.
\]
The corresponding output states are
\begin{align*}
\sigma^{\oplus,\mapE}&=(\mapE_\oplus\otimes\idmap_R)[\sigma^\oplus]
=\frac1n\sum_{i=1}^n\pure{i}_I\otimes\sigma^{i,\mapE_i},\\
\sigma^{\oplus,\mapF}&=(\mapF_\oplus\otimes\idmap_R)[\sigma^\oplus]
=\frac1n\sum_{i=1}^n\pure{i}_I\otimes\sigma^{i,\mapF_i}
\end{align*}
where $\sigma^{i,\mapE_i}$ and $\sigma^{i,\mapF_i}$ are defined in the same way as Section~\ref{sec:compexp}. Their common distribution on $C$ is $\overline{\sfP}_C=\frac1n\sum_i\sfP_{C_i}$. We also use $\overline{\sfQ}_C \in S_\Acc$ to denote $\frac1n\sum_i\sfQ_{C_i}$, for any feasible distributions in \eqref{eq:rate}.

We first prove the following lemma, upper bounding the comparison exponent in terms of divergences and distributions on the enlarged states.
\begin{lemma}[Direct-sum comparison exponent]\label{lem:compexp-sum}
With $\sigma^\oplus, \sigma^{\oplus,\mapE}, \sigma^{\oplus,\mapF}, \overline{\sfP}_C, \overline{\sfQ}_C$ as defined above,
\[
\compexp{2}
\leq n\sup_{\substack{\sigma^\oplus\\\overline{\sfQ}_C\in S_\Acc}}
\left[\overline{\sfQ}_C(\gen)\Dgen-D(\overline{\sfQ}_C\Vert\overline{\sfP}_C)\right].
\]
The optimization over $\sigma^\oplus$ should be interpreted to mean optimization over states of the specific form $\frac1n\sum_i\pure{i}\otimes\pure{\sigma^i}$.
\end{lemma}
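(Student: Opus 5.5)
Fix feasible $\ket{\sigma^1},\ldots,\ket{\sigma^n}$ and $\sfQ_{C_1},\ldots,\sfQ_{C_n}$ in \eqref{eq:rate} with $\alpha=2$. We bound the summand for each $i$, then aggregate using convexity of the KL divergence and a direct-sum formula for $\Dsharp_2$ on the generation blocks. The argument proceeds term by term.

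\textbf{Step 1: only the generation block contributes.} Since $\mapE_i$ and $\mapF_i$ agree on the $\test$ component, and $C_i\in\{0,1\}$ occurs only when $U_i=\test$, the conditional states for $\cS\in\{0,1\}$ coincide. Hence $\Dsharp_2(\sigma^{i,\mapE_i}_{|\cS}\Vert\sigma^{i,\mapF_i}_{|\cS})=0$ for these $\cS$, because $\Qsharp_2(\rho\Vert\rho)=\tr{\rho}$. The $i$-th summand therefore equals
\[
\sfQ_{C_i}(\gen)\,d_i-D(\sfQ_{C_i}\Vert\sfP_{C_i}),
\qquad
d_i=\Dsharp_2(\sigma^{i,\mapE_i}_{|\gen}\Vert\sigma^{i,\mapF_i}_{|\gen}).
\]
Moreover, every $\sfP_{C_i}$ satisfies $\sfP_{C_i}(\gen)=1-\ptest$. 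If some $\sfQ_{C_i}(\gen)=0$, that term is simply dropped.

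\textbf{Step 2: aggregate the KL terms.} Joint convexity of the KL divergence gives
\[
\sum_i D(\sfQ_{C_i}\Vert\sfP_{C_i})\geq nD(\overline{\sfQ}_C\Vert\overline{\sfP}_C).
\]
This handles the penalty part, with $\overline{\sfQ}_C\in S_\Acc$ guaranteed by the feasibility constraint.

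\textbf{Step 3: aggregate the divergence terms.} We must show
\[
\sum_i\sfQ_{C_i}(\gen)\,d_i\leq n\,\overline{\sfQ}_C(\gen)\,\Dgen.
\]
Conditioned on $\gen$, the output states are direct sums over $I$:
\[
\sigma^{\oplus,\mapE}_{|\gen}=\frac1n\sum_i\pure{i}\otimes\sigma^{i,\mapE_i}_{|\gen},
\]
with the same weights $1/n$ on the $\mapF$ side, since $\sfP_{C_i}(\gen)$ is common to all $i$. By the classical direct sum property (Fact~\ref{th:sharp-facts}\ref{prop:sharp-direct-sum}) with equal weights,
\[
\Dgen=\log\frac1n\sum_i 2^{d_i}.
\]
This is a log-sum-exp, which is at least $\max_i d_i$ minus $\log n$. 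That is not good enough.

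The intended route is to exploit the freedom in the supremum over $\sigma^\oplus$, rather than using the fixed family $\{\sigma^i\}$. Let $j$ maximize $d_j$. Replace every $\sigma^i$ by a copy of $\sigma^j$, transported to round $i$ through the domain identifications of the direct sum construction. The resulting $\sigma^\oplus$ has $\Dgen=d_j$.

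Its new $\overline{\sfP}_C$ is the test distribution of $\sigma^j$, whereas the $\overline{\sfP}_C$ in the claim must correspond to the same $\sigma^\oplus$. This mismatch is exactly the main obstacle. The direct-sum channel is meant to be a single ``representative'' channel, so that each $\sigma^i$ can be viewed as an input to it.

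The fix I would try first is to bound each summand separately, rather than taking $\max_i d_i$. For each $i$, take $\sigma^\oplus$ to be all copies of $\sigma^i$, so that the corresponding $\overline{\sfP}_C$ equals $\sfP_{C_i}$. Then the $i$-th summand is at most
\[
\sup_{\sigma^\oplus}\bigl[\sfQ_{C_i}(\gen)\Dgen-D(\sfQ_{C_i}\Vert\overline{\sfP}_C)\bigr].
\]
The function $\sfQ\mapsto\sup_{\sigma^\oplus}[\cdots]$ is a supremum of affine-minus-convex functions, hence concave in $\sfQ$. Summing over $i$ and using concavity (Jensen) moves the average $\overline{\sfQ}_C\in S_\Acc$ inside, which yields the claimed $n\sup$ bound.

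The step I expect to need care is justifying that copies of a single $\sigma^i$ in all slots are admissible inputs. This relies on the direct-sum identification of all the $M_i$ spaces, and possibly on embedding into a common larger space, so that such a $\sigma^\oplus$ has the required form and achieves $\Dgen=d_i$.
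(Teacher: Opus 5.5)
There is a genuine gap, and it starts in Step~2. Your Step~1 and the formula $\Dgen=\log\frac1n\sum_i2^{d_i}$ are fine. The problem is that bounding the KL terms by joint convexity throws away exactly the piece you need later. With $\overline{\sfQ}_{IC}(i,c)=\sfQ_{C_i}(c)/n$ and $\overline{\sfP}_{IC}(i,c)=\sfP_{C_i}(c)/n$, the KL chain rule gives the sharper bound
\[
\frac1n\sum_i D(\sfQ_{C_i}\Vert\sfP_{C_i})\geq D(\overline{\sfQ}_C\Vert\overline{\sfP}_C)+\overline{\sfQ}_C(\gen)\,D(\overline{\sfQ}_{I|\gen}\Vert\overline{\sfP}_{I|\gen}).
\]
Once the second term is discarded, the inequality you set yourself in Step~3 is false in general. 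Take $\sfQ_{C_i}(\gen)=1$ on a set $T$ of $t<n$ rounds with $d_i=D$, and $d_i=d<D$ off $T$. Then the left side is $tD$, while the right side is $t\log\bigl(\tfrac tn2^D+(1-\tfrac tn)2^d\bigr)<tD$. This is the $\log n$-type deficit you noticed.

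Your workaround does not repair this. Slot $i$ of $\sigma^\oplus$ is processed by $\mapE_i,\mapF_i$, which differ from round to round (the devices are arbitrary and history-dependent). They also act on different spaces $M_i$, which contain the growing records $\bY_{<i}\bB_{<i}$. So a copy of $\sigma^j$ placed in slot $i$ does not produce $d_j$, and the weights on $I$ are pinned at $1/n$, so you cannot reweight either. The lemma's $\sigma^\oplus$ really is $\frac1n\sum_i\pure{i}\otimes\pure{\sigma^i}$, with $\sigma^i$ an input to round $i$.

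Separately, ``a supremum of affine-minus-convex functions, hence concave'' is incorrect: pointwise suprema of concave functions need not be concave. Concavity of the partial maximum would need joint concavity in $(\sfQ,\sigma^\oplus)$, which here comes from the purified-input concavity of $\Qsharp$, not from the reason you give. Even granting it, the copies step still fails.

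The missing idea is to keep the conditional term and use it to pay for the non-uniform weights, without changing the inputs at all. Because every $\sfP_{C_i}(\gen)=1-\ptest$, the distribution $\overline{\sfP}_{I|\gen}$ is uniform on $[n]$. If $\overline{\sfQ}_C(\gen)>0$, set $q_i=\overline{\sfQ}_{I|\gen}(i)=\sfQ_{C_i}(\gen)/(n\overline{\sfQ}_C(\gen))$, so that $\frac1n\sum_i\sfQ_{C_i}(\gen)d_i=\overline{\sfQ}_C(\gen)\sum_iq_id_i$. Jensen's inequality for the logarithm then gives
\[
\sum_iq_id_i-D(q\Vert\overline{\sfP}_{I|\gen})=\sum_{i:q_i>0}q_i\log\frac{2^{d_i}}{nq_i}\leq\log\frac1n\sum_i2^{d_i}=\Dgen .
\]
Combining this with the chain-rule bound yields, for the same family $\{\sigma^i\}$,
\[
\frac1n\sum_i\bigl[\sfQ_{C_i}(\gen)d_i-D(\sfQ_{C_i}\Vert\sfP_{C_i})\bigr]\leq\overline{\sfQ}_C(\gen)\Dgen-D(\overline{\sfQ}_C\Vert\overline{\sfP}_C).
\]
If $\overline{\sfQ}_C(\gen)=0$, your joint-convexity bound alone suffices. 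Equivalently, you can apply the log-sum inequality separately for each $c$, with $a_i=\sfQ_{C_i}(c)/n$, and $b_i=\sfP_{C_i}(c)2^{d_i}/n$ for $c=\gen$ or $b_i=\sfP_{C_i}(c)/n$ otherwise.
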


\begin{proof}
Because the real and ideal states coincide in each test block, the objective in \eqref{eq:rate}, divided by $n$, is
\begin{equation}\label{eq:payoff}
\frac1n\sum_i\sfQ_{C_i}(\gen)\Dgeni{i}-\frac1n\sum_iD(\sfQ_{C_i}\Vert\sfP_{C_i}).
\end{equation}
Moreover, note that 
\begin{equation}\label{eq:D2-directsum}
\Dsharp_2\left(\sigma^{\oplus,\mapE}_{|\gen}\middle\Vert\sigma^{\oplus,\mapF}_{|\gen}\right)
=\log\left(\frac1n\sum_{i=1}^n2^{\Dgeni{i}}\right),
\end{equation}
via the classical direct-sum identity in Fact~\ref{th:sharp-facts}\ref{prop:sharp-direct-sum}.

We introduce the joint distributions
\[
\overline{\sfP}_{IC}(i,c)=\frac1n\sfP_{C_i}(c),
\qquad
\overline{\sfQ}_{IC}(i,c)=\frac1n\sfQ_{C_i}(c).
\]
Both have a uniform marginal on $I$, and their marginals on $C$ are the $\overline{\sfP}_C$ and $\overline{\sfQ}_C$ in the lemma statement.
The chain rule for KL divergence therefore gives
\begin{align}\label{eq:joint-KL}
\frac1n\sum_iD(\sfQ_{C_i}\Vert\sfP_{C_i})
& =D(\overline{\sfQ}_C\Vert\overline{\sfP}_C)
+\sum_c\overline{\sfQ}_C(c)D(\overline{\sfQ}_{I|c}\Vert\overline{\sfP}_{I|c}) \nonumber \\
& \geq D(\overline{\sfQ}_C\Vert\overline{\sfP}_C)
+\overline{\sfQ}_C(\gen)
D(\overline{\sfQ}_{I|\gen}\Vert\overline{\sfP}_{I|\gen}).
\end{align}

For $\overline{\sfQ}_C(\gen)>0$, then by definition of $\overline{\sfQ}_{IC}$,
\[
\overline{\sfQ}_{I|\gen}(i)
=\frac{\overline{\sfQ}_{IC}(i,\gen)}{\overline{\sfQ}_C(\gen)}
=\frac{\sfQ_{C_i}(\gen)}{n\overline{\sfQ}_C(\gen)}.
\]
Consequently, the first sum in \eqref{eq:payoff} can be rewritten as
\[
\frac1n\sum_i\sfQ_{C_i}(\gen)\Dgeni{i}
=\overline{\sfQ}_C(\gen)\sum_i\overline{\sfQ}_{I|\gen}(i)\Dgeni{i}.
\]
Since every round in the protocol has the same test and generation probabilities $\overline{\sfP}_{I|\gen}(i) = \frac1n$ for all $i$. Expanding the second term in \eqref{eq:joint-KL} and using the uniformity of $\overline{\sfP}_{I|\gen}$ then gives
\[
D(\overline{\sfQ}_{I|\gen}\Vert\overline{\sfP}_{I|\gen})
=\sum_{i:\overline{\sfQ}_{I|\gen}(i)>0}\overline{\sfQ}_{I|\gen}(i)\log\bigl(n\overline{\sfQ}_{I|\gen}(i)\bigr).
\]
Therefore:
\begin{align}
& \sum_i\overline{\sfQ}_{I|\gen}(i)\Dgeni{i}
-D(\overline{\sfQ}_{I|\gen}\Vert\overline{\sfP}_{I|\gen}) \nonumber \\
&=\sum_{i:\overline{\sfQ}_{I|\gen}(i)>0}\overline{\sfQ}_{I|\gen}(i)
\log\left(\frac{2^{\Dgeni{i}}}{n\overline{\sfQ}_{I|\gen}(i)}\right) \nonumber\\
&\leq\log\left(
\sum_{i:\overline{\sfQ}_{I|\gen}(i)>0}\overline{\sfQ}_{I|\gen}(i)
\frac{2^{\Dgeni{i}}}{n\overline{\sfQ}_{I|\gen}(i)}
\right) \nonumber\\
&=\log\left(\frac1n
\sum_{i:\overline{\sfQ}_{I|\gen}(i)>0}2^{\Dgeni{i}}\right) \nonumber\\
&\leq\log\left(\frac1n\sum_i2^{\Dgeni{i}}\right)
=\Dgen \label{eq:D2-directsum-ineq}
\end{align}
where the first inequality is Jensen's, for the concave logarithm function, and the last equality uses \eqref{eq:D2-directsum}.

We now plug in \eqref{eq:D2-directsum-ineq} into \eqref{eq:payoff}, using \eqref{eq:joint-KL}:
\begin{align}
&\frac1n\sum_i\left[\sfQ_{C_i}(\gen)\Dgeni{i}
-D(\sfQ_{C_i}\Vert\sfP_{C_i})\right]\nonumber\\
&\leq\overline{\sfQ}_C(\gen)
\left[\sum_i\overline{\sfQ}_{I|\gen}(i)\Dgeni{i}
-D(\overline{\sfQ}_{I|\gen}\Vert\overline{\sfP}_{I|\gen})\right]
-D(\overline{\sfQ}_C\Vert\overline{\sfP}_C)\nonumber\\
&\leq\overline{\sfQ}_C(\gen)\Dgen
-D(\overline{\sfQ}_C\Vert\overline{\sfP}_C).
\label{eq:block-payoff}
\end{align}
If instead $\overline{\sfQ}_C(\gen)=0$, then every $\sfQ_{C_i}(\gen)=0$. The generation contribution in~\eqref{eq:payoff} therefore vanishes, and moreover, from \eqref{eq:joint-KL},
\[
-\frac1n\sum_iD(\sfQ_{C_i}\Vert\sfP_{C_i})
\leq-D(\overline{\sfQ}_C\Vert\overline{\sfP}_C),
\]
which is exactly \eqref{eq:block-payoff} in this case, without needing to define $\overline{\sfQ}_{I|\gen}$.

Taking the supremum over the trial inputs and distributions in \eqref{eq:block-payoff} proves the lemma.
\end{proof}

Essentially, the simplified upper bound on $\compexp{2}$ from the above lemma is optimizing over ``direct-sum convex combinations'' of states across the various protocol rounds --- a roughly analogous concept was introduced in~\cite{arx_AHT24} under the term ``convex range''. 
With this simplified form, we now prove we can make the comparison exponent arbitrarily small, using suitable parameter choices.

\begin{theorem}[Comparison exponent with arbitrarily small linear dependence at constant tolerated noise]\label{th:small-rate}
Assume Postulate Sets~\ref{post:selftest} and~\ref{post:shield}. For any $k'>0$ and any $\pert\in(0,1-2^{-k'/4})$, one can choose constants
\[
0<\epstol<1,
\qquad
\frac12<\ptest<1,
\]
independently of $n$ and the physical memory dimensions, such that
\begin{equation}\label{eq:small-rate}
\compexp{2}\leq k'(1-\ptest)n.
\end{equation}
Consequently, for every $p_\Acc>0$,
\begin{equation}\label{eq:small-rate-div}
D_2\left(\realstate_{|\Acc}\middle\Vert\idealstate\right)
\leq k'(1-\ptest)n+2\log\frac1{p_\Acc}.
\end{equation}
\end{theorem}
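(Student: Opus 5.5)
The plan is to start from Lemma~\ref{lem:compexp-sum}, which reduces the task to showing that, for every admissible $\sigma^\oplus$ and every $\overline{\sfQ}_C\in S_\Acc$,
\[
\overline{\sfQ}_C(\gen)\,\Dgen-D(\overline{\sfQ}_C\Vert\overline{\sfP}_C)\leq k'(1-\ptest).
\]
Once this holds, \eqref{eq:small-rate} follows, and \eqref{eq:small-rate-div} is Theorem~\ref{th:main} at $\alpha=2$ with $S_\Omega=S_\Acc$. I will use three easy facts.
\begin{enumerate}
\item If $\epstol\leq(1-\ptest)/2$, then every $\overline{\sfQ}_C\in S_\Acc$ satisfies $\overline{\sfQ}_C(\gen)\leq 1-\ptest+\epstol\leq\tfrac32(1-\ptest)$.
\item By \eqref{eq:D2-directsum}, $\Dgen=\log\bigl(\frac1n\sum_i 2^{D_i}\bigr)$, where $D_i$ is the per-round $\Dsharp_2$. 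Lemma~\ref{lem:round} gives each $D_i\leq\log a_\pert^{-1}+\vanf_{\pert,\game}(\delta_i)$ and also $D_i\leq\log(|S|^2/\pert)$. Here $\delta_i$ is the deficit of the trial strategy in block $i$.
\item The direct-sum distribution has $\overline{\sfP}_C(1)=\ptest(\omega^\ast-\bar\delta)$, where $\bar\delta=\frac1n\sum_i\delta_i$.
\end{enumerate}

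The key step is to control the average $\frac1n\sum_i2^{D_i}$ by $\bar\delta$ alone, since the vanishing bound is not concave. For a threshold $t>0$, Markov's inequality shows that at most a fraction $\bar\delta/t$ of the blocks have $\delta_i>t$. On the remaining blocks I use the vanishing bound, and on the bad blocks the crude $\Dmax$-type bound. This gives
\[
\Dgen\leq\log\Bigl(a_\pert^{-1}2^{\vanf_{\pert,\game}(t)}+\tfrac{\bar\delta}{t}\tfrac{|S|^2}{\pert}\Bigr).
\]
I then split into two cases at a threshold $\eta$.

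In the case $\bar\delta\leq\eta$, I use that $a_\pert^{-1}\leq(1-\pert)^{-1}$ and that $\pert<1-2^{-k'/4}$, so $\log a_\pert^{-1}<k'/4$. I choose $t$ and then $\eta$ (depending only on $\pert$, $|S|$, $\game$ and $k'$) so that $\Dgen\leq k'/2$. Dropping the nonnegative KL term, the objective is then at most $\tfrac32(1-\ptest)\cdot k'/2<k'(1-\ptest)$.

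In the case $\bar\delta>\eta$, I use the penalty term. Since $\trnorm{\overline{\sfQ}_C-\sfP_C}\leq\epstol$, we have $|\overline{\sfQ}_C(1)-\overline{\sfP}_C(1)|\geq\ptest\bar\delta-\epstol\geq\ptest\eta/2$, provided also $\epstol\leq\ptest\eta/2$. Pinsker's inequality then gives $D(\overline{\sfQ}_C\Vert\overline{\sfP}_C)\geq c:=\frac{2}{\ln2}(\eta/4)^2$, using $\ptest>1/2$. The first term is at most $\tfrac32(1-\ptest)\log(|S|^2/\pert)$. I choose $\ptest\in(1/2,1)$ close enough to $1$ that this is at most $c$, so the objective is $\leq0$.

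The order of choices is: $\pert$ first, then $t$ and $\eta$, then $\ptest$, and finally $\epstol=\min\{(1-\ptest)/2,\ptest\eta/2\}$. All of these are independent of $n$ and of the memory dimensions.

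I expect the main obstacle to be the averaging step, that is, turning the per-block vanishing bounds into a bound depending only on $\bar\delta$. This is why I use the threshold/Markov split rather than any concavity of $\vanf_{\pert,\game}$. A second point to check is that Lemma~\ref{lem:round} holds for every trial state, including those with large deficit, which the crude bound $D_i\leq\log(|S|^2/\pert)$ covers.
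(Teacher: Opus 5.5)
Your proposal is correct and follows the paper's proof. You reduce via Lemma~\ref{lem:compexp-sum}, handle a ``near'' case with the single-round rigidity bound of Lemma~\ref{lem:round}, and handle a ``far'' case where Pinsker's inequality plus $\ptest\to1$ lets the KL penalty dominate the universal bound $\log(|S|^2/\pert)$; your split on $\bar\delta$ is equivalent to the paper's split on $\trnorm{\overline{\sfP}_C-\sfP_C}=\ptest\bar\delta$. The only difference is the averaging step: the paper passes to the least concave majorant $\widehat B_{\pert,\game}$ and applies Jensen, whereas your Markov threshold bound does this directly, and it is essentially the argument the paper itself uses in Remark~\ref{rem:concavemajorant} to show that $\widehat B_{\pert,\game}$ keeps the limit $a_\pert^{-1}$ at zero.
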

\begin{proof}
To simplify the presentation, we first declare that we will eventually choose $\epstol$ satisfying $\epstol\leq1-\ptest$ (this will indeed be explicitly enforced in our final parameter choices below). Then for any feasible point in the upper bound from Lemma~\ref{lem:compexp-sum}, we have
\begin{equation}\label{eq:qgen}
\overline{\sfQ}_C(\gen)\leq1-\ptest+\epstol\leq2(1-\ptest),
\end{equation}
since the acceptance condition gives $\trnorm{\overline{\sfQ}_C-\sfP_C}\leq\epstol$.
With this, we factor $1-\ptest$ out of the bound in Lemma~\ref{lem:compexp-sum} to obtain
\begin{align}
\compexp{2}&\leq(1-\ptest)n
\sup_{\substack{\sigma^\oplus\\\overline{\sfQ}_C\in S_\Acc}}
\left[\frac{\overline{\sfQ}_C(\gen)}{1-\ptest}\Dgen
-\frac{1}{1-\ptest} D(\overline{\sfQ}_C\Vert\overline{\sfP}_C)\right]\nonumber\\
&\leq(1-\ptest)n
\sup_{\substack{\sigma^\oplus\\\overline{\sfQ}_C\in S_\Acc}}
\left[2\,\Dgen
-\frac{1}{1-\ptest}D(\overline{\sfQ}_C\Vert\overline{\sfP}_C)\right].\label{eq:block-rate-normalized}
\end{align}
We will choose the parameters so that the objective function in this supremum is at most $k'$ for every feasible point in the optimization. The intuition is simple: we split the feasible points into two cases, those where $\overline{\sfP}_C$ is ``near'' $S_\Acc$ and those where it is not. For the former case, $\Dgen$ can be bounded by a small value via the rigidity and continuity results in Lemma~\ref{lem:round}, while $-D(\overline{\sfQ}_C\Vert\overline{\sfP}_C)$ can be trivially upper bounded by zero. For the latter, we upper bound $\Dgen$ by a ``trivial'' worst-case value, then choose $\gamma$ sufficiently close to $1$ to ensure that the $-\frac{1}{1-\ptest}D(\overline{\sfQ}_C\Vert\overline{\sfP}_C)$ term still makes the objective function sufficiently small.

We now formalize this intuition.
For each input state $\ket{\sigma^i}$, define
\[
\delta_i=\omega^\ast-\frac{\sfP_{C_i}(1)}{\ptest}.
\]
Here $\sfP_{C_i}(1)/\ptest$ is the probability that Alice and Bob win $\game$ when the test round is chosen: their inputs have the prescribed game distribution, and they apply their respective test measurements to this input state, with the local ancillas initialized as in $\mapE_i$. In particular, this is the deficit of the strategy to which Lemma~\ref{lem:round} applies, i.e.~we can write
$2^{\Dgeni{i}}\leq B_{\pert,\game}(\delta_i)$ where $B_{\pert,\game}$ is a function of the form
\[
B_{\pert,\game}(\delta)
= a_\pert^{-1}\, 2^{\vanf_{\pert,\game}(\delta)} \qquad \text{for } 0\leq\delta\leq\omega^\ast,
\]
and we extend $B_{\pert,\game}$ constantly beyond $\delta \geq \omega^\ast$ to cover the full interval $[0,1]$ (i.e.~define $B_{\pert,\game}(\delta) = B_{\pert,\game}(\omega^\ast)$ for $\delta \in [\omega^\ast,1]$).

For every fixed $\pert>0$, the function $B_{\pert,\game}$ satisfies
\[
B_{\pert,\game}(0)=
\lim_{\delta\to0^+}B_{\pert,\game}(\delta)=a_\pert^{-1},
\]
and is bounded and nondecreasing on $[0,1]$. If it is not already concave, these properties suffice (see Remark~\ref{rem:concavemajorant} for details) to relax it to a {concave} upper bound $\widehat B_{\pert,\game}$ that inherits the properties
\begin{equation} \label{eq:majorantlimits}
\widehat B_{\pert,\game}(0)=
\lim_{\delta\to0^+}\widehat B_{\pert,\game}(\delta)=a_\pert^{-1}.
\end{equation}

Since $\sfP_C(1)=\ptest\omega^\ast$, if $\trnorm{\overline{\sfP}_C-\sfP_C}\leq\Delta$ for some value $\Delta$ and $\ptest\geq1/2$, then
\[
\frac1n\sum_i\delta_i
=\frac{\sfP_C(1)-\overline{\sfP}_C(1)}{\ptest}
\leq2\Delta.
\]
When $2\Delta\leq1$, we get
\begin{align}
\Dgen&=\log\left(\frac1n\sum_i2^{\Dgeni{i}}\right)
\leq\log\left(\frac1n\sum_i\widehat B_{\pert,\game}(\delta_i)\right)\nonumber\\
&\leq\log\widehat B_{\pert,\game}\left(\frac1n\sum_i\delta_i\right)
\leq\log\widehat B_{\pert,\game}(2\Delta)\label{eq:flagged-rigidity}
\end{align}
by applying the individual bounds $2^{\Dgeni{i}}\leq B_{\pert,\game}(\delta_i)\leq\widehat B_{\pert,\game}(\delta_i)$, followed by concavity and monotonicity of $\widehat B_{\pert,\game}$.

We now choose the remaining parameters for the prescribed $\pert$. Since $a_\pert=1-\pert+\pert/|S|\geq1-\pert>2^{-k'/4}$, we have $\log a_\pert^{-1}<k'/4$. We then choose $0<\Delta\leq1/2$ sufficiently small that
\begin{equation}\label{eq:near-choice}
\log\widehat B_{\pert,\game}(2\Delta)\leq\frac{k'}{2}.
\end{equation}
Such a choice is possible because $\widehat B_{\pert,\game}(\delta)\to a_\pert^{-1}$ as $\delta\to0^+$. Finally, we choose
\begin{equation}\label{eq:choice}
0<1-\ptest<\min\left\{\frac12,
\frac{\Delta^2}{4\ln2\,\log(|S|^2/\pert)}\right\},
\qquad
0<\epstol\leq\min\left\{\frac{1-\ptest}{2},\frac\Delta2\right\}.
\end{equation}
All these choices are independent of $n$ and of the physical memory dimensions.

With these choices, the desired result follows by a case analysis over all feasible points $\sigma^\oplus,\overline{\sfQ}_C$ in \eqref{eq:block-rate-normalized}. In the first case, we suppose that $\trnorm{\overline{\sfP}_C-\sfP_C}\leq\Delta$. By non-negativity of relative entropy and \eqref{eq:flagged-rigidity} and \eqref{eq:near-choice}, we obtain
\begin{align}
2\,\Dgen-\frac{D(\overline{\sfQ}_C\Vert\overline{\sfP}_C)}{1-\ptest}
&\leq2\,\Dgen\label{eq:near-case}\\
&\leq2\log\widehat B_{\pert,\game}(2\Delta)\leq k'.\nonumber
\end{align}

The remaining case is $\trnorm{\overline{\sfP}_C-\sfP_C}>\Delta$. The triangle inequality and acceptance imply
\[
\trnorm{\overline{\sfQ}_C-\overline{\sfP}_C}>\Delta-\epstol\geq\frac\Delta2,
\]
and thus by Pinsker's inequality:
\begin{equation}\label{eq:far-penalty}
D(\overline{\sfQ}_C\Vert\overline{\sfP}_C)
\geq\frac{2}{\ln2}\trnorm{\overline{\sfQ}_C-\overline{\sfP}_C}^2
\geq\frac{\Delta^2}{2\ln2}.
\end{equation}
Recall that Lemma~\ref{lem:round} also gives a ``trivial'' universal bound of $\log\frac{|S|^2}{\pert}$ on every $\Dgeni{i}$, and thus also on $\Dgen$. Therefore, the parameter choice \eqref{eq:choice} gives
\[
2\,\Dgen-\frac{D(\overline{\sfQ}_C\Vert\overline{\sfP}_C)}{1-\ptest}
\leq2\log\frac{|S|^2}{\pert}-\frac{\Delta^2}{2(1-\ptest)\ln2}
\leq0
\]
by the choice of $\gamma$.

In summary, in both cases the expression inside the supremum in \eqref{eq:block-rate-normalized} is at most $k'$, giving the desired bound~\eqref{eq:small-rate}. Theorem~\ref{th:main} at order $2$ gives \eqref{eq:small-rate-div}.
\end{proof}

\begin{remark}
\label{rem:concavemajorant}
While it is intuitive that we can upper bound $B_{\pert,\game}$ with some concave function $\widehat B_{\pert,\game}$ (on $[0,1]$), it may be helpful to see an explicit choice. One option is to choose the latter to be the least concave majorant of the former, also sometimes known as the \emph{concave envelope}. Explicitly, this means we set
\[
\widehat B_{\pert,\game}(\delta)
=\sup_{\substack{\lambda_j\geq0,\ \delta_j\in[0,1]\\
\sum_j\lambda_j=1,\ \sum_j\lambda_j\delta_j=\delta}}
\sum_j\lambda_j B_{\pert,\game}(\delta_j),
\]
where the supremum ranges over all finite mixtures. From the definition it is immediate that $\widehat B_{\pert,\game}\geq B_{\pert,\game}$. Combining two finite mixtures shows that $\widehat B_{\pert,\game}$ is concave. Conversely, Jensen's inequality shows that every concave majorant of $B_{\pert,\game}$ bounds all the averages in this supremum, which proves the claimed least-majorant property. The function $\widehat B_{\pert,\game}$ is also nondecreasing: a mixture with mean less than $1$ can be combined with the deficit $1$ to obtain any larger mean, and this does not decrease the average because $B_{\pert,\game}(1)$ is the largest value of $B_{\pert,\game}$.

Also, the majorant preserves the value and limit at $\delta=0$ as claimed: for the value, note that the only mixtures satisfying $\sum_j\lambda_j\delta_j=0$ are those with all $\delta_j=0$. For the limit, note that as $\sum_j\lambda_j\delta_j$ tends to zero, Markov's inequality shows that the total weight on $\delta_j$ values bounded away from zero tends to zero, uniformly over the mixtures. On the remaining sufficiently small $\delta_j$ values, $B_{\pert,\game}$ is arbitrarily close to $a_\pert^{-1}$, while its boundedness controls the contribution from the other terms. Consequently, the properties in~\eqref{eq:majorantlimits} hold.
\end{remark}

\subsection{Finishing the soundness proof}

We fix an allowed DI attack, including Bob's operations in the cryptographic continuation. To finish the soundness proof, we define a common final measurement channel $\mapM$ on the outputs of the real and ideal processes as follows.
\begin{enumerate}[leftmargin=*]
\item If fewer than $m$ of the registers $U_i$ equal $\gen$, terminate and output $0$. Otherwise, let $E$ contain the first $m$ indices with $U_i=\gen$.
\item Discard Alice's device memory, her records outside $E$, and the test records $\bCS$. Retain $\bX_E,\bA_E$ for Alice and Bob's memory and copies of his classical transcript for Bob.
\item Give Bob $E$ and the message that Alice accepted, and run the DD continuation with the fixed attack. Output $1$ if the completed experiment satisfies $\Bad$ and $0$ otherwise.
\end{enumerate}
The channel $\mapM$ agrees with the actual continuation on $\Acc$. On other inputs it is defined without checking the acceptance condition. In particular, applying it to $\idealstate$ does not condition the ideal process on acceptance.

\begin{lemma}[Reduction to the DD protocol]\label{lem:DD}
Under Postulate Sets~\ref{post:shield} and~\ref{post:DD}, the channel $\mapM$ defined above satisfies
\[
\Pr_{\mapM[\idealstate]}[\Bad]\leq q_{\mathrm{dep}}(\pert)^m.
\]
\end{lemma}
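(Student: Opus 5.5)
The plan is to show that $\mapM\circ\bmapF$, applied to $\rho^0$, is itself a valid execution of $\ddP$ in which the $m$ trusted preparations are independent copies of the depolarized source \eqref{eq:noisy-source}, and every remaining operation is something Bob (with auxiliary information) is allowed to do. Postulate Set~\ref{post:DD} then gives the bound directly.

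\textbf{Step 1: rewrite each ideal generation round as a fresh trusted preparation.}
In a generation round, $\mapF_i^\gen$ samples $x_i\sim\sfP_X$, applies Alice's measurement and the isometries, and then the replacement $\mapP_i^{\pert,x_i}$. This replacement discards $S_i=A_iS_i^\A S_i^\B$ and inserts $\omega^{\pert,x_i}_{S_i}$ in tensor product with the retained junk $J_i^\A J_i^\B$.

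Averaging over $x_i$, the freshly inserted registers $X_iA_iS_i^\A S_i^\B$ are in a state independent of everything prior. Tracing out $S_i^\A$ (Alice's ideal share), we check that the state on $X_iA_iS_i^\B$ is exactly $\rho^{\mathrm{prep},\pert}$:
\begin{itemize}
\item The noiseless part gives $\sfP_X(x)\,\pure{a}\otimes\tr[S^\A]{(T^x_a\otimes\id)\pure\phi(T^x_a\otimes\id)}$. By \eqref{eq:source-realization} this equals $\sfP_{XA}(x,a)\pure{\psi_{x,a}}$.
\item The depolarized part gives $\sfP_X(x)\,\id_A/|A|\otimes\id_{S^\B}/|S^\B|$, matching \eqref{eq:noisy-source}.
\end{itemize}
Alice's registers $X_iA_i$ are stored outside the device and never touched again. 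Everything else downstream acts only on $S_i^\B$ together with other registers. These are the recovery channels $\mapN_i^\A\otimes\mapN_i^\B$, the continuations $\mapL_i^{\A,x_i}$, $\mapL_i^{\B,\gen}$, later rounds, and the continuation in $\mapM$.

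\textbf{Step 2: show the remaining registers are in Bob's hands.}
This is the delicate part. Alice's device memory after the round, obtained via $\mapN^\A_i$ and $\mapL_i^{\A,x_i}$, depends on $S_i^\A$ and is correlated with $x_i,a_i$.

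The resolution is to note that $\mapP_i^{\pert,x_i}$ depends on $x_i$ only through the inserted state. We can therefore purify or extend the inserted state and regard $S_i^\A$ and Alice's device memory as part of the adversary's side information. Concretely, I would exhibit the ideal process as follows. Alice prepares $\rho^{\mathrm{prep},\pert}_{XA\idealB}$ and sends $\idealB$. Bob's attack consists of his own register $S_i^\A$, simulating Alice's device and the junk.

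This requires that $S_i^\A$ can be generated from $(x_i,a_i)$ together with the state on $S_i^\B$. Since $S_i^\A$ is correlated with Alice's secret $X_iA_i$, I would avoid giving it to Bob naively. Instead I would use Postulate Set~\ref{post:shield}(ii): once Alice stows away her device in $\mapM$, her device memory is discarded in $\mapM$ step 2. All that matters is the marginal on Alice's records plus Bob's registers.

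So I would trace out Alice's side through the whole ideal process. Alice's device channels in later rounds act only on $M^\A$, and their outputs $A_j$ on test rounds are discarded by $\mapM$. Test-round $C_j$ are also discarded, and generation-round Alice registers outside $E$ are discarded. Hence, by no-signalling (Alice's local operations commute with Bob's and do not change Bob's marginal), we may remove all of Alice's device operations except the trusted preparations in rounds of $E$.

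\textbf{Step 3: handle the dependence of $E$ on the round labels.}
The set $E$ depends only on $U_i$, which is sampled independently by Alice and is not influenced by the devices. Conditioning on the $U$-string, $E$ is fixed, and within that conditional process the $m$ preparations in $E$ are independent copies of $\rho^{\mathrm{prep},\pert}$. Bob's interleaved channels (his device, the test rounds simulated with Alice's side traced appropriately, and the junk) are allowed attacks with auxiliary information. The DD bound $q_{\mathrm{dep}}(\pert)^m$ then holds for each conditional process. Averaging over the $U$-string, and noting that $\mapM$ outputs $0$ when fewer than $m$ generation rounds occur, gives the lemma.

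\textbf{Main obstacle.}
The hard step is Step 2, the no-signalling argument that removes Alice's device memory, which in the ideal channel is built from $S^\A_i$ and correlated with $A_i$. The subtlety is that test rounds after round $i$ use Alice's device on that memory and send $b_j$-dependent data through Bob. I would handle this by noting that test-round Alice outputs go only into discarded registers, so Alice's local channels can be traced out first.
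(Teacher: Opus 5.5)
Your proposal is correct and follows essentially the paper's route: condition on $\mathbf U$ so that $E$ is fixed, then use trace preservation of Alice's local instruments (no-signalling) to trace out her device side through the whole ideal process. Each round in $E$ then contributes one fresh copy of \eqref{eq:noisy-source}, everything else reduces to Bob-local channels, and you apply Postulate Set~\ref{post:DD} and average over $\mathbf U$.

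The paper makes this explicit round by round, via reduced maps on $R_iM_i^\B$. Two small points are tightened there. First, your claim that the inserted registers are ``independent of everything prior'' holds only after tracing out Alice's junk $J_i^\A$; the operative fact is that the marginal on $R_iJ_i^\B$ after Alice's measurement and extraction does not depend on $x_i$. Second, in generation rounds outside $E$, Bob must himself prepare the fixed state $\tr[X_iA_i]{\rho^{\mathrm{prep},\pert}_{X_iA_iS_i^\B}}$.
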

\begin{proof}
We condition on $U_1=u_1,\ldots,U_n=u_n$. If fewer than $m$ values equal $\gen$, then $\mapM$ outputs $0$. Otherwise, $E$ is fixed.

For every $i$, we use the abbreviation $R_i=R\bX_{E\cap[i-1]}\bA_{E\cap[i-1]}\bY_{<i}\bB_{<i}$, retaining any initial external side information in $R$. We will compute the reduced map in $\CPTP{R_iM_i^\B}{R_{i+1}M_{i+1}^\B}$ induced by $\mapF_i$, and show that it uses only Bob's local operations together with one preparation of \eqref{eq:noisy-source} when $i\in E$.

We consider an arbitrary input $\sigma_{R_iM_i^\A M_i^\B}$ and define the CPTP map
\[
\mathcal J_i^\B[\xi_{M_i^\B}]
=
\tr[S_i^\B]{
V_i^\B(\xi_{M_i^\B}\otimes\pure0_{\ancB_i})(V_i^\B)^\dagger
}.
\]
The three cases for $i$ are as follows.
\begin{enumerate}[leftmargin=*]
\item \emph{$i\in E$.} We have
\begin{align*}
\tr[M_{i+1}^\A C_i]{\mapF_i^\gen[\sigma]} =
\pure{\testQ}_{Y_i}\otimes
(\mapL_i^{\B,\gen}\circ\mapN_i^\B)
\left[
\rho^{\mathrm{prep},\pert}_{X_iA_iS_i^\B}
\otimes
\mathcal J_i^\B[\sigma_{R_iM_i^\B}]
\right].
\end{align*}
For fixed $x_i$, write the state before replacement as
\[
\zeta^{x_i}=(V_i^\A\otimes V_i^\B)
\left(\mapM_i^{\A,x_i}[\sigma]\otimes\pure0_{\ancB_i}\right)
(V_i^\A\otimes V_i^\B)^\dagger.
\]
Its marginal satisfies
\begin{align*}
\zeta^{x_i}_{R_iJ_i^\B}
&=\tr[S_i^\B]{V_i^\B
\left(\tr[A_iM_i^\A\ancA_i]{\mapM_i^{\A,x_i}[\sigma]}\otimes\pure0_{\ancB_i}\right)(V_i^\B)^\dagger}\\
&=\mathcal J_i^\B[\sigma_{R_iM_i^\B}],
\end{align*}
where the first equality uses that $V_i^\A$ is an isometry, and the second uses that $\mapM_i^{\A,x_i}$ is trace preserving. Replacement discards the original $A_i$ and prepares a new $A_i$ as part of $\omega_{S_i}^{\pert,x_i}$. Consequently,
\[
\sum_{x_i}\sfP_X(x_i)\pure{x_i}_{X_i}\otimes
\tr[S_i^\A J_i^\A]{\mapP_i^{\pert,x_i}[\zeta^{x_i}]}
=\rho^{\mathrm{prep},\pert}_{X_iA_iS_i^\B}\otimes\mathcal J_i^\B[\sigma_{R_iM_i^\B}].
\]
Finally, for every $x_i,a_i$, trace preservation gives
\[
\Tr_{M_{i+1}^\A}\circ\mapL_i^{\A,x_i,a_i}\circ\mapN_i^\A
=\Tr_{S_i^\A J_i^\A}.
\]
Applying this identity in each classical $x_i,a_i$ block proves the displayed reduced action of $\mapF_i^\gen$.

Thus the right-hand side consists of one DD preparation and CPTP maps on Bob's registers.

\item \emph{$u_i=\gen$ and $i\notin E$.} In this case, Alice's input-output records are discarded. Tracing out $X_iA_i$ in the preceding identity gives
\begin{align*}
\tr[X_iA_iM_{i+1}^\A C_i]{\mapF_i^\gen[\sigma]} =
\pure{\testQ}_{Y_i}\otimes
(\mapL_i^{\B,\gen}\circ\mapN_i^\B)
\left[
\tr[X_iA_i]{\rho^{\mathrm{prep},\pert}_{X_iA_iS_i^\B}}
\otimes
\mathcal J_i^\B[\sigma_{R_iM_i^\B}]
\right].
\end{align*}
The state $\tr[X_iA_i]{\rho^{\mathrm{prep},\pert}_{X_iA_iS_i^\B}}$ is fixed, so Bob can prepare it locally.

\item \emph{$u_i=\test$.} The ideal process applies $\mapE_i^\test$, and
\[
\tr[X_iA_iC_iM_{i+1}^\A]{\mapE_i^\test[\sigma]}
=
\sum_{y_i}\sfP_Y(y_i)\pure{y_i}_{Y_i}
\otimes
\mapE_i^{\B,\test,y_i}[\sigma_{R_iM_i^\B}].
\]
The trace removes Alice's trace-preserving map $\mapE_i^{\A,x_i}$, and
$\sum_{x_i}\sfP_{XY}(x_i,y_i)=\sfP_Y(y_i)$.
The right-hand side is therefore a local channel for Bob.
\end{enumerate}

Composing these identities implements the required marginal of the ideal process using $m$ independent copies of \eqref{eq:noisy-source} and Bob's interspersed channels. The remaining action of $\mapM$ is the prescribed DD continuation, with $E$ and the acceptance message fixed. Postulate Set~\ref{post:DD} bounds its bad-event probability by $q_{\mathrm{dep}}(\pert)^m$. Averaging over $u_1,\ldots,u_n$ proves the claim.
\end{proof}

Finally, we are ready to finish our proof of Theorem~\ref{th:DI}.
\begin{proof}[Proof of Theorem~\ref{th:DI}]
For the given $k$, we set $k'=k/4$ and choose $0<\pert<1-2^{-k/16}$; for example, $\pert=(1-2^{-k/16})/2=\Theta(k)$ as $k\to0$. Theorem~\ref{th:small-rate}, with these $k'$ and $\pert$, gives $\ptest$ and $\epstol$. Decreasing the tolerance if necessary ensures $\epstol\leq(1-\ptest)/2$ while retaining
\[
\compexp{2}\leq\frac{k(1-\ptest)n}{4}.
\]
For $(1-\ptest)n\geq4$, the prescribed choice $m=\floor{(1-\ptest)n/2}$ gives $m\geq2$ and $(1-\ptest)n/2<m+1$. Consequently,
\[
\compexp{2}\leq\frac{k(1-\ptest)n}{4}
\leq\frac{k(m+1)}{2}\leq km,
\]
so Theorem~\ref{th:main} at order $2$ gives 
the bound we need
on $D_2(\realstate_{|\Acc}\Vert\idealstate)$ whenever $p_\Acc>0$.

The deterministic argument at the end of the proof of Theorem~\ref{th:complete}, together with $\epstol\leq(1-\ptest)/2$, ensures that $E$ exists on every accepted execution. Thus the $\mapM$ we have just defined agrees there with the actual cryptographic continuation.

If $p_\Acc=0$, the claimed security bound is immediate. Otherwise, data processing under $\mapM$ gives
\[
D_2\left(
\mapM[\realstate_{|\Acc}]
\middle\Vert
\mapM[\idealstate]
\right)
\leq
\compexp{2}+2\log\frac1{p_\Acc}.
\]
The event that $\mapM$ outputs $1$ has probability $\Pr_{\diP}[\Bad\mid\Acc]$ on the first state and $\Pr_{\mapM[\idealstate]}[\Bad]$ on the second. Lemma~\ref{lem:probability}(i) therefore gives
\[
\Pr_{\diP}[\Bad\mid\Acc]
\leq
\frac{2^{\compexp{2}/2}}{p_\Acc}
\sqrt{\Pr_{\mapM[\idealstate]}[\Bad]}.
\]
Using
$\Pr_{\diP}[\Acc\land\Bad]
=p_\Acc\Pr_{\diP}[\Bad\mid\Acc]$
and applying Lemma~\ref{lem:DD}, we obtain
\begin{equation}\label{eq:DI-transfer}
\Pr_{\diP}[\Acc\land\Bad]
\leq
2^{\compexp{2}/2}q_{\mathrm{dep}}(\pert)^{m/2}.
\end{equation}
Substituting $\compexp{2}\leq km$ proves \eqref{eq:DI-rate}.

For the final claim, we choose $0<k<-\log q$. The assumption
$\limsup_{\pert\to0}q_{\mathrm{dep}}(\pert)\leq q$
implies that $2^kq_{\mathrm{dep}}(\pert)<1$ for all sufficiently small positive $\pert$. We may therefore first choose $0<\pert<1-2^{-k/16}$ small enough to satisfy this inequality, and then apply Theorem~\ref{th:small-rate} with $k'=k/4$ and this $\pert$ to choose the remaining parameters. The example $\pert=(1-2^{-k/16})/2$ already satisfies the additional inequality for all sufficiently small $k$, so the choice $\pert=\Theta(k)$ as $k\to0$ is preserved. All these parameters are constants independent of $n$. Since $m$ grows linearly with $n$, the bound
\[
\Pr_{\diP}[\Acc\land\Bad]
\leq
\left(2^kq_{\mathrm{dep}}(\pert)\right)^{m/2}
\]
decays exponentially.
\end{proof}
\begin{remark}[Using smooth max-divergence]\label{rem:smooth}
Smoothing before conditioning on acceptance also gives
\[
\Pr_{\diP}[\Acc\land\Bad]
\leq\frac1{\esmooth^2}2^{\compexp{2}}q_{\mathrm{dep}}(\pert)^m+\esmooth.
\]
To see this, we fix $f$ in \eqref{eq:rate-tilt}. Equation~\eqref{eq:Qsharp-fchain}, the comparison between sharp and sandwiched divergences, and the smoothing bound in Fact~\ref{th:renyi-properties}\ref{th:smoothing} give a subnormalized $\tau$ within purified distance $\esmooth$ of $\realstate$, with
\[
\tau\leq\frac1{\esmooth^2}2^{\sum_iR^\#_{2,\mapE_i,\mapF_i}(f)}\bmapF^f[\rho^0].
\]
We may pinch $\tau$ in the classical basis of $\bCS$. On any accepting block, the weight introduced by $f$ is at most $2^{n\sup_{\sfQ\in S_\Acc}\sum_c\sfQ(c)f(c)}$. Consequently,
\[
\tau_{\land\Acc}\leq\frac1{\esmooth^2}2^{\sum_iR^\#_{2,\mapE_i,\mapF_i}(f)+n\sup_{\sfQ\in S_\Acc}\sum_c\sfQ(c)f(c)}\idealstate.
\]
The probability of the joint event $\Acc\land\Bad$, evaluated using $\mapM$ after the acceptance test, changes by at most $\esmooth$ when $\realstate$ is replaced by $\tau$, by Fact~\ref{th:FvdG} and Fact~\ref{lem:event-transfer}. Applying $\mapM$ to the operator inequality, using Lemma~\ref{lem:DD}, and taking the infimum over $f$ proves the displayed bound. If $\compexp{2}\leq km$, its multiplicative term is $(1/\esmooth^2)(2^kq_{\mathrm{dep}}(\pert))^m$.
\end{remark}


\section{Improving the rate under no-memory conditions}\label{sec:memoryless}
\label{sec:nomemory}

In this section, we prove our tensor-product comparison theorem, and using it, the DD-to-DI compiler with improved rate, under the memoryless condition.
The formal statements of our main results are in Theorems~\ref{th:robustselftest} and~\ref{th:scalingdemo}; we first present the necessary framework.

\subsection{``Almost-subadditive'' channel divergences}

The key idea here is to define a variant of channel divergence, denoted as $D^\downarrow_\alpha$, that minimizes over states in the second argument rather than maximizing. This allows for obtaining some properties similar to a sandwiched entropy usually denoted as $H^\uparrow_\alpha$ (see e.g.~\cite{Tom16}), such as the critical fact that its {regularized} counterpart $D^{\reg,\downarrow}_\alpha$ is upper bounded by the usual channel divergence. However, this minimization also introduces issues regarding properties that were previously ``obvious'' for the usual channel divergences (for instance, whether we have $D^\downarrow_\alpha \leq D^{\reg,\downarrow}_\alpha$), and hence we take special care to find suitable circumstances under which such properties hold, in Lemmas~\ref{lemma:measadd}--\ref{lemma:weakadd}. 

\begin{definition}
\label{def:channeldiv}
We again highlight (see Remark~\ref{rem:stab}) that our definition here does not inherently include a purifying/stabilizer register.
For any $\alpha\in[0,\infty]$ and $\mapE,\mapF \in \CPTP{\inQ}{\outQ}$,
\begin{align*}
D_\alpha\Dargs{\mapE}{\mapF} &\defvar \sup_{\rho\in\dop{=}(\inQ)} D_\alpha\Dargs{\mapE[\rho]}{\mapF[\rho]} , \nonumber\\
D^\downarrow_\alpha\Dargs{\mapE}{\mapF} &\defvar \sup_{\rho\in\dop{=}(\inQ)}
\inf_{\nu\in\dop{=}(\inQ)} D_\alpha\Dargs{\mapE[\rho]}{\mapF[\nu]},
\end{align*}
and for the latter we define its regularized version
\begin{align*}
D^{\reg,\downarrow}_\alpha\Dargs{\mapE}{\mapF} \defvar \mylimsup_{m\to\infty} \frac{1}{m} D^\downarrow_\alpha\Dargs{\mapE^{\otimes m}}{\mapF^{\otimes m}} .
\end{align*}
Since the limit superior of any sequence always exists\footnote{This is our main reason for using it rather than just the usual limit, because $D^\downarrow_\alpha$ is not obviously superadditive or subadditive and hence it is unclear if the usual limit always exists. Moreover, the choice to use $\mylimsup$ rather than $\myliminf$ here is simply to ensure an inequality direction we need in this work (Lemma~\ref{lemma:weakadd}); the other choice may be useful in other contexts.} in the extended reals, this is indeed well-defined for any $\mapE,\mapF$. 
\end{definition}

The main result of this section is the statement that $D^{\downarrow}_\alpha$ is (weakly) ``almost-subadditive'', in that the $D^{\downarrow}_\alpha$ divergence of $n$ IID copies of a channel pair {followed by a ``global'' measurement}\footnote{While one could attempt to present this result in terms of a ``measured channel divergence'' concept, one would have to use a definition where the optimization over measurements is ``outside'' the optimization over states, which is not necessarily standard.} can be upper bounded by $n$ times the usual channel divergence.
\begin{theorem}[Weak ``almost-subadditivity'' of $D^{\downarrow}_\alpha$]\label{th:almostadd} 
Consider any $n\in\mathbb{N}$, any $\alpha>1$, any channels $\mapE,\mapF \in \CPTP{\inQ}{\outQ}$, any register $\stabE$ such that $|\stabE|\geq|\inQ|$, and any measurement channel $\mapM \in \CPTP{\boutQ \bR}{F}$. Then 
\begin{align}
D^{\downarrow}_\alpha\Dargs{\mapM \circ (\mapE \otimes \idmap_{\stabE})^{\otimes n}}{\mapM \circ (\mapF \otimes \idmap_{\stabE})^{\otimes n}} 
\leq n D_\alpha\Dargs{\mapE \otimes \idmap_{\stabE}}{\mapF \otimes \idmap_{\stabE}} .\label{eq:almostadd}
\end{align}
\end{theorem}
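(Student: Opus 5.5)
We use the de~Finetti-type route sketched in Section~\ref{subsec:techniques}. Abbreviate $\newmapE=\mapE\otimes\idmap_{\stabE}$, $\newmapF=\mapF\otimes\idmap_{\stabE}$, and $d=D_\alpha(\newmapE\Vert\newmapF)$. Fix an arbitrary input $\rho\in\dop{=}(\allQ\bR)$ on $n$ copies, where $\allQ=\inQ_1\cdots\inQ_n$ and $\bR=\stabE_1\cdots\stabE_n$. We must exhibit a $\nu$ with
\[
D_\alpha\bigl(\mapM\circ\newmapE^{\otimes n}[\rho]\,\Vert\,\mapM\circ\newmapF^{\otimes n}[\nu]\bigr)\leq nd .
\]

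\textbf{Step 1: an operator bound for product inputs.} Let $\sigma_{\inQ\stabE}$ be any single-copy state. Since $|\stabE|\geq|\inQ|$, $\sigma$ can be purified from $\stabE$ onto a larger system, and the usual channel divergence with a stabilizer is attained on such purified inputs. Additivity of $D_\alpha$ on tensor products then gives
\[
D_\alpha\bigl(\newmapE^{\otimes n}[\sigma^{\otimes n}]\Vert\newmapF^{\otimes n}[\sigma^{\otimes n}]\bigr)\leq nd .
\]
This handles i.i.d.\ inputs. For a general $\rho$, we want to compare it against a mixture of i.i.d.\ states.

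\textbf{Step 2: reduce to permutation-invariant inputs.} First we would try symmetrizing. Replacing $\rho$ by a random permutation $\bar\rho$ of the $n$ blocks does not obviously work, because $\mapM$ is not permutation covariant. Instead, we take $\nu$ to be a de~Finetti-type mixture. Purify $\rho$ to a permutation-symmetric state on $n$ copies together with a reference. Using the postselection/de~Finetti bound (as in~\cite{GW21,MFSR24}), this gives
\[
\bar\rho\leq g_{n}\int\sigma^{\otimes n}\,d\sigma,
\]
where $g_n$ is polynomial in $n$. This factor is the dimension dependence we must avoid.

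The cleaner route, and the one we would pursue, is a tensor-power trick. Apply the $n$-copy bound to $\rho^{\otimes k}$, viewed as an input on $nk$ copies, with the measurement $\mapM^{\otimes k}$, and choose $\nu$ by minimizing. Then use that $D^\downarrow$ for $\mapM^{\otimes k}$ on $\rho^{\otimes k}$ is at least $k$ times the single quantity. This follows because the infimum over $\nu$ can be restricted to $\nu$ permutation-symmetric across the $k$ blocks, and a measured sandwiched divergence between $\mapM^{\otimes k}[\cdot]$ outputs is lower bounded via additivity and quasi-concavity arguments (the content of Lemmas~\ref{lemma:measadd}--\ref{lemma:weakadd}). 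This yields
\[
D^\downarrow_\alpha(\mapM\circ\newmapE^{\otimes n}\Vert\mapM\circ\newmapF^{\otimes n})\leq\frac1k D^\downarrow_\alpha(\ldots^{\otimes k})\leq \frac1k\Bigl(nkd+\log g_{nk}\Bigr),
\]
where the de~Finetti bound supplies the last step. Since $g_{nk}$ is polynomial in $k$, letting $k\to\infty$ removes the $\log g_{nk}/k$ term, giving exactly $nd$. Mixtures in the second argument are handled by $\nu=\int\sigma^{\otimes nk}d\sigma$, with Fact~\ref{th:renyi-properties}\ref{fact:eventconditioning} applied to convert $\bar\rho\leq g\,\nu$-type bounds into divergence bounds.

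\textbf{Expected main obstacle.} The hard step is the superadditivity direction
\[
D^\downarrow_\alpha(\mapM^{\otimes k}\circ\cdots)\geq k\,D^\downarrow_\alpha(\mapM\circ\cdots).
\]
This is exactly the property that is not automatic for a minimized second argument, and it is why the final measurement is needed. We would try to prove it by restricting $\nu$ to symmetric states, which is possible because the objective is convex in $\nu$. We would then use that classical outputs let us apply a minimax swap (Fact~\ref{th:minimax}) together with the additivity of the classical Rényi divergence minimized over product-symmetric distributions. This is the step I am least sure of.
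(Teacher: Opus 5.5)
Your chain matches the paper's proof in compressed form. You use superadditivity of $D^\downarrow_\alpha$ for product measurement channels to pass from $n$ to $nk$ copies, data processing to strip $\mapM^{\otimes k}$, a finite-$m$ de~Finetti bound at $m=nk$, and $k\to\infty$ in place of the paper's $\limsup$/subsequence argument (Lemmas~\ref{lemma:measadd}, \ref{lemma:downDPI}, \ref{lemma:weakadd}, \ref{lemma:regbound}). However, the two steps that carry the content are not established, and as sketched neither goes through.

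\emph{Superadditivity.} Restricting $\nu$ to block-symmetric states is legitimate, by convexity of $Q_\alpha$ in its second argument plus covariance. But a block-symmetric $\nu$ can still be correlated or entangled across blocks. Such inputs can produce joint outcome distributions outside the convex hull of products of single-block feasible distributions. For example, a singlet measured with a trine POVM on each side gives zero weight to every matching outcome pair. No mixture of products of single-side trine distributions can do this, since each of those has at most one zero entry. Restricting the infimum to product $\nu$ therefore only yields the easy inequality $\leq$. The missing idea is an optimality certificate that tensorizes. The paper takes single-copy minimizers $\tau^i$, which commute with the classical $\rho^i$, and uses the Rubboli--Tomamichel criterion (Fact~\ref{fact:renyi-opt}). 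Set $\Xi_i=(\rho^i)^\alpha(\tau^i)^{-\alpha}$ and $\kappa_i=Q_\alpha(\rho^i\Vert\tau^i)$. Optimality of $\tau^i$ is equivalent to $\mapM_i^\dagger[\Xi_i]\leq\kappa_i\id$. Then $\mapM_1^\dagger[\Xi_1]\otimes\mapM_2^\dagger[\Xi_2]\leq\kappa_1\kappa_2\id$ certifies $\tau^1\otimes\tau^2$ against every joint input, entangled ones included (Lemma~\ref{lemma:classicalinf}). This is exactly where the classical output produced by $\mapM$ is needed.

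\emph{Symmetrization.} The de~Finetti step must handle an arbitrary input on $nk$ copies; already $\rho^{\otimes k}$ is not invariant under all permutations of the $nk$ copies. Such an input cannot in general be purified to a permutation-symmetric state. A domination bound on the symmetrization $\bar\rho$ also does not transfer to $\rho$: generically one only has $\rho\leq(nk)!\,\bar\rho$, and $\log((nk)!)/k\to\infty$, which destroys the polynomial gain. The reduction must exploit the infimum over the second argument together with permutation covariance of $(\mapE\otimes\idmap_{\stabE})^{\otimes nk}$ and $(\mapF\otimes\idmap_{\stabE})^{\otimes nk}$. The proof of Lemma~\ref{lemma:regbound} does this in four moves:
\begin{itemize}
\item it attaches a classical register $P$ recording the permutation;
\item it shows every permuted branch has the same minimized divergence~\eqref{eq:alltermsequal};
\item it pinches $P$ and uses the direct-sum formula to see that the optimal weighting of $P$ is uniform;
\item it obtains the resulting state from a permutation-invariant purification of $\bar\rho_{\allQ}$ by a channel acting only on $\bR$, which is where $|\stabE|\geq|\inQ|$ enters.
\end{itemize}
Only then can the de~Finetti bound be applied, with $\nu$ chosen as the de~Finetti state, and quasi-convexity plus your Step~1 finish the estimate. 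As a minor point, the overhead from Fact~\ref{th:renyi-properties}\ref{fact:eventconditioning} is $\frac{\alpha}{\alpha-1}\log g_{nk}$ rather than $\log g_{nk}$; it still vanishes after dividing by $k$.
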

\begin{proof}
The proof follows by chaining a series of bounds that we prove below as various lemmas, most of which are rather intuitive:
\begin{align*}
D^{\downarrow}_\alpha\Dargs{\mapM \circ (\mapE \otimes \idmap_{\stabE})^{\otimes n}}{\mapM \circ (\mapF \otimes \idmap_{\stabE})^{\otimes n}} 
&\leq D^{\reg,\downarrow}_\alpha\Dargs{\mapM \circ (\mapE \otimes \idmap_{\stabE})^{\otimes n}}{\mapM \circ (\mapF \otimes \idmap_{\stabE})^{\otimes n}} \nonumber\\
&\leq D^{\reg,\downarrow}_\alpha\Dargs{(\mapE \otimes \idmap_{\stabE})^{\otimes n}}{(\mapF \otimes \idmap_{\stabE})^{\otimes n}} \nonumber\\
&\leq n D^{\reg,\downarrow}_\alpha\Dargs{\mapE \otimes \idmap_{\stabE}}{\mapF \otimes \idmap_{\stabE}} \nonumber\\
&\leq n D_\alpha\Dargs{\mapE \otimes \idmap_{\stabE}}{\mapF \otimes \idmap_{\stabE}} ,
\end{align*}
where the first line is the fact that $D^{\downarrow}_\alpha \leq D^{\reg,\downarrow}_\alpha$ holds at least for measurement channels\footnote{Here we also use the fact that any sequence of channels ending in a measurement channel (that measures all registers produced up to that point) is itself a measurement channel.} (Lemma~\ref{lemma:measadd}), the second line is data-processing (Lemma~\ref{lemma:downDPI}), and the third line is weak subadditivity of the regularized version (Lemma~\ref{lemma:weakadd}). 

The main nonintuitive property is the final line, which states that (when stabilized) even the {regularized} version of $D^{\downarrow}_\alpha$ is upper bounded by the usual channel divergence. This was basically proven in~\cite[Lemma~3.5]{MFSR24}, but we provide a full statement and proof as Lemma~\ref{lemma:regbound}.
\end{proof}

We now provide the required lemmas for the above proof.

\begin{lemma}[Data-processing]\label{lemma:downDPI}
 For any channels $\mapE,\mapF \in \CPTP{\inQ}{Q'}$ and $\mapN \in \CPTP{Q'}{Q''}$, and any $\alpha\in[1/2,\infty]$, we have
\begin{gather*}
D^\downarrow_\alpha\Dargs{\mapN \circ \mapE}{\mapN \circ \mapF} \leq D^\downarrow_\alpha\Dargs{\mapE}{\mapF}, \nonumber\\
D^{\reg,\downarrow}_\alpha\Dargs{\mapN \circ \mapE}{\mapN \circ \mapF} \leq D^{\reg,\downarrow}_\alpha\Dargs{\mapE}{\mapF}.
\end{gather*}
\end{lemma}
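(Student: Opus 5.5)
The argument reduces both inequalities to data processing for the sandwiched divergence of states, Fact~\ref{th:renyi-properties}\ref{fact:DPI}, which holds for $\alpha\in[1/2,\infty]$, applied pointwise inside the sup--inf. For the first inequality, fix an arbitrary input $\rho\in\dop{=}(\inQ)$. For every $\nu\in\dop{=}(\inQ)$, data processing under $\mapN$ gives
\[
D_\alpha\Dargs{\mapN\circ\mapE[\rho]}{\mapN\circ\mapF[\nu]}\leq D_\alpha\Dargs{\mapE[\rho]}{\mapF[\nu]}.
\]
The infimum over $\nu$ ranges over the same set $\dop{=}(\inQ)$ on both sides, so taking it preserves the inequality. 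Taking the supremum over $\rho$ then gives $D^\downarrow_\alpha\Dargs{\mapN\circ\mapE}{\mapN\circ\mapF}\leq D^\downarrow_\alpha\Dargs{\mapE}{\mapF}$. Because the minimization is over inputs to the channels rather than over output states, no attainability or compactness issues arise: the inequality holds termwise in $\nu$.

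Two degenerate cases need a brief check, since $D_\alpha$ can be infinite. The first is when $\supp(\mapE[\rho])\not\subseteq\supp(\mapF[\nu])$. Then the right-hand side is $+\infty$ and the inequality is trivial. The second is when $D_\alpha$ is infinite after processing but finite before; this cannot happen, because data processing as stated already covers it, since a support violation is preserved in the right direction.

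For the regularized statement, I would apply the first inequality to the channels $\mapE^{\otimes m},\mapF^{\otimes m}$ with post-processing $\mapN^{\otimes m}$, using the identity $(\mapN\circ\mapE)^{\otimes m}=\mapN^{\otimes m}\circ\mapE^{\otimes m}$. This gives
\[
\frac1m D^\downarrow_\alpha\Dargs{(\mapN\circ\mapE)^{\otimes m}}{(\mapN\circ\mapF)^{\otimes m}}\leq \frac1m D^\downarrow_\alpha\Dargs{\mapE^{\otimes m}}{\mapF^{\otimes m}}
\]
for every $m$. The limit superior is monotone with respect to termwise inequalities in the extended reals, which yields the claim. No step is really an obstacle; the only point needing care is the bookkeeping of extended-real values, which I have addressed above.
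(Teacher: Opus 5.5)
Your proposal is correct and follows the same route as the paper: pointwise data processing for the sandwiched divergence, which survives the infimum over $\nu$ and the supremum over $\rho$, and then survives the $\limsup$ once the first inequality is applied to $\mapE^{\otimes m},\mapF^{\otimes m}$ with post-processing $\mapN^{\otimes m}$. Your explicit use of $(\mapN\circ\mapE)^{\otimes m}=\mapN^{\otimes m}\circ\mapE^{\otimes m}$ simply spells out a step the paper leaves implicit.
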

\begin{proof}
This immediately follows from data-processing for the underlying state divergence, i.e.~for any feasible $\rho,\nu$ in the optimizations we have $D_\alpha\Dargs{\mapN \circ \mapE[\rho]}{\mapN \circ \mapF[\nu]} \leq D_\alpha\Dargs{\mapE[\rho]}{\mapF[\nu]}$, and this inequality is preserved by taking an infimum and also when next taking a supremum. Similarly, this inequality is preserved when taking $\mylimsup$ and hence it also holds for the regularized versions.
\end{proof}

\begin{lemma}\label{lemma:classicalinf}
For any $\alpha>1$, any classical states $\rho^1 \in \dop{=}(C_1)$, $\rho^2 \in \dop{=}(C_2)$, and measurement channels $\mapM_1 \in\CPTP{\inQ_1}{C_1}$, $\mapM_2 \in\CPTP{\inQ_2}{C_2}$ (with a consistent choice of classical basis throughout for $C_1,C_2$), we have 
\begin{align*}
\inf_{{\nu}\in\dop{=}(\inQ_1 \inQ_2)} D_\alpha\Dargs{\rho^1 \otimes \rho^2}{\mapM_1 \otimes \mapM_2[{\nu}]} =
\inf_{{\nu^1}\in\dop{=}(\inQ_1 )} D_\alpha\Dargs{\rho^1}{\mapM_1[{\nu^1}]} + \inf_{{\nu^2}\in\dop{=}(\inQ_2 )} D_\alpha\Dargs{\rho^2}{\mapM_2[{\nu^2}]}.
\end{align*}
\end{lemma}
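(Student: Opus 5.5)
We prove the two inequalities separately; "$\leq$" is immediate and "$\geq$" carries the content. For "$\leq$", take near-optimal $\nu^1,\nu^2$ in the two single-system infima and use the product state $\nu^1\otimes\nu^2$. Then $\mapM_1\otimes\mapM_2[\nu^1\otimes\nu^2]=\mapM_1[\nu^1]\otimes\mapM_2[\nu^2]$. Since $Q_\alpha$ is multiplicative on tensor products of commuting (classical) operators, $D_\alpha$ is additive for normalized product states. This gives the bound.

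For "$\geq$", we reduce everything to classical distributions. Write $\sfP_1,\sfP_2$ for the distributions of $\rho^1,\rho^2$. For any $\nu$, the output $\mapM_1\otimes\mapM_2[\nu]$ is a joint distribution $\sfQ_{12}$ on $C_1C_2$. Its marginals are $\mapM_1[\nu_{\inQ_1}]$ and $\mapM_2[\nu_{\inQ_2}]$, and these range over the same sets as the single-system outputs. It therefore suffices to prove the following classical claim: for any joint distribution $\sfQ_{12}$ with marginals $\sfQ_1,\sfQ_2$,
\[
D_\alpha(\sfP_1\otimes\sfP_2\Vert\sfQ_{12})\geq \inf_{\sfQ'_1\in\mathcal S_1}D_\alpha(\sfP_1\Vert\sfQ'_1)+\inf_{\sfQ'_2\in\mathcal S_2}D_\alpha(\sfP_2\Vert\sfQ'_2),
\]
where $\mathcal S_j$ is the (convex) set of achievable output distributions of $\mapM_j$. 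The catch is that a correlated $\sfQ_{12}$ could in principle do better than $\sfQ_1\otimes\sfQ_2$.

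To handle this, we use the variational / duality form of the classical R\'enyi divergence for $\alpha>1$. Specifically, $\sum_c \sfP(c)^\alpha\sfQ(c)^{1-\alpha}$ is jointly convex in $\sfQ$. Minimizing it over a convex set $\mathcal S$ yields first-order optimality conditions. For a minimizer $\sfQ^\ast_j$ of the single-system problem over $\mathcal S_j$, the optimality condition reads
\[
\sum_c g_j(c)\,\sfQ'_j(c)\leq \sum_c g_j(c)\,\sfQ^\ast_j(c)\quad\text{for all }\sfQ'_j\in\mathcal S_j,\qquad g_j(c)=\Bigl(\frac{\sfP_j(c)}{\sfQ^\ast_j(c)}\Bigr)^{\alpha}.
\]
The right-hand side equals $Q^\ast_j:=\sum_c\sfP_j^\alpha(\sfQ_j^\ast)^{1-\alpha}$.

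Now the product $g=g_1\otimes g_2$ arises as the gradient direction at $\sfQ^\ast_1\otimes\sfQ^\ast_2$ for the joint problem. For any achievable $\sfQ_{12}=\mapM_1\otimes\mapM_2[\nu]$, we must check
\[
\sum g_1(c_1)g_2(c_2)\sfQ_{12}(c_1,c_2)\leq Q_1^\ast Q_2^\ast.
\]
Since $g_j\geq0$, the optimality condition can be rewritten as the operator inequality $\sum_c g_j(c)M^{(j)}_c\leq Q^\ast_j\,\id$ on $\inQ_j$. Taking the tensor product of two such positive operator inequalities gives the joint bound for \emph{every} state $\nu$, entangled or not. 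Convexity of the objective then shows that $\sfQ^\ast_1\otimes\sfQ^\ast_2$ is globally optimal over the convex set of joint achievable outputs, which gives "$\geq$".

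The main obstacle is the two regularity points needed to pass from the optimality condition to the operator inequality. First, we need the supremum over all states on $\inQ_j$ to give an operator-norm bound, which uses that $\mathcal S_j$ is exactly $\{\Tr[M_c\nu]\}$ over all $\nu$. Second, we must handle zeros, i.e.\ where $\sfQ^\ast_j(c)=0$ and $\sfP_j(c)>0$, which corresponds to infinite values. We would treat these by noting that minimizers have finite value only if $\supp\sfP_j\subseteq\supp\sfQ^\ast_j$; otherwise both sides are infinite. If needed, we would perturb by mixing with a full-support element and take a limit.
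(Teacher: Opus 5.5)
Your argument is correct and is essentially the paper's proof: the first-order optimality condition for each single-system minimizer, rewritten through the adjoint as $0\leq\mapM_j^\dagger[g_j]\leq Q_j^\ast\,\id$, is tensored to verify the joint optimality condition at $\sfQ_1^\ast\otimes\sfQ_2^\ast$. The only difference is that you derive necessity and sufficiency of this condition directly from convexity of $\sfQ\mapsto\sum_c\sfP(c)^\alpha\sfQ(c)^{1-\alpha}$, whereas the paper invokes Fact~\ref{fact:renyi-opt}; one wording fix is that your intermediate ``classical claim'' must be restricted to achievable $\sfQ_{12}=\mapM_1\otimes\mapM_2[\nu]$, since for arbitrary joint distributions with marginals in $\mathcal S_1\times\mathcal S_2$ it is false (trivial measurements $M_0=M_1=\id/2$, $\sfP_1=\sfP_2=\delta_0$ and a perfectly correlated uniform $\sfQ_{12}$ give $\log 2<2\log 2$), but your operator-inequality step uses exactly this achievability, so the proof stands.
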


The above lemma is an essentially straightforward consequence of~\cite[Theorem~4 and Corollary~5]{arx_RT24}, apart from technicalities to handle support issues. We first state that result in a form sufficient for our purposes (the full result covers the more general family known as $\alpha$-$z$ divergences):
\begin{fact}[Special case of {\cite[Theorem~4 and Corollary~5]{arx_RT24}}]\label{fact:renyi-opt}
Let $C$ be a finite-dimensional register, and let $\mathcal F\subseteq\dop{=}(C)$ be closed and convex and contain a full-support state. Consider $\alpha>1$, a state $\rho\in\dop{=}(C)$, and a feasible state $\tau\in\mathcal F$ satisfying $[\rho,\tau]=0$.
Then $\tau$ attains the infimum in $\inf_{\nu\in\mathcal F} D_\alpha\Dargs{\rho}{\nu}$ if and only if $\supp(\rho)\subseteq\supp(\tau)$ and
\[
\tr{\Xi\nu}\leq\kappa
\qquad\text{for every }\nu\in\mathcal F,
\]
where, whenever the support condition holds, we define
\[
\Xi=\rho^\alpha\tau^{-\alpha},
\qquad
\kappa=Q_\alpha\Dargs{\rho}{\tau}
=\tr{\rho^\alpha\tau^{1-\alpha}},
\]
with negative powers defined via the Moore--Penrose pseudoinverse.
\end{fact}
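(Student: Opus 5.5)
The plan is to pass from $D_\alpha$ to $Q_\alpha$ and treat the infimum as a convex minimization problem over $\mathcal F$. For $\alpha>1$, the map $\nu\mapsto Q_\alpha(\rho\Vert\nu)$ is convex. Since $D_\alpha=\frac1{\alpha-1}\log Q_\alpha$ with $\alpha-1>0$, a feasible $\tau$ minimizes $D_\alpha$ over $\mathcal F$ if and only if it minimizes $Q_\alpha$. The standard first-order criterion for convex functions on a convex set then says that $\tau$ is optimal if and only if, for every $\nu\in\mathcal F$, the one-sided directional derivative
\[ \partial_{\nu-\tau}Q_\alpha(\rho\Vert\tau)=\lim_{t\to0^+}\frac1t\left[Q_\alpha(\rho\Vert\tau+t(\nu-\tau))-Q_\alpha(\rho\Vert\tau)\right] \]
is nonnegative. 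Sufficiency uses only convexity along the segment from $\tau$ to $\nu$, since that segment lies in $\mathcal F$.

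\emph{Support condition.} Because $\mathcal F$ contains a full-support state, the minimum of $Q_\alpha$ over $\mathcal F$ is finite. If $\supp(\rho)\not\subseteq\supp(\tau)$, then $Q_\alpha(\rho\Vert\tau)=+\infty$ and $\tau$ cannot be optimal, so we may assume the support condition from now on.

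\emph{Reduction to $\supp(\tau)$.} Let $P$ be the projector onto $\supp(\tau)$ and consider the pinching $\sigma\mapsto P\sigma P+(\id-P)\sigma(\id-P)$. This pinching leaves $\rho$ invariant, because $\supp(\rho)\subseteq\supp(\tau)$. It also leaves $\tau$ invariant, because $P$ is the projector onto $\supp(\tau)$. By data processing (Fact~\ref{th:renyi-properties}\ref{fact:DPI}), together with $Q_\alpha=2^{(\alpha-1)D_\alpha}$ being increasing in $D_\alpha$, we get $Q_\alpha(\rho\Vert\sigma_t)\ge Q_\alpha(\rho\Vert\text{pinched }\sigma_t)$. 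The direct-sum rule then shows that the pinched value equals $Q_\alpha(\rho\Vert P\sigma_tP)$, since $\rho$ has no weight in the complementary block. For $\sigma_t=\tau+t(\nu-\tau)$ we have $P\sigma_tP=\tau+tP(\nu-\tau)P$, which for small $t$ is positive definite on $\supp(\tau)$.

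This reduction is exact for the derivative, not merely an inequality. The support condition gives $\supp(\rho)\subseteq\supp(\tau)\subseteq\supp(\sigma_t)$, so the pseudoinverse powers in $Q_\alpha(\rho\Vert\sigma_t)$ are well defined. The operator $\sigma_t$ differs from the block-diagonal operator $\tau+tP(\nu-\tau)P$ only by off-diagonal terms of order $t$ between $\supp(\tau)$ and its complement. On the complement $\sigma_t$ has size $\Theta(t)$, so a Schur-complement computation shows that these off-diagonal terms perturb the sandwiched quantity only at order $O(t^2)/\Theta(t)=O(t)$. Because they are off-diagonal, their first-order contribution vanishes when traced against the block-diagonal $\rho$, so the surviving correction is $o(t)$. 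Hence both one-sided derivatives coincide with that of the smooth function $t\mapsto Q_\alpha(\rho\Vert\tau+tP(\nu-\tau)P)$ computed on $\supp(\tau)$. I expect this step, namely controlling the boundary/support behaviour and confirming the $o(t)$ bound, to be the main technical obstacle.

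\emph{Derivative computation.} Write $s=\frac{1-\alpha}{2\alpha}$, $g(x)=x^s$ and $H=P(\nu-\tau)P$. Differentiating $\tr{(g(\sigma)\rho g(\sigma))^\alpha}$ at $\sigma=\tau$ gives
\[ \alpha\,\tr{(\tau^s\rho\tau^s)^{\alpha-1}\bigl(Dg(\tau)[H]\rho\tau^s+\tau^s\rho\,Dg(\tau)[H]\bigr)}. \]
Since $[\rho,\tau]=0$, the operator $A=\rho\,\tau^s(\tau^s\rho\tau^s)^{\alpha-1}=\rho^\alpha\tau^{2s\alpha-s}$ commutes with $\tau$. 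The Daleckii--Krein formula then gives $\tr{A\,Dg(\tau)[H]}=\tr{A\,g'(\tau)H}$. Collecting the two equal terms yields
\[ \partial_{\nu-\tau}Q_\alpha(\rho\Vert\tau)=2\alpha s\,\tr{\rho^\alpha\tau^{2s\alpha-1}H}=(1-\alpha)\tr{\Xi(\nu-\tau)}, \]
using $2s\alpha-1=-\alpha$. Here $P$ may be dropped because $\Xi=\rho^\alpha\tau^{-\alpha}$ is supported in $\supp(\tau)$. Since $\tr{\Xi\tau}=\tr{\rho^\alpha\tau^{1-\alpha}}=\kappa$ and $1-\alpha<0$, nonnegativity of the derivative for all $\nu\in\mathcal F$ is equivalent to $\tr{\Xi\nu}\le\kappa$ for all $\nu\in\mathcal F$. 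This is exactly the stated criterion, proving both directions.
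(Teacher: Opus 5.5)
The paper does not prove this statement; it imports it as a special case of Theorem~4 and Corollary~5 of~\cite{arx_RT24}. That citation covers the whole $\alpha$-$z$ family and sidesteps any boundary analysis. Your route is elementary and self-contained: convexity of $\nu\mapsto Q_\alpha(\rho\Vert\nu)$, the first-order optimality criterion, and a Daleckii--Krein derivative. Most of it is sound. The derivative $(1-\alpha)\bigl(\tr{\Xi\nu}-\kappa\bigr)$ along $A_t=(1-t)\tau+tP\nu P$ is correct. The sufficiency direction is also complete. It only needs $Q_\alpha(\rho\Vert\sigma_t)\geq Q_\alpha(\rho\Vert A_t)$ with equality at $t=0$, and your pinching argument supplies exactly that.

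The gap is in the necessity direction. When $\nu$ has weight outside $\supp(\tau)$, you need the reverse comparison of one-sided derivatives, and your justification for it is wrong.

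\begin{itemize}
\item The difference $\sigma_t-A_t$ is not purely off-diagonal. Besides $tB$ with $B=P\nu(\id-P)$, it contains the block $tD$ with $D=(\id-P)\nu(\id-P)$, which is invertible on $\supp(\sigma_t)\ominus\supp(\tau)$.
\item The Schur complement gives $P(\sigma_t+\lambda)^{-1}P=(A_t+\lambda-E_\lambda)^{-1}$ with $E_\lambda=t^2B(tD+\lambda)^{-1}B^\dagger$. For small $\lambda$ this is $tBD^{-1}B^\dagger$.
\item That is an order-$t$ correction in the \emph{diagonal} block, quadratic in $B$. Tracing against the block-diagonal $\rho$ does not remove it.
\item Nothing in your heuristic distinguishes $\sigma^{-\beta}$ from $\sigma^{-1}$, and for $\sigma^{-1}$ the conclusion is false. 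Take $\rho=\tau=\pure{0}$ and $\nu=\pure{+}$: then $\bra{0}\sigma_t^{-1}\ket{0}=1/(1-t)$, whereas $A_t^{-1}=1/(1-t/2)$, so the first-order terms differ.
\end{itemize}

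What rescues the step is the exponent. You have $Q_\alpha(\rho\Vert\sigma)=\tr{(\rho^{1/2}\sigma^{-\beta}\rho^{1/2})^\alpha}$ with $\beta=1-1/\alpha<1$, and $\rho^{1/2}=P\rho^{1/2}P$. On $\supp(\sigma_t)$, write $x^{-\beta}=\frac{\sin\pi\beta}{\pi}\int_0^\infty\lambda^{-\beta}(x+\lambda)^{-1}\,d\lambda$. Two estimates then suffice:
\begin{itemize}
\item $\norm{E_\lambda}\leq\min\{t,\,t^2\norm{B}^2/\lambda\}$, using $BD^{-1}B^\dagger\leq P\nu P$, which follows from $\nu\geq0$;
\item a uniform lower bound on $A_t$ over $\supp(\tau)$ for small $t$.
\end{itemize}
The correction to the resolvent is of order $t$ only for $\lambda\lesssim t$, and of order $t^2/\lambda$ beyond that. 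Against the weight $\lambda^{-\beta}$, both ranges contribute $O(t^{2-\beta})$. Hence $\norm{P\sigma_t^{-\beta}P-A_t^{-\beta}}=O(t^{2-\beta})=o(t)$, which gives equality of the one-sided derivatives. With this estimate in place of the off-diagonal heuristic, your argument goes through; note that it uses $\alpha<\infty$ essentially.
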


With this we can state the proof:
\begin{proof}
For $i\in\{1,2\}$, let
$\omega^i=\mapM_i[\id_{\inQ_i}/|\inQ_i|]$ and
$S_i=\supp(\omega^i)$. Every output of $\mapM_i$ is supported on
$S_i$, and every output of $\mapM_1\otimes\mapM_2$ is supported on
$S_1\otimes S_2$. Hence, if $\supp(\rho^i)\not\subseteq S_i$ for
either $i$, both sides of the claimed identity are $+\infty$.
Otherwise, restrict $C_i$ to $S_i$. The feasible output sets
\begin{align*}
\mathcal F_i
&\defvar\{\mapM_i[\nu^i]:\nu^i\in\dop{=}(\inQ_i)\},
\\
\mathcal F_{12}
&\defvar\{(\mapM_1\otimes\mapM_2)[\nu]:
\nu\in\dop{=}(\inQ_1\inQ_2)\}
\end{align*}
are then compact and convex and contain full-support states
$\omega^i$ and $\omega^1\otimes\omega^2$, respectively. Thus they
satisfy the assumptions on the feasible set in
Fact~\ref{fact:renyi-opt}.

Choose minimizers $\tau^i\in\mathcal F_i$ of
$D_\alpha\Dargs{\rho^i}{\tau^i}$. They exist by compactness and
lower semicontinuity, and their values are finite, so
$\supp(\rho^i)\subseteq\supp(\tau^i)$. Since these states commute, Fact~\ref{fact:renyi-opt} gives
\begin{gather*}
\tr{\Xi_i\mapM_i[\nu^i]}\leq\kappa_i
\qquad\forall\nu^i\in\dop{=}(\inQ_i),
\\
\text{where} \quad \Xi_i\defvar(\rho^i)^\alpha(\tau^i)^{-\alpha},
\qquad
\kappa_i \defvar Q_\alpha\Dargs{\rho^i}{\tau^i}.
\end{gather*}
This can be rewritten equivalently in terms of the adjoints
$\mapM_i^\dagger$ of the measurement channels:
\begin{align*}
    0\leq\mapM_i^\dagger[\Xi_i]\leq\kappa_i\id_{\inQ_i}.
\end{align*}
Consequently, for every $\nu\in\dop{=}(\inQ_1\inQ_2)$,
\begin{align*}
&\tr{(\Xi_1\otimes\Xi_2)
(\mapM_1\otimes\mapM_2)[\nu]}
\\
&=\tr{(\mapM_1^\dagger[\Xi_1]\otimes\mapM_2^\dagger[\Xi_2])\nu}\\
&\leq\kappa_1\kappa_2
\\
&=Q_\alpha\Dargs{\rho^1\otimes\rho^2}{\tau^1\otimes\tau^2}.
\end{align*}
The state $\tau^1\otimes\tau^2$ belongs to $\mathcal F_{12}$,
its support contains that of $\rho^1\otimes\rho^2$, and its
optimality operator in Fact~\ref{fact:renyi-opt} is $\Xi_1\otimes\Xi_2$.
Applying Fact~\ref{fact:renyi-opt} again therefore shows that $\tau^1\otimes\tau^2$
minimizes the joint divergence over $\mathcal F_{12}$, yielding the claimed identity.
\end{proof}

\begin{lemma}[Superadditivity for measurement channels]\label{lemma:measadd}
 For any $\alpha>1$ and any measurement channels $\mapM,\mapN \in \CPTP{Q}{C}$, $\mapM',\mapN'\in \CPTP{Q'}{C'}$,
\begin{align}
D^\downarrow_\alpha\Dargs{\mapM \otimes \mapM'}{\mapN \otimes \mapN'} \geq D^\downarrow_\alpha\Dargs{\mapM}{\mapN} + D^\downarrow_\alpha\Dargs{\mapM'}{\mapN'}  , \label{eq:measadd}
\end{align}
and thus for any $m\in\mathbb{N}$,
\begin{gather}
\frac{1}{m} D^\downarrow_\alpha\Dargs{\mapM^{\otimes m}}{\mapN^{\otimes m}} \geq D^\downarrow_\alpha\Dargs{\mapM}{\mapN} \label{eq:measweakadd}, \\
D^{\reg,\downarrow}_\alpha\Dargs{\mapM}{\mapN} \geq D^\downarrow_\alpha\Dargs{\mapM}{\mapN}. \notag
\end{gather}
\end{lemma}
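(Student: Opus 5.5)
The plan is to prove the two-channel superadditivity~\eqref{eq:measadd} directly from Lemma~\ref{lemma:classicalinf}. The consequences~\eqref{eq:measweakadd} and the regularized bound then follow by routine induction and limits.

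For~\eqref{eq:measadd}, I will restrict the outer supremum in the definition of $D^\downarrow_\alpha$ on the left-hand side to product inputs $\rho\otimes\rho'$ with $\rho\in\dop{=}(Q)$ and $\rho'\in\dop{=}(Q')$. This can only decrease the supremum. For such an input, $(\mapM\otimes\mapM')[\rho\otimes\rho'] = \mapM[\rho]\otimes\mapM'[\rho']$ is a product of classical states on $C$ and $C'$. The inner infimum is taken over all $\nu\in\dop{=}(QQ')$ of $D_\alpha\Dargs{\mapM[\rho]\otimes\mapM'[\rho']}{(\mapN\otimes\mapN')[\nu]}$. Lemma~\ref{lemma:classicalinf}, applied with the measurement channels $\mapN,\mapN'$ and the classical states $\rho^1=\mapM[\rho]$, $\rho^2=\mapM'[\rho']$, shows that this infimum equals
\[ \inf_{\nu^1}D_\alpha\Dargs{\mapM[\rho]}{\mapN[\nu^1]}+\inf_{\nu^2}D_\alpha\Dargs{\mapM'[\rho']}{\mapN'[\nu^2]}. \]
This step needs the output bases of $\mapM$ and $\mapN$ to agree, which holds because both map to the same classical register $C$. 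Taking the supremum over $\rho$ and $\rho'$ separately then gives $D^\downarrow_\alpha\Dargs{\mapM}{\mapN}+D^\downarrow_\alpha\Dargs{\mapM'}{\mapN'}$, and this is a lower bound on the left-hand side.

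Some care is needed with infinities. If either term on the right is $+\infty$, the same product-input argument together with the support case in the proof of Lemma~\ref{lemma:classicalinf} shows that the left-hand side is $+\infty$. The sum is never of the form $\infty-\infty$, because $D_\alpha\geq0$ for normalized states.

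For~\eqref{eq:measweakadd}, I note that $\mapM^{\otimes m}$ and $\mapN^{\otimes m}$ are again measurement channels into the common classical register $C^m$. Induction using~\eqref{eq:measadd} therefore gives $D^\downarrow_\alpha\Dargs{\mapM^{\otimes m}}{\mapN^{\otimes m}}\geq m D^\downarrow_\alpha\Dargs{\mapM}{\mapN}$. Dividing by $m$ and taking $\limsup_{m\to\infty}$ yields the regularized statement.

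The main obstacle is the additivity of the inner minimization over correlated $\nu$. Since the infimum ranges over possibly entangled inputs $\nu$ to $\mapN\otimes\mapN'$, it is not obvious that correlations cannot lower the divergence. This is exactly what Lemma~\ref{lemma:classicalinf} supplies, via the optimality conditions of Fact~\ref{fact:renyi-opt}. Given that lemma, the remaining work is bookkeeping.
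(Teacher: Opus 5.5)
Your proposal is correct and matches the paper's proof. The paper also obtains \eqref{eq:measadd} directly from Lemma~\ref{lemma:classicalinf}; your restriction of the outer supremum to product inputs $\rho\otimes\rho'$ is exactly the step it leaves implicit in ``immediately gives''. It then gets \eqref{eq:measweakadd} by induction and the regularized bound by taking $\limsup$, as you do, and your remarks on infinite values and nonnegativity of $D_\alpha$ simply make explicit what the paper handles within Lemma~\ref{lemma:classicalinf}.
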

\begin{proof}
Lemma~\ref{lemma:classicalinf} immediately gives~\eqref{eq:measadd}; the remaining bounds follow by inductively applying~\eqref{eq:measadd} to obtain
\begin{align*}
D^\downarrow_\alpha\Dargs{\mapM^{\otimes m}}{\mapN^{\otimes m}} \geq D^\downarrow_\alpha\Dargs{\mapM^{\otimes (m-1)}}{\mapN^{\otimes (m-1)}} + D^\downarrow_\alpha\Dargs{\mapM}{\mapN} \geq \dots \geq  m D^\downarrow_\alpha\Dargs{\mapM}{\mapN},
\end{align*}
and then taking $\mylimsup$ on the LHS of~\eqref{eq:measweakadd} to obtain the regularized version.
\end{proof}

\begin{lemma}[Weak subadditivity of regularized version]\label{lemma:weakadd}

For any $n\in\mathbb{N}$, $\alpha\in[0,\infty]$, and any channels $\mapE,\mapF\in \CPTP{\inQ}{\outQ}$,
\begin{align*}
\frac{1}{n} D^{\reg,\downarrow}_\alpha\Dargs{\mapE^{\otimes n}}{\mapF^{\otimes n}} \leq D^{\reg,\downarrow}_\alpha\Dargs{\mapE}{\mapF} .
\end{align*}
\end{lemma}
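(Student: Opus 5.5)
The plan is to work directly from the definition of the regularized quantity as a $\mylimsup$ and compare subsequences. Fix $n$. By definition,
\[
D^{\reg,\downarrow}_\alpha\Dargs{\mapE^{\otimes n}}{\mapF^{\otimes n}} = \mylimsup_{m\to\infty}\frac1m D^\downarrow_\alpha\Dargs{(\mapE^{\otimes n})^{\otimes m}}{(\mapF^{\otimes n})^{\otimes m}} = \mylimsup_{m\to\infty}\frac1m D^\downarrow_\alpha\Dargs{\mapE^{\otimes nm}}{\mapF^{\otimes nm}}.
\]
The second equality only identifies $(\mapE^{\otimes n})^{\otimes m}$ with $\mapE^{\otimes nm}$, up to a reordering of registers. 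This reordering is a unitary permutation acting on both the input and the output. Both the supremum over $\rho$ and the infimum over $\nu$ range over all states of the input space, so they are invariant under relabelling the input. The sandwiched divergence is invariant under applying the same unitary to both arguments on the output side. Hence $D^\downarrow_\alpha$ is unchanged by this identification.

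Dividing by $n$ gives
\[
\frac1n D^{\reg,\downarrow}_\alpha\Dargs{\mapE^{\otimes n}}{\mapF^{\otimes n}} = \mylimsup_{m\to\infty} a_{nm},
\qquad\text{where}\quad a_k \defvar \frac1k D^\downarrow_\alpha\Dargs{\mapE^{\otimes k}}{\mapF^{\otimes k}}.
\]
Here we use that multiplying by the positive constant $1/n$ commutes with $\mylimsup$, including in the extended reals. The right-hand side is the limit superior of the subsequence $(a_{nm})_{m}$ of the sequence $(a_k)_k$. The limit superior of any subsequence is at most that of the full sequence, since the supremum of a tail of the subsequence is at most the supremum of the corresponding tail of the full sequence. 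Therefore $\mylimsup_m a_{nm} \le \mylimsup_k a_k = D^{\reg,\downarrow}_\alpha\Dargs{\mapE}{\mapF}$, which is the claim.

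This is exactly why the footnote to Definition~\ref{def:channeldiv} chose $\mylimsup$: with $\myliminf$ the subsequence inequality would point the wrong way.

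The only point needing care is the bookkeeping of infinite values. The argument holds verbatim in $[-\infty,+\infty]$, because the subsequence bound on $\mylimsup$ is valid there. In fact each $a_k\ge 0$, since $D_\alpha\ge0$ between normalized states for $\alpha\in[1/2,\infty]$, but this is not needed. No property specific to $\alpha$ is used, so the lemma holds for all $\alpha\in[0,\infty]$ as stated.

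I do not expect a real obstacle. The register-permutation invariance is the only step that is not purely about sequences, and it follows directly from the definitions.
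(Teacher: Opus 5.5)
Your proposal is correct and follows the paper's argument: the left-hand side is the $\limsup$ of the subsequence $\bigl(\tfrac{1}{nm}D^\downarrow_\alpha(\mapE^{\otimes nm}\Vert\mapF^{\otimes nm})\bigr)_m$ of the sequence defining the right-hand side, so its $\limsup$ can only be smaller. Your extra bookkeeping on register identification and extended-real values is harmless and consistent with the paper's one-line proof.
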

\begin{proof}
The relevant sequences in the definitions of the LHS and the RHS are respectively
\begin{align*}
\left\{\frac{1}{mn} D^{\downarrow}_\alpha\Dargs{\mapE^{\otimes mn}}{\mapF^{\otimes mn}} \right\}_{m\in\mathbb{N}}
\quad\text{and}\quad
\left\{\frac{1}{m} D^\downarrow_\alpha\Dargs{\mapE^{\otimes m}}{\mapF^{\otimes m}}\right\}_{m\in\mathbb{N}} 
.
\end{align*}
Since the former is a subsequence of the latter, its $\mylimsup$ is upper bounded by that of the latter.
\end{proof}

\newcommand{\allE}{\bR}
\begin{lemma}\label{lemma:regbound}
For any $\alpha>1$, channels $\mapE,\mapF \in \CPTP{\inQ}{\outQ}$ and stabilizer register $\stabE$ such that $|\stabE|\geq|\inQ|$,
\begin{align*}
D^{\reg,\downarrow}_\alpha\Dargs{\mapE \otimes \idmap_{\stabE}}{\mapF \otimes \idmap_{\stabE}} \leq D_\alpha\Dargs{\mapE \otimes \idmap_{\stabE}}{\mapF \otimes \idmap_{\stabE}}.
\end{align*}
\end{lemma}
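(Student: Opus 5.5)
The plan is to prove the bound for each fixed number of copies $m$, up to an additive error that is $o(m)$, and then divide by $m$ and take $\mylimsup_{m\to\infty}$. Write $\widetilde\mapE=\mapE\otimes\idmap_\stabE$ and $\widetilde\mapF=\mapF\otimes\idmap_\stabE$, both channels from $\inQ\stabE$ to $\outQ\stabE$. Let $d_\alpha=D_\alpha(\widetilde\mapE\Vert\widetilde\mapF)$. The target is: for every $\rho\in\dop{=}((\inQ\stabE)^{\otimes m})$ there is a $\nu$ with
\[
D_\alpha\Dargs{\widetilde\mapE^{\otimes m}[\rho]}{\widetilde\mapF^{\otimes m}[\nu]}\leq m\,d_\alpha+O(\log m).
\]
The key freedom is that $D^\downarrow_\alpha$ lets us pick the second input $\nu$ after seeing $\rho$. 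We take $\nu$ to be a de~Finetti-type ``universal'' state
\[
\nu_m=\int\sigma^{\otimes m}\,d\sigma ,
\]
with $\sigma$ ranging over pure states on $\inQ\stabE$ and $d\sigma$ the unitarily invariant measure. This follows~\cite[Lemma~3.5]{MFSR24}.

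\textbf{Step 1 (reduction to symmetric inputs).} We reduce to $\rho$ supported on the symmetric subspace of $(\inQ\stabE)^{\otimes m}$. The condition $|\stabE|\geq|\inQ|$ is what should make this possible: the relevant information in $\rho$ can be re-encoded, by an isometry acting only on the stabilizer registers, into a permutation-invariant purification. Isometries on $\stabE^{\otimes m}$ commute with $\widetilde\mapE^{\otimes m}$ and $\widetilde\mapF^{\otimes m}$, and data processing (Fact~\ref{th:renyi-properties}\ref{fact:DPI}) controls the divergence under them. I expect this to be the main obstacle. A naive symmetrization $\rho\leq m!\,\overline\rho$ costs $\log m!=\Theta(m\log m)$, which does not vanish after dividing by $m$. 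So one must genuinely use the stabilizer register, e.g.\ by exhibiting a symmetric pure state whose $\inQ^{\otimes m}$-marginal gives the same channel output up to a local map on $\stabE^{\otimes m}$, as is done in~\cite{MFSR24}. If this route fails, my fallback is to restrict the supremum to symmetric $\rho$ and argue separately that this restriction does not change the $\mylimsup$.

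\textbf{Step 2 (postselection).} For symmetric pure $\rho$, the postselection (de~Finetti) operator inequality gives
\[
\rho\leq g_m\,\nu_m,\qquad g_m=\binom{m+|\inQ\stabE|^2-1}{m}=\poly(m).
\]
By positivity of $\widetilde\mapE^{\otimes m}$,
\[
\widetilde\mapE^{\otimes m}[\rho]\leq g_m\int\bigl(\widetilde\mapE[\sigma]\bigr)^{\otimes m}d\sigma .
\]
We cannot directly use this operator inequality as a bound on the divergence. Instead we write $\rho$ as a component of a classical-quantum mixture dominated by $g_m$ times the integral, discretized over a finite net of $\sigma$'s with a controlled error. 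Then Fact~\ref{th:renyi-properties}\ref{fact:eventconditioning} bounds the cost of passing from $\rho$ to the mixture by $\frac{\alpha}{\alpha-1}\log g_m=O(\log m)$.

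\textbf{Step 3 (comparing the mixtures).} It then suffices to bound
\[
D_\alpha\Dargs{\int(\widetilde\mapE[\sigma])^{\otimes m}d\sigma}{\int(\widetilde\mapF[\sigma])^{\otimes m}d\sigma}.
\]
Attach a classical label for $\sigma$ and use data processing under tracing out the label. By the direct-sum identity (Fact~\ref{th:renyi-properties}\ref{fact:classmix}) with identical label distributions, this is at most
\[
\sup_\sigma D_\alpha\Dargs{(\widetilde\mapE[\sigma])^{\otimes m}}{(\widetilde\mapF[\sigma])^{\otimes m}}=m\sup_\sigma D_\alpha\Dargs{\widetilde\mapE[\sigma]}{\widetilde\mapF[\sigma]}\leq m\,d_\alpha .
\]
The equality is additivity of the sandwiched divergence on tensor products. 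The second argument here is exactly $\widetilde\mapF^{\otimes m}[\nu_m]$, so $\nu_m$ is feasible in the infimum. Dividing by $m$ and letting $m\to\infty$ removes the $O(\log m)$ term and any net-discretization error, which yields the lemma.
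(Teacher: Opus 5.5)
Your Steps~2--3 are essentially the paper's de~Finetti step. Step~1, however, is a genuine gap: you flag it as the main obstacle but do not resolve it, and the mechanism you suggest cannot work.

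Any map acting only on $\stabE^{\otimes m}$ leaves the $\inQ^{\otimes m}$-marginal of $\rho$ unchanged, and that marginal need not be permutation-invariant. Hence no isometry or channel on the stabilizers turns $\rho$ into a permutation-invariant state. Likewise, no symmetric pure state reproduces $\widetilde{\mapE}^{\otimes m}[\rho]$ up to a local map on $\stabE^{\otimes m}$, since its $\outQ^{\otimes m}$-marginal is always permutation-invariant. Your fallback just restates the claim.

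The reduction must run in the opposite direction, and it must cope with the infimum in the second argument, which is why the paper treats it carefully. The steps are as follows.
\begin{enumerate}
\item Form the labelled symmetrization $\rho^{\mathrm{sym}}=\sum_\pi\frac{1}{m!}(\pi\otimes\pi)[\rho]\otimes\pure{\pi}_P$.
\item Take a permutation-invariant purification $\psi$ of its $\inQ^{\otimes m}$-marginal on $\inQ^{\otimes m}\stabE^{\otimes m}$. This is where $|\stabE|\geq|\inQ|$ enters.
\item Let $\mathcal N$ be a channel from $\stabE^{\otimes m}$ to $\stabE^{\otimes m}P$ with $\mathcal N[\psi]=\rho^{\mathrm{sym}}$. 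It commutes with the channels.
\item For any $\nu$, apply data processing under $\mathcal N$, then pinch $P$ and use the direct-sum formula. This gives
\[
D_\alpha\Dargs{\widetilde{\mapE}^{\otimes m}[\psi]}{\widetilde{\mapF}^{\otimes m}[\nu]}\geq\frac{1}{\alpha-1}\log\sum_\pi (m!)^{-\alpha}\sfP(\pi)^{1-\alpha}2^{(\alpha-1)V_\pi}.
\]
Here $\sfP$ is the arbitrary label distribution of $\mathcal N[\nu]$, and $V_\pi$ is the infimum over second arguments for the branch $(\pi\otimes\pi)[\rho]$.
\item Covariance of $\widetilde{\mapE}^{\otimes m}$ and $\widetilde{\mapF}^{\otimes m}$ under simultaneous permutations makes every $V_\pi$ equal to $V=\inf_\nu D_\alpha\Dargs{\widetilde{\mapE}^{\otimes m}[\rho]}{\widetilde{\mapF}^{\otimes m}[\nu]}$.
\item The right-hand side is then $V$ plus the R\'enyi divergence of the uniform distribution from $\sfP$, hence at least $V$.
\end{enumerate}
This is how the $\log m!$ cost you correctly identified is avoided: the freedom in the second argument absorbs the label distribution instead of paying for a domination. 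Without this argument, Steps~2--3 only bound the supremum over symmetric inputs, not $D^{\downarrow}_\alpha$.

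There are also smaller points.
\begin{itemize}
\item In Step~2 the operator inequality can be used directly, contrary to what you say. Since $\rho\leq g_m\nu_m$ with both states normalized, Fact~\ref{th:renyi-properties}\ref{fact:eventconditioning} gives $D_\alpha\Dargs{\widetilde{\mapE}^{\otimes m}[\rho]}{X}\leq D_\alpha\Dargs{\widetilde{\mapE}^{\otimes m}[\nu_m]}{X}+\frac{\alpha}{\alpha-1}\log g_m$ for every $X$, exactly as in the paper.
\item The net discretization is unnecessary, and as stated its error is not controlled: a fixed-precision net on $\sigma$ does not control $\sigma^{\otimes m}$ uniformly in $m$. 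Step~3 only needs quasi-convexity of $D_\alpha$ over the mixture $\int\sigma^{\otimes m}d\sigma$. You can get this by writing it as an exact finite mixture via Carath\'eodory and then using your label-plus-direct-sum argument.
\item Your pure-state measure dominates only states supported on the symmetric subspace, with constant $\binom{m+|\inQ\stabE|-1}{m}$. So the purification in the corrected Step~1 should be chosen in the symmetric subspace. Alternatively, use the postselection measure of~\cite{CKR09} over mixed states, which dominates all permutation-invariant states.
\end{itemize}
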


\begin{proof}
The proof is similar to~\cite[Lemma~3.5]{MFSR24}; the broad idea has also been used in previous work, e.g.~\cite{GW21}.
Consider any $m\in\mathbb{N}$.
We first argue that it suffices to restrict the supremum over $\rho$ to states that are permutation-invariant (on $\allQ \allE$). While this is a ``well-known'' argument for the usual version of regularized stabilized channel divergence, here we take extra care to spell out the details due to the infimum in the second argument.\footnote{In the following analysis we make use of the fact that in the symmetrization we construct later in~\eqref{eq:symmstate}, all ``components'' of it satisfy a symmetry property in~\eqref{eq:alltermsequal}, which simplifies a later minimization over distributions on the classical register $P$. It should be possible to construct a generalization of this argument that allows different values for each term in the first line of~\eqref{eq:alltermsequal} and instead proves some form of concavity property, similar to analogous properties of $H^\uparrow_\alpha$~\cite{arx_AT25}, but we leave this for future work.}

To do so, we use the standard approach that for any feasible $\rho_{\allQ \allE}$ we shall construct a (pure) permutation-invariant counterpart that achieves a higher or equal value in the supremum, as follows.
Let us write $\pi$ to denote an arbitrary permutation map on $\allQ$; in a minor abuse of notation, let us also use that to denote ``the same permutation'' on $\allE$. 
First define the following ``symmetrized'' version of $\rho_{\allQ \allE}$, with a classical register $P$ that tracks the permutation used in the symmetrization:
\begin{align}
\rho^{\mathrm{sym}}_{\allQ \allE P} \defvar \sum_{\pi} \frac{1}{m!} (\pi\otimes\pi)[{\rho}_{\allQ \allE}] \otimes \pure{\pi}_P,\label{eq:symmstate}
\end{align}
Note that for each $\pi$, the corresponding conditional state on $\allQ \allE$ in the above state satisfies
\begin{align}
&\inf_{\nu_{\allQ \allE}} D_\alpha\Dargs{\mapE^{\otimes m} \circ (\pi\otimes\pi) [{\rho}_{\allQ \allE}]}{\mapF^{\otimes m} [\nu_{\allQ \allE}]} \nonumber\\
=&\inf_{\nu_{\allQ \allE}} D_\alpha\Dargs{ (\pi\otimes\pi) \circ \mapE^{\otimes m}  [{\rho}_{\allQ \allE}]}{\mapF^{\otimes m} [\nu_{\allQ \allE}]} \nonumber\\
=& \inf_{\nu_{\allQ \allE}} D_\alpha\Dargs{\mapE^{\otimes m}  [{\rho}_{\allQ \allE}]}{(\pi^{-1}\otimes\pi^{-1}) \circ \mapF^{\otimes m} [\nu_{\allQ \allE}]} \nonumber\\
=& \inf_{\nu_{\allQ \allE}} D_\alpha\Dargs{\mapE^{\otimes m}  [{\rho}_{\allQ \allE}]}{\mapF^{\otimes m} [\nu_{\allQ \allE}]},\label{eq:alltermsequal}
\end{align}
where the second line holds since $\pi$ commutes with the permutation-symmetric map $\mapE^{\otimes m}$, 
the third line is simply by applying $(\pi^{-1}\otimes\pi^{-1})$ on both arguments via unitary invariance of $D_\alpha$, and for the fourth line we similarly commute $(\pi^{-1}\otimes\pi^{-1})$ with $\mapF^{\otimes m}$ and then reparametrize the resulting optimization. In particular, this means that all these divergences (minimized over $\nu$) have the same value for every $\pi$.

Moreover, since $\rho^{\mathrm{sym}}_{\allQ}$ is permutation-invariant, it has a permutation-invariant purification $\ket{\psi^{\mathrm{sym}}}_{\allQ\allE}$. Since $\rho^{\mathrm{sym}}_{\allQ\allE P}$ in~\eqref{eq:symmstate} was also an extension of $\rho^{\mathrm{sym}}_{\allQ}$, there exists a channel $\mapN \in \CPTP{\allE}{\allE P}$ such that $\mapN [\pure{\psi^{\mathrm{sym}}}] = \rho^{\mathrm{sym}}_{\allQ \allE P}$. This channel commutes with $\mapE^{\otimes m},\mapF^{\otimes m}$ because it does not act on $\allQ$. Thus for any $\nu_{\allQ \allE}$, we have 
\begin{align*}
D_\alpha\Dargs{\mapE^{\otimes m}[\pure{\psi^{\mathrm{sym}}}]}{\mapF^{\otimes m}[\nu_{\allQ \allE}]} &\geq D_\alpha\Dargs{\mapN\circ\mapE^{\otimes m}[\pure{\psi^{\mathrm{sym}}}]}{\mapN\circ\mapF^{\otimes m}[\nu_{\allQ \allE}]} \nonumber\\
&=D_\alpha\Dargs{\mapE^{\otimes m}[\rho^{\mathrm{sym}}_{\allQ \allE P}]}{\mapF^{\otimes m} \circ \mapN[\nu_{\allQ \allE}]}\nonumber\\
&\geq \inf_{\nu \in \dop{=}(\allQ \allE P)} D_\alpha\Dargs{\mapE^{\otimes m}[\rho^{\mathrm{sym}}_{\allQ \allE P}]}{\mapF^{\otimes m} [\nu_{\allQ \allE P}]},
\end{align*}
where the first line is by data-processing,
in the second line we commuted $\mapN$ with $\mapE^{\otimes m},\mapF^{\otimes m}$, and in the third line we simply enlarged the infimum domain. Furthermore, for each point in that final infimum we have (letting $\mapP$ denote a pinching channel in the classical basis of $P$, and letting $\mathbb{P}_P$ denote the space of probability distributions on that classical register):
\begin{align*}
&D_\alpha\Dargs{\mapE^{\otimes m}[\rho^{\mathrm{sym}}_{\allQ \allE P}]}{\mapF^{\otimes m} [\nu_{\allQ \allE P}]} \nonumber\\
\geq& D_\alpha\Dargs{\mapE^{\otimes m} \circ \mapP [\rho^{\mathrm{sym}}_{\allQ \allE P}]}{\mapF^{\otimes m} \circ \mapP [\nu_{\allQ \allE P}]} \nonumber\\
\geq& \inf_{\sfP\in\mathbb{P}_P} \inf_{\left\{\nu^{(\pi)}_{\allQ \allE}\right\}} D_\alpha\Dargs{\sum_\pi \frac{1}{m!} \pure{\pi}_P \otimes \mapE^{\otimes m} \circ (\pi\otimes\pi) [{\rho}_{\allQ \allE}] }{\sum_\pi \sfP(\pi) \pure{\pi}_P \otimes \mapF^{\otimes m} [\nu^{(\pi)}_{\allQ \allE}]}\nonumber\\
=& \inf_{\sfP\in\mathbb{P}_P} \inf_{\left\{\nu^{(\pi)}_{\allQ \allE}\right\}} \frac{1}{\alpha-1}\log \left(
\sum_\pi \left(\frac{1}{m!}\right)^\alpha \sfP(\pi)^{1-\alpha}
2^{(\alpha-1)D_\alpha\Dargs{\mapE^{\otimes m} \circ (\pi\otimes\pi) [{\rho}_{\allQ \allE}]}{\mapF^{\otimes m} [\nu^{(\pi)}_{\allQ \allE}]}}
\right) \nonumber\\
=& \inf_{\sfP\in\mathbb{P}_P}  \frac{1}{\alpha-1}\log \left(
\sum_\pi \left(\frac{1}{m!}\right)^\alpha \sfP(\pi)^{1-\alpha}
2^{(\alpha-1) \inf_{\nu_{\allQ \allE}} D_\alpha\Dargs{\mapE^{\otimes m} \circ (\pi\otimes\pi)  [{\rho}_{\allQ \allE}]}{\mapF^{\otimes m} [\nu_{\allQ \allE}]}}
\right) \nonumber\\
=& \inf_{\sfP\in\mathbb{P}_P}  \frac{1}{\alpha-1}\log \left(
\sum_\pi \left(\frac{1}{m!}\right)^\alpha \sfP(\pi)^{1-\alpha}
2^{(\alpha-1) \inf_{\nu_{\allQ \allE}} D_\alpha\Dargs{\mapE^{\otimes m}  [{\rho}_{\allQ \allE}]}{\mapF^{\otimes m} [\nu_{\allQ \allE}]}}
\right) \nonumber\\
=& \inf_{\sfP\in\mathbb{P}_P}  \frac{1}{\alpha-1}\log \left(
\sum_\pi \left(\frac{1}{m!}\right)^\alpha \sfP(\pi)^{1-\alpha}
\right) + \inf_{\nu_{\allQ \allE}} D_\alpha\Dargs{\mapE^{\otimes m}  [{\rho}_{\allQ \allE}]}{\mapF^{\otimes m} [\nu_{\allQ \allE}]}\nonumber\\
=& \inf_{\nu_{\allQ \allE}} D_\alpha\Dargs{\mapE^{\otimes m}  [{\rho}_{\allQ \allE}]}{\mapF^{\otimes m} [\nu_{\allQ \allE}]}.
\end{align*}
In the above, the second line is data-processing together with the fact that the pinching on $P$ commutes with the channels on $\allQ$. The third line is by applying $\mapP$ to $\nu_{\allQ \allE P}$ and writing the resulting state in the form $\sum_\pi \sfP(\pi) \pure{\pi}_P \otimes \nu^{(\pi)}_{\allQ \allE}$, and similarly noting that $\mapP$ leaves $\rho^{\mathrm{sym}}_{\allQ \allE P}$ invariant and then substituting its definition. The fourth line is just Fact~\ref{th:renyi-properties}\ref{fact:classmix} again. In the fifth line we brought each $\nu^{(\pi)}$ infimum in to the only term involving it (noting that there is no change in the optimization direction because $\alpha>1$) and dropped the $(\pi)$ indexing in these ``dummy variables'' $\nu^{(\pi)}$ for the infima (since now the ordering of the summation and infimum already indicates that there is a separate infimum for each $\pi$).
In the sixth line we recall the property~\eqref{eq:alltermsequal} that each of these infima in fact has the same value, so in the seventh line we can factorize out this quantity that is independent of the sum over $\pi$. Finally in the eighth line we observe that the infimum over $\sfP$ is simply minimizing the classical {\Renyi} divergence between it and the uniform distribution, and thus its minimal value is just zero.

Summarizing, the two bounds we obtained above give for any $\nu_{\allQ \allE}$:
\begin{align*}
D_\alpha\Dargs{\mapE^{\otimes m}[\pure{\psi^{\mathrm{sym}}}]}{\mapF^{\otimes m}[\nu_{\allQ \allE}]} &\geq
\inf_{\nu_{\allQ \allE}} D_\alpha\Dargs{\mapE^{\otimes m}  [{\rho}_{\allQ \allE}]}{\mapF^{\otimes m} [\nu_{\allQ \allE}]},
\end{align*}
and taking the infimum on the left gives the desired claim that $\pure{\psi^{\mathrm{sym}}}$ yields a value in the supremum over $\rho$ at least as high as that of the original state $\rho_{\allQ\allE}$. Thus it suffices to restrict the supremum to permutation-invariant states (moreover they can be taken to be pure, though this is not needed for the remaining argument). 

Now, let ${\rho}_{\allQ\allE}$ be any feasible state in this restricted supremum. Since it is permutation-invariant, there exists~\cite{CKR09} a de~Finetti state $\tau_{\allQ\allE}$ such that $\tau_{\allQ\allE} \geq \frac{1}{\operatorname{poly}(m) }{\rho}_{\allQ\allE} $. Thus we have
\begin{align*}
\inf_{\nu_{\allQ \allE}}
D_\alpha\Dargs{\mapE^{\otimes m}[{\rho}_{\allQ\allE}]}{\mapF^{\otimes m}[\nu_{\allQ \allE}]} &\leq
\inf_{\nu_{\allQ \allE}}
D_\alpha\Dargs{\mapE^{\otimes m}[{\tau}_{\allQ\allE}]}{\mapF^{\otimes m}[\nu_{\allQ \allE}]} + \frac{\alpha}{\alpha-1} \log \operatorname{poly}(m)\nonumber\\
&\leq
D_\alpha\Dargs{\mapE^{\otimes m}[{\tau}_{\allQ\allE}]}{\mapF^{\otimes m}[\tau_{\allQ \allE}]} + \frac{\alpha}{\alpha-1} \log \operatorname{poly}(m) \nonumber\\
&\leq m
D_\alpha\Dargs{\mapE \otimes \idmap_{\stabE}}{\mapF \otimes \idmap_{\stabE}} + \frac{\alpha}{\alpha-1} \log \operatorname{poly}(m),
\end{align*}
where the first line is by Fact~\ref{th:renyi-properties}\ref{fact:eventconditioning}, in the second line we just pick a feasible point of the infimum, and the last line is simply by noting a de~Finetti state is a mixture of IID states and then applying quasi-convexity of $D_\alpha$. 

Since that final bound holds for any  permutation-invariant $\rho$ and any $m$, we can take the supremum of the LHS over such $\rho$, then divide by $m$ and take the $\mylimsup$ to obtain the claimed result. 
\end{proof}

\subsection{Tensor-product comparison with testing}
\label{subsec:applicationIID}

As sketched in the introduction, we shall again incorporate ``testing'' and condition on the frequency distribution observed on the testing registers. 
To do so, we first specify the relevant postulate set for incorporating testing.

\begin{postulates}\label{posts:overall}
For $i=1,\ldots,n$, let the $i$-th round channels be
\[
\mapE_i,\mapF_i\in\CPTP{\inN_i}{\CS_i\outQ_i}.
\]
We require:
\begin{enumerate}[label=\textup{(\roman*)},ref=\textup{(\roman*)}]
\item The register $\CS_i$ is classical, taking values in $\alphCS$, and has the same marginal under $\mapE_i$ and $\mapF_i$ on every input.

\item All $\CS_i\outQ_i$ are protected registers that are acted on only by identity maps after being produced.

\item Let $S_\Omega$ be a nonempty closed convex subset of the probability simplex on $\alphCS$, containing at least one full-support distribution. The accept condition on the protected classical record $\bCS$ is $\freq_\bcS \in S_\Omega$, and we denote this event by $\Omega$.

\item Under fixed identifications of their respective input and output spaces, the channel pairs $(\mapE_i,\mapF_i)$ are isomorphic copies of a single pair $(\mapE,\mapF)$.
\end{enumerate}
\end{postulates}

These channels can thus act on a state of the form $\rho^0_{\allN\bR}$, where $\allN=\inN_1\cdots\inN_n$ and $\bR=R_1\cdots R_n$ is an external reference register, with each $R_i$ a copy of a sufficiently large register $R$. This tensor-product notation does not impose any independence: all the $\allN\bR$ registers can be arbitrarily entangled. The resulting output state therefore has the form
\begin{align}
\realstate_{\bCS\boutQ\bR}
=\mapE^{\otimes n}[\rho^0_{\allN\bR}]. \label{eq:tensorrealstate}
\end{align}
\begin{remark}\label{remark:IIDchanns}
While at first sight the restriction of all the channels being isomorphic might seem too strong, it in fact suffices to handle tensor products of non-isomorphic channel pairs, i.e.~$\realstate_{\bCS \boutQ \bR} = (\mapE_1 \otimes \mapE_2 \otimes \dots \otimes \mapE_n)\left[\rho^0_{\allN \bR}\right]$. This is because we can embed all these distinct channels into a single ``larger'' channel $\mapE$ via a direct-sum construction analogous to that in Sec.~\ref{subsec:bound_compexp}, which we discuss further below, and hence capture all such behaviours within the $\realstate_{\bCS \boutQ \bR} = \mapE^{\otimes n}\left[\rho^0_{\allN \bR}\right]$ formulation. Thus it is essentially without loss of generality, albeit at the cost of a slightly loose final bound due to the direct-sum relaxation.
\end{remark}

Given the above, we specify some additional notation.
In analogy to $\compexp{\alpha}$ for the sharp-divergences, define
\begin{align}
\compexpsand{\alpha}
= n 
\sup_{\substack{\sfQ\in S_\Omega,\\
\sigma\in\dop{=}(\inN)}}
\left( \left(\sum_{\cS} \sfQ(\cS) D_\alpha\Dargs{\sigma^{\mapE}_{\outQ\stabE|\cS}}{\sigma^{\mapF}_{\outQ\stabE|\cS}}\right) -\frac{1}{\alpha-1} D\Dargs{\sfQ}{{\sfP}} \right), \label{eq:alpharatedefn}
\end{align}
where the optimization is to be interpreted as follows:
\begin{itemize}

\item $\sigma^{\mapE}, \sigma^{\mapF} \in \dop{=}(\CS \outQ \stabE)$ denote the following states for any purification $\ket\sigma_{\inN\stabE}$ of $\sigma\in\dop{=}(\inN)$:
\begin{align*}
\sigma^{\mapE} = \mapE[\pure\sigma], \qquad  \sigma^{\mapF} = \mapF[\pure\sigma].
\end{align*}

\item $\sfP$ denotes the distribution induced by either of these states on $\CS$ (by hypothesis, these distributions are the same for both).

\item For any $\cS$, if $\sfP(\cS)=0$ then we assign $D_\alpha\Dargs{\sigma^{\mapE}_{\outQ\stabE|\cS}}{\sigma^{\mapF}_{\outQ\stabE|\cS}}=0$, and if $\sfQ(\cS)=0$ then we assign $\sfQ(\cS) D_\alpha\Dargs{\sigma^{\mapE}_{\outQ\stabE|\cS}}{\sigma^{\mapF}_{\outQ\stabE|\cS}}=0$ even if the divergence term is infinite.
The entire objective is interpreted as $+\infty$ whenever
its first sum is infinite, regardless of the value of
$D\Dargs{\sfQ}{\sfP}$.
\end{itemize}
The value of $\compexpsand{\alpha}$ is independent of the chosen purification $\ket\sigma_{\inN\stabE}$: any two purifications are related by an isometry on the purifying register, under which the quantities are invariant by Fact~\ref{th:renyi-properties}\ref{fact:DPI}.

With this, we again state a result that incorporates testing:
\begin{theorem}[Tensor-product testing comparison]\label{th:robustselftest}
Consider channels and parameters satisfying Postulate Set~\ref{posts:overall}. Now take any $n \in \mathbb{N}$, any $\alpha>1$, and any $\esmooth\in(0,1)$. Also, take any initial state $\rho^0 \in \dop{=}(\allN \bR)$ and any measurement channel $\mapM \in \CPTP{\boutQ \bCS \bR}{F}$. Let $\realstate_{\bCS \boutQ \bR} \defvar \mapE^{\otimes n}\left[\rho^0_{\allN \bR}\right]$, and let $p_\Omega$ denote the probability of the event $\Omega$ with respect to this state; we require $p_\Omega>0$.

Then there exists some $\nu^0 \in \dop{=}(\allN \bR)$ such that for $\idealstate_{\bCS \boutQ \bR} \defvar \mapF^{\otimes n}\left[\nu^0_{\allN \bR}\right]$, we have
\begin{align}
D_\alpha\Dargs{\mapM\left[\realstate_{|\Omega} \right]}{\mapM\left[\idealstate\right]} &\leq \compexpsand{\alpha} + \frac{\alpha\log(1/p_\Omega)}{\alpha-1}, \label{eq:Dalphaselftest} \\
\Dmax^{\esmooth}\Dargs{\mapM\left[\realstate_{|\Omega} \right]}{\mapM\left[\idealstate\right]} &\leq \compexpsand{\alpha} + \frac{\smf{\esmooth}}{\alpha-1} + \frac{\alpha\log(1/p_\Omega)}{\alpha-1}.
\end{align}
\end{theorem}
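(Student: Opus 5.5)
We follow the template of the proof of Theorem~\ref{th:main}, with the sharp-divergence chain rule replaced by the almost-subadditivity of Theorem~\ref{th:almostadd}. Fix an arbitrary real function $f$ on $\alphCS$ and define the tilted map $\mapF^f$ as in \eqref{eq:mapF_f}. Since $\mapF^f$ is only CP, we write $\mapF^f = Z_f\cdot\widetilde{\mapF}^f$ with a normalized version. Concretely, we enlarge $\CS$ by a flag symbol that absorbs the missing weight, so that $\widetilde{\mapF}^f$ is CPTP. Alternatively, we note that Lemmas~\ref{lemma:downDPI}--\ref{lemma:regbound} never used trace preservation of the second channel except through data processing, which holds for CP second arguments by scaling. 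Let $\mapM'$ be the measurement channel that first reads $\bCS$ in the classical basis, then applies $\mapM$ and records $F$ together with the acceptance bit $\mathbf{1}\{\freq_\bcS\in S_\Omega\}$ and the value $\sum_i f(c_i)$ (or simply all of $\bcS$). Applying Theorem~\ref{th:almostadd} to $\mapE$, $\mapF^f$, $\mapM'$ and the input $\rho^0$ gives a state $\nu^0$ (the near-optimal infimum point, depending on $f$, $\rho^0$ and $\mapM$) with
\[
D_\alpha\Dargs{\mapM'[\realstate]}{\mapM'\circ(\mapF^f)^{\otimes n}[\nu^0]} \leq n\, D_\alpha\Dargs{\mapE\otimes\idmap_\stabE}{\mapF^f\otimes\idmap_\stabE}+\zeta
\]
for arbitrary $\zeta>0$. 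The right-hand side is $n$ times the analogue $R_{\alpha,\mapE,\mapF}(f)$ of \eqref{eq:R_f-1} for sandwiched divergences.

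Next we truncate to the accepting blocks exactly as in \eqref{eq:Qsharp-penalty}. Because $\mapM'$ keeps $\bCS$ classical, direct-sum additivity (Fact~\ref{th:renyi-properties}\ref{fact:classmix}) and the scaling rule \ref{prop:renyi-scaling} let us drop the non-accepting blocks. On each accepting block the weight $2^{\sum_i f(c_i)}$ is at most $2^{n\sup_{\sfQ\in S_\Omega}\sum_c\sfQ(c)f(c)}$. This yields
\[
Q_\alpha\Dargs{\mapM[\realstate_{\land\Omega}]}{\mapM[\idealstate_{\land\Omega}]}\leq 2^{(\alpha-1)\left(nR_{\alpha,\mapE,\mapF}(f)+n\sup_{S_\Omega}\sum_c\sfQ(c)f(c)\right)},
\]
where we coarse-grain from $\mapM'$ to $\mapM$ on the accepting part by data processing. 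Monotonicity (Fact~\ref{th:renyi-properties}\ref{fact:eventconditioning}) lets us replace $\idealstate_{\land\Omega}$ by $\idealstate$. Normalizing the first argument by $p_\Omega$ then produces the term $\frac{\alpha}{\alpha-1}\log(1/p_\Omega)$. The second claimed inequality follows directly from Fact~\ref{th:renyi-properties}\ref{th:smoothing}.

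The main obstacle is that $\nu^0$ depends on $f$, so we cannot take $\inf_f$ of a bound holding for a single ideal state. Our plan is to first run the minimax step: using the sandwiched analogue of Lemma~\ref{lemma:dual_sharp} (Sion's theorem, Fact~\ref{th:minimax}, over the compact convex set $S_\Omega$ and convex $f$), we show that
\[
\inf_f\left[R_{\alpha,\mapE,\mapF}(f)+\sup_{S_\Omega}\sum_c\sfQ(c)f(c)\right]=\tfrac1n\compexpsand{\alpha}.
\]
This requires the stabilized-input concavity in Fact~\ref{th:renyi-properties}\ref{prop:renyi-purified-concavity} and the full-support element of $S_\Omega$ to guarantee finiteness and attainment. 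Next we take a near-optimal $f^\star$, or a minimizing sequence $f_k$, and only then invoke Theorem~\ref{th:almostadd} to obtain $\nu^0$. If the infimum is not attained, we take $\nu^0_k$ for each $k$ and pass to a convergent subsequence by compactness of $\dop{=}(\allN\bR)$, using lower semicontinuity of $D_\alpha$ in the second argument to conclude. Verifying the minimax identity for sandwiched channel divergences (continuity and convexity in $f$, and the correct reduction to conditional divergences with the KL penalty) is the delicate technical step. We expect it to mirror the sharp case almost verbatim.
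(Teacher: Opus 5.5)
Your argument is correct, but it handles the $f$-dependence differently from the paper. You tilt the ideal channel directly, as in \eqref{eq:mapF_f}, make it CPTP with a flag symbol, and drop the non-accepting and flag blocks of $\bCS$ by direct-sum additivity, exactly as in the proof of Theorem~\ref{th:main}.

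The paper leaves the ideal side untilted and encodes $f$ in an ancilla. It follows $\mapE$ by a read-and-prepare channel $\mapD$ that prepares classical states on $Z$ with $D_\alpha\Dargs{\cdot}{\id_Z/|Z|}=K-f(\cS)$, and it follows $\mapF$ by appending $\id_Z/|Z|$. Theorem~\ref{th:almostadd} is applied to these CPTP extensions with the measurement $\mapP\otimes\cmapM$. The ancilla is then removed using the Markov-chain bound of Fact~\ref{lemma:extract_Z} and $\realstate\geq p_\Omega\realstate_{|\Omega}$.

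Your version is closer to the sequential proof and avoids Fact~\ref{lemma:extract_Z}. The paper's version gets trace preservation for free and writes $\ubndaf$ directly as a divergence between CPTP channel outputs, which it reuses in Lemma~\ref{lemma:dual_sand}.

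The dependence of $\nu^0$ on $f$ is not a real obstacle. For every $f$, your bound controls $\min_\nu D_\alpha\Dargs{\mapM[\realstate_{|\Omega}]}{\mapM[\mapF^{\otimes n}[\nu]]}$, which is independent of $f$. So you can take $\inf_f$ at once and let $\nu^0$ be that minimizer, which is attained by compactness and lower semicontinuity. This is what the paper does; your minimizing-sequence argument reaches the same conclusion by a longer route.

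Three details need fixing.
\begin{enumerate}
\item The identity $\mapF^f=Z_f\cdot\widetilde{\mapF}^f$ cannot hold literally, since $\tr{\mapF^f[\sigma]}$ depends on $\sigma$. Instead, shift $f$ so that $\max_{\cS}f(\cS)\leq 0$ (the shift cancels between $\ubndaf$ and $\sup_{\sfQ\in S_\Omega}f\cdot\sfQ$), and let the flag absorb the missing weight. Then $\mapF^f$ equals $2^{\max_\cS f(\cS)}$ times the non-flag part of $\widetilde{\mapF}^f$.
\item Your fallback claim is false: the lemmas behind Theorem~\ref{th:almostadd} do not use trace preservation of the second channel only through data processing. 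Lemma~\ref{lemma:classicalinf} relies on Fact~\ref{fact:renyi-opt}, which is stated for feasible sets of normalized states, so the flag construction is genuinely needed.
\item The duality step does not carry over verbatim from Lemma~\ref{lemma:dual_sharp}, because Postulate Set~\ref{posts:overall} has no domination hypothesis. The paper first handles the case of an infinite conditional divergence, where both sides are $+\infty$ thanks to the full-support element of $S_\Omega$. Otherwise it derives uniform domination from finiteness of the Choi-state divergence, which gives continuity in $\sigma$ (Lemma~\ref{lem:renyi-dominated-continuity}) before Sion's theorem is applied.
\end{enumerate}
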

\begin{proof}
The proof is similar to the previous approach to incorporate testing in Sec.~\ref{subsec:sequential_with_testing}, so we defer it to Appendix~\ref{app:IIDtesting}. The main technical difference is that a little care is needed to handle the global measurement channel at the end.
\end{proof}

We now wish to apply the above result to Protocol~\ref{prot:DI} under a no-memory assumption. We formalize this via the following postulate set. Again, we stress the point (from Remark~\ref{remark:no-memory}) that this condition is not assuming the devices behave independently across rounds, but rather only that the \emph{measurements} performed by the devices are in tensor product across rounds --- the \emph{state} can be arbitrarily entangled. Also, the restriction of the channels in different rounds being isomorphic is basically without loss of generality (Remark~\ref{remark:IIDchanns}).
\begin{postulates}[No-memory condition]\label{post:no-memory}
The protocol is modelled using the one-round channels from Sec.~\ref{subsec:realandideal}, before their extension by identities on earlier classical records, with the following modifications and assumptions.
\begin{enumerate}
\item For each round $i$, the channels $\mapE_i,\mapF_i$ have their input domains modified to a new register $\inN_i$, and on the output domains the $M_{i+1}^\A M_{i+1}^\B$ registers are now grouped into the protected registers $\outQ_i$, i.e.
\[
\mapE_i,\mapF_i\in\CPTP{\inN_i}{\CS_i\outQ_i},
\qquad
\outQ_i=U_iX_iY_iA_iB_iM_{i+1}^\A M_{i+1}^\B.
\]
The structure of each $\inN_i$ register is that it is isomorphic to the previously defined registers $M_{i}^\A M_{i}^\B$; notice however that now the output registers $M_{i+1}^\A M_{i+1}^\B$ of each channel are distinct from the input register $\inN_{i+1}$ of the next channel.

\item Under fixed identifications of their respective input and output spaces, the channel pairs $(\mapE_i,\mapF_i)$ are isomorphic copies of a single pair $(\mapE,\mapF)$.
\end{enumerate}

Together, these conditions imply that the $\mapE_i$ channels act via a tensor-power structure to produce a final output state as in~\eqref{eq:tensorrealstate}. 
\end{postulates}

\paragraph{Real and ideal channels in the protocol.}
The construction of each channel $\mapE_i$ is the same as in Sec.~\ref{subsec:realandideal}, with the register changes specified in Postulate Set~\ref{post:no-memory}. We first address a technical point: if the ``physical'' channels in the original protocol are non-isomorphic, we shall argue it suffices to construct a single ``representative'' channel that can model any of them, similar (although not precisely identical) to Sec.~\ref{subsec:bound_compexp}.
Without loss of generality (due to our finite-dimensionality condition throughout this work, allowing us to embed any Hilbert space in some larger finite-dimensional one), suppose all the $\inN_i$ registers are isomorphic to a single register $\commN$; similarly take registers $\commO$ and $\commC$ for $\outQ_i$ and $\CS_i$ respectively. Let $I$ be a register of dimension $n$, let $\inN = I \commN$ (this is isomorphic to the direct sum of the $\inN_i$ registers). Then define the following channel, with input register $\inN$ and output registers $\commC,\commO$: 
\[
\mapE[\tau]=\sum_{i=1}^n\mapE_i[\tau_{ii}],
\]
where $\tau_{ii}$ denotes the diagonal block in the same sense as Sec.~\ref{subsec:bound_compexp}. In other words, $\mapE$ basically measures $I$ in its classical basis and performs $\mapE_i$ on the $\commN$ register.

With this, observe that the real state in the protocol can be rewritten via
\[ (\mapE_1\otimes\cdots\otimes\mapE_n)[\rho^0_{\allN\bR}] = \mapE^{\otimes n}\left[\pure{1}_{I_1}\otimes\ldots\otimes\pure{n}_{I_n}\otimes\rho^0_{\allN\bR}\right],
\]
i.e.~it is $\mapE^{\otimes n}$ acting on a different initial state. Moreover, observe that $\mapE$ has the critical property that (here for brevity we absorb the $U$ ``flag'' indicating $\test$ or $\gen$ into the outputs, instead of explicitly writing it):
\[
\mapE = \ptest \mapE^\test + (1-\ptest) \mapE^\gen, \quad \text{where } \mapE^\test[\tau]=\sum_{i=1}^n\mapE^\test_i[\tau_{ii}], \, \mapE^\gen[\tau]=\sum_{i=1}^n\mapE^\gen_i[\tau_{ii}], 
\]
i.e.~it inherits the test/generation structure of the original channels, with $\mapE^\test$ playing the nonlocal game and $\mapE^\gen$ storing the state for the DD cryptographic continuation. With this structure, we can construct the ideal channel $\mapF$ in terms of the actions performed by \emph{this} channel $\mapE$ in the test and generation branches (``forgetting'' the underlying direct-sum structure), after which all the relevant properties are inherited from the previous analysis.

With this in mind, the conditions of Postulate Set~\ref{posts:overall} are satisfied, allowing us to apply Theorem~\ref{th:robustselftest} (for suitable parameters). To demonstrate one example of the resulting scaling, we first bound the comparison exponent, but note that this time we can \emph{first} choose an arbitrary (constant) test probability $\ptest$:

\begin{corollary}[Comparison exponent with arbitrarily small linear dependence at constant tolerated noise]
\label{corr:practicaltest}
Assume Postulate Sets~\ref{post:selftest}, \ref{post:shield} and \ref{post:no-memory} hold in Protocol~\ref{prot:DI}.
Then for any $\constval'\in(0,\infty)$ any $\pert\in(0,1-2^{-\constval'/4})$, and any $\ptest\in(0,1)$, 
one can choose constants
\[
0<\epstol<1,
\qquad
1<\alphastar<2,
\]
independently of $n$ and the physical memory dimensions, such that
\begin{align*}
\compexpsand{\alphastar} \leq \constval' n.
\end{align*}
\end{corollary}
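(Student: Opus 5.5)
We mirror the proof of Theorem~\ref{th:small-rate}, but with the sandwiched divergence, and let the penalty coefficient $\frac{1}{\alpha-1}$ play the role that $\frac{1}{1-\ptest}$ played there. First we reduce the objective in \eqref{eq:alpharatedefn} to its generation part. The test branches of $\mapE$ and $\mapF$ coincide, so the conditional divergences vanish for $\cS\in\{0,1\}$. The objective divided by $n$ is therefore
\[
\sfQ(\gen)\, D_\alpha\Dargs{\sigma^{\mapE}_{|\gen}}{\sigma^{\mapF}_{|\gen}}-\frac{1}{\alpha-1}D\Dargs{\sfQ}{\sfP},
\]
where $\sfQ(\gen)\leq 1$ and $\sfP$ is induced by the trial state $\sigma$ on the direct-sum input $\inN=I\commN$. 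Here the block $\sigma_{ii}$ of $\sigma$ carries weight $t_i$ on the register $I$.

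Next we bound the generation divergence. Monotonicity in $\alpha$ gives $D_\alpha\leq D_2\leq\Dsharp_2$ for $\alpha\in(1,2)$. The direct-sum structure and the classical direct sum (Fact~\ref{th:renyi-properties}\ref{fact:classmix}, applied to $\Dsharp_2$ and then to $D_2$ after the comparison) bound $2^{\Dsharp_2}$ of the mixture by $\sum_i t_i\, 2^{\Dsharp_2(\text{block }i)}$. Lemma~\ref{lem:round} bounds each block by $B_{\pert,\game}(\delta_i)$, where $\delta_i$ is the deficit of block $i$. Repeating \eqref{eq:flagged-rigidity} with the concave majorant $\widehat B_{\pert,\game}$, and weighting by $t_i$ instead of the uniform $1/n$, gives
\[
D_\alpha\leq\log\widehat B_{\pert,\game}(\bar\delta), \qquad \bar\delta=\sum_i t_i\delta_i=\omega^\ast-\frac{\sfP(1)}{\ptest}.
\]
We keep both the trivial bound $\log(|S|^2/\pert)$ and the fact that $\log a_\pert^{-1}<\constval'/4$.

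We now choose parameters with $\ptest$ fixed in advance. First pick $\Delta>0$ such that $\log\widehat B_{\pert,\game}(\Delta/\ptest)\leq\constval'/2$. Then pick $\epstol\leq\Delta/2$. Finally pick $\alpha\in(1,2)$ with
\[
\alpha-1\leq\frac{\Delta^2}{2\ln 2\,\log(|S|^2/\pert)}.
\]
Next we split into cases. If $\trnorm{\sfP-\sfP_C}\leq\Delta$, then $\bar\delta\leq\Delta/\ptest$. Dropping the nonnegative KL term, the objective is at most $\constval'/2$. Otherwise, $\trnorm{\sfQ-\sfP}>\Delta-\epstol\geq\Delta/2$ for any feasible $\sfQ$. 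Pinsker's inequality then gives $D(\sfQ\Vert\sfP)\geq\Delta^2/(2\ln2)$, and by the choice of $\alpha$ the objective is at most $\log(|S|^2/\pert)-\frac{\Delta^2}{2(\alpha-1)\ln 2}\leq 0$. Both cases give $\compexpsand{\alpha}\leq\constval' n$. None of the constants depends on $n$ or on the memory dimensions, since the bounds of Lemma~\ref{lem:round} do not.

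The main obstacle is the first step. Lemma~\ref{lem:round} is stated for a purified input of a single-round channel $\mapE_i$. We must check that the generation block of the direct-sum channel, applied to a purification of an arbitrary $\sigma$ on $I\commN$, decomposes into those blocks with weights $t_i$. The cross terms between blocks are killed by the measurement of $I$ and absorbed by data processing on the purifying register. We also need $\sfP(\gen)=1-\ptest$ to hold for every trial state, which is immediate from the construction.
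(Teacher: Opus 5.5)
Your proposal is correct and is essentially the paper's proof: the reduction to the generation term via $D_\alpha\leq D_2\leq\Dsharp_2$, the bound $\log\widehat B_{\pert,\game}(\omega^\ast-\sfP(1)/\ptest)$, the choices of $\Delta$, $\epstol\leq\Delta/2$ and $\alpha-1\leq\Delta^2/(2\ln2\,\log(|S|^2/\pert))$, and the near/far case split using Pinsker with the $\frac{1}{\alpha-1}$ penalty all match. Your explicit $t_i$-weighted concavity step over the direct-sum blocks only spells out what the paper leaves implicit in its ``generation-divergence bound''; note that the identity you need there is Fact~\ref{th:sharp-facts}\ref{prop:sharp-direct-sum} for $\Dsharp_2$, combined with data processing to discard the block label, rather than Fact~\ref{th:renyi-properties}\ref{fact:classmix}.
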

\begin{proof}
Since the sandwiched divergence is nondecreasing in $\alpha$ and upper bounded by the sharp-divergences, we have  $D_\alpha \leq D_2 \leq \Dsharp_2$. So we can immediately bound $\compexpsand{\alpha}$ for $\alpha<2$ via
\begin{align}
\frac{1}{n}\compexpsand{\alpha} &\leq
 \sup_{\sfQ\in S_\Omega} \sup_{\sigma\in\dop{=}(\inN)}
\left( \left(\sum_{\cS} \sfQ(\cS) \Dsharp_2\Dargs{\sigma^{\mapE}_{\outQ\stabE|\cS}}{\sigma^{\mapF}_{\outQ\stabE|\cS}}\right) -\frac{1}{\alpha-1} D\Dargs{\sfQ}{{\sfP}} \right) \nonumber\\
&\leq
 \sup_{\sfQ\in S_\Omega} \sup_{\sigma\in\dop{=}(\inN)}
\left( \Dsharp_2\Dargs{\sigma^{\mapE}_{\outQ\stabE|\gen}}{\sigma^{\mapF}_{\outQ\stabE|\gen}} -\frac{1}{\alpha-1} D\Dargs{\sfQ}{{\sfP}} \right), 
\end{align}
where in the second line we have eliminated terms with $\cS \neq \gen$ by recalling the real and ideal channels are identical on the test-round case, and used $\sfQ(\gen)\leq1$ and nonnegativity of the generation divergence.

It thus suffices to upper bound this by $\constval'$. This holds by an analogous argument to the Theorem~\ref{th:small-rate} proof, with the key difference being that here we have a prefactor of $\frac{1}{\alpha-1}$ on the KL-divergence penalty term, in contrast to the prefactor of $\frac{1}{1-\ptest}$ previously in~\eqref{eq:block-rate-normalized}. For the prescribed $\pert$ and $\ptest$, we have
\[
a_\pert=1-\pert+\pert/|S|\geq1-\pert>2^{-\constval'/4},
\qquad \log a_\pert^{-1}<\frac{\constval'}4.
\]
Using the function $\widehat B_{\pert,\game}$ from the proof of Theorem~\ref{th:small-rate} and its limit in~\eqref{eq:majorantlimits}, we choose $0<\Delta\leq\ptest$ sufficiently small that
\[
\log\widehat B_{\pert,\game}(\Delta/\ptest)\leq\frac{\constval'}2.
\]
We then choose
\[
0<\alphastar-1<\min\left\{1,\frac{\Delta^2}{2\ln2\,\log(|S|^2/\pert)}\right\},
\qquad
0<\epstol\leq\min\left\{\frac{1-\ptest}{2},\frac\Delta2\right\}.
\]
These choices are independent of $n$.

For any feasible trial input with $\trnorm{\sfP-\sfP_C}\leq\Delta$, its game deficit satisfies
\[
\delta=\omega^\ast-\frac{\sfP(1)}{\ptest}
=\frac{\sfP_C(1)-\sfP(1)}{\ptest}\leq\frac\Delta\ptest.
\]
The generation-divergence bound and monotonicity of $\widehat B_{\pert,\game}$ therefore give
\[
\Dsharp_2\Dargs{\sigma^{\mapE}_{\outQ\stabE|\gen}}{\sigma^{\mapF}_{\outQ\stabE|\gen}}
\leq\log\widehat B_{\pert,\game}(\Delta/\ptest)\leq\frac{\constval'}2,
\]
so the objective is at most $\constval'$ by nonnegativity of the KL penalty. In the remaining case, $\trnorm{\sfP-\sfP_C}>\Delta$, acceptance and the triangle inequality give $\trnorm{\sfQ-\sfP}>\Delta/2$. Pinsker's inequality and the universal bound above then give
\[
\Dsharp_2\Dargs{\sigma^{\mapE}_{\outQ\stabE|\gen}}{\sigma^{\mapF}_{\outQ\stabE|\gen}}
-\frac{D(\sfQ\Vert\sfP)}{\alphastar-1}
\leq\log\frac{|S|^2}{\pert}-\frac{\Delta^2}{2(\alphastar-1)\ln2}\leq0.
\]
Thus the objective is at most $\constval'$ in both cases, without needing to tune the value of $\ptest$.
\end{proof}

With this, we now present an example that demonstrates a specific scaling behaviour of Theorem~\ref{th:robustselftest}. Still, we highlight that Theorem~\ref{th:robustselftest} more generally allows for a much wider scope of scaling analysis, including ``second-order analysis'' (see~\cite{DFR20,DF19,arx_AHT24}) to some extent --- this is simply one example. We present the quantifier ordering in rather fine detail in order to emphasize the difference compared to the sharp-divergence results, and also because of a slightly subtle quantifier issue regarding the ``ideal'' state.

\begin{theorem}[DD-to-DI conversion under the no-memory condition]\label{th:scalingdemo}
Let $\ddP$ be a DD protocol of the form given in Protocol~\ref{prot:DD}, satisfying Postulate Sets~\ref{post:DD} and~\ref{post:selftest}. Let $\diP$ be its DI replacement from Protocol~\ref{prot:DI}, satisfying Postulate Sets~\ref{post:shield} and~\ref{post:no-memory}. Then:
\begin{itemize}
\item for every $k>0$ and every testing probability $\ptest\in(0,1)$,
\item there exist constants $\pert=\Theta(k)$, $\base\in(0,1)$ such that $b\to1$ as $k\to0$,and $\epstol\in(0,(1-\ptest)/2]$, independent of $n$, such that
\item for every integer $n\geq4/(1-\ptest)$, every allowed implementation and attack, every initial state $\rho^0_{\allN\bR}$, and every measurement channel $\mapM$ on $\bCS\boutQ\bR$,
\item there exists an initial state $\nu^0_{\allN\bR}$ such that the following holds with $m=\floor{(1-\ptest)n/2}$.
\end{itemize}
Let $\mapE$ and $\mapF^\pert$ be the real and ideal channels specified in Postulate Set~\ref{post:no-memory}, and define
\[
\realstate=\mapE^{\otimes n}[\rho^0_{\allN\bR}],
\qquad
\idealstate=(\mapF^\pert)^{\otimes n}[\nu^0_{\allN\bR}],
\]
with identities on $\bR=R_1\cdots R_n$ understood. If $\Pr_{\realstate}[\Acc]\geq\base^m$, then
\begin{equation}\label{eq:linearbound}
\Dmax^{\base^m}\Dargs{\mapM[\realstate_{|\Acc}]}{\mapM[\idealstate]}
\leq km.
\end{equation}
The soundness bound for the prescribed DD continuation is
\begin{equation}\label{eq:landbnd}
\Pr_{\diP}[\Acc\land\Bad]
\leq\left(2^kq_{\mathrm{dep}}(\pert)\right)^m+\base^m.
\end{equation}
If $\limsup_{\pert\to0}q_{\mathrm{dep}}(\pert)\leq q$ for some $0<q<1$, then for every $0<k<-\log q$ and every $\ptest\in(0,1)$ the constants can be chosen so that $2^kq_{\mathrm{dep}}(\pert)<1$, giving an exponentially small soundness error in $n$.

Moreover, the same completeness properties hold as in Theorem~\ref{th:complete}.
\end{theorem}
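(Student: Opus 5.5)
The proof combines Corollary~\ref{corr:practicaltest} with Theorem~\ref{th:robustselftest}. The parameters are fixed first, independently of $n$.

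\textbf{Parameter choices.} Given $k>0$ and $\ptest\in(0,1)$, set $\constval'=k(1-\ptest)/8$ and choose $\pert\in(0,1-2^{-\constval'/4})$, for instance $\pert=(1-2^{-\constval'/4})/2$. For fixed $\ptest$ this is $\Theta(k)$; if $\limsup_{\pert\to0}q_{\mathrm{dep}}(\pert)\leq q$ and $k<-\log q$, we shrink $\pert$ further so that $2^kq_{\mathrm{dep}}(\pert)<1$. Corollary~\ref{corr:practicaltest} then supplies $\alpha\in(1,2)$ and $\epstol\leq(1-\ptest)/2$ with $\compexpsand{\alpha}\leq\constval'n$. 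Since $n\geq4/(1-\ptest)$ and $m=\floor{(1-\ptest)n/2}$, we have $(1-\ptest)n\leq 2(m+1)\leq 4m$. Hence $\compexpsand{\alpha}\leq km/2$.

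\textbf{Choice of $\base$ and the divergence bound.} Define $\base$ by $\base=2^{-c}$ with
\[
c=\frac{(\alpha-1)k}{2(\alpha+2)}.
\]
This $\base$ lies in $(0,1)$, and $\base\to1$ as $k\to0$ because $\alpha-1$ is bounded by $1$. Set $\esmooth=\base^m$ and assume $p_\Acc\geq\base^m$. Theorem~\ref{th:robustselftest} then gives a state $\nu^0$ such that
\[
\Dmax^{\base^m}\leq\frac{km}{2}+\frac{2cm+\alpha cm}{\alpha-1}=\frac{km}{2}+\frac{(\alpha+2)cm}{\alpha-1}\leq km.
\]
This is \eqref{eq:linearbound}.

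\textbf{Soundness.} If $p_\Acc<\base^m$, then trivially $\Pr[\Acc\land\Bad]\leq\base^m$. Otherwise, take $\mapM$ to be the continuation measurement from the proof of Theorem~\ref{th:DI}. It is applied to outputs that now include the $M_{i+1}$ registers inside $\outQ_i$, which is harmless. Lemma~\ref{lem:probability}(ii) gives
\[
\Pr[\Bad\mid\Acc]\leq 2^{km}\Pr_{\mapM[\idealstate]}[\Bad]+\base^m.
\]
The key remaining step is to bound $\Pr_{\mapM[\idealstate]}[\Bad]\leq q_{\mathrm{dep}}(\pert)^m$ for the ideal state built from the arbitrary $\nu^0$. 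I expect this to be the main obstacle. It needs the analogue of Lemma~\ref{lem:DD} in the tensor-product setting: in each generation round, $\mapF^\pert$ discards the $S$ part and replaces it with the depolarized state \eqref{eq:noisy-source}, keeping only the junk marginal. Everything else, namely $\nu^0$, Bob's junk, test rounds and non-$E$ rounds, should then be simulable as Bob's local operations on initial auxiliary information that is independent of Alice's preparation. This is allowed by Postulate Set~\ref{post:DD}. Two points must be checked. First, the direct-sum register $I$ in $\nu^0$ is only classical side information available to Bob. Second, Alice's $M^\A_{i+1}$ outputs are traced out by $\mapM$, as before, so they give Bob nothing.

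\textbf{Conclusion.} Multiplying by $p_\Acc\leq1$ gives
\[
\Pr_{\diP}[\Acc\land\Bad]\leq(2^kq_{\mathrm{dep}}(\pert))^m+\base^m,
\]
which is \eqref{eq:landbnd}. Completeness follows verbatim from Theorem~\ref{th:complete}, since the honest behaviour and the acceptance test are unchanged.
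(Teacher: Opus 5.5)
Your proposal is correct and follows essentially the same route as the paper. The paper uses the same $\constval'=k(1-\ptest)/8$ and the same range for $\pert$, then Corollary~\ref{corr:practicaltest}, then Theorem~\ref{th:robustselftest} with $\esmooth=\base^m$, then Lemma~\ref{lem:probability}(ii); the step you flag as the main obstacle is handled exactly as you sketch, by noting that the reduced-channel identities of Lemma~\ref{lem:DD} hold for every input with the unprocessed registers and earlier outputs kept as reference, and your $\base=2^{-(\alpha-1)k/(2(\alpha+2))}$ differs only cosmetically from the paper's $\base=2^{-(\alpha-1)k/8}$, which bounds $2+\alpha\leq4$.
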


Here the testing probability $\ptest$ can be fixed in advance, whereas in Theorem~\ref{th:DI} it is chosen together with $\pert$ and $\epstol$. In fact, via a suitable ``second-order analysis'' similar to~\cite{DF19,arx_AHT24}, it seems likely that one can even pick $\ptest$ to be any function of $n$ that goes to zero slower than $1/n$; however, we do not aim to pin down these details here.

The state $\nu^0$ may depend on both $\rho^0$ and $\mapM$. The DD reduction applies uniformly to every $\nu^0$, because the ideal channels replace the generation-round states.

Unlike Theorem~\ref{th:DI}, this theorem bounds the smooth max-divergence after $\mapM$. Its proof uses an order $\alpha$ (for the sandwiched divergence) sufficiently close to $1$, rather than order $2$.

\begin{proof}[Proof of Theorem~\ref{th:scalingdemo}]
For the given $k$ and $\ptest$, we set $\constval'=k(1-\ptest)/8$ and choose $0<\pert<1-2^{-k(1-\ptest)/32}$; for example, $\pert=(1-2^{-k(1-\ptest)/32})/2=\Theta(k)$ as $k\to0$ for fixed $\ptest$. Corollary~\ref{corr:practicaltest}, with these $\constval'$, $\ptest$, and $\pert$, gives constants $\alphastar\in(1,2)$ and $\epstol$ such that
\begin{equation}\label{eq:halfrate}
\compexpsand{\alphastar}\leq\frac{k(1-\ptest)n}{8},
\end{equation}
where the ideal channel in the comparison exponent is $\mapF=\mapF^\pert$. Decreasing the tolerance if necessary ensures $\epstol\leq(1-\ptest)/2$ without increasing this exponent. The acceptance set contains a full-support distribution because it is a trace-distance ball of positive radius about $\sfP_C$. Next, we set
\begin{equation}\label{eq:chosenbase}
\base=2^{-(\alphastar-1)k/8},
\end{equation}
which lies in $(0,1)$ and tends to $1$ as $k\to0$, since $0<\alphastar-1<1$. These choices are independent of $n$, the physical memory dimensions, the initial state, and $\mapM$.

For every $n\geq4/(1-\ptest)$, the prescribed choice $m=\floor{(1-\ptest)n/2}$ gives $m\geq2$ and $(1-\ptest)n/2<m+1$. If $\Pr_{\realstate}[\Acc]\geq\base^m$, Theorem~\ref{th:robustselftest}, with $\Omega=\Acc$ and $\esmooth=\base^m$, yields an initial state $\nu^0$ such that
\begin{align*}
\Dmax^{\base^m}\Dargs{\mapM[\realstate_{|\Acc}]}{\mapM[\idealstate]}
&\leq\frac{k(1-\ptest)n}{8}
+\frac{(2+\alphastar)m\log(1/\base)}{\alphastar-1}\\
&\leq\frac{k(m+1)}{4}
+\frac{4m\log(1/\base)}{\alphastar-1}\\
&\leq\frac{km}{2}
+\frac{4m\log(1/\base)}{\alphastar-1}\\
&=km.
\end{align*}
This proves~\eqref{eq:linearbound}. If $\Pr_{\realstate}[\Acc]<\base^m$, the implication is vacuous and any initial state $\nu^0$ suffices.

For soundness, $\mapM$ is the continuation channel from the proof of Lemma~\ref{lem:DD}, applied to the corresponding no-memory outputs. 
Acceptance ensures at least $(1-\ptest-\epstol)n\geq m$ generation rounds, so this channel agrees with the actual continuation on $\Acc$. The reduced-channel identities in that lemma hold for every incoming state and with an arbitrary reference. They therefore apply to the tensor-product process with initial state $\nu^0$ as well: the unused input registers and earlier outputs can be retained as reference registers while each round acts. After Alice's device registers and discarded records are traced out, each retained generation round supplies an independent copy of~\eqref{eq:noisy-source}, and all remaining maps act on Bob's registers. Thus regardless of $\nu^0$, we have the uniform bound
\[
\Pr_{\mapM[\idealstate]}[\Bad]\leq q_{\mathrm{dep}}(\pert)^m.
\]
If $\Pr_{\realstate}[\Acc]\geq\base^m$, Lemma~\ref{lem:probability}(ii) applied to~\eqref{eq:linearbound} gives
\[
\Pr_{\diP}[\Bad\mid\Acc]
\leq 2^{km}\Pr_{\mapM[\idealstate]}[\Bad]+\base^m
\leq\left(2^kq_{\mathrm{dep}}(\pert)\right)^m+\base^m.
\]
Multiplying by $\Pr_{\realstate}[\Acc]\leq1$ gives the claimed bound on the joint event. In the other case, this joint probability is smaller than $\base^m$. These cases establish~\eqref{eq:landbnd}.

Finally, Corollary~\ref{corr:practicaltest} applies to every $0<\pert<1-2^{-k(1-\ptest)/32}$ before choosing $\alphastar$ and $\epstol$. Under the stated limit assumption, for $0<k<-\log q$ we may therefore first choose such a $\pert$ also satisfying $2^kq_{\mathrm{dep}}(\pert)<1$. The example $\pert=(1-2^{-k(1-\ptest)/32})/2$ already satisfies this additional inequality for all sufficiently small $k$, so the choice $\pert=\Theta(k)$ as $k\to0$ for fixed $\ptest$ is preserved. Both terms in~\eqref{eq:landbnd} then decay exponentially in $n$.
\end{proof}

Using $\log(2/\esmooth^2)$ instead of $\smf{\esmooth}$ in the \Renyi-to-max conversion would add $1/(\alphastar-1)$ to the displayed smooth max-divergence bound with the same parameter choices. The sharper conversion avoids that additive term, though this has fairly minimal practical consequences (since for instance this term is independent of $n$, and thus becomes irrelevant as $n$ grows).


\section{Concrete applications}\label{sec:applications}

\subsection{Single-key uncloneable encryption}\label{sec:UE}
\newcommand{\ueP}{\mathfrak{P}}
\newcommand{\BadS}{\Bad_{\mathrm{search}}}
\newcommand{\BadD}{\Bad_{\mathrm{dec}}}

Our first application is a sequential device-independent protocol for uncloneable encryption. We describe the security experiments for uncloneable encryption in a general way that encompasses both DD and sequential DI protocols. The protocols will involve a preparation phase, independent of the plaintext message. In the device-dependent case, this just consists of Alice sending a trusted state she prepared to Bob\footnote{Usually, DD protocols for UE involve Alice sending both the message-independent quantum state and the message-dependent part of the ciphertext to Bob at once. But we can break this up into two separate phases.}; in the DI case, the preparation phase is a sequential protocol like in Protocol~\ref{prot:DI}, at the end of which Alice accepts or aborts. We denote acceptance by $\Acc$. On acceptance, Alice uses her private preparation records to encrypt a plaintext and retains the complete classical decryption key. Bob's ciphertext consists of his quantum memory together with the classical information released by Alice.

An attack includes Bob's operations throughout preparation, the quantum channel he applies to the ciphertext in order to distribute it between the decoders, and operations by the two decoders. Bob may retain arbitrary quantum memory between rounds and adapt his operations to the information available to him.  The decoders may share entanglement and classical information established before the split, but cannot communicate afterwards. We place no computational restriction on the attack. In the DI setting, the attack also includes the initial device state and device operations allowed by the shielding postulate. Alice's device memory remains inaccessible to Bob and both decoders throughout the experiment. 


The following definitions apply both to the trusted-preparation protocol $\ueP$ and to its DI implementation $\diP$. As in the interactive formulation of~\cite[Section~5.2]{GMP22}, security concerns the joint event of acceptance and successful recovery, rather than recovery conditioned on acceptance.

\begin{definition}[Uncloneable search security]\label{def:UE-search}
The search experiment for uncloneable encryption of $\ell$-bit plaintexts is as follows.
\begin{enumerate}[leftmargin=*]
\item Alice samples the secret decryption key.
\item Alice and Bob perform the preparation phase, where Alice uses the secret key. In a DI protocol, Alice may abort at the end of this phase. If Alice aborts, the experiment terminates.
\item On acceptance, Alice samples a uniform plaintext $\bmsg\in\{0,1\}^{\ell}$, independently of the preparation records and all quantum systems, and completes its encryption.
\item Bob splits his information between the two decoders. After receiving the same complete decryption key, they output strings $\bmsg_1,\bmsg_2\in\{0,1\}^{\ell}$.
\end{enumerate}
The search bad event is
\[
\BadS=\{\bmsg_1=\bmsg_2=\bmsg\}.
\]
We declare this event false on an aborted run. The protocol is $\eps$-search secure if every allowed attack satisfies in two respective cases
\[
\Pr_{\mathrm{DD}}[\BadS]\leq \eps, \qquad  \Pr_{\mathrm{DI}}[\Acc\land\BadS]\leq\eps.
\]
\end{definition}

\begin{definition}[Uncloneable decision security]\label{def:UE-dec}
The decision experiment for uncloneable encryption is as follows.
\begin{enumerate}[leftmargin=*]
\item Alice samples the secret key.
\item Alice and Bob perform the preparation phase, with Alice using the key. In a DI protocol, Alice may abort at the end of this phase. If Alice aborts, the experiment terminates.
\item On acceptance, Bob supplies two distinct classical plaintexts $\bmsg_0,\bmsg_1\in\{0,1\}^{\ell}$. He may choose them using his preparation transcript and quantum memory, retaining the resulting quantum state.
\item Alice samples a uniform bit $b$, independently of everything generated so far, and completes the encryption of $\bmsg_b$.
\item Bob splits his information between the two decoders. After receiving the same complete decryption key, they output bits $b_1,b_2$.
\end{enumerate}
The decision bad event is
\[
\BadD=\{b_1=b_2=b\}.
\]
We declare this event false on an aborted run. The protocol is $\eps$-decision secure, or $\eps$-uncloneable-indistinguishable secure, if every allowed attack satisfies in the two respective cases,
\[
\Pr_{\mathrm{DD}}[\BadD] \leq \frac12+\eps, \qquad \Pr_{\mathrm{DI}}[\Acc\land\BadD]
\leq \frac12\Pr_{\mathrm{DI}}[\Acc]+\eps.
\]
The baseline is $\Pr_{\mathrm{DI}}[\Acc]/2$ because acceptance precedes the independent choice of $b$: both decoders can attain this baseline by always outputting the same fixed bit. For one-bit plaintexts, the challenge pair consists of $0,1$.
\end{definition}

\subsubsection{Protocol and search security}
The underlying DD protocol we will be using is the very first uncloneable encryption protocol by Broadbent and Lord \cite{BL20} using Wiesner states.  It has exactly the prepare-and-measure form of Protocol~\ref{prot:DD}. Here $X$ and $A$ are independent uniform bits, $\idealB$ is one qubit, and $\ket{\psi_{x,a}}=H^x\ket a$. Thus a single-copy trusted state prepared by Alice is
\begin{equation}\label{eq:BB84-source}
\rho^{\mathrm{prep}}_{XA\idealB}
=\frac14\sum_{x,a\in\{0,1\}}
\pure{x,a}_{XA}\otimes H^x\pure a H^x.
\end{equation}
This is the only quantum register prepared and sent by Alice in the DD protocol. For either value of $x$, its average state is $\id_{\idealB}/2$, so the common-average condition \eqref{eq:common-average} holds. The entanglement-based implementation takes $\ket\phi_{\idealA\idealB}=(\ket{00}+\ket{11})/\sqrt2$ and takes $T_a^0$ and $T_a^1$ to be the $Z$- and $X$-basis projectors on $\idealA$. Measuring $\idealA$ remotely prepares $H^x\ket a$ on $\idealB$, and hence \eqref{eq:source-realization} holds with $\sfP_{XA}(x,a)=1/4$. Together with Bob's usual CHSH measurements, this state and Alice's measurements form an optimal strategy for the CHSH non-local game, as required by Postulate Set~\ref{post:selftest}\ref{post:game}.

We now state the DD search-security guarantee for this protocol, and describe the DI protocol we get by applying the construction of Protocol~\ref{prot:DI} to it.
\begin{fact}[Search security of Wiesner-state UE, {\cite[Corollary~2 and Theorem~21]{BL20}}]\label{th:BL}
Let $\ueP$ be the uncloneable encryption protocol in which Alice samples a uniform key $\bx\in\{0,1\}^m$ and an independent uniform string $\ba\in\{0,1\}^m$, and sends Bob the state $\bigotimes_{i=1}^m H^{x_i}\pure{a_i}H^{x_i}$. To encrypt $\bmsg\in\{0,1\}^m$, she then sends $\bmsg\oplus\ba$. In the search experiment of Definition~\ref{def:UE-search}, every allowed attack satisfies
\[
\Pr_{\ueP}[\BadS]\leq (\omega^\ast)^m,\qquad \omega^\ast=\frac12+\frac1{2\sqrt2}.
\]
\end{fact}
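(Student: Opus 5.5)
The plan is to reduce the search experiment to the BB84 monogamy-of-entanglement game of Tomamichel, Fehr, Kaniewski and Wehner. For that game the optimal winning probability over $m$ parallel copies is known to be exactly $(\tfrac12+\tfrac1{2\sqrt2})^m$. The proof has three steps: purify Alice's preparation, absorb the one-time-pad ciphertext into the adversary, and invoke the monogamy bound.

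\textbf{Step 1 (entanglement-based purification).} Using the realization described just above, with $\ket\phi$ an EPR pair and $T^x_a$ the $Z$/$X$ projectors, replace Alice's preparation by $m$ EPR pairs $\idealA_i\idealB_i$. Alice measures $\idealA_i$ in basis $x_i$ to obtain $a_i$. Because Alice's measurements act only on $\idealA$, they commute with every channel Bob applies. Alice may therefore postpone all measurements until the decoders receive $\bx$.

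This postponement also absorbs Bob's interspersed adaptive channels. Before the key is released, Bob only ever sees the $\idealB_i$ registers, whose marginals do not depend on $\bx$. His whole pre-split activity therefore becomes a single channel acting on $\idealB_1\cdots\idealB_m$ together with his auxiliary memory.

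\textbf{Step 2 (removing the ciphertext).} The ciphertext is $\mathbf{c}=\bmsg\oplus\ba$. Since $\bmsg$ is uniform and independent of everything else, $\mathbf{c}$ is uniform and independent of $(\bx,\ba)$ and of all quantum registers. Bob can therefore sample $\mathbf{c}$ himself and give a copy to each decoder before the split, producing an identically distributed joint state. Decoder $j$ then outputs $\bmsg_j\oplus\mathbf{c}$ as its guess for $\ba$. The event $\BadS$ holds exactly when both guesses equal $\ba$, which is the winning event of the monogamy game with referee $\idealA$ and players given $\bx$. Hence the success probability of any UE attack is at most the game value.

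\textbf{Step 3 (game value).} The winning probability is
\[
\expval_{\bx}\tr{\Big(\sum_{\ba}\Pi^{\bx}_{\ba}\otimes P^{\bx}_{\ba}\otimes Q^{\bx}_{\ba}\Big)\rho},
\]
where $\Pi^{\bx}_{\ba}$ are Alice's projectors and $P,Q$ are the decoders' measurements after the split. This is at most $\|\,2^{-m}\sum_{\bx}\Pi^{\bx}\,\|_\infty$ with $\Pi^{\bx}=\sum_{\ba}\Pi^{\bx}_{\ba}\otimes P^{\bx}_{\ba}\otimes Q^{\bx}_{\ba}$.

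To bound this norm, use the lemma that for projectors $A_k$ and a family of orthogonal permutations $\pi$ of the index set, $\|\sum_k A_k\|\le\sum_\pi\max_k\|A_k A_{\pi(k)}\|$. Pairs of keys differing in $t$ positions have overlap bounded by $\max_{a,a'}|\langle a|H|a'\rangle|^{t}=2^{-t/2}$. This uses only that the $Z$ and $X$ bases are mutually unbiased, and monotonicity lets one drop the projector $P$ or $Q$ on the disagreeing side. Summing gives $\sum_t\binom mt 2^{-t/2}2^{-m}=(\tfrac12+\tfrac1{2\sqrt2})^m$.

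\textbf{Expected obstacle.} The delicate point is the product structure of this last sum. The overlap lemma must be applied with tensor-product permutations $\bx\mapsto\bx\oplus\mathbf t$, and one must check that the resulting bound factorizes across the $m$ qubits even though $P,Q$ are arbitrary entangled measurements. If direct verification proves awkward, the fallback is to cite the TFKW result directly and only verify that the reduction in Steps 1–2 lands in its model. In particular, that model must allow the adversary's preparation after seeing nothing, with the basis string revealed only after the split.
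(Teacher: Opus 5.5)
Your proposal is correct and follows the same route as the source the paper relies on. The paper gives no proof of its own and simply cites \cite{BL20}, whose argument is this reduction: purify the Wiesner preparation with EPR pairs, absorb the uniformly distributed classical ciphertext into the adversary, and invoke the Tomamichel--Fehr--Kaniewski--Wehner BB84 monogamy-of-entanglement bound $(\tfrac12+\tfrac1{2\sqrt2})^m$. The factorization worry in your last paragraph does not arise, because the overlap bound $\|\Pi^{\bx}\Pi^{\bx'}\|\le\max_{\ba,\ba'}\|\Pi^{\bx}_{\ba}\Pi^{\bx'}_{\ba'}\|$ uses the product structure only of Alice's projectors, not of the decoders' measurements.
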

We note that applying the depolarizing channel $\mapD_\mu$ to the qubits Alice sends first is a possible attack strategy by Bob in the uncloneable encryption protocol, and the protocol is secure against this attack. Therefore, the above fact has the following corollary.
\begin{corollary}[Search security of depolarized Wiesner-state UE]\label{cor:dep-UE}
Let $\ueP_\mu$ be the version of the protocol from Fact~\ref{th:BL} where Alice sends Bob the state $\mapD_\mu^{\otimes m}\left[\bigotimes_{i=1}^m H^{x_i}\pure{a_i}H^{x_i}\right]$ instead. Then,
\[
\Pr_{\ueP_\mu}[\BadS]\leq (\omega^\ast)^m.
\]
\end{corollary}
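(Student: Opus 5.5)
The plan is a direct reduction: exhibit Bob's depolarizing attack as a preprocessing step and then invoke Fact~\ref{th:BL}. Every allowed attack against $\ueP_\mu$ should be simulated by an allowed attack against the original protocol $\ueP$ that has exactly the same success probability for $\BadS$. Concretely, fix an arbitrary attack $\mathcal A$ on $\ueP_\mu$. This consists of Bob's channels acting on the received state $\mapD_\mu^{\otimes m}\bigl[\bigotimes_i H^{x_i}\pure{a_i}H^{x_i}\bigr]$ together with his memory, the splitting channel, and the two decoders' measurements once they get $\bx$ and $\bmsg\oplus\ba$.

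From $\mathcal A$ we define an attack $\mathcal A'$ on $\ueP$. On receiving $\bigotimes_i H^{x_i}\pure{a_i}H^{x_i}$, Bob first applies $\mapD_\mu^{\otimes m}$ locally, which is a CPTP map on his own registers, and then runs $\mathcal A$ unchanged. The key point to check is that the classical messages Alice sends ($\bmsg\oplus\ba$ and the key $\bx$) have the same joint distribution with $\bx,\ba,\bmsg$ in both experiments. They do, because Alice's classical actions in $\ueP_\mu$ are identical to those in $\ueP$: only the quantum state differs, and it differs exactly by $\mapD_\mu^{\otimes m}$ acting on Bob's side. Since $\mapD_\mu$ acts only on Bob's register and commutes with Alice's classical records, the joint state of $(\bx,\ba,\bmsg)$ and all of Bob's systems is identical under $(\ueP_\mu,\mathcal A)$ and $(\ueP,\mathcal A')$. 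Hence $\Pr_{\ueP_\mu,\mathcal A}[\BadS]=\Pr_{\ueP,\mathcal A'}[\BadS]$.

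It then remains to confirm that $\mathcal A'$ is an allowed attack in the search experiment of Definition~\ref{def:UE-search}. This is immediate: the only restrictions are that the decoders cannot communicate after the split, and that Bob's operations are arbitrary quantum channels before the split. Prepending a local CPTP map $\mapD_\mu^{\otimes m}$ before the split respects both restrictions. Fact~\ref{th:BL} therefore gives $\Pr_{\ueP,\mathcal A'}[\BadS]\le(\omega^\ast)^m$, and taking the supremum over $\mathcal A$ yields the corollary.

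No step is a genuine obstacle. The only point requiring care is that the security bound in Fact~\ref{th:BL} holds uniformly over all attacks, including those with arbitrary pre-split processing. This is exactly how~\cite{BL20} state it, so the reduction goes through without modification.
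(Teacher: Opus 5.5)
Your proposal is correct and is the same argument the paper gives: applying $\mapD_\mu^{\otimes m}$ to the received qubits is itself an allowed Bob attack on the original Wiesner-state protocol, and it leaves Alice's classical records unchanged. Hence every attack on $\ueP_\mu$ is simulated by an attack on the original protocol with the same $\BadS$ probability, and Fact~\ref{th:BL} applies.
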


The DI UE protocol we get by applying the construction of Protocol~\ref{prot:DI} to the Wiesner-state UE protocol is given in Protocol~\ref{prot:UE}. The syndrome used in the protocol is for error correction in the presence of honest device noise. This is very standard in for example QKD; see \cite{KT25} for more details in the context of uncloneable encryption. The syndrome length $s$ depends on the noise tolerance $\epstol$ of the protocol.

\begin{prot}[!ht]
\caption{Sequential DI uncloneable encryption.}\label{prot:UE}

\textbf{Parameters.} The preparation parameters $n,m,\ptest,\epstol$ from Protocol~\ref{prot:DI}, and an $s$-bit error-correction syndrome.
\begin{enumerate}[leftmargin=*,itemsep=3pt]
\item Alice samples a uniform decryption key $\bk\in\{0,1\}^m$.
\item Alice and Bob run the preparation phase of Protocol~\ref{prot:DI} with the CHSH game.
\begin{enumerate}
\item Alice samples $u_i$ before choosing $x_i$.
\begin{itemize}
\item If $u_i=\gen$ and this is the $j$th generation round with $j\leq m$, she sets $x_i=k_j$;
\item Otherwise, she samples $x_i$ uniformly and independently.
\end{itemize}
\item The choice of $y_i$, interaction with Bob and acceptance criterion, and selection of $E$ for Alice are those of Protocol~\ref{prot:DI}, specialized to the CHSH game. On acceptance, $\bx_E=\bk$.
\end{enumerate}
\item Alice stows away her device. To encrypt an $m$-bit plaintext $\bmsg$, she publishes $\bc_{\mathrm{msg}}=\bmsg\oplus\ba_E$ and an $s$-bit error-correction syndrome of $\ba_E$. Bob retains his quantum memory as the quantum part of the ciphertext.
\item Honest decryption with key $\bk$ measures Bob's retained qubits in the corresponding bases, corrects errors using the syndrome, and XORs the resulting string with $\bc_{\mathrm{msg}}$.
\end{enumerate}
The noiseless honest implementation uses EPR pairs, Alice's $Z,X$ measurements, Bob's CHSH measurements on tests, and storage of his halves on generation rounds.
\end{prot}

We note that the joint distribution of $U_i,X_i,Y_i$ and the information supplied to the devices in Protocol~\ref{prot:UE} are exactly those of Protocol~\ref{prot:DI}; sampling $U_i$ first only determines whether Alice uses a key bit or a separate random bit for $X_i$.

The completeness of Protocol~\ref{prot:UE} has two components: for honest implementations with error below the tolerance threshold $\epstol$, upper bounding the abort probability, and upper bounding the probability that Bob is not able to decrypt Alice's encrypted message with the key $\bk$. The abort probability is already upper bounded by Theorem~\ref{th:complete}; Bob correctly decrypting the message is handled by the error correction syndrome --- see \cite{KT25} for further details. We henceforth focus on soundness for Protocol~\ref{prot:UE}, i.e., search security in this case.

Robust CHSH self-testing was established in~\cite{seqMYS12,RUV13}. We use the projector formulation from~\cite[Fact~10]{KT25}.
\begin{fact}[CHSH projector self-testing, {\cite[Fact~10]{KT25}}]\label{cor:CHSH-projectors}
For every pair of binary projective measurement families $\{H_a^x\}_a,\{K_b^y\}_b$ on $M^\A,M^\B$, there are local isometries
\[
V^\A:M^\A\longrightarrow S^\A J^\A,\qquad
V^\B:M^\B\longrightarrow S^\B J^\B,
\]
with $|S^\A|=|S^\B|=2$, depending only on the measurement families, with the following property. For every purified state $\ket\rho_{\stabE M^\A M^\B}$ on which these measurements win CHSH with probability $\omega^\ast-\delta$, there is a normalized $\ket\tau_{\stabE J^\A J^\B}$ satisfying
\begin{equation}\label{eq:CHSH-projectors}
\frac12\sum_{x,a\in\{0,1\}}
\norm{(\id_{\stabE}\otimes V^\A\otimes V^\B)H_a^x\ket\rho
-(T_a^x\otimes\id)\ket\phi\ket\tau}_2^2
\leq\min\{4,10000\sqrt\delta\},
\end{equation}
where $\ket\phi_{S^\A S^\B}=(\ket{00}+\ket{11})/\sqrt2$ and
\[
T_a^0=\frac{\id+(-1)^aZ}{2},\qquad
T_a^1=\frac{\id+(-1)^aX}{2}
\]
are Alice's ideal preparation projectors. The isometries leave $\stabE$ unchanged.
\end{fact}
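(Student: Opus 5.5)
The bound in \eqref{eq:CHSH-projectors} is the minimum of two bounds, so I would prove each separately. The trivial bound $4$ holds for any isometries and any normalized $\ket\tau$. By $\norm{u-v}_2^2\leq 2\norm{u}_2^2+2\norm{v}_2^2$, each summand is at most $2\bra\rho H_a^x\ket\rho+2\bra\phi T_a^x\ket\phi$. Summing over $a$ for fixed $x$ gives at most $4$, and the prefactor $\frac12$ with the two values of $x$ gives $4$. The real content is the $O(\sqrt\delta)$ bound. I would obtain it by the standard route of \cite{seqMYS12,RUV13}: Jordan's lemma, then an SOS decomposition, then the swap isometry. At each step I would check that the isometry depends only on the measurement families.

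\emph{Step 1: construct the isometries from the measurements alone.} Write the observables $A_x=H_0^x-H_1^x$ and $B_y=K_0^y-K_1^y$. By Jordan's lemma, $M^\A$ splits into invariant blocks of dimension at most $2$ on which $A_0,A_1$ act as reflections at some angle. On each $2$-dimensional block, fix a local basis in which $A_0$ and $A_1$ are placed symmetrically around $Z$ and $X$. This lets us define regularized observables $Z_\A,X_\A$ that anticommute exactly and depend only on $\{H_a^x\}$. I would pad $1$-dimensional blocks by adjoining an ancilla qubit, which keeps the construction state-independent. Define $V^\A$ as the usual swap circuit built from $Z_\A,X_\A$ acting on $M^\A$ and a fresh qubit $S^\A$, with $J^\A$ the output memory. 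Define $V^\B$ in the same way from $Z_\B\approx(B_0+B_1)/\sqrt2$ and $X_\B\approx(B_0-B_1)/\sqrt2$, regularized via Bob's Jordan decomposition. None of this refers to $\ket\rho$, so the state-independence required in Postulate Set~\ref{post:selftest}\ref{post:rigid} is automatic.

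\emph{Step 2: extract state-dependent approximate relations from the SOS.} Use the identity
\[
2\sqrt2\,\id-\mathcal B=\tfrac1{\sqrt2}\Bigl[\bigl(A_0-\tfrac{B_0+B_1}{\sqrt2}\bigr)^2+\bigl(A_1-\tfrac{B_0-B_1}{\sqrt2}\bigr)^2\Bigr],
\]
where $\mathcal B$ is the CHSH operator. A winning probability of $\omega^\ast-\delta$ gives $\bra\rho(2\sqrt2-\mathcal B)\ket\rho=8\delta$. Hence both $\norm{(A_0-\frac{B_0+B_1}{\sqrt2})\ket\rho}$ and $\norm{(A_1-\frac{B_0-B_1}{\sqrt2})\ket\rho}$ are $O(\sqrt\delta)$. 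From these relations one derives approximate anticommutation $\norm{\{A_0,A_1\}\ket\rho}=O(\sqrt\delta)$. Through the Jordan-angle parametrization, this also shows that the regularized observables $Z_\A,X_\A$ are $O(\sqrt\delta)$-close to $A_0,A_1$ on $\ket\rho$, and similarly for Bob.

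\emph{Step 3: swap-isometry calculation.} Push $H_a^x\ket\rho$ through $V^\A\otimes V^\B$. Replacing $A_x$ by the regularized observables and Bob's operators by Alice's (using Step 2) costs $O(\sqrt\delta)$ in vector norm per substitution. The result is $O(\sqrt\delta)$-close to $(T_a^x\otimes\id)\ket\phi\ket\tau$, where $\ket\tau$ is read off from the swap output on $R J^\A J^\B$; $\ket\tau$ may depend on $\ket\rho$, and the isometries leave $R$ untouched. Squaring and summing over $x,a$ gives a bound $c\,\delta$. Following the looser bookkeeping in \cite{KT25}, I would settle for $c'\sqrt\delta$ with $c'\leq 10000$, since $\delta\leq\sqrt\delta$ for $\delta\leq 1$.

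\emph{Main obstacle.} I expect the hardest part to be the regularization in Step 1: guaranteeing exact anticommutation using an isometry that depends only on the measurements, while retaining an $O(\sqrt\delta)$ error estimate that is uniform over Jordan blocks with angles far from optimal. I would handle this by bounding the per-block deviation of the angle from $\pi/4$ by the block's contribution to $\bra\rho\{A_0,A_1\}^2\ket\rho$, which makes the bound uniform across blocks. Tracking the explicit constant is routine but tedious.
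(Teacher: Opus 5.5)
Your proposal is correct, but it takes a different route from the paper. The paper does not prove this Fact. It imports it from \cite[Fact~10]{KT25}, which specializes the explicit swap-isometry bounds of \cite[Theorems~1 and~2]{seqMYS12} with CHSH expectation deficit $8\delta$, and it simply asserts that the extraction isometries in that construction depend only on the measurements. That construction works with approximate rather than exact anticommutation on the state. Its $O(\delta^{1/4})$ error in vector norm is what the $\sqrt\delta$ on the squared quantity in \eqref{eq:CHSH-projectors} reflects.

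Your Jordan-lemma regularization instead gives exactly anticommuting reflections on each side. Every later step therefore propagates errors linearly:
\begin{itemize}
\item the SOS relations;
\item the per-block angle bounds, which work because $\{A_0,A_1\}$ and $\{B_0,B_1\}$ are scalars on each Jordan block;
\item the projection onto $\ket\phi$ via $\frac14(\id+Z\otimes Z)(\id+X\otimes X)$.
\end{itemize}
So you in fact obtain the quadratically stronger bound $c\,\delta$. You also make the state-independence of $V^\A,V^\B$ manifest rather than asserted. The citation route buys brevity and ready-made constants; yours is self-contained and sharper.

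On constant-tracking, you do not actually need $c\leq 10000$. The trivial bound $4$ is already at most $10000\sqrt\delta$ whenever $\delta\geq 1.6\times 10^{-7}$. So it suffices that $c\,\delta\leq 10000\sqrt\delta$ for $\delta<1.6\times10^{-7}$, i.e.\ $c\leq 2.5\times 10^{7}$. Even crude bookkeeping of your argument meets this easily; it gives $c$ in the low thousands.
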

The extraction isometries in the cited construction depend only on the measurements. The numerical constant above is a loose specialization of the explicit bounds in~\cite[Theorems~1 and~2]{seqMYS12}, on which~\cite[Fact~10]{KT25} is based, with CHSH expectation deficit $8\delta$.

\begin{lemma}[The explicit CHSH preparation cost]\label{lem:CHSH}
Postulate Set~\ref{post:selftest}\ref{post:rigid} holds for CHSH with $\vanf_{\mathrm{CHSH}}(\delta)=\min\{2,100\delta^{1/4}\}$. The finite coherent preparation system satisfies $|S|=8$. Hence
\begin{equation}\label{eq:CHSH-cost}
\Dsharp_2(\mapE_i^{\gen}[\pure\rho]\Vert\mapF_i^{\gen}[\pure\rho])
\leq\log\min\left\{\frac{64}{\pert},\left[\left(1-\frac{7\pert}{8}\right)^{-1/3}+\left(\frac{560000}{\pert}\right)^{1/3}\delta^{1/6}\right]^3\right\}.
\end{equation}
\end{lemma}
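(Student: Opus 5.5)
The proof has two parts. First we verify the rigidity postulate with the stated vanishing function. Then we substitute the CHSH parameters into the single-round bound of Lemma~\ref{lem:round}, whose explicit form is \eqref{eq:round-cost}, built on Lemma~\ref{lem:depol}.

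\emph{Rigidity.} Here $\sfP_X$ is uniform on $\{0,1\}$, so the left-hand side of \eqref{eq:rigidity} equals the left-hand side of \eqref{eq:CHSH-projectors}. Fact~\ref{cor:CHSH-projectors} then bounds it by $\min\{4,10000\sqrt\delta\}$, which is exactly $\min\{2,100\delta^{1/4}\}^2=\vanf_{\mathrm{CHSH}}(\delta)^2$. This function is nonnegative, nondecreasing and vanishes at $0$. The fact also supplies the two structural requirements: the isometries depend only on the measurement families, and they leave $\stabE$ untouched. The fact is stated for strategies with winning probability $\omega^\ast-\delta$. By the cap at $4$ (more precisely, $2$ before squaring), it applies to every deficit. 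This verifies Postulate Set~\ref{post:selftest}\ref{post:rigid}. The preparation system is $S=A_iS_i^\A S_i^\B$ with $|\mathcal A|=|S^\A|=|S^\B|=2$, so $|S|=8$.

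\emph{Substitution.} We unpack \eqref{eq:round-cost} rather than treating $\vanf_{\pert,\game}$ as a black box. The argument in Appendix~\ref{app:sharpdivbounds} reduces, through isometric invariance and data processing, to an application of Lemma~\ref{lem:depol}. In that application $\ket\phi_S$ is the purified (classical-on-$A$) ideal state $\omega^{x}$, averaged over $x$, and the "infidelity" $\eta$ is controlled by the averaged squared-norm error in \eqref{eq:rigidity}. I expect the relation to be $\eta\le\vanf_{\game}(\delta)^2$, which here gives $\eta\le 10000\sqrt\delta$.

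With $|S|=8$ we get $a_\pert=1-7\pert/8$ and $|S|(|S|-1)=56$. Lemma~\ref{lem:depol} then gives the second term of the minimum:
\[
\left[\left(1-\tfrac{7\pert}{8}\right)^{-1/3}+\left(\tfrac{56\cdot 10000\sqrt\delta}{\pert}\right)^{1/3}\right]^3 .
\]
Since $56\cdot 10000=560000$, the inner cube root equals $(560000/\pert)^{1/3}\delta^{1/6}$. The first term of the minimum, $64/\pert=|S|^2/\pert$, comes from the $\Dmax$ bound in Lemma~\ref{lem:round}, using $\Dsharp_2\le\Dmax$. Taking the logarithm of the minimum yields \eqref{eq:CHSH-cost}.

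\emph{Main obstacle.} The hard step is confirming the exact constant linking $\eta$ to the rigidity error in \eqref{eq:round-cost}. In particular, I must check that no extra factor enters, for example a factor of $2$ from the generalized trace distance or from summing over $a$ rather than averaging. If the appendix instead gives $\eta\le c\,\vanf_{\game}(\delta)^2$, the constant $560000$ would need to be rescaled by $c$. To settle this, I would trace through the appendix's continuity argument directly. The key identity I would start from is that $1-\eta$ is the overlap with the target state. That overlap is bounded below by $1-\sum_{x}\sfP_X(x)\sum_a\|\cdot\|_2^2$, via the identity $\|u-v\|^2\ge 2(1-\mathrm{Re}\langle u,v\rangle)$ applied blockwise over $a$.
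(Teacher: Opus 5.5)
Your proposal is correct and matches the paper's proof. The paper verifies \eqref{eq:rigidity} from Fact~\ref{cor:CHSH-projectors} using the uniform CHSH input, notes $|S|=8$, and substitutes $a_\pert=1-7\pert/8$, $|S|(|S|-1)=56$ and $\vanf_{\mathrm{CHSH}}(\delta)^2\leq 10000\sqrt\delta$ into Lemma~\ref{lem:round}. The constant you flag as the main obstacle is already fixed by the paper, so you do not need to re-derive it: \eqref{eq:round-cost} is stated explicitly in terms of $\vanf_\game(\delta)^2$. It rests on Lemma~\ref{lem:coh}, which gives $\sum_x\sfP_X(x)\eta_x\leq\vanf_\game(\delta)^2$ with no extra factor, and the averaging over $x$ is done by Jensen's inequality applied to the per-$x$ bounds of Lemma~\ref{lem:depol}.
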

\begin{proof}
Equation~\eqref{eq:CHSH-projectors} is precisely the projector estimate required in \eqref{eq:rigidity}, since Alice's CHSH input is uniform. Hence Postulate Set~\ref{post:selftest}\ref{post:rigid} holds with
\[
\vanf_{\mathrm{CHSH}}(\delta)=\min\{2,100\delta^{1/4}\}.
\]
Each of $A_i,S_i^\A,S_i^\B$ has dimension $2$, so $|S|=8$. Substituting this dimension and
$\vanf_{\mathrm{CHSH}}(\delta)^2\leq10000\sqrt\delta$
into Lemma~\ref{lem:round} gives \eqref{eq:CHSH-cost}, with
$|S|(|S|-1)\cdot10000=560000$.
\end{proof}


\begin{theorem}[Sequential DI UE search security]\label{th:UE}
Under Postulate Set~\ref{post:shield}, for any $0<k<-\log \omega^\ast$, the parameters of Protocol~\ref{prot:UE} can be chosen as in Theorem~\ref{th:DI} so that
\[
\Pr_{\diP}[\Acc\land\BadS]\leq2^{s/2}(2^k\omega^\ast)^{m/2}
\]
for every allowed attack. Thus Protocol~\ref{prot:UE} is search-secure in the sense of Definition~\ref{def:UE-search}, with the displayed error bound. The bound decays exponentially when $s/m+k<-\log \omega^\ast$, while the testing tolerance remains a positive constant.
\end{theorem}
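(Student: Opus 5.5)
The plan is to obtain Theorem~\ref{th:UE} as an instance of Theorem~\ref{th:DI}, with the CHSH game and the Wiesner-state protocol of Fact~\ref{th:BL} as the underlying DD protocol. The syndrome term will be absorbed into the DD bound. Concretely, I will take as $\ddP$ the DD protocol in which Alice prepares $m$ copies of \eqref{eq:BB84-source} with key $\bx$. Its cryptographic continuation publishes $\bmsg\oplus\ba$ together with the $s$-bit syndrome of $\ba$, for a uniform independent $\bmsg$, and then runs the search experiment of Definition~\ref{def:UE-search}.

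Two preliminary points must be checked first.

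\emph{Protocol~\ref{prot:UE} is an instance of Protocol~\ref{prot:DI}.} Sampling the key up front and setting $x_i=k_j$ on the $j$th generation round yields the same joint law of $U_i,X_i,Y_i$ and the same device inputs. On acceptance it gives $\bx_E=\bk$, which is uniform and independent of everything that Bob's devices see.

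\emph{Postulate Sets~\ref{post:selftest} and~\ref{post:DD} hold.} Postulate Set~\ref{post:selftest} is given by the EPR realization stated before Fact~\ref{th:BL}, together with Fact~\ref{cor:CHSH-projectors} and Lemma~\ref{lem:CHSH}. For Postulate Set~\ref{post:DD}, $\sfP_{XA}$ is uniform, so \eqref{eq:Bob-depol} applies. Since applying $\mapD_\pert$ is an allowed attack by Bob, the depolarized ensemble is covered by any bound valid for the undepolarized protocol, as in Corollary~\ref{cor:dep-UE}.

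The key step is to bound the DD bad-event probability when the syndrome is also released. I would give a reduction to the syndrome-free protocol of Fact~\ref{th:BL}. Given any attack that uses the syndrome, Bob draws a uniformly random string $\bz\in\{0,1\}^s$ before the split and shares it with both decoders as classical information. Both decoders then run the original attack with $\bz$ in place of the syndrome. With probability $2^{-s}$ the string $\bz$ equals the true syndrome, independently of everything else, and in that case the simulated run coincides with the real one. Hence
\[
\Pr[\BadS]_{\text{with syndrome}}\leq2^s(\omega^\ast)^m .
\]
This holds for the depolarized source as well, so Postulate Set~\ref{post:DD} holds with $q_{\mathrm{dep}}(\pert)^m$ replaced by $2^s(\omega^\ast)^m$.

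I would not feed this into the final statement of Theorem~\ref{th:DI}, because $2^s(\omega^\ast)^m$ is not literally of the form $q^m$. Instead I would rerun the last step of its proof. Lemma~\ref{lem:DD} goes through unchanged, since its continuation channel $\mapM$ now also outputs the syndrome computed from $\ba_E$. It gives
\[
\Pr_{\mapM[\idealstate]}[\BadS]\leq2^s(\omega^\ast)^m .
\]
I then combine this with \eqref{eq:DI-transfer} and with $\compexp{2}\leq km$, which holds for the parameters chosen in Theorem~\ref{th:DI} (any $0<k<-\log\omega^\ast$ is allowed, since $q_{\mathrm{dep}}\equiv\omega^\ast$). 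This yields
\[
\Pr_{\diP}[\Acc\land\BadS]\leq2^{km/2}\bigl(2^s(\omega^\ast)^m\bigr)^{1/2}=2^{s/2}\bigl(2^k\omega^\ast\bigr)^{m/2}.
\]
This quantity decays exponentially exactly when $s/m+k<-\log\omega^\ast$.

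The main obstacle I expect is the syndrome-guessing reduction in the non-communicating two-decoder setting. Both decoders must use the \emph{same} guess, and that guess must be independent of Alice's records. This is why the guess has to be fixed by Bob as shared classical information before the split, which Definition~\ref{def:UE-search} permits. Beyond that, the remaining work is bookkeeping: one must check that the syndrome, being a function of the protected $\ba_E$, can legitimately be placed inside $\mapM$ without affecting the comparison bound.
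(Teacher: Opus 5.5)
Your proposal is correct and follows the paper's proof essentially step for step. Like the paper, you bound the syndrome-releasing DD protocol by $2^s(\omega^\ast)^m$ by uniformly guessing the syndrome on top of Corollary~\ref{cor:dep-UE}, verify Postulate Set~\ref{post:selftest} via the EPR/CHSH realization and Lemma~\ref{lem:CHSH}, and rerun the probability-transfer step behind \eqref{eq:DI-transfer} with $\compexp{2}\leq km$ instead of quoting the final form of Theorem~\ref{th:DI}. One small precision: the syndrome is published before the split, so Bob's own splitting channel, not only the decoders, must also be run with the guess $\bz$ in place of the syndrome; this is allowed because $\bz$ is drawn before the split.
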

\begin{proof}
First, since the version of the protocol where Alice does not send the syndrome has $\Pr_{\ueP_\mu}[\BadS]\leq (\omega^*)^m$ by Corollary~\ref{cor:dep-UE}, we claim that the protocol with the syndrome has security $2^s(\omega^*)^m$. This is because an attack without the syndrome can guess its value uniformly from $2^s$ possibilities and run the attack that uses it. The guess is correct with probability $2^{-s}$ for every raw string, giving the factor $2^s$.

The Wiesner-state realization and Lemma~\ref{lem:CHSH} verify Postulate Set~\ref{post:selftest}. The probability-transfer argument in the proof of Theorem~\ref{th:DI}, with $\compexp{2}\leq km$, gives $2^{km/2}\sqrt{2^s (\omega^\ast)^m}$, as claimed.
\end{proof}

\subsubsection{Decision security from search security}

We obtain one-bit decision security in the sense of Definition~\ref{def:UE-dec} from the search bound, using the simultaneous Goldreich--Levin reduction of Coladangelo, Liu, and Xie.

\begin{fact}[Simultaneous Goldreich--Levin, {\cite[Theorem~3.4]{CLX26}}]\label{th:GL}
Let $p_{\mathrm{search}}$ be the optimal success probability in a search UE game with hidden string $\bmsg$. The parity-prediction decision game corresponding to this search game is identical to the search game for message $\bmsg$ up to the point when the keys are revealed to the decoders. In the reveal phase, the decoders are given a uniformly random string $\br$ along with the original game's secret key, and they must both output $\br\cdot\bmsg$, with the inner product taken over $\mathbb F_2$. If their optimal success probability is $p_{\mathrm{pred}}$, then
\[
p_{\mathrm{search}}\geq(2p_{\mathrm{pred}}-1)^4,
\qquad
p_{\mathrm{pred}}\leq\frac12+\frac12p_{\mathrm{search}}^{1/4}.
\]
\end{fact}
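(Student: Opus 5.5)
The second inequality follows from the first by elementary rearrangement, so the plan concentrates on the first. If $p_{\mathrm{pred}}\leq\frac12$, the bound $p_{\mathrm{pred}}\leq\frac12+\frac12p_{\mathrm{search}}^{1/4}$ is trivial. Otherwise $2p_{\mathrm{pred}}-1\in(0,1]$, and taking fourth roots in $p_{\mathrm{search}}\geq(2p_{\mathrm{pred}}-1)^4$ gives it directly.

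For the first inequality I would argue by reduction. Fix an optimal attack for the parity-prediction game, i.e.\ Bob's preparation-phase operations, his splitting channel, and the two decoders' measurements $\{P^{\br}_{j,\beta}\}_\beta$ (for decoder $j\in\{1,2\}$, input $\br$, output bit $\beta$). By Naimark dilation I take these to be projective, with purified registers. From them I would build a search attack that is identical up to the key reveal, so the two games' states agree at that point. For each fixed key, message $\bmsg$ and joint state $\ket{\psi}$ of the two decoders, define the $\pm1$ observables
\[
O^{\br}_j=\sum_\beta(-1)^{\beta}P^{\br}_{j,\beta}.
\]
The quantity $\expval_{\br}\bra\psi (-1)^{\br\cdot\bmsg}O^{\br}_1\otimes O^{\br}_2\ket\psi$ is the correlation between ``both correct'' and ``both wrong'' events. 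Since both-correct has probability at least $2\Pr[\text{both correct}]-1$ contribution, I would lower bound this correlation by $2p_{\mathrm{pred}}-1$ on average, using that ``both correct'' and ``exactly one wrong'' partition the outcomes.

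Each search decoder then runs the coherent Goldreich--Levin / Bernstein--Vazirani procedure locally. It prepares $\frac1{\sqrt{2^\ell}}\sum_{\br}\ket{\br}$ in its own register, applies the controlled unitary $\sum_{\br}\pure{\br}\otimes O^{\br}_j$ as a phase oracle, Hadamards the $\br$ register, and measures to get a guess $\bmsg_j$. Both decoders use the same $\br$ distribution but independent $\br$ registers. This independence is the crux, because the joint correlation above needs the same $\br$ for both decoders.

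To handle this I would first apply a Cauchy--Schwarz step. Writing $\ket{\chi_j}=\expval_{\br}(-1)^{\br\cdot\bmsg}O^{\br}_j$ acting on $\ket\psi$, the probability that decoder $j$ outputs $\bmsg$ equals the squared norm of the corresponding branch, and the joint success probability is the squared norm of the tensor of both branches. I would relate
\[
\bra\psi\Bigl(\expval_{\br}(-1)^{\br\cdot\bmsg}O_1^{\br}\Bigr)\otimes\Bigl(\expval_{\br'}(-1)^{\br'\cdot\bmsg}O_2^{\br'}\Bigr)\ket\psi
\]
to the same-$\br$ correlation. This costs one square, via the pairing $\expval_{\br}\langle O^{\br}_1 O^{\br}_2\rangle$ and operator Cauchy--Schwarz on the decoupled sum. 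The Bernstein--Vazirani readout costs the second square. Finally, Jensen over keys and messages yields $(2p_{\mathrm{pred}}-1)^4$.

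I expect the main obstacle to be this decoupling step: controlling the joint success when the two decoders query on independent $\br$ while the guarantee is only for a shared $\br$. This is exactly the issue that makes simultaneous Goldreich--Levin harder than the single-party version. If the Cauchy--Schwarz route loses too much, I would follow~\cite{CLX26} and bound the joint success by the square of $\Tr$ of the product of the two projected operators, exploiting that each decoder's post-oracle state lies in the $+1$ eigenspace with amplitude at least its marginal bias.
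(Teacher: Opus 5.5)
Your rearrangement giving $p_{\mathrm{pred}}\leq\frac12+\frac12p_{\mathrm{search}}^{1/4}$ from the first inequality is correct. Beyond that, the paper proves nothing here: it imports the first inequality from \cite[Theorem~3.4]{CLX26}. Your attempt to prove that inequality yourself has a genuine gap exactly at the decoupling step you flag, and the route you sketch cannot close it.

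First, the signed correlator $\expval_{\br}\bra\psi(-1)^{\br\cdot\bmsg}O^{\br}_1\otimes O^{\br}_2\ket\psi$ is the wrong object. For a perfect strategy $O^{\br}_j\ket\psi=(-1)^{\br\cdot\bmsg}\ket\psi$, so the correlator equals $\expval_{\br}(-1)^{\br\cdot\bmsg}=0$ for $\bmsg\neq0$, not something $\geq 2p_{\mathrm{pred}}-1=1$. Writing $B^{\br}_j=(-1)^{\br\cdot\bmsg}O^{\br}_j$, the both-correct projector is $\frac14(\id+B^{\br}_1)\otimes(\id+B^{\br}_2)$, and its correlation term is the unsigned $O^{\br}_1\otimes O^{\br}_2$.

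More seriously, with independent $\br$ registers the joint Bernstein--Vazirani success is exactly $\norm{(A_1\otimes A_2)\ket\psi}_2^2$ with $A_j=\expval_{\br}B^{\br}_j$. Cauchy--Schwarz against $\ket\psi$ yields only $(\bra\psi A_1\otimes A_2\ket\psi)^2$, and this quantity can vanish even though $p_{\mathrm{pred}}>1/2$. Your bound is applied for a fixed key, message and state, so it must hold for every strategy. Take a shared classical flag. With probability $1/5$, decoder~1 always answers $1\oplus\br\cdot\bmsg$ and decoder~2 always answers $\br\cdot\bmsg$. With probability $4/5$, both answer with one common predictor that is correct on exactly $3/4$ of the $\br$. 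Then $p_{\mathrm{pred}}=\frac45\cdot\frac34=\frac35$, but $\bra\psi A_1\otimes A_2\ket\psi=\frac15(-1)+\frac45\cdot\frac14=0$. The true joint success is $\frac15+\frac45\cdot\frac1{16}$. The loss comes from Cauchy--Schwarz letting the negative branch cancel the positive one.

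Parseval makes the obstruction explicit: $\expval_{\br}B^{\br}_1\otimes B^{\br}_2=\sum_{\mathbf t}\hat B_1(\mathbf t)\otimes\hat B_2(\mathbf t)$. Here $\hat B_j(\mathbf t)=\expval_{\br}(-1)^{\br\cdot\mathbf t}B^{\br}_j$ is the amplitude operator for decoder $j$ outputting $\bmsg\oplus\mathbf t$. So the same-$\br$ statistic pairs all modes in which both decoders would output a common string, and no Cauchy--Schwarz step isolates the $\mathbf t=0$ term.

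Your fallback does not repair this if it rests on each decoder's marginal bias. Take a uniform flag. On one branch decoder~1 is perfect and decoder~2 outputs a fresh uniformly random bit; on the other branch the roles are swapped. Each decoder has marginal bias $1/2$, yet for $\bmsg\neq0$ we have $A_1\otimes A_2=0$ on both branches, so the joint extraction never succeeds. (Here $p_{\mathrm{pred}}=\frac12$, which is consistent with the theorem.) Hence any proof must use the same-$\br$ joint statistic in a sign-sensitive way.

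In the commuting, hidden-variable case this is easy. Condition on a branch $\lambda$ with joint prediction success $z_\lambda$. Each decoder's bias is then at least $2z_\lambda-1$, and the two extractions are conditionally independent. So that branch succeeds with probability at least $\max\{0,2z_\lambda-1\}^4$, and Jensen finishes. Quantumly there is no such branch decomposition, because the $\hat B_j(\mathbf t)$ with $\mathbf t\neq0$ need not commute with $A_j$. Overcoming this is the actual content of the simultaneous Goldreich--Levin theorem. So ``follow \cite{CLX26}'' is not a step of your argument; it is the citation itself, which is what the paper relies on.
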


The decision-secure DI UE protocol constructed from applying the parity-prediction construction to Protocol~\ref{prot:UE} is given in Protocol~\ref{prot:decision}, and its security proof is given in Theorem~\ref{cor:decision}.
\begin{prot}[!ht]
\caption{One-bit decision-secure sequential DI UE.}\label{prot:decision}

\textbf{Parameters.} A search-secure instance of Protocol~\ref{prot:UE}, with raw message length $m$.
\begin{enumerate}[leftmargin=*,itemsep=3pt]
\item Alice samples the search-protocol key $\bk$, an independent uniform string $\br\in\{0,1\}^m$, and an independent uniform bit $z$. The complete decryption key is $(\bk,\br,z)$.
\item Alice and Bob run the preparation phase of Protocol~\ref{prot:UE} using $\bk$. On abort, the protocol ends. The variables $\br,z$ are not used during preparation.
\item On acceptance, to encrypt a bit $d$, Alice samples an independent uniform $\bmsg\in\{0,1\}^m$, encrypts it as in Protocol~\ref{prot:UE}, and appends the public bit
\[
e=d\oplus z\oplus(\br\cdot\bmsg).
\]
\item Honest decryption recovers $\bmsg$ using $\bk$ and outputs $e\oplus z\oplus(\br\cdot\bmsg)$.
\end{enumerate}
In the decision experiment, Bob chooses distinct bits $d_0,d_1$ after preparation, Alice encrypts $d_b$ for an independent uniform challenge bit $b$, and both decoders receive $(\bk,\br,z)$ only after the split. Their outputs $b_1,b_2$ define $\BadD$ as in Definition~\ref{def:UE-dec}.
\end{prot}

\begin{theorem}[Decision security with abort]\label{cor:decision}
Suppose the underlying search protocol is $\delta_{\mathrm{UE}}$-search secure. For every attack against Protocol~\ref{prot:decision}, with $p_\Acc=\Pr_{\diP}[\Acc]$,
\begin{equation}\label{eq:decision-bound}
\Pr_{\diP}[\Acc\land\BadD]-\frac{p_\Acc}{2}
\leq\frac12p_\Acc^{3/4}\delta_{\mathrm{UE}}^{1/4}
\leq\frac12\delta_{\mathrm{UE}}^{1/4}.
\end{equation}
In particular, the parameters of Theorem~\ref{th:UE} give decision-security error at most
\[
\frac12\,2^{s/8}(2^k\omega^\ast)^{m/8},
\]
which decays exponentially when $s/m+k<-\log \omega^\ast$. Correct search decryption implies correct decryption of the bit.
\end{theorem}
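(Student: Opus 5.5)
The plan is to reduce an accepting-run attack on Protocol~\ref{prot:decision} to a parity-prediction game built on the search game of Protocol~\ref{prot:UE}, and then apply Fact~\ref{th:GL}. The subtlety is that acceptance occurs before the challenge. So I would first condition on $\Acc$ and work with the conditional experiment, which is a well-defined interactive game whenever $p_\Acc>0$ (if $p_\Acc=0$ the claim is trivial).

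\emph{Step 1 (remove the challenge bit).} Fix an attack. Since $d_0\neq d_1$ are bits, $\{d_0,d_1\}=\{0,1\}$ and $d_b=b\oplus d_0$. The bit $e=b\oplus d_0\oplus z\oplus(\br\cdot\bmsg)$ is uniform and independent of everything Bob holds, because $z$ is an unused uniform pad. Decoders that output $b_1,b_2$ can therefore be converted into decoders that output $p_j=b_j\oplus e\oplus z\oplus d_0$. Each decoder receives $z$ and $\br$ with the key and knows $d_0$ and $e$ from before the split, so this conversion is local. Then $b_j=b$ exactly when $p_j=\br\cdot\bmsg$. Conversely, any parity-prediction strategy on the search protocol yields a decision strategy, where Bob samples $e$ himself before the split. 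Hence
\[
\Pr_{\diP}[\BadD\mid\Acc]=\Pr[p_1=p_2=\br\cdot\bmsg\mid\Acc].
\]

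\emph{Step 2 (conditional search game).} Define the conditional game $G_\Acc$: the preparation of Protocol~\ref{prot:UE} is run conditioned on acceptance, so Bob starts from the post-acceptance state, and then the search or parity continuation follows. The fresh plaintext $\bmsg$, $\br$ and $z$ are sampled after acceptance, independently of it. Therefore every search attack on $G_\Acc$ succeeds with probability
\[
\Pr[\Acc\land\BadS]/p_\Acc\leq\delta_{\mathrm{UE}}/p_\Acc .
\]
Here the numerator is the same attack run in the unconditioned DI search experiment. The decoders' continuation strategy is a measurable function of the post-acceptance state, so it is a legitimate attack there. Fact~\ref{th:GL} is then applied to $G_\Acc$; its proof only uses the structure of the reveal phase, so it applies to any fixed pre-split state distribution. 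This gives
\[
\Pr[\BadD\mid\Acc]\leq\tfrac12+\tfrac12(\delta_{\mathrm{UE}}/p_\Acc)^{1/4}.
\]
Multiplying by $p_\Acc$ yields
\[
\tfrac{p_\Acc}{2}+\tfrac12p_\Acc^{3/4}\delta_{\mathrm{UE}}^{1/4},
\]
and $p_\Acc\leq1$ gives the second inequality in~\eqref{eq:decision-bound}.

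\emph{Step 3 (parameters and correctness).} Substituting $\delta_{\mathrm{UE}}=2^{s/2}(2^k\omega^\ast)^{m/2}$ from Theorem~\ref{th:UE} and taking the fourth root gives the displayed error bound. Correctness is immediate: if $\bmsg$ is recovered, then $e\oplus z\oplus(\br\cdot\bmsg)=d$.

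I expect the main obstacle to be Step 2: justifying that Fact~\ref{th:GL}, which is stated for optimal success probabilities of a fixed game, applies to the acceptance-conditioned game. The point to check carefully is that conditioning on $\Acc$ does not bias $\bmsg$, $\br$ or $z$, since they are sampled afterwards. With that in place, the search success probability of the conditioned game is the joint probability divided by $p_\Acc$.
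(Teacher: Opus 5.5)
Your route matches the paper's: condition on $\Acc$, bound the conditioned search game by $\delta_{\mathrm{UE}}/p_{\Acc}$, convert each decision output into the parity guess $b_j\oplus d_0\oplus e\oplus z$, invoke Fact~\ref{th:GL}, and multiply by $p_{\Acc}$. The conditioning argument and the parameter and correctness checks are fine.

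The gap is in Step~1. You never use that Alice's challenge bit $b$ is uniform and independent, yet the bound is false without it: if Alice always encrypted $d_0$, both decoders could output $0$ and win with certainty. Your stated reason, that $z$ is an unused pad, only shows that $e$ is independent of Bob's pre-split view. Your converted decoders, however, also use $z$ after the split, and $e\oplus z=b\oplus d_0\oplus(\br\cdot\bmsg)$. The identity $\Pr[\BadD\mid\Acc]=\Pr[p_1=p_2=\br\cdot\bmsg\mid\Acc]$ holds in the decision experiment. It only transfers to $G_{\Acc}$, where Bob samples $e$ himself and $z$ is fresh, if $(e,z)$ is jointly uniform and independent of $\bk$, $\br$, $\bmsg$, $d_0$ and Bob's state. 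With $b$ fixed this fails: your converted decoders would predict the parity perfectly in the decision experiment, but only with probability $1/2$ in $G_{\Acc}$. So the step ``the converted strategy is a parity-prediction attack on $G_{\Acc}$'' is unjustified as written.

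The fix is the paper's observation. Because $b$ is uniform and independent of everything else, for each fixed value of the remaining variables the map $(b,z)\mapsto(e,z)$ is a bijection of $\{0,1\}^2$. Hence $e$ is uniform and independent of the ciphertext, key, $\bmsg$, $\br$, $z$ and the message pair. You can then add $e$ as independent public randomness and $z$ as an unused extra key bit to the conditioned search game. This does not raise its success probability above $\delta_{\mathrm{UE}}/p_{\Acc}$, because fixing their values gives an allowed attack on the original search experiment. Only then is your converted strategy a legitimate parity-prediction attack to which Fact~\ref{th:GL} applies. Your ``converse'' direction, from parity prediction to decision, is not needed for the upper bound.
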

\begin{proof}
We fix an attack with challenge messages $d_0,d_1$. Since these are distinct bits, $d_b=d_0\oplus b$.

The claim is immediate if $p_\Acc=0$. Otherwise, we fix the attack's preparation and message-selection operations and condition on $\Acc$. The remaining search experiment has optimal success probability at most $\delta_{\mathrm{UE}}/p_\Acc$: any splitting and decoding strategy can be preceded by these fixed operations to give an allowed attack in the original search experiment.

Since $b$ is uniform, the public bit $e=d_0\oplus b\oplus z\oplus(\br\cdot\bmsg)$ is uniform and independent of the search ciphertext, key, $\bmsg,\br,z$, and the chosen message pair. We may therefore include $e$ as independent public randomness in the search experiment and $z$ as an independent key bit. These additions do not increase the probability of $\BadS$, since fixing their values gives an allowed strategy for the original search experiment. Bob can retain and copy $d_0$ to both decoders as part of his classical information.

In the parity-prediction game of Fact~\ref{th:GL}, each decoder uses its decision output $b_j$ to output $b_j\oplus d_0\oplus e\oplus z$. Both parity guesses are correct exactly when $b_1=b_2=b$. Hence
\[
\Pr_{\diP}[\BadD\mid\Acc]
\leq\frac12+\frac12\left(\frac{\delta_{\mathrm{UE}}}{p_\Acc}\right)^{1/4}.
\]
Multiplication by $p_\Acc$ proves \eqref{eq:decision-bound}. Substituting the search error from Theorem~\ref{th:UE} gives the explicit bound.
\end{proof}

\subsection{Copy-protection of multi-bit point functions}\label{sec:copy}

For our second application, we give a sequential DI protocol for quantum copy protection of multi-bit point functions.

For a marked input $\by\in\{0,1\}^{m}$ and marked value $\bv\in\{0,1\}^m$, the multi-bit point function $P_{\by,\bv}$ is defined as
\[
P_{\by,\bv}(\bx)=\begin{cases}\bv,&\bx=\by,\\0^m,&\bx\neq\by.\end{cases}
\]
The family is given by $\{P_{\by,\bv}\}_{(\by,\bv)}$.

The construction of copy protection for point functions from UE, by Gheorghiu, Metger, and Poremba~\cite[Section~6]{GMP22}, needs a wrong-key detection mechanism that uses a quantum random oracle. In our construction, we give a short hash function in the plain instead. Since the underlying UE scheme has exponential security, for the specific distribution used in the security game, we are still able to get exponential security with this construction. However, note that security would likely fail if a stronger security game with adaptively-chosen inputs was considered.

We describe the copy protection security game directly for sequential DI protocols, since our security proof will be based on a reduction to DI UE, instead of a DD copy protection scheme. 
\begin{definition}[Copy-protection security]\label{def:copy-security}
The copy-protection security experiment for the family $\{P_{\by,\bv}\}$ is as follows.
\begin{enumerate}[leftmargin=*]
\item Alice samples independent uniform $\by\in\{0,1\}^{m}$ and $\bv\in\{0,1\}^m$ and runs the protection protocol with Bob. If Alice aborts, the experiment terminates.
\item On acceptance, Bob applies an arbitrary channel to divide the program and all his retained information between two decoders. They may share entanglement established before this operation, but cannot communicate afterwards.
\item Independently for $j=1,2$, Alice samples $X^{(j)}=\by$ with probability $1/2$, and otherwise samples $X^{(j)}$ uniformly from $\{0,1\}^{m}\setminus\{\by\}$. She gives $X^{(j)}$ only to decoder $j$.
\item Decoder $j$ returns $V^{(j)}\in\{0,1\}^m$. The bad event is
\[
\Bad_{\mathrm{copy}}=(V^{(1)}=P_{\by,\bv}(X^{(1)}))\land (V^{(2)}=P_{\by,\bv}(X^{(2)})).
\]
\end{enumerate}
This event is false on abort. The protocol is $\eps$-secure for this distribution if every allowed attack satisfies
\[
\Pr[\Acc\land\Bad_{\mathrm{copy}}]\leq\frac12\Pr[\Acc]+\eps.
\]
\end{definition}
We note that it is reasonable to use other distributions for the challenge inputs $X^{(j)}$ in the copy-protection security game. We have made the choice that gives us a reduction for UE security most easily.

\subsubsection{Protocol and security}
Our protocol for sequential DI copy-protection of multi-bit point functions is given in Protocol~\ref{prot:copy}, its correctness is proved in Lemma~\ref{lem:copy-correct}, and its security is proved in Theorem~\ref{th:copy}. The matrix $R$  used in the protocol specifies a public universal hash function.

\begin{prot}[!ht]
\caption{DI copy-protection of a multi-bit point function.}\label{prot:copy}

\textbf{Parameters.} The parameters of Protocol~\ref{prot:UE}, including an $s$-bit error-correction syndrome, and a hash length $\tau$.

To protect $P_{\by,\bv}$, Alice uses the UE key $\bk=\by$ and plaintext $\bmsg=\bv$.
\begin{enumerate}[leftmargin=*,itemsep=3pt]
\item Alice and Bob run the preparation phase of Protocol~\ref{prot:UE}, using the prescribed key $\bk$ in place of the key sampled in its first step. If Alice aborts, the protection protocol terminates.
\item On acceptance, Alice stows away her device and completes the encryption of $\bmsg$ in the UE protocol, whose classical part consists of $\bc_{\mathrm{msg}}=\bmsg\oplus\ba_E$ and the prescribed syndrome.
\item Alice samples a uniform binary $\tau\times m$ matrix $R$, independently of everything generated so far, and publishes $(R,\bz=R\bk)$ along with the classical ciphertext from the last step. Bob's copy-protected program is the quantum registers retained by him along with this classical data.
\item To evaluate the program at $\bx\in\{0,1\}^m$, Bob first checks whether $R\bx=\bz$. If not, he returns $0^m$. Otherwise, he applies the honest decryption procedure of Protocol~\ref{prot:UE} using $\bx$ as the decryption key and returns the recovered plaintext. For repeated evaluation, the evaluation procedure is implemented coherently, followed by copying its answer and uncomputing.
\end{enumerate}
\end{prot}

\begin{lemma}[Pointwise correctness]\label{lem:copy-correct}
Suppose the honest implementation of Protocol~\ref{prot:UE} has probability of preparation abort or decryption failure at most $\eps_{\mathrm{corr}}$, uniformly over every fixed key $\bk$ and plaintext $\bmsg$. Then, for every fixed function $P_{\by,\bv}$ and every fixed evaluation input $\bx$, Protocol~\ref{prot:copy} has probability of abort or incorrect evaluation at most $\eps_{\mathrm{corr}}+2^{-\tau}$.
\end{lemma}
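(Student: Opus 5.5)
The proof is a union bound over two failure modes. Fix $P_{\by,\bv}$ and $\bx$, so the UE key is $\bk=\by$ and the plaintext is $\bmsg=\bv$. Let $\mathsf{F}$ be the event that the underlying Protocol~\ref{prot:UE}, run with these fixed $\bk,\bmsg$, either aborts in preparation or fails honest decryption with the correct key $\bk$. By hypothesis $\Pr[\mathsf{F}]\leq\eps_{\mathrm{corr}}$, and this holds uniformly, so it applies in particular to $\bk=\by$ and $\bmsg=\bv$.

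The key observation is that the matrix $R$ is sampled in step 3, independently of everything before it. Hence $R$ is independent of $\mathsf{F}$, of the device behaviour, and of the fixed pair $(\bx,\by)$.

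\textbf{Case $\bx=\by$.} The hash check $R\bx=\bz=R\by$ always passes. Bob therefore runs the honest decryption of Protocol~\ref{prot:UE} with key $\by=\bk$. Outside $\mathsf{F}$ this returns $\bmsg=\bv=P_{\by,\bv}(\bx)$, so the failure probability is at most $\eps_{\mathrm{corr}}$.

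\textbf{Case $\bx\neq\by$.} The correct output is $0^m$. Outside $\mathsf{F}$, an error can only occur if the hash check passes, since otherwise Bob outputs $0^m$ by construction; and if the check passes he may output garbage. The check passes iff $R(\bx\oplus\by)=0$ over $\mathbb F_2$. Since $\bx\oplus\by\neq0$ and $R$ is a uniform $\tau\times m$ matrix, each row's inner product with this fixed nonzero vector is an independent uniform bit. So $\Pr_R[R(\bx\oplus\by)=0]=2^{-\tau}$. By the independence of $R$, a union bound gives total failure probability at most $\eps_{\mathrm{corr}}+2^{-\tau}$.

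It remains to address repeated evaluation. Step 4 says evaluation is done coherently, copying the answer and then uncomputing. For this lemma we only need the first evaluation on a fixed input, which matches the classical description above. If one wants repeated evaluations, a gentle-measurement argument would be needed, but it is not required for the statement as given.

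The only subtle point, and not really an obstacle, is making sure $\eps_{\mathrm{corr}}$ is used for the specific prescribed key and plaintext. The lemma's hypothesis already supplies uniformity over $\bk,\bmsg$, so this is covered.
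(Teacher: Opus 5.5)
Your proposal is correct and follows the paper's proof essentially step for step: the same case split on whether $\bx=\by$, the same row-by-row collision bound $\Pr_R[R(\bx\oplus\by)=0]=2^{-\tau}$ for the nonzero string $\bx\oplus\by$, and a union bound with $\eps_{\mathrm{corr}}$. One small remark: the union bound does not itself need $R$ to be independent of the preparation. Independence is only used to evaluate the collision probability for the fixed pair $\bx,\by$.
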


\begin{proof}
If $\bx=\by=\bk$, the hash check always succeeds. Correct UE decryption then returns $\bmsg=\bv$, as required.

If $\bx\neq\by$, each row of $R$ has a uniform inner product with the nonzero string $\bx\oplus\by$. Since the rows are independent,
\[
\Pr_R[R\bx=R\by]=2^{-\tau}.
\]
Outside this event, evaluation returns the correct value $0^m$ without acting on the quantum ciphertext. Combining this collision bound with the honest abort and decryption-error bound proves the claim.
\end{proof}

For independent noiseless honest devices, UE decryption is exact for every key, so $\eps_{\mathrm{corr}}$ can be taken to be the preparation-abort bound from Theorem~\ref{th:complete}. With honest generation noise, any error-correction procedure used here must satisfy the stated uniform correctness bound.

For a fixed list of $t$ evaluation inputs, the probability that any unmarked input passes the hash check is at most $t2^{-\tau}$. Outside this event, unmarked evaluations leave the quantum ciphertext unchanged. At the marked input, coherent evaluation, copying, and uncomputing preserve the honest program up to the gentle-measurement error. A hybrid argument therefore gives total error
\[
O(t\sqrt{\eps_{\mathrm{corr}}})+t2^{-\tau}.
\]
Note however that these statements concern inputs fixed independently of $R$; they do not assert correctness for inputs chosen adaptively from the published hash.

The following security proof is basically the same as \cite[Proposition~6.11]{GMP22}, with the only difference being we keep track of acceptance more explicitly.
\begin{theorem}[Copy-protection security]\label{th:copy}
Suppose Protocol~\ref{prot:UE} is $\delta_{\mathrm{UE}}$-search secure. Then, in the experiment of Definition~\ref{def:copy-security}, every attack against Protocol~\ref{prot:copy} satisfies
\begin{equation}\label{eq:copy-security}
\Pr[\Acc\land\Bad_{\mathrm{copy}}]
\leq\frac{p_\Acc}{2}+2^{\tau-1}\delta_{\mathrm{UE}},
\qquad p_\Acc=\Pr[\Acc].
\end{equation}
In particular, Theorem~\ref{th:UE} bounds the piracy advantage by
\[
2^{\tau-1+s/2}(2^k\omega^\ast)^{m/2}.
\]
The choice
\[
\tau=\floor{\frac{(-\log\omega^\ast-k)m-s}{4}}
\]
gives exponentially small piracy advantage and hash-collision error whenever $s/m+k$ stays below $-\log\omega^\ast$ by a positive constant.
\end{theorem}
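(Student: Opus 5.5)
The plan is a reduction from copy-protection piracy to the UE search game, in the style of \cite[Proposition~6.11]{GMP22}, with acceptance tracked as a joint event throughout. Fix an attack against Protocol~\ref{prot:copy}. Since $R$ is sampled after acceptance and independently of everything else, any pirate (Bob's splitting channel plus the two decoders) can be simulated inside the UE search experiment of Definition~\ref{def:UE-search}. The UE preparation runs with key $\bk$, and the uniform plaintext plays the role of $\bv$. Our adversary samples $R$ himself and guesses $\bz\in\{0,1\}^\tau$ uniformly, a guess that is correct with probability $2^{-\tau}$ independently of everything else. He then runs the pirate's split. On receiving $\bk$, each UE decoder hands $X^{(j)}=\bk$ to its copy-protection decoder and outputs $V^{(j)}$. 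When $\bz$ is guessed correctly, the simulated view equals the real copy-protection view in which both challenges are the marked input. This gives
\[
2^{-\tau}\Pr[\Acc\land W_{11}]\leq\delta_{\mathrm{UE}},
\]
where $W_{11}$ is the event that both decoders output $\bv$ when both challenges equal $\by$. Hence $\Pr[\Acc\land W_{11}]\leq 2^\tau\delta_{\mathrm{UE}}$.

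Next, decompose $\Pr[\Acc\land\Bad_{\mathrm{copy}}]$ over the four challenge cases $(\beta_1,\beta_2)\in\{0,1\}^2$, with $\beta_j=1$ meaning $X^{(j)}=\by$, each occurring with probability $1/4$. Write $p_{\beta_1\beta_2}=\Pr[\Acc\land\text{both correct}\mid\beta_1\beta_2]$. The standard GMP22 pairing argument uses no-signalling between the decoders. Decoder $1$'s marginal behaviour does not depend on decoder $2$'s challenge, so
\[
p_{10}+p_{00}\leq \Pr[\Acc\land\text{dec.~1 correct}\mid\beta_1=1]+\Pr[\Acc\land\text{dec.~1 correct}\mid\beta_1=0].
\]
Symmetrically for decoder $2$. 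The intended inequality is
\[
p_{00}+p_{01}+p_{10}+p_{11}\leq 2p_\Acc+2p_{11},
\]
which gives $\Pr[\Acc\land\Bad_{\mathrm{copy}}]\leq p_\Acc/2+p_{11}/2\leq p_\Acc/2+2^{\tau-1}\delta_{\mathrm{UE}}$.

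To prove this inequality, I would use inclusion–exclusion on the per-decoder success events. Within a fixed challenge pair, $\Pr[E_1\land E_2]\leq\Pr[E_1]$ and $\Pr[E_1\land E_2]\leq\Pr[E_2]$. I would also use the fact that on an unmarked input the correct answer is the fixed string $0^m$, while on the marked input it is $\bv$. The difficulty is that a decoder can trivially answer $0^m$ always, so the sum of its marked and unmarked marginal successes can reach $p_\Acc+\Pr[\text{outputs }\bv\text{ on }\by]$. Chaining these bounds as in GMP22 should yield $p_{01}+p_{10}+p_{00}-p_{11}\le 2p_\Acc$ or an equivalent rearrangement.

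I expect this combinatorial step to be the main obstacle: getting exactly $p_\Acc/2$ rather than something weaker. I would follow GMP22's proof verbatim, replacing every probability with its joint-with-$\Acc$ version, which is valid because acceptance is fixed before the challenges and $R$ are sampled. Substituting Theorem~\ref{th:UE}'s $\delta_{\mathrm{UE}}$ and the stated $\tau$ then gives the explicit exponents. The quarter-exponent choice of $\tau$ keeps $2^{\tau}$ well below $2^{-s/2}(2^k\omega^\ast)^{-m/2}$, while $2^{-\tau}$ is still exponentially small.
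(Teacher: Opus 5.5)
Your reduction to UE search is correct and matches the paper's. Simulating the pirate, sampling $R$ yourself and guessing the $\tau$-bit hash value costs a factor $2^{\tau}$, giving $p_{11}\leq 2^{\tau}\delta_{\mathrm{UE}}$. Your target inequality $p_{00}+p_{01}+p_{10}+p_{11}\leq 2p_\Acc+2p_{11}$ is also exactly what the paper establishes.

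The gap is that you never prove that inequality. You flag it as the main obstacle and defer to ``GMP22 verbatim'', and the chain you sketch does not close as written. Let $p_j$ be decoder~$j$'s accepting success probability on the marked input. Applying your pairing to both decoders (``symmetrically for decoder~2'') counts $p_{00}$ twice and gives only $p_{10}+p_{01}+2p_{00}\leq p_1+p_2+2p_\Acc$. Even after inclusion--exclusion, this falls short of the target by $p_\Acc-p_{00}$. Used for only one decoder, your pairing is harmless. Your worry about a decoder always answering $0^m$ is a red herring: the all-unmarked case needs nothing beyond the trivial bound.

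The missing step is to bound each challenge pattern separately and put all the work into the doubly-marked experiment. Write $\Pr_{\beta_1\beta_2}$ for the experiment with that challenge pattern.
\begin{itemize}
\item Bound $p_{00}\leq p_\Acc$ trivially.
\item Drop decoder~2's success to get $p_{10}\leq \Pr_{10}[\Acc\land V^{(1)}=\bv]$.
\item Identify this with $p_1=\Pr_{11}[\Acc\land V^{(1)}=\bv]$ by no-signalling. The challenges are sampled after the split, $\Acc$ is already fixed, and decoder~1 sees the same input $\by$ in both experiments.
\item Symmetrically, $p_{01}\leq p_2=\Pr_{11}[\Acc\land V^{(2)}=\bv]$.
\end{itemize}
Now $p_1$ and $p_2$ live in the same experiment, so inclusion--exclusion there gives
\[
p_1+p_2=\Pr_{11}\bigl[\Acc\land\bigl(V^{(1)}=\bv\lor V^{(2)}=\bv\bigr)\bigr]+p_{11}\leq p_\Acc+p_{11}.
\]
Summing the four bounds yields $p_{00}+p_{01}+p_{10}+p_{11}\leq 2p_\Acc+2p_{11}$. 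Hence $\Pr[\Acc\land\Bad_{\mathrm{copy}}]\leq p_\Acc/2+2^{\tau-1}\delta_{\mathrm{UE}}$, which is precisely the paper's argument.
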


\begin{proof}
Fix a piracy attack. We consider four experiments that use these same operations and differ only in the challenges given to the decoders. We denote their probabilities by
\[
\Pr_{\mathrm{MM}},\qquad
\Pr_{\mathrm{MU}},\qquad
\Pr_{\mathrm{UM}},\qquad
\Pr_{\mathrm{UU}},
\]
where the first and second letters specify decoder 1's and decoder 2's challenges, respectively, so $\mathrm{MU}$ means decoder $1$ receives the marked input and decoder $2$ receives an unmarked input. The decoders receive only their respective inputs, without being told which experiment is performed.

All four experiments have the same distribution of functions and the same operations before the challenges. In particular, their acceptance probability is the same $p_\Acc$. Since Definition~\ref{def:copy-security} chooses each challenge to be marked or unmarked independently with probability $1/2$, its success probability is the equally weighted average of the success probabilities in these four experiments.

In the experiment where both challenges are marked, we define
\begin{align*}
p_j
&=\Pr_{\mathrm{MM}}[\Acc\land (V^{(j)}=\bv)],
\qquad j\in\{1,2\},\\
p_{1,2}
&=\Pr_{\mathrm{MM}}[\Acc\land (V^{(1)}=\bv)\land (V^{(2)}=\bv)].
\end{align*}
Thus $p_j$ is an accepting marginal success probability, whereas $p_{1,2}$ requires both decoders to succeed.

We first bound $p_{1,2}$. With the hash omitted, the experiment where both challenges are marked is precisely the UE search experiment: the key $\bk=\by$ and plaintext $\bmsg=\bv$ are independent and uniform, and both decoders receive $\bk$ after the split. Sampling $\bmsg$ before preparation makes no difference, since preparation does not use it. The public matrix $R$ is independent randomness, and the classical guessing argument already used in Theorem~\ref{th:UE} accounts for the $\tau$-bit hash value. Consequently,
\[
p_{1,2}\leq2^\tau\delta_{\mathrm{UE}}.
\]

We next relate these probabilities to success in the other challenge experiments. In the marked/unmarked experiment, the required answers are $\bv$ and $0^m$. Therefore,
\begin{align*}
\Pr_{\mathrm{MU}}[\Acc\land\Bad_{\mathrm{copy}}]
&=\Pr_{\mathrm{MU}}[\Acc\land V^{(1)}=\bv\land V^{(2)}=0^m]\\
&\leq\Pr_{\mathrm{MU}}[\Acc\land V^{(1)}=\bv]
=p_1.
\end{align*}
For the last equality, we consider each fixed function and the joint state before the challenges. Decoder 1 receives the same input $\by$ in the MM and MU experiments and applies the same decoding operation. Since the decoders cannot communicate, changing decoder 2's input cannot change decoder 1's marginal output distribution. Averaging over functions and preparation outcomes gives the equality, including the acceptance event.

The same argument with the decoders interchanged gives
\begin{align*}
\Pr_{\mathrm{UM}}[\Acc\land\Bad_{\mathrm{copy}}]
 \leq p_2.
\end{align*}
For the two remaining experiments,
\[
\Pr_{\mathrm{UU}}[\Acc\land\Bad_{\mathrm{copy}}]\leq p_\Acc,
\qquad
\Pr_{\mathrm{MM}}[\Acc\land\Bad_{\mathrm{copy}}]=p_{1,2}.
\]

Finally, inclusion--exclusion within the MM experiment gives
\begin{align*}
p_1+p_2 =\Pr_{\mathrm{MM}}[\Acc\land((V^{(1)}=\bv) \lor (V^{(2)}=\bv))]
+p_{1,2}\leq p_\Acc+p_{1,2}.
\end{align*}
Combining the four challenge patterns, we obtain
\begin{align*}
\Pr[\Acc\land\Bad_{\mathrm{copy}}] \leq\frac14(p_\Acc+p_1+p_2+p_{1,2}) \leq \frac{p_\Acc}{2}+\frac{p_{1,2}}2 \leq \frac{p_\Acc}{2}+2^{\tau-1}\delta_{\mathrm{UE}},
\end{align*}
which proves \eqref{eq:copy-security}.

Substituting Theorem~\ref{th:UE} gives the stated piracy advantage. From there it is easy to see that the chosen hash output length satisfies
\[
2^\tau\delta_{\mathrm{UE}}
\leq2^{-\left[(-\log\omega^\ast-k)m-s\right]/4},
\qquad
2^{-\tau}\leq2^{1-\left[(-\log\omega^\ast-k)m-s\right]/4}.
\]
Thus both the piracy advantage and the hash-collision error decay exponentially.
\end{proof}

\subsection{Blind randomness expansion}\label{sec:blind}

Our final application will be to blind randomness expansion. The security condition for it is described below.

\begin{definition}[Blind randomness expansion security]\label{def:blind-security}
A protocol specifies Alice's trusted classical operations, her interaction with the device and Bob, an acceptance rule, and an output length $\ell$. Its security experiment is as follows.
\begin{enumerate}[leftmargin=*]
\item Bob supplies an arbitrary joint state of Alice's device and his own memory, and chooses the device instruments and his own operations. Alice's private random source is initially independent of this state.
\item Alice executes the protocol using her device, her private randomness and classical communication with Bob.
\item Alice either aborts, or accepts and outputs $L\in\{0,1\}^{\ell}$. Bob retains his complete final memory $M^\B_{\mathrm{fin}}$ and the public interaction transcript $C_{\mathrm{pub}}$.
\end{enumerate}
Let $R_{\mathrm{in}}$ record all of Alice's trusted random choices, including any postprocessing seed, and let $\Acc$ denote acceptance. The protocol has secrecy error $\eps_{\mathrm{sec}}$ if every allowed attack satisfies
\begin{equation}\label{eq:blind-secrecy}
\frac12\norm{\rho_{L R_{\mathrm{in}} C_{\mathrm{pub}} M^\B_{\mathrm{fin}}\land\Acc}-\frac{\id_L}{2^\ell}\otimes\rho_{R_{\mathrm{in}} C_{\mathrm{pub}} M^\B_{\mathrm{fin}}\land\Acc}}_1\leq\eps_{\mathrm{sec}}.
\end{equation}
Both states are subnormalized, of trace $p_\Acc=\Pr[\Acc]$. The conditioning registers include all communication, Bob's auxiliary information, and the input randomness whether or not it is disclosed. The completeness error for a specified honest implementation is its abort probability.
\end{definition}

\subsubsection{Protocol and security proof}

In the Magic Square game, independent uniform inputs $x,y\in\{0,1,2\}$ specify Alice's even-parity row $a$ and Bob's odd-parity column $b$. They win when $a[y]=b[x]$, so the predicate $\sfV$ outputs $1$ iff this is the case. The quantum value is $\omega^*=1$, and both output alphabets have size four.

\begin{prot}[!ht]
\caption{Sequential blind randomness expansion from Magic Square.}\label{prot:blind}

\textbf{Parameters.} The parameters $n, m, \ptest, \epstol$ from Protocol~\ref{prot:DI}, with $m=\lfloor(1-\gamma)n/2\rfloor$, and a quantum-proof strong seeded extractor $\operatorname{Ext}$ with output length $\ell$.
\begin{enumerate}[leftmargin=*,itemsep=3pt]
\item In round $i$, Alice samples independent uniform $x_i,y_i$, supplies $x_i$ to her device to get output $a_i$. She sends $y_i$ to Bob and gets $b_i$ back from him.
\item After receiving the outputs, Alice independently samples $U_i$, with $\Pr[U_i=\test]=\ptest$. On a test, she sets $c_i=\sfV(x_i,y_i,a_i,b_i)$ and uses dummy values for $k_i,t_i$. Otherwise, she sets $C_i=\gen$, samples $t_i$ uniformly from $\{0,1,2\}\setminus\{y_i\}$, and sets $k_i=a_i[t_i]$. She discards the unused coordinates of $a_i$.
\item Alice accepts if $\trnorm{\freq_{\bcS}-\sfP_C}\leq\epstol$ and there are at least $m$ generation positions; $E$ consists of the first $m$ such positions. Otherwise she aborts.
\item On acceptance, Alice stows away her device, samples an independent uniform extractor seed $d$, and outputs $L=\operatorname{Ext}(\bk_E,d)$.
\end{enumerate}
\end{prot}

Here $R_{\mathrm{in}}=\bX\bY\mathbf U\bT D$, $C_{\mathrm{pub}}=D\bX\bY\bB\bCS\bT$ on acceptance, and $M^\B_{\mathrm{fin}}=M^\B_{n+1}$. Since $U_i=\gen$ exactly when $C_i=\gen$, the side information in Definition~\ref{def:blind-security} is equivalent, up to redundant classical copies, to $DZ$, where $Z=M^\B_{n+1}\bX\bY\bB\bCS\bT$.
Thus \eqref{eq:blind-secrecy} becomes $\frac12\norm{\rho_{LDZ\land\Acc}-\tau_L\otimes\rho_{DZ\land\Acc}}_1\leq\eps_{\mathrm{sec}}$. Subsequent operations by Bob preserve this bound by data processing.

Robust self-testing for the Magic Square game was established in~\cite{WBM+16}. We use the projector formulation below.
\begin{fact}[Magic Square projector self-testing, {\cite[Lemma~5 and Appendix~B]{KST22}}]\label{cor:MS-joint}
Let $T_a^x,N_b^y$ be the rank-one projectors of a fixed perfect Magic Square strategy on two EPR pairs $\ket\phi_{S^\A S^\B}$, with $|S^\A|=|S^\B|=4$. For every pair of projective measurement families $H_a^x,K_b^y$, there are local extraction isometries $V^\A,V^\B$, depending only on these families, such that every purified state $\ket\rho_{RM^\A M^\B}$ with deficit $\delta$ admits a normalized $\ket\tau_{RJ^\A J^\B}$ satisfying
\begin{equation}\label{eq:MS-joint}
\frac19\sum_{x,y,a,b}\norm{(\id_R\otimes V^\A\otimes V^\B)H_a^x\otimes K_b^y\ket\rho-(T_a^x\otimes N_b^y)\ket\phi\ket\tau}_2^2\leq\vanf_{\mathrm{MS}}(\delta)^2,
\end{equation}
where $\vanf_{\mathrm{MS}}(\delta)=\min\{2,C_{\mathrm{MS}}\delta^{1/4}\}$ for an absolute constant $C_{\mathrm{MS}}$. The isometries leave $R$ unchanged.
\end{fact}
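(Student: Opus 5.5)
The approach is to reproduce the standard robust self-testing argument for the Magic Square game (as in~\cite{WBM+16}), with two changes. First, the extraction isometries are built only from the measurement operators, via the usual SWAP/Pauli-group construction. Second, the final estimate is expressed directly for the joint projector-applied vectors, averaged over the uniform inputs, as in \eqref{eq:MS-joint}.

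The trivial branch of $\vanf_{\mathrm{MS}}$ comes first. For each fixed $(x,y)$, the vectors $(V^\A\otimes V^\B)(H_a^x\otimes K_b^y)\ket\rho$ are mutually orthogonal and their squared norms sum to $1$ over $a,b$; the same holds for $(T_a^x\otimes N_b^y)\ket\phi\ket\tau$. Hence, for each $(x,y)$,
\[
\sum_{a,b}\norm{u_{ab}-w_{ab}}_2^2\leq\Bigl(\bigl(\textstyle\sum\norm{u_{ab}}^2\bigr)^{1/2}+\bigl(\textstyle\sum\norm{w_{ab}}^2\bigr)^{1/2}\Bigr)^2=4,
\]
and averaging over $(x,y)$ bounds the left-hand side of \eqref{eq:MS-joint} by $4=2^2$. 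So it suffices to prove the bound $C_{\mathrm{MS}}^2\sqrt\delta$.

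For that bound I would proceed in five steps.

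\emph{Step 1: exact-winning consequences hold approximately.} Since the inputs are uniform, each input pair $(x,y)$ loses with probability at most $9\delta$. From the $\pm1$ observables $A^x_j=\sum_a(-1)^{a[j]}H^x_a$ and $B^y_j=\sum_b(-1)^{b[j]}K^y_b$, we get the state-dependent consistency relation $\norm{(A^x_y\otimes\id-\id\otimes B^y_x)\ket\rho}_2=O(\sqrt\delta)$.

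\emph{Step 2: approximate Pauli relations.} Using the parity constraints, which are exact operator identities because the row and column observables come from a single projective measurement, and moving operators across the tensor product with Step~1, derive approximate anticommutation on $\ket\rho$ for the two ``$X$-type'' and two ``$Z$-type'' observables on each side.

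\emph{Step 3: the isometries.} Regularize the approximately anticommuting observables (e.g.\ take signs of their symmetrized versions) so they are unitary, and plug them into the two-qubit-pair SWAP circuit. This defines $V^\A$ and $V^\B$ purely as functions of $\{H^x_a\}$ and $\{K^y_b\}$, and makes it evident that they act trivially on $R$.

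\emph{Step 4: state extraction.} Show $\norm{(V^\A\otimes V^\B)\ket\rho-\ket\phi\ket\tau}=O(\delta^{1/4})$. The fourth root arises from the unitarization and from the state-dependent norm conversions. Then upgrade this to the individual operators: the Pauli products, and hence all nine observables on each side, are close on the state to their ideal images.

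\emph{Step 5: joint projectors.} Express $H^x_a\otimes K^y_b$ as a polynomial of bounded degree in these observables. A telescoping triangle inequality, using that each intermediate factor has operator norm at most $1$, yields per-term errors $O(\delta^{1/4})$. Squaring and averaging gives the constant $C_{\mathrm{MS}}^2\sqrt\delta$.

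I expect the main obstacle to be Steps~3--4: obtaining the state-independence of the isometries together with the $\delta^{1/4}$ rate. Standard proofs sometimes choose the regularization with knowledge of the state, or control errors only in state-dependent norms. The key point to check is that every regularization used is a fixed function of the operators, and that all error propagation happens through vectors $O\ket\rho$, where Step~1 applies. I would first try to mimic~\cite[Appendix~B]{KST22}, which already tracks projector-applied vectors. If the error in a step is quadratic in an intermediate one, the final exponent would be worse than $1/4$; in that case I would accept a worse vanishing function, which suffices for all uses in this paper since only vanishing is needed.
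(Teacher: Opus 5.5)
Your outline is sound, but it takes a much longer route than the paper. The paper does not re-derive Magic Square rigidity at all. It imports from \cite{KST22} (Lemma~5 and Appendix~B) three things at rate $\delta^{1/4}$: isometries that depend only on the measurements, closeness of the extracted state, and closeness of each party's projector-applied vectors. It then obtains the joint form \eqref{eq:MS-joint} by a few triangle inequalities. With $\widetilde H=V^\A H^x_a(V^\A)^\dagger$, which has norm at most $1$, one writes $(V^\A\otimes V^\B)(H^x_a\otimes K^y_b)\ket\rho=(\widetilde H\otimes\id)(V^\A\otimes V^\B)(\id\otimes K^y_b)\ket\rho$. One then replaces the last vector by $(\id\otimes N^y_b)\ket\phi\ket\tau$, commutes $N^y_b$ past $\widetilde H$ (they act on different parties), and trades $\ket\phi\ket\tau$ back for $(V^\A\otimes V^\B)\ket\rho$. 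Alice's per-party bound finishes the argument, with the outcome count absorbed into $C_{\mathrm{MS}}$. Your Step~5 is essentially this conversion, while your Steps~1--4 re-prove the cited theorem. Your route buys self-containment, and it makes the measurement-only dependence of the isometries manifest, since the SWAP circuit is built from the observables alone. The paper's route buys brevity, taking both the rate and the state-independence from the reference. Your handling of the trivial branch (the left-hand side never exceeds $4$) is correct; the paper leaves it implicit.

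There are two points to tighten. First, in Step~5 the norm-at-most-$1$ telescoping works only when the operator being replaced and the factor left standing act on opposite parties. Take a same-party product such as $\frac14(\id+s_0A^x_0)(\id+s_1A^x_1)$ with $s_j=(-1)^{a[j]}$. Replacing $A^x_1$ by its ideal image $\hat A^x_1$ leaves $(\widetilde A^x_0-\hat A^x_0)\hat A^x_1\ket\phi\ket\tau$, where $\widetilde A^x_0=V^\A A^x_0(V^\A)^\dagger$. This is not of the form your single-observable bounds control unless you first move $\hat A^x_1$ to Bob's half of $\ket\phi$. It is simpler to use the exact parity identities $A^x_0A^x_1=A^x_2$ and $B^y_0B^y_1=-B^y_2$. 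These make $H^x_a=\frac14(\id+s_0A^x_0+s_1A^x_1+s_0s_1A^x_2)$, and likewise $K^y_b$, affine in the row and column observables. Then only cross-party products $A^x_j\otimes B^y_k$ ever appear.

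Second, your fallback of accepting a worse vanishing function would not prove the Fact as stated, since the statement fixes the form $\min\{2,C_{\mathrm{MS}}\delta^{1/4}\}$. It should not be needed, though. Unlike CHSH, there is nothing to unitarize here: every $A^x_j$ and $B^y_k$ is already a $\pm1$-valued observable, so the ``signs of symmetrized versions'' step is vacuous. The $O(\sqrt\delta)$ consistency errors from Step~1 then propagate linearly through a bounded number of replacements, so you should get a norm error of order $\sqrt\delta\leq\delta^{1/4}$. The usual way state dependence creeps into such proofs is by restricting the operators to the support of the reduced state, which you should simply avoid.
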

The cited result bounds the state and each party's projectors. The joint-projector form above follows by commuting operators on opposite parties and applying the triangle inequality; the fixed number of outcomes is absorbed into $C_{\mathrm{MS}}$.

In addition, we will need the following two well-known results for the security proof.
\begin{fact}[Local randomness in Magic Square, {\cite[Corollary~3]{FM18}}]\label{fact:MS-local}
Suppose a quantum strategy wins the Magic Square game with probability $1-\delta$ on independent uniform inputs. For fixed $x,y,t\in\{0,1,2\}$ with $t\neq y$, Alice and Bob measure on inputs $x,y$ and record their complete outputs $A,B$. Let $M^\B$ contain all of Bob's remaining quantum systems, including his auxiliary information, and let $\rho^{x,y,t}_{KBM^\B}$ be the resulting state with $K=A_t$. Then
\[
p_{\mathrm{guess}}(K\mid BM^\B)_{\rho^{x,y,t}}\leq\frac12+9\sqrt\delta.
\]
The guessing measurement may depend on $x,y,t$ and the recorded output $B$. In particular, for a perfect strategy,
\[
\rho^{x,y,t}_{KBM^\B}=\frac{\id_K}{2}\otimes\rho^{x,y,t}_{BM^\B}.
\]
\end{fact}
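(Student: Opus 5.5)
The plan is to prove this directly from approximate operator relations forced by near-optimal play, rather than through the extraction isometries of Fact~\ref{cor:MS-joint}. Those isometries only give a $\delta^{1/4}$ dependence, whereas the statement asks for $\sqrt\delta$. The reduction is as follows. Fix $x,y,t$ with $t\neq y$, and write $A_{x,t}=\sum_a(-1)^{a[t]}H^x_a$ for Alice's $\pm1$ observable for cell $(x,t)$. Let $\{G^{b}_k\}_k$ be Bob's guessing POVM, which may depend on $b$ and acts on $M^\B$ after the column-$y$ measurement. Form the Bob-side observable $\widehat G=\sum_b K^y_b\bigl(G^b_0-G^b_1\bigr)K^y_b$, where $K^y_b$ here includes the stored-output dilation, so that Bob's full post-measurement instrument is absorbed. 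By construction $\|\widehat G\|\leq1$, $\widehat G$ commutes with every $K^y_b$, and
\[
p_{\mathrm{guess}}=\frac12+\frac12\bra\rho A_{x,t}\otimes\widehat G\ket\rho .
\]
It therefore suffices to show $|\bra\rho A_{x,t}\otimes\widehat G\ket\rho|\leq 18\sqrt\delta$.

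The key ingredient is a second Alice observable that approximately anticommutes with $A_{x,t}$ but can be transferred to Bob's column-$y$ measurement. Pick $x'\neq x$ and take $O=A_{x',y}$, the observable for cell $(x',y)$. In the ideal Pauli square, cells that share neither a row nor a column anticommute, and this is the relation I need to establish robustly. The standard consistency argument gives three facts. First, winning with probability $1-\delta$ on uniform inputs gives $\|(A_{r,c}\otimes\id-\id\otimes B_{r,c})\ket\rho\|\leq O(\sqrt\delta)$ for every cell, with Bob's $B_{r,c}$ read from column $c$. Second, the row and column parity constraints hold exactly. Third, combining these, $\|(A_{x,t}O+OA_{x,t})\otimes\id\ket\rho\|\leq O(\sqrt\delta)$. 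I would prove the third fact the way Fawzi--Mehta do: express the product of two row-$x$ cells through Bob's column observables and the parity relations, then move operators across the cut one at a time, paying $O(\sqrt\delta)$ in state-dependent norm each time.

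With these relations the computation runs as follows. Since $O\otimes\id\approx\id\otimes B_{x',y}$ on $\ket\rho$ and $B_{x',y}$ is a function of $\{K^y_b\}$, the operator $B_{x',y}$ commutes with $\widehat G$. Starting from $O^2=\id$, write
\[
\bra\rho A_{x,t}\widehat G\ket\rho=\bra\rho O\,O A_{x,t}\widehat G\ket\rho .
\]
Move one copy of $O$ to Bob as $B_{x',y}$, commute it past $\widehat G$, and move it back to Alice. Then use approximate anticommutation to swap $O$ and $A_{x,t}$. This yields $\bra\rho A_{x,t}\widehat G\ket\rho\approx-\bra\rho A_{x,t}\widehat G\ket\rho$, with total error a constant multiple of $\sqrt\delta$. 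Tracking the constants through each transfer should give the stated $9\sqrt\delta$ after halving.

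For the perfect case $\delta=0$, all these relations become exact on the support of $\rho$. Hence $\bra\rho A_{x,t}\otimes Y\ket\rho=0$ for every bounded $Y$ on Bob's side that commutes with $K^y$, and in particular for every operator block-diagonal in $b$. This forces the $K$-marginal conditioned on $BM^\B$ to be uniform and in product, which is the displayed state identity. The main obstacle is the robust anticommutation step with explicit constants, since it requires care with the dilated, non-commuting instruments. If the constants prove messy, I would simply defer that step to \cite{FM18}.
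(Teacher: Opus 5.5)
Your proposal is correct in outline, but it takes a genuinely different route from the paper. The paper does not prove this statement: it imports it from \cite[Corollary~3]{FM18}, and the only part it ever uses is the $\delta=0$ product identity, for the ideal replacement state in the proof of Theorem~\ref{th:blind}. You instead give a self-contained monogamy-via-anticommutation argument. You write $p_{\mathrm{guess}}=\frac12+\frac12\bra\rho A_{x,t}\otimes\widehat G\ket\rho$, with $\widehat G$ block-diagonal in Bob's column-$y$ outcome after pulling the guess back through his post-measurement channel. The chain $\bra\rho OOA_{x,t}\widehat G\ket\rho\approx\bra\rho OA_{x,t}O\widehat G\ket\rho\approx-\bra\rho OOA_{x,t}\widehat G\ket\rho$ works in exactly the order you describe: one copy of $O$ goes to Bob through the bra, past $\widehat G$ (legitimate because $B_{x',y}$ commutes with $\widehat G$), back onto the ket, and is then anticommuted. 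Your exact $\delta=0$ argument, applied to operators $K^y_b W' K^y_b$ with $W'$ the pull-back of an arbitrary operator on $M^\B$, correctly yields the product form. So for everything the paper actually needs, your route is complete and more transparent than the citation; what the citation buys is only the explicit constant.

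On that constant, be careful. Tracking transfers with the worst-case per-cell bound $\norm{(A_{r,c}-B_{r,c})\ket\rho}_2=2\sqrt{\delta_{r,c}}\leq6\sqrt\delta$ does not reach $9\sqrt\delta$: the derivation below would then give only about $\frac12+19.5\sqrt\delta$. You must use the averaged budget $\sum_{r,c}\delta_{r,c}=9\delta$ together with Cauchy--Schwarz.

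Concretely, take $(x,t)=(0,0)$ and $(x',y)=(1,1)$. Compare $A_{2,2}\ket\rho=A_{2,0}A_{2,1}\ket\rho$, pushed through Bob's columns $0$ and $1$, against $A_{2,2}\ket\rho\approx B_{2,2}\ket\rho$, pushed through column $2$. This gives $\norm{(A_{0,0}A_{1,1}+A_{1,1}A_{0,0})A_{0,1}\ket\rho}_2\leq\sum_{r,c}2\sqrt{\delta_{r,c}}$. Removing $A_{0,1}$ costs $4\sqrt{\delta_{0,1}}$, and the two transfers of $O$ cost $4\sqrt{\delta_{1,1}}$. Cauchy--Schwarz with weights $(3,3,1,\ldots,1)$ then yields $p_{\mathrm{guess}}\leq\frac12+\frac{15}{2}\sqrt\delta$, which is within the stated bound.
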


\begin{fact}[Quantum leftover hashing, {\cite[Proposition~7.7]{Tom16}}]\label{fact:blind-hashing}
Let $\rho_{\bK Z}$ be a normalized classical--quantum state with $\bK\in\{0,1\}^m$, and let $D$ be an independent uniform seed selecting a two-universal family of functions $h_d:\{0,1\}^m\to\{0,1\}^{\ell}$. Thus, for distinct $\bk,\bk'$, we have $\Pr_D[h_D(\bk)=h_D(\bk')]\leq2^{-\ell}$. If $L=h_D(\bK)$ and $\tau_L=2^{-\ell}\id_L$, then
\[
\frac12\norm{\rho_{LDZ}-\tau_L\otimes\rho_{DZ}}_1\leq\frac12\sqrt{2^\ell p_{\mathrm{guess}}(\bK\mid Z)_\rho}.
\]
Here $Z$ may contain both classical and quantum side information, and the secrecy bound includes the seed $D$.
\end{fact}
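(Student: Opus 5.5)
The plan is the standard collision-entropy (``$L_2$'') argument for leftover hashing, with the min-entropy bound supplying the weighting operator. Write $\rho_{\bK Z}=\sum_{\bk}\pure{\bk}\otimes\rho^{\bk}_Z$ with subnormalized blocks $\rho^{\bk}_Z$, and let $\lambda=p_{\mathrm{guess}}(\bK\mid Z)_\rho$. First I invoke the operational characterization of conditional min-entropy for classical $\bK$, namely $p_{\mathrm{guess}}=2^{-H_{\min}(\bK|Z)}$. This gives a normalized state $\sigma_Z$ such that $\rho^{\bk}_Z\leq\lambda\,\sigma_Z$ for every $\bk$. I perturb $\sigma_Z$ slightly to have full support and later take a limit, which removes all pseudoinverse and support issues.

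Next, fix the seed $D=d$, write $\rho^{d}_{LZ}$ for the state with $L=h_d(\bK)$, and bound the trace norm by a weighted Hilbert--Schmidt norm. For any Hermitian $X_{LZ}$ and positive definite $\Gamma=\id_L\otimes\sigma_Z$, Cauchy--Schwarz gives
\[
\norm{X}_1\leq\sqrt{\tr{\Gamma}}\,\norm{\Gamma^{-1/4}X\Gamma^{-1/4}}_2=\sqrt{2^{\ell}}\,\norm{\Gamma^{-1/4}X\Gamma^{-1/4}}_2 .
\]
I apply this to $X^d=\rho^d_{LZ}-\tau_L\otimes\rho_Z$. Since the seed is uniform and independent, the target distance equals $\expval_d\frac12\norm{X^d}_1$. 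Jensen's inequality moves the expectation inside the square root, so it suffices to show
\[
\expval_d\,\tr{X^d\Gamma^{-1/2}X^d\Gamma^{-1/2}}\leq\lambda .
\]

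The key computation is expanding this quadratic form. Write $c(\bk,\bk')=\tr{\rho^{\bk}_Z\sigma_Z^{-1/2}\rho^{\bk'}_Z\sigma_Z^{-1/2}}$.
\begin{itemize}
\item The collision term equals $\expval_d\sum_{\bk,\bk'}\mathbf 1[h_d(\bk)=h_d(\bk')]\,c(\bk,\bk')$.
\item Each $c(\bk,\bk')\geq0$, since it is the trace of a product of two positive operators, $\sigma^{-1/4}\rho^{\bk}\sigma^{-1/4}$ and $\sigma^{-1/4}\rho^{\bk'}\sigma^{-1/4}$.
\item The two cross terms and the $\tau_L\otimes\rho_Z$ term combine to exactly $-2^{-\ell}\sum_{\bk,\bk'}c(\bk,\bk')$.
\end{itemize}
By two-universality, the collision term is at most $\sum_{\bk}c(\bk,\bk)+2^{-\ell}\sum_{\bk\neq\bk'}c(\bk,\bk')$. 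After the subtraction, what remains is at most the diagonal sum $\sum_{\bk}c(\bk,\bk)$. Using $\rho^{\bk}\leq\lambda\sigma$, the operator-inequality trick already used in the proof of Lemma~\ref{lem:depol} (from $B\leq cA$ one gets $BA^{-1}B\leq cB$) yields $c(\bk,\bk)\leq\lambda\tr{\rho^{\bk}}$. Summing over $\bk$ gives a total of at most $\lambda$, which is the claimed $\frac12\sqrt{2^\ell\lambda}$ bound after the full-support limit.

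The step I expect to need the most care is this bookkeeping of cross terms. The cancellation must be exact and must use only $\Pr_D[\text{collision}]\leq2^{-\ell}$ for $\bk\neq\bk'$. This works only because every $c(\bk,\bk')$ is nonnegative, so replacing the collision probability by its upper bound is legitimate. A secondary technical point is justifying $p_{\mathrm{guess}}=2^{-H_{\min}}$ together with the existence of an optimal $\sigma_Z$. I will take this from the standard SDP duality for guessing probability with classical $\bK$, rather than reprove it.
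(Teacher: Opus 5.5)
Your argument is correct. The paper gives no proof of its own, only the pointer to \cite{Tom16}, and what you wrote is the standard sandwiched-collision-entropy proof of leftover hashing behind that reference: the weighted H\"older bound with $\Gamma=\id_L\otimes\sigma_Z$, the exact cancellation of cross terms using $c(\bk,\bk')\geq0$ and two-universality, and $H_{\min}=-\log p_{\mathrm{guess}}$.

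One step needs a small repair. The trick $B\leq cA\Rightarrow BA^{-1}B\leq cB$, with $B=\rho^{\bk}_Z$ and $A=\sigma_Z$, bounds $\tr{(\rho^{\bk}_Z)^2\sigma_Z^{-1}}$, not $c(\bk,\bk)$. The bound you need is still immediate: $c(\bk,\bk)=\tr{\rho^{\bk}_Z\,\sigma_Z^{-1/2}\rho^{\bk}_Z\sigma_Z^{-1/2}}$, and $\sigma_Z^{-1/2}\rho^{\bk}_Z\sigma_Z^{-1/2}\leq\lambda\id$, so $c(\bk,\bk)\leq\lambda\tr{\rho^{\bk}_Z}$. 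Alternatively, keep the trick and add $c(\bk,\bk)\leq\tr{(\rho^{\bk}_Z)^2\sigma_Z^{-1}}$, which follows from Cauchy--Schwarz for the Hilbert--Schmidt inner product.
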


There is no explicit device-dependent theorem we can use a comparison to, in order to get the security of Protocol~\ref{prot:blind}, so the security below is derived using the construction from Protocol~\ref{prot:DI} in a non-blackbox way. We have to explicitly identify what the real and ideal channels are for Protocol~\ref{prot:blind}, and then we can bound their comparison exponent the same way as in Section~\ref{sec:DI}.

\begin{theorem}[Sequential blind randomness expansion security]\label{th:blind}
For every $0<k<1$, there are constants $0<\ptest<1$, $\epstol>0$, independent of $n$, such that Protocol~\ref{prot:blind}, for $(1-\ptest)n\geq4$, satisfies
\begin{equation}\label{eq:blind-guess}
p_\Acc\,p_{\mathrm{guess}}(\bK_E\mid Z)_{\realstate_{|\Acc}}\leq2^{-(1-k)m/2}.
\end{equation}
With an independent uniform seed selecting a two-universal hash $\{0,1\}^m\to\{0,1\}^{\ell}$, its secrecy error satisfies
\[
\eps_{\mathrm{sec}}\leq\frac12\,2^{[\ell-(1-k)m/2]/2}.
\]
For independent honest implementations using the same strategy with $\ptest\dhon<\epstol$, the completeness error is exponentially small.
\end{theorem}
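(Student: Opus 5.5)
We follow the template of Section~\ref{sec:DI} in a non-blackbox way. The first step is to model each round of Protocol~\ref{prot:blind} as a pair of channels $\mapE_i,\mapF_i$ satisfying Postulate Set~\ref{posts:seq}.

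\emph{Channel identification.} The real channel $\mapE_i$ samples $x_i,y_i$ and applies both devices' instruments. It then samples $U_i$ (independently of the outputs) and, on generation, samples $t_i\neq y_i$ and records $k_i=a_i[t_i]$. We set $C_i=\gen$ or the game outcome. The protected registers are $O_i=U_iT_iK_i$ together with Alice's records. Bob-side records $X_iY_iB_iC_iT_i$ are copied into the continuing memory so that they can end up in $Z$.

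The ideal channel $\mapF_i^\pert$ agrees with $\mapE_i$ on test rounds. On generation rounds it follows the recipe of Section~\ref{subsec:realandideal}, except that now both parties' measurements are extracted. We apply $V^\A_i\otimes V^\B_i$ (from Fact~\ref{cor:MS-joint}) and replace the $S_i=A_iB_iS^\A_iS^\B_i$ system by a depolarized copy of the ideal post-measurement state for inputs $(x_i,y_i)$. We then invert with the recovery channels and apply the real continuations.

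Because $U_i$ is sampled after the outputs, the generation branch is the same device action as the test branch. So Lemma~\ref{lem:round} (with $|S|$ now including both outcome registers) applies verbatim using the joint-projector bound~\eqref{eq:MS-joint}. It gives the domination constant $|S|^2/\pert$ and the bound $\Dsharp_2\leq\log a_\pert^{-1}+\vanf_{\pert,\mathrm{MS}}(\delta)$. Lemma~\ref{lem:compexp-sum} and Theorem~\ref{th:small-rate} then go through unchanged. Taking $k'$ proportional to $k$, they yield $\ptest,\epstol,\pert$ with $\compexp{2}\leq k m/2$ (say), hence
\[
D_2(\realstate_{|\Acc}\Vert\idealstate)\leq \tfrac{k m}{2}+2\log\tfrac1{p_\Acc}.
\]

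\emph{Ideal guessing probability.} Next we show $p_{\mathrm{guess}}(\bK_E\mid Z)_{\idealstate}\leq c_\pert^{m}$ with $c_\pert\to1/2$ as $\pert\to0$. In each generation round of the ideal process, conditioned on all of Bob's side information up to that round, the injected state is $\omega^{\pert}$. By Fact~\ref{fact:MS-local} for the perfect strategy, $K_i=A_i[t_i]$ is exactly uniform given $B_i$ and the rest of Bob's systems, since $S^\B$ and the junk $J$ factor out of the tensor product. The depolarizing part $\id/|S|$ also gives a uniform bit. So each generation key bit is uniform and independent of everything Bob holds at that time, including $\bK_{<i}$-dependent memory. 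This holds because later rounds act only on Bob's and the device's memory, and the ideal key bit is freshly decoupled. A round-by-round argument as in Lemma~\ref{lem:DD}, applied to a guessing strategy, then gives $p_{\mathrm{guess}}(\bK_E\mid Z)_{\idealstate}\leq 2^{-m}$, at least for the first $m$ generation rounds selected by a rule depending only on $\mathbf U$.

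\emph{Transfer to the real state.} The guessing probability is $\tr{H\rho}$ for the optimal guessing measurement on $Z$ with outcome compared to $\bK_E$. Pad by applying the selection of $E$ through a fixed channel $\mapM$ as in Section~\ref{sec:DI}. We choose the optimal guessing effect $H$ for $\realstate_{|\Acc}$ and apply Lemma~\ref{lem:probability}(i) to that same $H$:
\[
p_{\mathrm{guess}}(\bK_E\mid Z)_{\realstate_{|\Acc}}\leq \frac{2^{km/4}}{p_\Acc}\,2^{-m/2}.
\]
Multiplying by $p_\Acc$ gives~\eqref{eq:blind-guess}, after absorbing constants by shrinking $k'$ (we need $2^{km/4}\leq 2^{km/2}$, which holds). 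Fact~\ref{fact:blind-hashing} applied to the subnormalized accepted state then gives the secrecy error $\tfrac12\sqrt{2^\ell p_\Acc\, p_{\mathrm{guess}}}$ after rescaling. Completeness is Theorem~\ref{th:complete}.

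The main obstacle is the ideal guessing bound. We must check carefully that the depolarized replacement keeps each $K_i$ independent of Bob's \emph{future} memory, given that $\mapF_i$ feeds the junk register back into the continuations. We also must check that $t_i\neq y_i$ interacts correctly with the replaced $B_i$. This is where Fact~\ref{fact:MS-local} must be used in its perfect-strategy product form.
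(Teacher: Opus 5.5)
Your channel identification, the single-round bound via~\eqref{eq:MS-joint}, and the transfer via Lemma~\ref{lem:probability}(i) all match the paper. The ideal-state guessing step, however, has a genuine gap.

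The side information $Z$ contains the test records $\bCS$. For a test round $j>i$, the public bit $C_j=\sfV(x_j,y_j,a_j,b_j)$ is a function of \emph{Alice's} device output $a_j$. In the ideal generation channel, Alice's continuation $\mapL_i^{\A,x_i}$ is classically controlled by the replaced record $A_i$, so $M^\A_{i+1}$ may store the row $a_i$. Since Bob holds $x_j,y_j,b_j$, the record $c_j$ reveals $a_j[y_j]$ to him. If the device outputs $a_j=a_i$ and $y_j=t_i$ (which Bob sees, since $\bY,\bT$ are in $Z$), then $c_j$ reveals $K_i$ exactly. So your claim that ``later rounds act only on Bob's and the device's memory'' does not decouple $\bK_E$ from $Z$. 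Without restricting to acceptance, $p_{\mathrm{guess}}(\bK_E\mid Z)_{\idealstate}$ can be close to $1$, so the bound $2^{-m}$ is false. A round-by-round reduction ``as in Lemma~\ref{lem:DD}'' is not available either: that lemma traces out $C_i$ on test rounds precisely because $C_i$ is not a Bob-local register.

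The missing idea is the paper's two-step argument.
\begin{enumerate}
\item Condition on $\mathbf U=\mathbf u$ and prove exact uniformity of $\bK_E$ only with $\bCS$ traced out. Your induction is valid there.
\item Bound the leakage through $\bCS$ on accepted runs only. Since $\omega^\ast=1$ for Magic Square, $\sfP_C(0)=0$, so acceptance allows at most $\lfloor\epstol n\rfloor$ losing tests. For fixed $\mathbf u$ a record is determined by its losing set, so there are at most $2^{n\binh(\epstol)}$ accepted records. An operator inequality for each record then gives $\Pr_{\idealstate}[\Acc\land\widehat{\bK}_E=\bK_E]\leq 2^{-m+n\binh(\epstol)}$.
\end{enumerate}
Two adjustments follow from this. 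First, the effect you feed into Lemma~\ref{lem:probability}(i) must include the acceptance check, so that the ideal side is this joint probability. Second, you need an extra parameter constraint: $\binh(\epstol)\leq k(1-\ptest)/8$ with $\epstol\leq 1/2$. You can impose it after Theorem~\ref{th:small-rate} by shrinking the tolerance, which does not increase the comparison exponent, and it ensures $n\binh(\epstol)\leq km/2$. Only then does the exponent come out as $[km/2-m+km/2]/2=-(1-k)m/2$. This also explains why the theorem's rate carries $(1-k)$ rather than the $(1-k/2)$ your version would have produced.
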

\begin{proof}
Using the setup of Section~\ref{subsec:realandideal}, the real round channel $\mapE_i$ for Protocol~\ref{prot:blind} has continuing memory $M_i^\A M_i^\B$, test record $C_i$, and protected output $O_i=X_iY_iB_iT_iK_i$. The ideal channel is constructed as in that section, now replacing the register $S_i=A_iB_iS_i^\A S_i^\B$, of dimension $256$, with the state
\[
\omega_{S_i}^{\pert,x_i,y_i}=(1-\pert)\sum_{a_i,b_i}\pure{a_i,b_i}_{A_iB_i}\otimes(T_{a_i}^{x_i}\otimes N_{b_i}^{y_i})\pure\phi(T_{a_i}^{x_i}\otimes N_{b_i}^{y_i})+\pert\frac{\id_{S_i}}{256}
\]
in generation rounds. The replacement retains the actual joint marginal on the reference and $J_i^\A J_i^\B$. After the recovery channels and local continuations, here $\mapF_i^\gen$ preserves the replacement records $A_iB_i$, samples $T_i$, computes $K_i$ from $A_i$, discards $A_i$, and sets $C_i=\gen$, as in the real channel.

Equation~\eqref{eq:MS-joint} gives the self-testing bound for the coherently recorded joint outcomes. We follow the proof of Lemma~\ref{lem:round} in Appendix~\ref{app:sharpdivbounds}, now averaging over the input pair $(x_i,y_i)$ and using the outcome registers $A_iB_i$, so that $S_i=A_iB_iS_i^\A S_i^\B$ has dimension $256$. Lemma~\ref{lem:depol} and the same concavity argument bound the divergence before postprocessing. Pinching both outcome registers, applying the recovery channels, and performing the remaining generation-round operations then give the bounds in Lemma~\ref{lem:round}, with $|S|=256$ and $\vanf_{\mathrm{MS}}$ in place of $\vanf_{\game}$.
In particular, $\mapE_i^\gen\leq(256^2/\pert)\mapF_i^\gen$. We set $k'=k/8$ and choose any $0<\pert<1-2^{-k/32}$. Applying Theorem~\ref{th:small-rate} with these $k'$ and $\pert$, and decreasing the tolerance if necessary, gives parameters satisfying
\[
\compexp{2}\leq\frac{k(1-\ptest)n}{8},\qquad
\epstol\leq\min\left\{\frac12,\frac{1-\ptest}{2}\right\},\qquad
\binh(\epstol)\leq\frac{k(1-\ptest)}{8},
\]
where $\binh$ is the binary entropy function, and decreasing the positive tolerance cannot increase the comparison exponent. The prescribed choice $m=\floor{(1-\ptest)n/2}$ gives $m\geq2$ and $(1-\ptest)n/2<m+1$. Hence
\[
\compexp{2}\leq\frac{k(m+1)}4\leq\frac{km}{2},
\qquad
n\binh(\epstol)\leq\frac{k(m+1)}4\leq\frac{km}{2}.
\]

We next bound guessing in the ideal process. For fixed generation inputs $x_i,y_i,t_i$ with $t_i\neq y_i$, the replacement state after computing $K_i$ satisfies
\begin{equation}\label{eq:MS-uniform}
\omega^{\pert,x_i,y_i,t_i}_{K_iB_iS_i^\B}=\frac{\id_{K_i}}2\otimes\omega^{\pert,x_i,y_i,t_i}_{B_iS_i^\B}.
\end{equation}
For the perfect Magic Square component this is Fact~\ref{fact:MS-local}, with $M^\B=S_i^\B$. For the maximally mixed component, $A_i$ is uniform on the four even-parity rows, so each row bit is uniform and independent of $B_iS_i^\B$.

To check that later rounds preserve this independence, we first discard $\bCS$ and condition only on $\mathbf U=\mathbf u$, where $\mathbf u$ has at least $m$ generation positions. This fixes $E$. At this stage we do not condition on acceptance. We write
\[
R_i=\bK_{E\cap[i-1]}\bX_{<i}\bY_{<i}\bB_{<i}\bT_{<i},\qquad
\mathcal J_i^{\B,y}[\xi]=\tr[B_iS_i^\B]{V_i^\B\mapM_i^{\B,y}[\xi](V_i^\B)^\dagger}.
\]
For this calculation, we may compute $K_i$ before recovery, since Alice's recovery channel and $\mapL_i^{\A,x_i}$ preserve the classical record $A_i$. For every incoming state $\xi_{R_iM_i^\A M_i^\B}$, the reduced generation-round state on $R_iK_iB_iS_i^\B J_i^\B$ before recovery is
\[
\omega^{\pert,x_i,y_i,t_i}_{K_iB_iS_i^\B}\otimes\mathcal J_i^{\B,y_i}[\xi_{R_iM_i^\B}].
\]
Indeed, tracing out Alice's registers removes her trace-preserving local instrument from the retained junk marginal. Her recovery channel and $\mapL_i^{\A,x_i}$ also disappear under this trace. Bob's recovery channel followed by $\mapL_i^{\B,y_i}$ acts only on $B_iS_i^\B J_i^\B$, so this composition preserves the uniform factor in \eqref{eq:MS-uniform}; its action on the previous records depends only on the incoming marginal $\xi_{R_iM_i^\B}$. On tests, discarding $C_i$ similarly leaves only Bob's local channel. The test results are not used to select later inputs or device operations. Since each round uses fresh independent inputs, induction shows that the selected bits remain uniform and independent of Bob's memory and the other recorded outputs, even when Alice's memory contains earlier bits. This gives
\begin{equation}\label{eq:blind-before-C}
\tr[\bCS]{\idealstate_{\bK_E Z\mid\mathbf U=\mathbf u}}=\frac{\id_{\bK_E}}{2^m}\otimes\tr[\bCS]{\idealstate_{Z\mid\mathbf U=\mathbf u}}.
\end{equation}

We now include $\bCS$ in the side information and restrict to acceptance. For fixed $\mathbf u$, a compatible record has $c_i=\gen$ on generation positions and $c_i\in\{0,1\}$ on test positions. It is therefore completely determined by the set of losing test positions $\{i:c_i=0\}$. Moreover, $\sfP_C(0)=0$, so the trace-distance acceptance condition implies
\[
\frac1n\bigl|\{i:c_i=0\}\bigr|=\freq_{\bcS}(0)\leq\trnorm{\freq_{\bcS}-\sfP_C}\leq\epstol.
\]
Thus at most $\floor{\epstol n}$ tests lose on acceptance. The number of accepted records compatible with $\mathbf u$ is at most
\[
\sum_{j=0}^{\floor{\epstol n}}\binom nj\leq2^{n\binh(\epstol)},
\]
where the first bound allows losing positions anywhere in $[n]$ and hence also bounds the choices among the test positions. The second uses $\epstol\leq1/2$.

Discarding $\bCS$ averages the conditional states with their probabilities. Each weighted conditional state is a positive semidefinite summand, so
\begin{align*}
&\Pr_{\idealstate}[\bCS=\bcS\mid\mathbf U=\mathbf u]\,\tr[\bCS]{\idealstate_{\bK_E Z|\bcS,\mathbf u}}\\
&\leq\sum_{\bcS'}\Pr_{\idealstate}[\bCS=\bcS'\mid\mathbf U=\mathbf u]\,\tr[\bCS]{\idealstate_{\bK_E Z|\bcS',\mathbf u}}\\
&=\tr[\bCS]{\idealstate_{\bK_E Z|\mathbf u}}=\frac{\id_{\bK_E}}{2^m}\otimes\tr[\bCS]{\idealstate_{Z|\mathbf u}},
\end{align*}
where the last equality is \eqref{eq:blind-before-C}; zero-probability records contribute zero to the sum. Tensoring the $\pure{\bcS}_\bCS$ register to both sides of the above inequality gives us 
\[
\Pr_{\idealstate}[\bCS=\bcS\mid\mathbf U=\mathbf u]\,\idealstate_{\bK_E Z|\bcS,\mathbf u}\leq\frac{\id_{\bK_E}}{2^m}\otimes\left(\pure{\bcS}_{\bCS}\otimes\tr[\bCS]{\idealstate_{Z|\mathbf u}}\right).
\]

Let $\{G_{\bk}^{\mathbf u,\bcS}\}_{\bk\in\{0,1\}^m}$ be a POVM on $Z$ for guessing $\bk$. Its dependence on $\bcS,\mathbf u$ is allowed because the measurement can read $\bCS$, which also determines $\mathbf U$. Applying the preceding operator inequality to each positive guessing effect gives
\begin{align*}
&\Pr_{\idealstate}[\bCS=\bcS\land\widehat{\bK}_E=\bK_E\mid\mathbf U=\mathbf u]\\
&=\Pr_{\idealstate}[\bCS=\bcS\mid\mathbf U=\mathbf u]\sum_{\bk}\tr{(\pure{\bk}_{\bK_E}\otimes G_{\bk}^{\mathbf u,\bcS})\idealstate_{\bK_E Z|\bcS,\mathbf u}}\\
&\leq2^{-m}\sum_{\bk}\tr{G_{\bk}^{\mathbf u,\bcS}\left(\pure{\bcS}_{\bCS}\otimes\tr[\bCS]{\idealstate_{Z|\mathbf u}}\right)}=2^{-m}.
\end{align*}
The last equality uses $\sum_{\bk}G_{\bk}^{\mathbf u,\bcS}=\id$ and normalization of the state in parentheses.
Summing over the accepted records therefore gives
\[
\Pr_{\idealstate}[\Acc\land\widehat{\bK}_E=\bK_E\mid\mathbf U=\mathbf u]\leq2^{-m+n\binh(\epstol)}.
\]
If $\mathbf u$ has fewer than $m$ generation positions, acceptance is impossible. Averaging over $\mathbf u$ proves, for every guessing measurement on $Z$,
\begin{equation}\label{eq:blind-ideal-guess}
\Pr_{\idealstate}[\Acc\land\widehat{\bK}_E=\bK_E]\leq2^{-m+n\binh(\epstol)}.
\end{equation}

For $p_\Acc>0$, we apply Theorem~\ref{th:main}, followed by data processing under the channel that performs an optimal real-state guessing measurement on $Z$ and checks acceptance and correctness of the guess, and then Lemma~\ref{lem:probability}(i). The same measurement is used on the ideal state, where \eqref{eq:blind-ideal-guess} applies. Hence
\[
p_\Acc\,p_{\mathrm{guess}}(\bK_E\mid Z)_{\realstate_{|\Acc}}\leq2^{[\compexp{2}-m+n\binh(\epstol)]/2}\leq2^{-(1-k)m/2}.
\]
If $p_\Acc=0$, the accepted success probability is zero.

Finally, for $p_\Acc>0$, we apply Fact~\ref{fact:blind-hashing} to the normalized accepted state $\realstate_{\bK_E Z|\Acc}$. The seed $D$ is sampled independently after the acceptance decision, so it remains independent of this state. Returning to subnormalized accepted states and using \eqref{eq:blind-guess}, we obtain
\begin{align*}
\frac12\norm{\realstate_{LDZ\land\Acc}-\tau_L\otimes\realstate_{DZ\land\Acc}}_1
&\leq\frac{p_\Acc}{2}\sqrt{2^\ell p_{\mathrm{guess}}(\bK_E\mid Z)_{\realstate_{|\Acc}}}\\
&=\frac12\sqrt{2^\ell p_\Acc\bigl(p_\Acc p_{\mathrm{guess}}(\bK_E\mid Z)_{\realstate_{|\Acc}}\bigr)}\\
&\leq\frac12\sqrt{p_\Acc}\,2^{[\ell-(1-k)m/2]/2}\\
&\leq\frac12\,2^{[\ell-(1-k)m/2]/2}.
\end{align*}
Since $\mathbf U$ is determined by $\bCS$, the side-information registers required in the secrecy definition are obtained from $DZ$ by classical processing and discarding registers. Data processing therefore gives the claimed secrecy bound. Completeness follows from Theorem~\ref{th:complete}.
\end{proof}

Quantum-proof strong seeded extractors with polylogarithmic seed length and linear output also exist for sources whose guessing probability is exponentially small in their length~\cite[Corollary~5.3]{DPVR12}. To use one with inverse-polynomial error $\eps_{\mathrm{ext}}$, we choose it for sources satisfying $p_{\mathrm{guess}}(\bK_E\mid Z)\leq2^{-(1-k)m/4}$. If $p_\Acc\geq2^{-(1-k)m/4}$, this guessing-probability bound for the normalized accepted state follows from \eqref{eq:blind-guess}. Otherwise the secrecy error is at most $p_\Acc$. This gives linear output with secrecy error at most $\eps_{\mathrm{ext}}+2^{-(1-k)m/4}$, while the protocol's total input randomness remains linear.


\section*{Acknowledgements} 

We thank Omar Fawzi, Ashutosh Marwah and Yuming Zhao, for their advice on important details in the proofs. AA conducted research at the Institute for Quantum
Computing, at the University of Waterloo, which is supported by Innovation, Science, and Economic Development
Canada. Support was also provided by NSERC under the
Discovery Grants Program, Grant No. 341495.

\paragraph{AI disclosure.} We used GPT 6 Astra as research and writing assistants in this work. The chief conceptual contribution of the AI model was in the selection of parameters in Theorem~\ref{th:small-rate} in order to make $k'$ arbitrarily small. Aside from this, the AI model contributed in smaller proof improvements, e.g., in getting the supremum outside the sum in the definition of $\compexp{\alpha}$, and Lemma~\ref{lem:depol} (which gives a tighter bound than applying~\cite[Lemma~5.3]{arx_MD23}), and in working out the details of the blind randomness expansion application. The authors have verified all AI outputs, and take full responsibility for the final paper.

\bibliographystyle{Dmax_alpha}
\bibliography{biblio_Dmax_v1}

\appendix

\counterwithin*{equation}{section}
\renewcommand\theequation{\thesection\arabic{equation}}

\section{Minimax property of comparison exponent}
\label{app:minimax}

In this appendix, we establish that the comparison exponent has a particularly useful minimax/duality/variational property, which we used in order to obtain our sequential comparison theorem. The proof is based on essentially standard techniques in convex optimization, but in this case we pay some special attention to handle delicate issues regarding the supports of various operators, which can sometimes affect minimax properties.

This form of the comparison exponent gives the optimal bound obtainable by choosing $f$ in \eqref{eq:subminimax}, although explicit choices of $f$ may still help demonstrate particular scaling behaviours.

\begin{lemma}[Duality relation for $\compexp{\alpha}$]\label{lemma:dual_sharp}
Under Postulate Set~\ref{posts:seq}, the penalized comparison exponent has the equivalent form
\begin{equation}\label{eq:rate-tilt}
\compexp{\alpha}
=\inf_f\left\{\sum_{i=1}^n R^\#_{\alpha,\mapE_i,\mapF_i}(f)
+n\sup_{\sfQ\in S_\Omega}\sum_{\cS}\sfQ(\cS)f(\cS)\right\},
\end{equation}
with the same convention for the sums as in~\eqref{eq:rate}.
\end{lemma}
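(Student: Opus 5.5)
The plan is to follow the standard entropy-accumulation duality. First I would express each $R^\#_{\alpha,\mapE_i,\mapF_i}(f)$ variationally. Fix a trial input $\ket{\sigma^i}$. By the scaling and direct-sum properties (Fact~\ref{th:sharp-facts}\ref{prop:sharp-scaling},\ref{prop:sharp-direct-sum}), the $c$-th block of $\mapF_i^f[\pure{\sigma^i}]$ is $2^{f(c)}$ times the ideal block. This gives
\[
2^{(\alpha-1)\Dsharp_\alpha(\sigma^{i,\mapE_i}\Vert\mapF_i^f[\pure{\sigma^i}])}=\sum_c \sfP_{C_i}(c)\,2^{(\alpha-1)[\Dsharp_\alpha(\sigma^{i,\mapE_i}_{|c}\Vert\sigma^{i,\mapF_i}_{|c})-f(c)]},
\]
using that both channels induce the common distribution $\sfP_{C_i}$. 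The Donsker--Varadhan/Gibbs variational formula $\log\sum_c \sfP(c)2^{g(c)}=\sup_{\sfQ}\{\sum_c\sfQ(c)g(c)-D(\sfQ\Vert\sfP)\}$ then gives
\[
R^\#_{\alpha,\mapE_i,\mapF_i}(f)=\sup_{\sigma^i,\sfQ_{C_i}}\Bigl[\sum_c\sfQ_{C_i}(c)\bigl(\Dsharp_\alpha(\cdot_{|c}\Vert\cdot_{|c})-f(c)\bigr)-\tfrac{D(\sfQ_{C_i}\Vert\sfP_{C_i})}{\alpha-1}\Bigr].
\]
Here $\sfQ_{C_i}$ ranges over distributions absolutely continuous w.r.t.\ $\sfP_{C_i}$, with the stated convention for $-\infty$ summands. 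Postulate~\ref{posts:seq}\ref{post:domination} keeps every $\Dsharp_\alpha$ term bounded by $\log\Lambda$, so nothing is $+\infty$.

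Next I would use $n\sup_{\sfQ\in S_\Omega}\sum_c\sfQ(c)f(c)=\sup_{\sfQ\in S_\Omega}\sum_c n\sfQ(c)f(c)$ to write the right-hand side of~\eqref{eq:rate-tilt} as
\[
\inf_f\sup_{(\sigma^i,\sfQ_{C_i})_i,\ \sfQ\in S_\Omega} L,\qquad
L=\sum_i[\text{payoff}_i]+\sum_c f(c)\Bigl(n\sfQ(c)-\textstyle\sum_i\sfQ_{C_i}(c)\Bigr).
\]
The inequality ``$\geq$'' (which is what the proof of Theorem~\ref{th:main} actually needs) is the easy direction. For any feasible point of~\eqref{eq:rate}, choose $\sfQ=\frac1n\sum_i\sfQ_{C_i}\in S_\Omega$ inside the supremum. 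The $f$-terms then cancel, so every $f$ gives a value at least the objective of~\eqref{eq:rate}.

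For ``$\leq$'' I would apply Sion's theorem (Fact~\ref{th:minimax}), with $f\in\mathbb{R}^{\alphCS}$ as the convex minimizing variable. $L$ is affine, hence convex and continuous, in $f$. The maximizing side needs to be convex and compact, and $L$ concave and u.s.c.\ there. To get this I would reparametrize the per-round inputs so that the payoff is concave, which is the main obstacle. The plan is to replace each pair $(\sigma^i,\sfQ_{C_i})$ by a mixture over trial inputs. Purified-input concavity (Fact~\ref{th:sharp-facts}\ref{prop:sharp-purified-concavity}), applied blockwise after conditioning on $c$, shows the set of achievable per-round pairs
\[
\Bigl(\sfP_{C_i},\ \bigl(\sfQ_{C_i}(c)\,2^{(\alpha-1)\Dsharp_\alpha(\cdot_{|c})}\bigr)_c\Bigr)
\]
is effectively convexified by mixing inputs, so the supremum is unchanged if we optimize over the convex hull. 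On that hull, $-D(\sfQ\Vert\sfP)$ is jointly concave, and the $\log$ of a quantity that is concave under mixing is concave. Compactness comes from finite dimension of the input state space and of the simplex. Upper semicontinuity has to be checked at points where $\sfP_{C_i}(c)\to0$, and the uniform bound $\log\Lambda$ is what controls those boundary terms.

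The step I expect to be delicate is exactly this concavity/u.s.c.\ verification, in particular handling zero-probability blocks. If it resists, my fallback is to restrict $S_\Omega$ to distributions with full support on the union of supports first, then take limits, using closedness of $S_\Omega$. After the swap, the infimum over $f$ of the affine term is $-\infty$ unless $n\sfQ=\sum_i\sfQ_{C_i}$. This enforces $\frac1n\sum_i\sfQ_{C_i}\in S_\Omega$ and returns exactly~\eqref{eq:rate}.
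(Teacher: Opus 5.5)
Your skeleton matches the paper's: the Gibbs variational formula for $R^\#_{\alpha,\mapE_i,\mapF_i}(f)$, an inf--sup with an auxiliary $\sfQ\in S_\Omega$, Sion's theorem, and elimination of $f$ to force $\frac1n\sum_i\sfQ_{C_i}\in S_\Omega$. However, you have mislabeled which direction matters. The proof of Theorem~\ref{th:main} bounds $\Qsharp_\alpha$ by $2^{(\alpha-1)\inf_f\{\cdots\}}$ and then replaces the exponent by $\compexp{\alpha}$. It therefore needs $\inf_f\{\cdots\}\leq\compexp{\alpha}$. That is precisely the Sion direction, not the weak-duality inequality you obtain by plugging in $\sfQ=\frac1n\sum_i\sfQ_{C_i}$.

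The genuine gap is in how you get concavity for that direction. Write $h_{i,c}$ for $\Qsharp_\alpha$ of the unnormalized $c$-blocks of $\sigma^{i,\mapE_i},\sigma^{i,\mapF_i}$. Then $\sum_c\sfQ_{C_i}(c)\Dsharp_\alpha(\sigma^{i,\mapE_i}_{|c}\Vert\sigma^{i,\mapF_i}_{|c})=\frac1{\alpha-1}\sum_c\sfQ_{C_i}(c)[\log h_{i,c}-\log\sfP_{C_i}(c)]$. The $-\log\sfP_{C_i}(c)$ part is convex in the input, and $(u,v)\mapsto u\log v$ is not jointly concave. So ``$-D(\sfQ\Vert\sfP)$ is jointly concave and the log of a concave quantity is concave'' does not establish concavity of the objective.

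Your convexified pair has a related problem: it uses $\sfQ_{C_i}(c)2^{(\alpha-1)\Dsharp_\alpha}$, but $2^{(\alpha-1)\Dsharp_\alpha(\cdot_{|c})}=h_{i,c}/\sfP_{C_i}(c)$ is not concave under mixing inputs; only $h_{i,c}$ is.

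The missing step is that the $\log\sfP_{C_i}(c)$ terms cancel exactly against those in $D(\sfQ_{C_i}\Vert\sfP_{C_i})$. The per-round payoff is therefore $\frac1{\alpha-1}\sum_c\sfQ_{C_i}(c)\log\bigl(h_{i,c}(\sigma^i_{M_i})/\sfQ_{C_i}(c)\bigr)$. Three facts then give joint concavity:
\begin{itemize}
\item $h_{i,c}$ depends only on the marginal $\sigma^i_{M_i}$, by data processing on the purifying register;
\item $h_{i,c}$ is concave in that marginal, by purified-input concavity;
\item $(u,v)\mapsto u\log(v/u)$ is jointly concave and nondecreasing in $v$.
\end{itemize}
Hence the payoff is jointly concave on the convex compact set of marginals and distributions, and no convex-hull construction is needed.

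Your semicontinuity fallback also aims at the wrong place. Shrinking $S_\Omega$ does nothing about the actual issue. A bounded concave $h_{i,c}$ is only guaranteed continuous on the relative interior of the state space, and $u\log(v/u)$ needs $v>0$ to be finite-valued.

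The paper handles this by applying Sion with inputs restricted to $\Pos_{\geq z}(M_i)$. There each $h_{i,c}$ is continuous and is either strictly positive or identically zero; in the latter case $\sfQ_{C_i}(c)=0$ is forced. It then removes the restriction using $h_{i,c}\bigl((1-z)\sigma+z\id/|M_i|\bigr)\geq(1-z)h_{i,c}(\sigma)$. This changes both the inf--sup value and $\compexp{\alpha}$ by at most $\frac{n}{\alpha-1}\log\frac1{1-z}$, uniformly in $f$, so the discrepancy vanishes as $z\to0$.
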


Before proving this lemma, we establish the properties of the one-round block
quantities that will be used in the minimax argument.
\begin{lemma}\label{lem:h-prop}
For a normalized pure input $\ket{\sigma^i}_{M_iR_i}$, let
\begin{equation}\label{eq:h-definition}
h_{i,\cS}\left(\ket{\sigma^i}_{M_iR_i}\right)
=\Qsharp_\alpha\!\left(
\sigma^{i,\mapE_i}_{\land(\CS_i=\cS)}
\middle\Vert
\sigma^{i,\mapF_i}_{\land(\CS_i=\cS)}
\right).
\end{equation}
When $\sfP_{C_i}(\cS)>0$, this is equivalently
\[
h_{i,\cS}\left(\ket{\sigma^i}_{M_iR_i}\right)
=\sfP_{C_i}(\cS)2^{(\alpha-1)\Dsharp_\alpha\!\left(
\sigma^{i,\mapE_i}_{|\cS}
\middle\Vert
\sigma^{i,\mapF_i}_{|\cS}
\right)}.
\]
$h_{i,\cS}$ satisfies the following properties:
\begin{enumerate}[label=\textup{(\roman*)},ref=\textup{(\roman*)}]
\item\label{prop:h-marginal} The value of $h_{i,\cS}$ depends only on the
marginal $\sigma^i_{M_i}$. We may therefore write it as
$h_{i,\cS}(\sigma^i_{M_i})$.
\item\label{prop:h-concavity} The function $h_{i,\cS}$ is concave in
$\sigma^i_{M_i}$.
\item\label{prop:h-continuity} The function $h_{i,\cS}$ is continuous at
every positive-definite $\sigma^i_{M_i}$.
\end{enumerate}
\end{lemma}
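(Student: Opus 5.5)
For the equivalent form of $h_{i,\cS}$ when $\sfP_{C_i}(\cS)>0$, I will use the scaling identity in Fact~\ref{th:sharp-facts}\ref{prop:sharp-scaling}. The two unnormalized blocks $\sigma^{i,\mapE_i}_{\land(\CS_i=\cS)}$ and $\sigma^{i,\mapF_i}_{\land(\CS_i=\cS)}$ have the same trace $\sfP_{C_i}(\cS)$, by Postulate Set~\ref{posts:seq}\ref{post:common-record}. Scaling both arguments by the same factor $a=b=\sfP_{C_i}(\cS)$ multiplies $\Qsharp_\alpha$ by $a^\alpha a^{1-\alpha}=\sfP_{C_i}(\cS)$. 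Since the conditional states are normalized, $\Qsharp_\alpha$ of the conditional pair equals $2^{(\alpha-1)\Dsharp_\alpha}$, which gives the displayed formula.

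For \ref{prop:h-marginal}, I will note that any two purifications of the same $\sigma^i_{M_i}$ are related by an isometry $W$ on $R_i$, or by a partial isometry after embedding into a common large reference. This isometry commutes with $\mapE_i\otimes\idmap$ and $\mapF_i\otimes\idmap$, and it preserves the classical block structure on $\CS_i$. Data processing (Fact~\ref{th:sharp-facts}\ref{prop:sharp-data-processing}) applied in both directions, using $W$ and a left-inverse channel of $W$, shows that $\Qsharp_\alpha$ of each block is invariant. One could also check that the block projection commutes with $W$ and apply data processing directly to the blocks.

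For \ref{prop:h-concavity}, I will first apply Fact~\ref{th:sharp-facts}\ref{prop:sharp-purified-concavity} to the completely positive maps $\mapE_i^{(\cS)}$ and $\mapF_i^{(\cS)}$. These are $\mapE_i$ and $\mapF_i$ followed by projection onto the $\CS_i=\cS$ block, so they are CP but not trace preserving, and that fact allows this. For $\sigma=\sum_j t_j\sigma^j$, it gives $h_{i,\cS}(\sigma)\ge\sum_j t_jh_{i,\cS}(\sigma^j)$, once we use \ref{prop:h-marginal} to identify the purifications on $R$ and $R'$ as valid choices.

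Continuity \ref{prop:h-continuity} is the main obstacle. My plan is to use concavity: a finite concave function on a convex set is continuous on the relative interior of its domain, and that relative interior is exactly the positive-definite states. Concretely, I will restrict to the affine hull of the state space, which is finite-dimensional. The function $h_{i,\cS}$ is concave and finite there, because the domination Postulate~\ref{posts:seq}\ref{post:domination} gives $h_{i,\cS}\le\Lambda^{\alpha-1}\sfP_{C_i}(\cS)<\infty$ by monotonicity and Fact~\ref{th:sharp-facts}\ref{prop:sharp-renyi-comparison}, and it is bounded below by $0$. The standard theorem that a convex function finite on an open convex subset of $\mathbb{R}^d$ is continuous there then gives continuity at every positive-definite $\sigma^i_{M_i}$. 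The delicate part is confirming finiteness everywhere on the domain, including states of reduced support, which the uniform $\Lambda$ bound handles.
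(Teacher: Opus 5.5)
Your proposal is correct and follows essentially the same route as the paper: invariance under isometries on the reference via two-sided data processing for (i), purified-input concavity applied to the block-projected CP maps for (ii), and, for (iii), the domination bound $h_{i,\cS}\leq\sfP_{C_i}(\cS)\Lambda^{\alpha-1}$ combined with continuity of finite concave functions on the relative interior (the positive-definite states). Your explicit scaling derivation of the conditional-state form is also correct; the paper simply states that form without proof.
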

\begin{proof}
To prove item~\ref{prop:h-marginal}, consider two purifications of the same
input marginal. After enlarging one reference register if necessary, they are
related by an isometry on the reference register. Applying the real or ideal
map and selecting the block $\CS_i=\cS$ commutes with this isometry. Data
processing in
Fact~\ref{th:sharp-facts}\ref{prop:sharp-data-processing} gives one
inequality between the corresponding sharp-divergence quantities. Applying data
processing to a channel that reverses the isometry on its range gives the
opposite inequality. The two values are therefore equal, proving that
$h_{i,\cS}$ depends only on the input marginal.

To prove item~\ref{prop:h-concavity}, suppose that
$\sigma^i_{\inM_i}=\sum_jt_j\sigma^{i,j}_{\inM_i}$. Choose an arbitrary purification
$\ket{\sigma^i}_{\inM_iR_i}$ of $\sigma^i_{\inM_i}$ and, independently, arbitrary
purifications $\ket{\sigma^{i,j}}_{\inM_iR'_i}$ of the component states on one
sufficiently large register $R'_i$. Apply
Fact~\ref{th:sharp-facts}\ref{prop:sharp-purified-concavity} to the two
completely positive maps obtained by applying $\mapE_i$ and $\mapF_i$ and then
selecting the $\cS$ block. It gives
\[
h_{i,\cS}(\sigma^i_{\inM_i})
\geq\sum_jt_jh_{i,\cS}(\sigma^{i,j}_{\inM_i}).
\]

It remains to prove item~\ref{prop:h-continuity}. Projecting the operator
inequality in Postulate
Set~\ref{posts:seq}\ref{post:domination} onto the block $\CS_i=\cS$
gives
\[
\sigma^{i,\mapE_i}_{\land(\CS_i=\cS)}
\leq
\Lambda\sigma^{i,\mapF_i}_{\land(\CS_i=\cS)}.
\]
If $\sfP_{C_i}(\cS)>0$, the two partial states have the same trace, so division by
$\sfP_{C_i}(\cS)$ gives
\[
\Dmax\!\left(
\sigma^{i,\mapE_i}_{|\cS}
\middle\Vert
\sigma^{i,\mapF_i}_{|\cS}
\right)
\leq\log\Lambda.
\]
The comparison
Fact~\ref{th:sharp-facts}\ref{prop:sharp-renyi-comparison} and the
conditional-state expression following \eqref{eq:h-definition} now imply
\[
h_{i,\cS}(\sigma^i_{M_i})
\leq
\sfP_{C_i}(\cS)\Lambda^{\alpha-1}
\leq
\Lambda^{\alpha-1}.
\]
When $\sfP_{C_i}(\cS)=0$, both partial blocks are zero, and the same bound holds.
Consequently, item~\ref{prop:h-concavity} shows that $h_{i,\cS}$ is a
finite-valued concave function on the convex state space $S_=(M_i)$. Every
finite-valued concave function is continuous on the relative interior of its
domain. The relative interior of $S_=(M_i)$ consists exactly of the
positive-definite states, which proves item~\ref{prop:h-continuity}.
\end{proof}

\begin{proof}[Proof of Lemma~\ref{lemma:dual_sharp}]
The direct-sum and scaling identities in
Fact~\ref{th:sharp-facts}\ref{prop:sharp-direct-sum} and
Fact~\ref{th:sharp-facts}\ref{prop:sharp-scaling} give, for each fixed
$\ket{\sigma^i}_{\inM_iR_i}$,
\[
\Qsharp_\alpha\!\left(
\sigma^{i,\mapE_i}
\middle\Vert
\mapF_i^f\!\left[\pure{\sigma^i}\right]
\right)
=\sum_{\cS}\sfP_{C_i}(\cS)2^{-(\alpha-1)f(\cS)}
2^{(\alpha-1)\Dsharp_\alpha\!\left(
\sigma^{i,\mapE_i}_{|\cS}
\middle\Vert
\sigma^{i,\mapF_i}_{|\cS}
\right)}.
\]
Taking the logarithm and optimizing gives
\begin{equation}\label{eq:R_f}
R^\#_{\alpha,\mapE_i,\mapF_i}(f)
=\sup_{\ket{\sigma^i}}\frac1{\alpha-1}\log
\sum_{\cS}\sfP_{C_i}(\cS)\,
2^{(\alpha-1)\left(
\Dsharp_\alpha\!\left(
\sigma^{i,\mapE_i}_{|\cS}
\middle\Vert
\sigma^{i,\mapF_i}_{|\cS}
\right)-f(\cS)
\right)}.
\end{equation}

Now, fixing $\sigma^i_{M_i}$ and $f$, define
\begin{equation}\label{eq:wZ}
w_i(\cS)
=h_{i,\cS}(\sigma^i_{\inM_i})2^{-(\alpha-1)f(\cS)},
\qquad
Z_i=\sum_{\cS}w_i(\cS)
\end{equation}
where $h_{i,\cS}$ is as in Lemma~\ref{lem:h-prop}. The ratio
$w_i(\cS)/Z_i$ defines a probability distribution on $\alphCS$. For every
distribution $\sfQ_{C_i}$ satisfying
$\operatorname{supp}(\sfQ_{C_i})\subseteq\{\cS:w_i(\cS)>0\}$,
\begin{align*}
D\!\left(\sfQ_{C_i}\middle\Vert\frac{w_i}{Z_i}\right)
=\sum_{\cS}\sfQ_{C_i}(\cS)
\log\frac{\sfQ_{C_i}(\cS)}{w_i(\cS)/Z_i}=\log Z_i-
\sum_{\cS}\sfQ_{C_i}(\cS)\log\frac{w_i(\cS)}{\sfQ_{C_i}(\cS)}.
\end{align*}
Using the definition of $w_i$, the above rearranges to
\begin{align}
\log Z_i-D\!\left(\sfQ_{C_i}\middle\Vert w_i/Z_i\right) = -(\alpha-1)\sum_{\cS}\sfQ_{C_i}(\cS)f(\cS)
+\sum_{\cS}\sfQ_{C_i}(\cS)
\log\frac{h_{i,\cS}(\sigma^i_{\inM_i})}{\sfQ_{C_i}(\cS)}.
\label{eq:logmoment-kl}
\end{align}
The relative entropy on the lefthand side of
\eqref{eq:logmoment-kl} is non-negative. Dividing that equation by
$\alpha-1>0$ therefore gives
\[
-\sum_{\cS}\sfQ_{C_i}(\cS)f(\cS)
+\frac1{\alpha-1}\sum_{\cS}\sfQ_{C_i}(\cS)
\log\frac{h_{i,\cS}(\sigma^i_{\inM_i})}{\sfQ_{C_i}(\cS)}
\leq\frac{\log Z_i}{\alpha-1}.
\]
Equality is attained by $\sfQ_{C_i}=w_i/Z_i$, because then
$D(\sfQ_{C_i}\Vert w_i/Z_i)=0$. Hence the supremum of the lefthand side over
$\sfQ_{C_i}$ equals $\log Z_i/(\alpha-1)$. Rewriting \eqref{eq:R_f} in terms of
$h_{i,\cS}$ gives
\begin{align}
R^\#_{\alpha,\mapE_i,\mapF_i}(f)
& =\sup_{\ket{\sigma^i}}\frac1{\alpha-1}\log\left(
\sum_{\cS}h_{i,\cS}(\sigma^i_{M_i})2^{-(\alpha-1)f(\cS)}\right) \nonumber \\
& = \sup_{\ket{\sigma^i}}\frac1{\alpha-1}\log Z_i \nonumber \\
& = \sup_{\ket{\sigma^i},\sfQ_{C_i}}
\left\{-\sum_{\cS}\sfQ_{C_i}(\cS)f(\cS)
+\frac1{\alpha-1}\sum_{\cS}\sfQ_{C_i}(\cS)
\log\frac{h_{i,\cS}(\sigma^i_{\inM_i})}{\sfQ_{C_i}(\cS)}\right\} \nonumber \\
& = \sup_{\sigma^i_{M_i},\sfQ_{C_i}}
\left\{-\sum_{\cS}\sfQ_{C_i}(\cS)f(\cS)
+\frac1{\alpha-1}\sum_{\cS}\sfQ_{C_i}(\cS)
\log\frac{h_{i,\cS}(\sigma^i_{\inM_i})}{\sfQ_{C_i}(\cS)}\right\}. \label{eq:R-variational}
\end{align}
where the last line follows from
Lemma~\ref{lem:h-prop}\ref{prop:h-marginal}.

We will now substitute \eqref{eq:R-variational} into the righthand side of \eqref{eq:rate-tilt}. Introducing $\sfQ_C^0\in S_\Omega$ to express the inner supremum in
\eqref{eq:rate-tilt}, we get
\begin{align}
\inf_f\sup_{\substack{\sigma^1_{\inM_1},\ldots,\sigma^n_{\inM_n};\\ 
\sfQ_C^0\in S_\Omega;\\ \sfQ_{C_1},\ldots,\sfQ_{C_n}}}
\Bigg\{&
\sum_{i=1}^n\frac1{\alpha-1}\sum_{\cS}\sfQ_{C_i}(\cS)
\log\frac{h_{i,\cS}(\sigma^i_{\inM_i})}{\sfQ_{C_i}(\cS)}+\sum_{\cS}\left(n\sfQ_C^0(\cS)-\sum_{i=1}^n\sfQ_{C_i}(\cS)\right)f(\cS)
\Bigg\}.
\label{eq:minimax-expanded}
\end{align}
Denote the quantity in \eqref{eq:minimax-expanded} by $A_0$. We obtain it
as a limit of minimax problems whose input marginals are uniformly positive
definite. For $0<z<1$, let $\Pos_{\geq z}(M_i)$ be the set of normalized
states on $M_i$ that are lower bounded by $z\id_{M_i}/|M_i|$, and define
\begin{align}
A_z = \inf_f\sup_{\substack{\sigma^1_{\inM_1} \in \Pos_{\geq z}(M_1),\ldots,\sigma^n_{\inM_n} \in \Pos_{\geq z}(M_n);\\ 
\sfQ_C^0\in S_\Omega;\\ \sfQ_{C_1},\ldots,\sfQ_{C_n}}}
\Bigg\{&
\sum_{i=1}^n\frac1{\alpha-1}\sum_{\cS}\sfQ_{C_i}(\cS)
\log\frac{h_{i,\cS}(\sigma^i_{\inM_i})}{\sfQ_{C_i}(\cS)}\nonumber\\
&+\sum_{\cS}\left(n\sfQ_C^0(\cS)-\sum_{i=1}^n\sfQ_{C_i}(\cS)\right)f(\cS)
\Bigg\}.
\label{eq:minimax-z}
\end{align}
We first verify the limiting relation. Given any input marginal
$\sigma^i_{M_i}$, set
\[
\sigma^{i,z}_{M_i}=(1-z)\sigma^i_{M_i}+z\frac{\id_{M_i}}{|M_i|}.
\]
Then $\sigma^{i,z}_{M_i}\in\Pos_{\geq z}(M_i)$. Non-negativity and
Lemma~\ref{lem:h-prop}\ref{prop:h-concavity} give
\[
h_{i,\cS}(\sigma^{i,z}_{M_i})
\geq(1-z)h_{i,\cS}(\sigma^i_{M_i}).
\]
For fixed $f$, $\sfQ_C^0$, and $\sfQ_{C_1},\ldots,\sfQ_{C_n}$, replacing every
$\sigma^i_{M_i}$ by $\sigma^{i,z}_{M_i}$ therefore decreases the objective in
\eqref{eq:minimax-expanded} by at most
$-n\log(1-z)/(\alpha-1)$. More explicitly,
\[
\frac1{\alpha-1}\sum_{i=1}^n\sum_{\cS}\sfQ_{C_i}(\cS)
\log\frac{h_{i,\cS}(\sigma^{i,z}_{M_i})}{\sfQ_{C_i}(\cS)}
\geq
\frac1{\alpha-1}\sum_{i=1}^n\sum_{\cS}\sfQ_{C_i}(\cS)
\log\frac{h_{i,\cS}(\sigma^i_{M_i})}{\sfQ_{C_i}(\cS)}
+\frac{n}{\alpha-1}\log(1-z),
\]
where we used $\sum_{\cS}\sfQ_{C_i}(\cS)=1$ for every $i$. This bound is
independent of $f$ and of the distributions. Taking the relevant suprema and
then the infimum over $f$ gives
\begin{equation}\label{eq:Az-limit}
A_0+\frac{n}{\alpha-1}\log(1-z)
\leq A_z\leq A_0.
\end{equation}
In particular, $\lim_{z\to0}A_z=A_0$.

We now apply Fact~\ref{th:minimax} to $A_z$ for fixed $z>0$. The
domain of $f$, namely the vector space of real functions on $\alphCS$, is
convex. The product of $S_\Omega$, the probability simplices containing the
$\sfQ_{C_i}$, and the sets $\Pos_{\geq z}(M_i)$ is compact and convex. For fixed
distributions and input marginals, the objective in
\eqref{eq:minimax-z} is affine and continuous in $f$.

First we note that each $h_{i,\cS}$ is concave and continuous on
$\Pos_{\geq z}(M_i)$ by
Lemma~\ref{lem:h-prop}\ref{prop:h-concavity} and
Lemma~\ref{lem:h-prop}\ref{prop:h-continuity}. If $\sfP_{C_i}(\cS)=0$
for one positive-definite input $\sigma^i_{M_i}$, then it is zero for every input, because
a positive-definite state dominates a positive multiple of every state and
outcome probabilities are positive linear functions of the input. In this case
$h_{i,\cS}(\sigma^i_{M_i})=0$ for every input state $\sigma^i_{M_i}$, and we may
restrict $\sfQ_{C_i}(\cS)=0$, since otherwise the corresponding summand is
$-\infty$. We henceforth consider only the $i,\cS$ for which
$h_{i,\cS}(\sigma^i_{M_i})>0$ for every
$\sigma^i_{M_i}\in\Pos_{\geq z}(M_i)$.

The function
\[
g:(u,v)\longmapsto u\log\frac vu,
\qquad g(0,v)=0,
\]
is non-decreasing in $v$, continuous on $u\geq 0, v > 0$, and jointly concave in
both variables. Since $h_{i,\cS}$ takes positive values for all cases under consideration, and composing a continuous function with a continuous
function that has the correct domain produces a continuous function,
\[
(\sfQ_{C_i},\sigma^i_{M_i})\longmapsto
g\bigl(\sfQ_{C_i}(\cS),h_{i,\cS}(\sigma^i_{M_i})\bigr)
\]
is continuous. Moreover, composing an outer monotonically non-decreasing
function with an inner concave function produces a concave function, and
therefore this is also concave. Taking the sum over the $i$-s and $\cS$-s
preserves continuity and concavity, so the overall right-hand side of
\eqref{eq:minimax-z} for $z>0$ is continuous and concave in the
$\sfQ_{C_i}$-s and $\sigma^i$-s, and we can apply
Fact~\ref{th:minimax}.

After the minimax exchange from Fact~\ref{th:minimax}, the infimum over $f$ in $A_z$ for $z>0$ can be evaluated explicitly. If
\[
\sum_{i=1}^n\sfQ_{C_i}(\cS)=n\sfQ_C^0(\cS)
\quad\text{for every }\cS,
\]
the $f$-dependent part of \eqref{eq:minimax-z} is zero for every $f$. If this
equality fails at some $\cS_0$, we send $f(\cS_0)$ to the sign of infinity that makes
its nonzero coefficient contribute $-\infty$. Thus all and only the surviving
points must satisfy
$\frac1n\sum_i\sfQ_{C_i}=\sfQ_C^0\in S_\Omega$. This gives
\begin{equation}\label{eq:A_z-final}
A_z
=
\sup_{\substack{\sigma^1_{\inM_1}\in\Pos_{\geq z}(M_1),\ldots,
\sigma^n_{\inM_n}\in\Pos_{\geq z}(M_n);\\
\sfQ_{C_1},\ldots,\sfQ_{C_n}:\ \frac1n\sum_i\sfQ_{C_i}\in S_\Omega}}
\frac1{\alpha-1}\sum_{i=1}^n\sum_{\cS}
\sfQ_{C_i}(\cS)\log
\frac{h_{i,\cS}(\sigma^i_{\inM_i})}{\sfQ_{C_i}(\cS)}.
\end{equation}
Now using
\eqref{eq:h-definition}, for every $i$ we can rewrite
\[
\frac1{\alpha-1}\sum_{\cS}\sfQ_{C_i}(\cS)
\log\frac{h_{i,\cS}(\sigma^i_{M_i})}{\sfQ_{C_i}(\cS)}
=
\sum_{\cS}\sfQ_{C_i}(\cS)
\Dsharp_\alpha\!\left(
\sigma^{i,\mapE_i}_{|\cS}
\middle\Vert
\sigma^{i,\mapF_i}_{|\cS}
\right)
-\frac{D(\sfQ_{C_i}\Vert\sfP_{C_i})}{\alpha-1}.
\]
Consequently, the righthand side of \eqref{eq:A_z-final} is exactly the
definition of $\compexp{\alpha}$ with the input marginals restricted to
$\Pos_{\geq z}(M_i)$; here
Lemma~\ref{lem:h-prop}\ref{prop:h-marginal} allows the original optimization
over purified inputs to be written over their marginals. It follows
immediately that
\[
A_z\leq\compexp{\alpha}.
\]
Conversely, apply the approximation
$\sigma^i_{M_i}\mapsto\sigma^{i,z}_{M_i}$ to any feasible point in the definition
of the unrestricted comparison exponent.
Lemma~\ref{lem:h-prop}\ref{prop:h-concavity} and the calculation preceding
\eqref{eq:Az-limit} show that the objective decreases by at most
$-n\log(1-z)/(\alpha-1)$. Taking the supremum gives
\[
\compexp{\alpha}+\frac{n}{\alpha-1}\log(1-z)
\leq A_z\leq\compexp{\alpha}.
\]
Therefore $\lim_{z\to0}A_z=\compexp{\alpha}$. Combining this with
\eqref{eq:Az-limit} gives
\[
A_0
=\lim_{z\to0}A_z
=\compexp{\alpha},
\]
which proves \eqref{eq:rate-tilt}.
\end{proof}

\section{Bounding single-round sharp-divergences}
\label{app:sharpdivbounds}

This appendix is centred entirely on proving Lemma~\ref{lem:round}. The proof basically separates into a self-testing bound for a coherent state, a divergence bound after averaging over the input, and an identification of the operations that give the real and ideal generation channels. The desired lemma then follows quite straightforwardly by combining these results, as we describe at the end of this appendix.

We follow the notation presented in Sec.~\ref{subsec:single-round bound} leading up to Lemma~\ref{lem:round}, and assume Postulate Sets~\ref{post:selftest} and~\ref{post:shield} throughout this analysis.
Let $V_i^\A,V_i^\B$ denote the extraction isometries already chosen from the full families $(\{H^{x_i}_{i,a_i}\}_{a_i})_{x_i},(\{K^{y_i}_{i,b_i}\}_{b_i})_{y_i}$, and write $J=RJ_i^\A J_i^\B$. The following normalized vectors will be used in the proof:
\begin{align*}
\ket{\phi_{x_i}}_{S_i}
&=\sum_{a_i}\ket{a_i}_{A_i}(T_{a_i}^{x_i}\otimes\id)\ket\phi_{S_i^\A S_i^\B},\\
\ket{\zeta_{x_i}}_{S_iJ}
&=\sum_{a_i}\ket{a_i}_{A_i}
(\id_R\otimes V_i^\A\otimes V_i^\B)
H_{i,a_i}^{x_i}(\ket\rho\otimes\ket{00}_{\ancA_i\ancB_i}).
\end{align*}
Here $S_i=A_iS_i^\A S_i^\B$ as before. We write $\zeta_{S_iJ}^{x_i}=\pure{\zeta_{x_i}}$ and define
\[
\eta_{x_i}=1-\bra{\phi_{x_i}}\zeta_{S_i}^{x_i}\ket{\phi_{x_i}}.
\]

\begin{lemma}[Self-testing of the coherent state]\label{lem:coh}
There is a normalized vector $\ket{\tau_\rho}_J$ such that
\[
\sum_{x_i}\sfP_X(x_i)
\norm{\ket{\zeta_{x_i}}-\ket{\phi_{x_i}}\ket{\tau_\rho}}_2^2
\leq\vanf_\game(\delta)^2.
\]
Consequently,
\[
\sum_{x_i}\sfP_X(x_i)\eta_{x_i}\leq\vanf_\game(\delta)^2.
\]
\end{lemma}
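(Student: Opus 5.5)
We apply Postulate Set~\ref{post:selftest}\ref{post:rigid} directly to the strategy $\strat$. Its shared state is $\ket\rho\otimes\ket{00}_{\ancA_i\ancB_i}$, with $R$ as the purifying register and the projector families $\{H^{x_i}_{i,a_i}\}$, $\{K^{y_i}_{i,b_i}\}$ acting on $M_i^\A\ancA_i$ and $M_i^\B\ancB_i$. The isometries $V_i^\A,V_i^\B$ were chosen from exactly these families, so \eqref{eq:rigidity} supplies a normalized $\ket{\tau_\rho}_{RJ_i^\A J_i^\B}=\ket{\tau_\rho}_J$ satisfying
\[
\sum_{x_i,a_i}\sfP_X(x_i)\norm{(\id_R\otimes V_i^\A\otimes V_i^\B)H^{x_i}_{i,a_i}(\ket\rho\ket{00})-(T^{x_i}_{a_i}\otimes\id)\ket\phi\ket{\tau_\rho}}_2^2\leq\vanf_\game(\delta)^2 .
\]

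For fixed $x_i$ we compare this with the vector difference $\ket{\zeta_{x_i}}-\ket{\phi_{x_i}}\ket{\tau_\rho}$. This difference equals $\sum_{a_i}\ket{a_i}_{A_i}\otimes v_{a_i}$, where $v_{a_i}$ is precisely the $a_i$-th difference vector appearing in the rigidity bound. The vectors $\ket{a_i}$ are orthonormal, so the squared norm splits as $\sum_{a_i}\norm{v_{a_i}}_2^2$. Averaging over $x_i$ with weights $\sfP_X$ therefore reproduces the left side of \eqref{eq:rigidity}, and the first claim follows.

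For the second claim, we bound $\eta_{x_i}$ in terms of the distance between the two vectors. Write $\eta_{x_i}=1-\bra{\zeta_{x_i}}(\pure{\phi_{x_i}}\otimes\id_J)\ket{\zeta_{x_i}}$. Both $\ket{\zeta_{x_i}}$ and $\ket{\phi_{x_i}}\ket{\tau_\rho}$ are unit vectors, and the second lies in the range of the projector $\Pi=\pure{\phi_{x_i}}\otimes\id_J$. Since $\Pi\ket{\zeta_{x_i}}$ is the closest point to $\ket{\zeta_{x_i}}$ in that range,
\[
\eta_{x_i}=\norm{(\id-\Pi)\ket{\zeta_{x_i}}}_2^2\leq\norm{\ket{\zeta_{x_i}}-\ket{\phi_{x_i}}\ket{\tau_\rho}}_2^2 .
\]
Averaging over $x_i$ and applying the first claim gives the second.

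The only delicate point is normalization. The norm of $\ket{\zeta_{x_i}}$ is 1 because the $H$ projectors are complete and mutually orthogonal and the $V$ maps are isometries. The norm of $\ket{\phi_{x_i}}$ is 1 for the same reason applied to the $T$ projectors. We also need the junk register $J$ to include $R$, which is consistent with the form of the isometries in the postulate.
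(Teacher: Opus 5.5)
Your proof is correct and matches the paper's: you apply the rigidity postulate to $\strat$ with purification $\ket\rho\ket{00}$, split the squared norm using orthogonality of the $\ket{a_i}$ labels, and bound $\eta_{x_i}$ by the distance to $\ket{\phi_{x_i}}\ket{\tau_\rho}$. Your closest-point phrasing of the last step is the same fact the paper uses, namely that $\id-\Pi$ annihilates $\ket{\phi_{x_i}}\ket{\tau_\rho}$ and is a contraction.
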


\begin{proof}
The state $\pure{\zeta_{x_i}}$ can be obtained in the following way. Consider the coherent version of Alice's measurement channel
\[
\widetilde{\mapM}_i^{\A,x_i}[\sigma_{M_i^\A}]
=\sum_{a_i,a'_i}\ket{a_i}\bra{a'_i}_{A_i}\otimes
H_{i,a_i}^{x_i}(\sigma_{M_i^\A}\otimes\pure0_{\ancA_i})H_{i,a'_i}^{x_i}.
\]
Applying this channel to $\ket\rho$, appending Bob's ancilla, and then applying $V_i^\A,V_i^\B$ gives $\pure{\zeta_{x_i}}$. Pinching the $A_i$ register of $\pure{\zeta_{x_i}}$ gives the same state as using $\mapM_i^{\A,x_i}$ in place of $\widetilde{\mapM}_i^{\A,x_i}$ on $\ket{\rho}$, and then applying the isometries. Similarly, pinching $A_i$ in $\pure{\phi_{x_i}}$ gives $\omega_{S_i}^{x_i}$ from \eqref{eq:replace-state}.

We apply Postulate Set~\ref{post:selftest}\ref{post:rigid} to the strategy $\strat$, with purification $\ket\rho\ket{00}_{\ancA_i\ancB_i}$. It supplies a normalized $\ket{\tau_\rho}_J$ for which \eqref{eq:rigidity} bounds the action of each Alice projector. Orthogonality of the outcome labels in the definitions of $\ket{\zeta_{x_i}}$ and $\ket{\phi_{x_i}}$ gives
\begin{align*}
&\sum_{x_i}\sfP_X(x_i)
\norm{\ket{\zeta_{x_i}}-\ket{\phi_{x_i}}\ket{\tau_\rho}}_2^2\\
&=\sum_{x_i,a_i}\sfP_X(x_i)
\bigl\|(\id_R\otimes V_i^\A\otimes V_i^\B)H_{i,a_i}^{x_i}
\ket\rho\ket{00}_{M'^\A_iM'^\B_i}
-(T_{a_i}^{x_i}\otimes\id)\ket\phi\ket{\tau_\rho}\bigr\|_2^2\\
&\leq\vanf_\game(\delta)^2.
\end{align*}

For the second claim, notice that,
\begin{align*}
\eta_{x_i}
&=\norm{\bigl((\id_{S_i}-\pure{\phi_{x_i}})\otimes\id_J\bigr)
\ket{\zeta_{x_i}}}_2^2\\
&=\norm{\bigl((\id_{S_i}-\pure{\phi_{x_i}})\otimes\id_J\bigr)
(\ket{\zeta_{x_i}}-\ket{\phi_{x_i}}\ket{\tau_\rho})}_2^2\\
&\leq\norm{\ket{\zeta_{x_i}}-\ket{\phi_{x_i}}\ket{\tau_\rho}}_2^2
\end{align*}
where the second line is because $\id_{S_i}-\pure{\phi_{x_i}}$ is orthogonal to $\pure{\phi_{x_i}}$. Averaging over $x_i$ proves the second claim.
\end{proof}

For $0<\pert<1$, we define the depolarized pure states
\[
\widetilde{\omega}_{S_i}^{\pert,x_i}
=(1-\pert)\pure{\phi_{x_i}}_{S_i}+\pert\frac{\id_{S_i}}{|S_i|}.
\]
Including Alice's classical input register gives the two states
\begin{align*}
\zeta_{X_iS_iJ}
&=\sum_{x_i}\sfP_X(x_i)\pure{x_i}_{X_i}\otimes\zeta_{S_iJ}^{x_i},\\
\widetilde{\omega}_{X_iS_iJ}^{\pert}
&=\sum_{x_i}\sfP_X(x_i)\pure{x_i}_{X_i}\otimes
\widetilde{\omega}_{S_i}^{\pert,x_i}\otimes\zeta_J^{x_i}.
\end{align*}
We will now prove a lemma bounding the $\Dsharp_2$ between these two states. For this, it will be useful to define
\begin{equation}\label{eq:round-cost}
\vanf_{\pert,\game}(\delta)=
\min\left\{
\log\frac{a_\pert |S|^2}{\pert},
3\log\left[1+
\left(\frac{a_\pert |S|(|S|-1)
\vanf_{\game}(\delta)^2}{\pert}\right)^{1/3}\right]
\right\}
\end{equation}
where $a_\pert=1-\pert+\pert/|S|$.

\begin{lemma}[Divergence after averaging the input]\label{lem:input-div}
The states defined above satisfy
\begin{gather*}
\zeta_{X_iS_iJ}
\leq\frac{|S|^2}{\pert}\widetilde{\omega}_{X_iS_iJ}^{\pert},\\
\Dsharp_2\left(\zeta_{X_iS_iJ}\middle\Vert
\widetilde{\omega}_{X_iS_iJ}^{\pert}\right)
\leq\log a_\pert^{-1}+\vanf_{\pert,\game}(\delta).
\end{gather*}
For every fixed $\pert>0$, the function $\vanf_{\pert,\game}$ is vanishing.
\end{lemma}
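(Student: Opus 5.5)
The plan is to reduce everything to Lemma~\ref{lem:depol}, applied blockwise in the classical register $X_i$, and then recombine the blocks using the direct-sum identity and concavity.

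\textbf{Step 1: operator inequality.} For each fixed $x_i$, apply Lemma~\ref{lem:depol} with $S=S_i$, $J=J$, $\rho_{SJ}=\zeta^{x_i}_{S_iJ}$ and $\ket\phi=\ket{\phi_{x_i}}$. Since $\rho_J=\zeta^{x_i}_J$, its $\Dmax$ bound gives
\[
\zeta^{x_i}_{S_iJ}\leq\frac{|S|^2}{\pert}\,\widetilde\omega^{\pert,x_i}_{S_i}\otimes\zeta^{x_i}_J .
\]
Both states are block diagonal in $X_i$ with the same weights $\sfP_X(x_i)$. Summing the blocks therefore yields the first claimed inequality.

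\textbf{Step 2: blockwise sharp-divergence bound.} In each block, Lemma~\ref{lem:depol} gives
\[
2^{\Dsharp_2(\zeta^{x_i}\Vert\widetilde\omega^{\pert,x_i}\otimes\zeta^{x_i}_J)}
\leq\left[a_\pert^{-1/3}+\left(\frac{|S|(|S|-1)\eta_{x_i}}{\pert}\right)^{1/3}\right]^3 .
\]
Factoring out $a_\pert^{-1}$ rewrites the right-hand side as $a_\pert^{-1}\bigl[1+(c\,\eta_{x_i})^{1/3}\bigr]^3$, where $c=a_\pert|S|(|S|-1)/\pert$. The same block also satisfies the trivial bound $|S|^2/\pert=a_\pert^{-1}\cdot a_\pert|S|^2/\pert$ from the $\Dmax$ part.

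The classical direct-sum identity (Fact~\ref{th:sharp-facts}\ref{prop:sharp-direct-sum}) applies because the $X$-distributions coincide. It gives
\[
2^{\Dsharp_2(\zeta\Vert\widetilde\omega^\pert)}=\sum_{x_i}\sfP_X(x_i)\,2^{\Dsharp_2(\text{block }x_i)} .
\]
Each summand is at most $a_\pert^{-1}g(\eta_{x_i})$, where
\[
g(\eta)=\min\bigl\{a_\pert|S|^2/\pert,\ [1+(c\eta)^{1/3}]^3\bigr\}.
\]

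\textbf{Step 3: averaging over $x_i$ (the main obstacle).} The difficulty is that $\eta_{x_i}$ is only controlled on average, by Lemma~\ref{lem:coh}: $\sum_{x_i}\sfP_X(x_i)\eta_{x_i}\leq\vanf_\game(\delta)^2$. We want to pass the average inside $g$ via Jensen's inequality, so we check that $g$ is concave and nondecreasing.

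\begin{itemize}
\item $[1+(c\eta)^{1/3}]^3=1+3(c\eta)^{1/3}+3(c\eta)^{2/3}+c\eta$ is a sum of concave nondecreasing functions of $\eta$, so it is concave and nondecreasing.
\item The constant $a_\pert|S|^2/\pert$ is trivially concave and nondecreasing.
\item A pointwise minimum of concave nondecreasing functions is again concave and nondecreasing.
\end{itemize}

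Jensen's inequality then gives $\sum_{x_i}\sfP_X(x_i) g(\eta_{x_i})\leq g(\bar\eta)$, where $\bar\eta$ is the $x_i$-average of $\eta_{x_i}$. Monotonicity of $g$ then gives $g(\bar\eta)\leq g(\vanf_\game(\delta)^2)$. Taking logarithms produces exactly $\log a_\pert^{-1}+\vanf_{\pert,\game}(\delta)$, where
\[
\vanf_{\pert,\game}(\delta)=\min\bigl\{\log(a_\pert|S|^2/\pert),\ 3\log[1+(c\,\vanf_\game(\delta)^2)^{1/3}]\bigr\},
\]
matching \eqref{eq:round-cost}.

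\textbf{Step 4: vanishing.} For fixed $\pert$, $\vanf_{\pert,\game}$ is nonnegative, since $a_\pert|S|^2/\pert\geq1$. It is nondecreasing because $\vanf_\game$ is. It equals $0$ at $\delta=0$ and tends to $0$ as $\delta\to0^+$, both because $\vanf_\game$ is vanishing. Hence $\vanf_{\pert,\game}$ is a vanishing function.
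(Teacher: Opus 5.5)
Your proposal is correct and follows the paper's proof essentially step for step. The shared steps are:
\begin{itemize}
\item applying Lemma~\ref{lem:depol} blockwise in $X_i$ and summing the resulting operator inequalities;
\item using the direct-sum identity for $\Dsharp_2$;
\item applying Jensen's inequality, via concavity and monotonicity of the expanded cube, together with the averaged bound from Lemma~\ref{lem:coh}.
\end{itemize}
The only cosmetic difference is where the trivial $|S|^2/\pert$ bound enters. You fold it into a single concave function $g$, a pointwise minimum of concave nondecreasing functions, before applying Jensen. The paper applies Jensen to the cubic expression alone and takes the minimum with the global $\Dmax$ bound afterwards. Both routes give exactly \eqref{eq:round-cost}.
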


\begin{proof}

For each $x_i$, the coherent replacement channel
\[
\widetilde{\mapP}_i^{\pert,x_i}[\sigma_{S_iJ_i^\A J_i^\B}]
=\widetilde{\omega}_{S_i}^{\pert,x_i}\otimes\tr[S_i]{\sigma_{S_iJ_i^\A J_i^\B}}
\]
has the same input and output registers as $\mapP_i^{\pert,x_i}$. It sends $\zeta_{S_iJ}^{x_i}$ to $\widetilde{\omega}_{S_i}^{\pert,x_i}\otimes\zeta_J^{x_i}$. Thus the state $\widetilde{\omega}_{X_iS_iJ}^{\pert}$ is obtained from $\zeta_{X_iS_iJ}$ by applying this channel controlled by $X_i$. The channel retains the actual marginal and does not use $\ket{\tau_\rho}$.

Lemma~\ref{lem:depol} applies to $\zeta_{S_iJ}^{x_i}$ and $\pure{\phi_{x_i}}_{S_i}\otimes\zeta_J^{x_i}$. Its error parameter is exactly $\eta_{x_i}$. This gives both
\[
\zeta_{S_iJ}^{x_i}
\leq \frac{|S|^2}{\pert}
\widetilde{\omega}_{S_i}^{\pert,x_i}\otimes\zeta_J^{x_i}
\]
and
\[
2^{\Dsharp_2\left(\zeta_{S_iJ}^{x_i}\middle\Vert
\widetilde{\omega}_{S_i}^{\pert,x_i}\otimes\zeta_J^{x_i}\right)}
\leq
\left[
a_\pert^{-1/3}
+\left(\frac{|S|(|S|-1)\eta_{x_i}}{\pert}\right)^{1/3}
\right]^3.
\]

Multiplying the operator inequalities by $\sfP_X(x_i)$ and summing over the orthogonal $X_i$ blocks proves
\[
\zeta_{X_iS_iJ}
\leq\frac{|S|^2}{\pert}\widetilde{\omega}_{X_iS_iJ}^{\pert}.
\]
Consequently,
\[
\Dsharp_2\left(\zeta_{X_iS_iJ}\middle\Vert
\widetilde{\omega}_{X_iS_iJ}^{\pert}\right)
\leq
\Dmax\left(\zeta_{X_iS_iJ}\middle\Vert
\widetilde{\omega}_{X_iS_iJ}^{\pert}\right)
\leq\log\frac{|S|^2}{\pert}.
\]

To get the sharper bound on $\Dsharp_2$, the direct-sum rule in Fact~\ref{th:sharp-facts}\ref{prop:sharp-direct-sum} gives
\begin{align*}
2^{\Dsharp_2\left(\zeta_{X_iS_iJ}\middle\Vert
\widetilde{\omega}_{X_iS_iJ}^{\pert}\right)}
&=\sum_{x_i}\sfP_X(x_i)
2^{\Dsharp_2\left(\zeta_{S_iJ}^{x_i}\middle\Vert
\widetilde{\omega}_{S_i}^{\pert,x_i}\otimes\zeta_J^{x_i}\right)}\\
&\leq\sum_{x_i}\sfP_X(x_i)
\left[
a_\pert^{-1/3}
+\left(\frac{|S|(|S|-1)\eta_{x_i}}{\pert}\right)^{1/3}
\right]^3.
\end{align*}

To bound the average, we expand the function of $\eta$ on the right:
\begin{align*}
\left[
a_\pert^{-1/3}
+\left(\frac{|S|(|S|-1)\eta}{\pert}\right)^{1/3}
\right]^3
&=a_\pert^{-1}
+3a_\pert^{-2/3}
\left(\frac{|S|(|S|-1)}{\pert}\right)^{1/3}\eta^{1/3}\\
&\quad
+3a_\pert^{-1/3}
\left(\frac{|S|(|S|-1)}{\pert}\right)^{2/3}\eta^{2/3}
+\frac{|S|(|S|-1)}{\pert}\eta.
\end{align*}
Each power of $\eta$ on the right is nondecreasing and concave on $[0,\infty)$, and every coefficient is nonnegative. Hence their sum is also nondecreasing and concave. Jensen's inequality and Lemma~\ref{lem:coh} therefore give
\begin{align*}
&\sum_{x_i}\sfP_X(x_i)
\left[
a_\pert^{-1/3}
+\left(\frac{|S|(|S|-1)\eta_{x_i}}{\pert}\right)^{1/3}
\right]^3\\
&\leq
\left[
a_\pert^{-1/3}
+\left(
\frac{|S|(|S|-1)}{\pert}
\sum_{x_i}\sfP_X(x_i)\eta_{x_i}
\right)^{1/3}
\right]^3\\
&\leq
\left[
a_\pert^{-1/3}
+\left(
\frac{|S|(|S|-1)\vanf_{\game}(\delta)^2}{\pert}
\right)^{1/3}
\right]^3.
\end{align*}
The first inequality uses concavity, and the second uses monotonicity together with $\sum_{x_i}\sfP_X(x_i)\eta_{x_i}\leq\vanf_\game(\delta)^2$.

Taking the logarithm and factoring $a_\pert^{-1/3}$ out of the bracket yields
\[
\Dsharp_2\left(\zeta_{X_iS_iJ}\middle\Vert
\widetilde{\omega}_{X_iS_iJ}^{\pert}\right)
\leq
\log a_\pert^{-1}
+3\log\left[
1+\left(
\frac{a_\pert|S|(|S|-1)\vanf_{\game}(\delta)^2}{\pert}
\right)^{1/3}
\right].
\]
The universal bound has the form
\[
\log\frac{|S|^2}{\pert}
=\log a_\pert^{-1}
+\log\frac{a_\pert|S|^2}{\pert}.
\]
Taking the smaller of the two bounds gives the claimed divergence estimate, with $\vanf_{\pert,\game}$ as defined in \eqref{eq:round-cost}. For fixed $\pert>0$, the second defining term of this function tends to zero as $\delta\to0$, since $\vanf_\game(\delta)\to0$. Both defining terms are nonnegative, so $\vanf_{\pert,\game}$ is vanishing.
\end{proof}

In the next lemma, we show how to convert the states from the previous lemma to $\mapE^\gen_i[\pure{\rho}]$ and $\mapF^\gen_i[\pure{\rho}]$, so that we can upper bound the sharp-divergence between the latter two states.

\begin{lemma}[Recovery of the generation channels]\label{lem:recover}
There is a postprocessing channel
\[
\mapT_i\in\CPTP{X_iS_iJ_i^\A J_i^\B}{X_iY_iA_iB_iC_iM_{i+1}^\A M_{i+1}^\B},
\]
fixed by the round instruments and independent of $\ket{\rho}$, such that
\[
\mapT_i[\zeta_{X_iS_iJ}] =\mapE_i^\gen[\pure\rho], \qquad
\mapT_i[\widetilde{\omega}_{X_iS_iJ}^{\pert}] =\mapF_i^\gen[\pure\rho].
\]
\end{lemma}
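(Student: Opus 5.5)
We construct $\mapT_i$ explicitly as the ``tail'' of the ideal generation channel $\mapF_i^\gen$ from Sec.~\ref{subsec:realandideal}, that is, steps~3--5 with the replacement deleted. The replacement is then absorbed into the input state: $\widetilde{\omega}^\pert$ already carries the replaced $S_i$ register.

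Concretely, $\mapT_i$ first pinches $A_i$ in its classical basis, controlled by nothing. It then applies $\mapN_i^\A\otimes\mapN_i^\B$ to $S_i^\A J_i^\A$ and $S_i^\B J_i^\B$, followed by the continuations $\mapL_i^{\A,x_i}\otimes\mapL_i^{\B,\gen}$, which are classically controlled by $X_i$ and $A_i$. Next it initializes $Y_i=\testQ$ and applies $\mapR_i$ with $u_i=\gen$. Throughout, $R$ is retained untouched inside $J$. Since every map used depends only on the round instruments and the isometries $V_i^\A,V_i^\B$, the channel $\mapT_i$ is independent of $\ket\rho$.

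For the real identity, the key step follows the proof of Lemma~\ref{lem:coh}. Pinching $A_i$ in $\zeta^{x_i}$ produces $(V_i^\A\otimes V_i^\B)\bigl(\mapM_i^{\A,x_i}[\pure\rho]\otimes\pure0_{\ancB_i}\bigr)(V_i^\A\otimes V_i^\B)^\dagger$. The left-inverse property of $\mapN_i^\A,\mapN_i^\B$ undoes the isometries exactly. Applying $\mapL_i^{\A,x_i}\otimes\mapL_i^{\B,\gen}$ and $\mapR_i$ then reproduces $\mapE_i^{\A,x_i}\otimes\mapE_i^{\B,\gen}$ via \eqref{eq:EAB-form-2}, and $\mapL_i^{\B,\gen}$ discards the ancilla $\ancB_i$. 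Averaging over $x_i$ with the $X_i$ register gives $\mapE_i^\gen[\pure\rho]$.

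For the ideal identity, pinching $A_i$ sends $\widetilde{\omega}_{S_i}^{\pert,x_i}$ to $\omega_{S_i}^{\pert,x_i}$. This holds because pinching $\pure{\phi_{x_i}}$ gives $\omega_{S_i}^{x_i}$, and the maximally mixed part is invariant. The $J$-marginal $\zeta_J^{x_i}$ equals $\tr[S_i]{\cdot}$ of the pre-replacement state in step~2 of $\mapF_i^\gen$; note that pinching $A_i$ does not change this marginal. Hence the pinched input equals $\mapP_i^{\pert,x_i}$ applied to the step-2 state, and the remaining operations of $\mapT_i$ are precisely steps~3--5.

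The main care point is ordering: the pinching must commute with, or precede, the replacement bookkeeping. This works because $\mapP_i^{\pert,x_i}$ only uses the $J$-marginal, which pinching $A_i$ leaves unchanged. A secondary check is that $R$ in $J$ is treated as identity throughout.
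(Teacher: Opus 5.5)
Your proposal is correct and follows the paper's proof essentially step for step. You use the same channel $\mapT_i$: pinch $A_i$, apply $\mapN_i^\A\otimes\mapN_i^\B$, then the continuations $\mapL_i^{\A,x_i}\otimes\mapL_i^{\B,\gen}$, then append the generation records. You prove the real identity the same way, by commuting the pinching past the extraction isometries and using the left-inverse property, and the ideal identity likewise, by pinching the pure part of $\widetilde{\omega}^{\pert,x_i}_{S_i}$ to $\omega^{x_i}_{S_i}$ and noting that the retained marginal $\zeta_J^{x_i}$ is unaffected by pinching $A_i$.
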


\begin{proof}
We define $\mapT_i$ by the following steps, recalling that $S_i=A_iS_i^\A S_i^\B$.
\begin{enumerate}
\item\label{step:pinch} Pinch $A_i$ in the basis $\{\ket{a_i}\}_{a_i}$. The registers remain $X_iA_iS_i^\A J_i^\A S_i^\B J_i^\B$.
\item\label{step:rec} Apply recovery channels $\mapN_i^\A$ on $S_i^\A J_i^\A$ and $\mapN_i^\B$ on $S_i^\B J_i^\B$, giving registers $X_iA_iM_i^\A\ancA_iM_i^\B\ancB_i$.
\item\label{step:cont} Conditional on $x_i$ in $X_i$, apply $\mapL_i^{\A,x_i}$ to $A_iM_i^\A\ancA_i$ and $\mapL_i^{\B,\gen}$ to $M_i^\B\ancB_i$. The registers are now $X_iA_iB_iM_{i+1}^\A M_{i+1}^\B$, with $B_i$ fixed to $\testQ$.
\item\label{step:record} Append $Y_i=\testQ$ and $C_i=\gen$, giving the output registers $X_iY_iA_iB_iC_iM_{i+1}^\A M_{i+1}^\B$.
\end{enumerate}

We first compute $\mapT_i[\zeta_{X_iS_iJ}]$. In a fixed $x_i$ block, step~\ref{step:pinch} removes the terms with distinct outcome labels $a_i,a'_i$ from $\pure{\zeta_{x_i}}$. Pinching $A_i$ commutes with isometries $V^\A_i$ and $V^\B_i$, which act on the device memories. The recovery channels in step~\ref{step:rec} then undo those isometries, giving
\[
\sum_{a_i}\pure{a_i}_{A_i}\otimes
(H_{i,a_i}^{x_i}\otimes\id)
\left(
\pure\rho\otimes\pure0_{\ancA_i}\otimes\pure0_{\ancB_i}
\right)
(H_{i,a_i}^{x_i}\otimes\id).
\]
This is exactly the state obtained by applying Alice's measurement channel $\mapM_i^{\A,x_i}$ to $\pure{\rho}$ and appending Bob's ancilla. In step~\ref{step:cont}, $\mapL_i^{\A,x_i}$ completes Alice's real instrument, including its input- and outcome-dependent operation on the postmeasurement memory. The channel $\mapL_i^{\B,\gen}$ discards $\ancB_i$ and applies Bob's real generation channel. Step~\ref{step:record} supplies the remaining generation records. Thus, after averaging over $x_i$,
\[
\mapT_i[\zeta_{X_iS_iJ}]=\mapE_i^\gen[\pure\rho].
\]

We next compute $\mapT_i[\widetilde{\omega}_{X_iS_iJ}^{\pert}]$. In each $x_i$ block, the pinching in step~\ref{step:pinch} sends the pure part of $\widetilde{\omega}_{S_i}^{\pert,x_i}$ to
\[
\sum_{a_i}\pure{a_i}_{A_i}\otimes
(T_{a_i}^{x_i}\otimes\id)\pure\phi
(T_{a_i}^{x_i}\otimes\id)
=\omega_{S_i}^{x_i}.
\]
The maximally mixed part is unchanged. The state after step~\ref{step:pinch} is therefore
\[
\sum_{x_i}\sfP_X(x_i)\pure{x_i}_{X_i}\otimes
\omega_{S_i}^{\pert,x_i}\otimes\zeta_J^{x_i}.
\]

To identify this with the state prepared by the replacement channel in $\mapF_i^\gen$, we check the retained marginal. The definition of $\mapF_i^\gen$ uses $\mapM_i^{\A,x_i}$, whereas $\zeta_{S_iJ}^{x_i}$ is obtained by coherently recording Alice's outcome. By the construction in Lemma~\ref{lem:coh}, these two states differ only by pinching $A_i$. Tracing out $S_i=A_iS_i^\A S_i^\B$ gives the same marginal before and after this pinching: the partial trace over $A_i$ already discards every term with distinct outcome labels. Hence the marginal retained in the definition of $\mapF_i^\gen$ is exactly $\zeta_J^{x_i}$.

It follows that $\omega_{S_i}^{\pert,x_i}\otimes\zeta_J^{x_i}$ is precisely the state produced by $\mapP_i^{\pert,x_i}$ in \eqref{eq:replace-noisy}. Steps~\ref{step:rec}--\ref{step:record} are the recovery channels, local continuations, and record preparation specified in $\mapF_i^\gen$. Consequently,
\[
\mapT_i[\widetilde{\omega}_{X_iS_iJ}^{\pert}]=\mapF_i^\gen[\pure\rho],
\]
which proves the second identity.
\end{proof}

With these results established, Lemma~\ref{lem:round} follows basically straightforwardly.
By Lemma~\ref{lem:recover}, a common CPTP map sends the two states in Lemma~\ref{lem:input-div} to the respective channel outputs in the statement. Applying the map $\mapT_i$ to the operator inequality in Lemma~\ref{lem:input-div} proves the first bound. Data processing for the sharp-divergence gives
\begin{align*}
\Dsharp_2\left(\mapE_i^\gen[\pure\rho]\middle\Vert
\mapF_i^\gen[\pure\rho]\right)
&\leq\Dsharp_2\left(\zeta_{X_iS_iJ}\middle\Vert
\widetilde{\omega}_{X_iS_iJ}^{\pert}\right)\\
&\leq\log a_\pert^{-1}+\vanf_{\pert,\game}(\delta).
\end{align*}
The vanishing property follows from Lemma~\ref{lem:input-div}.

\section{Incorporating testing for ``almost-subadditive'' channel divergences}
\label{app:IIDtesting}

Again, we follow the same proof structure as before. First, in Lemma~\ref{lemma:fbound} we derive a {family} of upper bounds on $D_\alpha\Dargs{\mapM\left[\realstate_{|\Omega} \right]}{\mapM\left[\idealstate\right]}$, parametrized by a vector $f\in\mathbb{R}^{|\alphCS|}$. Then, in Lemma~\ref{lemma:dual_sand} we show that optimizing over the choice of $f$ yields the claimed bound in Theorem~\ref{th:robustselftest}.

The actual technical difference compared to the Theorem~\ref{th:main} is mainly just some bookkeeping regarding the measurement channel, and the minimization over the input state to the second argument. However, to present a wider variety of proof styles, the proof here is presented in a slightly different fashion regarding the details, e.g.~by constructing ancillary registers to ``encode'' the $f$ values, and applying a known convex-conjugate formula rather than explicitly solving the minimization over $f$.

\begin{remark}
The results throughout this analysis can be made very slightly more general: rather than requiring $\realstate_{|\Omega}$ to genuinely be $\realstate$ conditioned on the specific event $\Omega$, we can (in a slight abuse of notation) merely require that $\realstate = p_\Omega \realstate_{|\Omega} + (1-p_\Omega) \realstate_{|\neg\Omega} $ for some $p_\Omega \in (0,1]$ and normalized states $\realstate_{|\Omega},\realstate_{|\neg\Omega}$ such that for every $\bCS$ with nonzero probability in $\realstate_{|\Omega}$, we have $\freq_{\bCS} \in S_\Omega$. See e.g.~\cite[Theorem~5.1]{arx_AHT24} for an example.
\end{remark}

Before presenting the main lemmas, we establish a technical result that is a minor variant of~\cite[Lemma~4.5]{MFSR24}.
\begin{fact}\label{lemma:extract_Z}
Consider any states $\rho , \sigma \in \dop{=}(Z Q \CS)$, with respective classical distributions $\sfP,\sfQ$ on $\CS$, of the form
\begin{align*}
\rho_{Z Q \CS}&=\sum_{\cS}\sfP(\cS)\pure{\cS }_{\CS}\otimes\rho_{Q|\cS}\otimes\rho_{Z|\cS}, \\
\sigma_{Z Q \CS}&=\sum_{\cS}\sfQ(\cS)\pure{\cS }_{\CS}\otimes\sigma_{Q|\cS}\otimes\sigma_{Z|\cS},
\end{align*}
i.e.~both of them satisfy the Markov condition $Z \leftrightarrow \CS \leftrightarrow Q$. Then for any $\alpha>1$, we have:
\begin{align*}
D_\alpha \Dargs{\rho_{Q\CS}}{\sigma_{Q\CS}} \leq D_\alpha \Dargs{\rho_{ZQ\CS}}{\sigma_{ZQ\CS}} 
- \min_{\cS\in\supp(\sfP)} D_\alpha \Dargs{\rho_{Z|\cS}}{\sigma_{Z|\cS}}. 
\end{align*}
The entire right-hand side is interpreted as $+\infty$ whenever
$D_\alpha\Dargs{\rho_{ZQ\CS}}{\sigma_{ZQ\CS}}=+\infty$,
without evaluating the conditional-state divergences in that case.
\end{fact}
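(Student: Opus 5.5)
We will prove the inequality by direct computation with the classical direct-sum formula of Fact~\ref{th:renyi-properties}\ref{fact:classmix}, without any data-processing argument. First dispose of the trivial cases. If $D_\alpha\Dargs{\rho_{ZQ\CS}}{\sigma_{ZQ\CS}}=+\infty$, the right-hand side is $+\infty$ by convention. Otherwise, support containment holds blockwise. Concretely, for every $\cS\in\supp(\sfP)$ we have $\sfQ(\cS)>0$, and $\supp(\rho_{Q|\cS}\otimes\rho_{Z|\cS})\subseteq\supp(\sigma_{Q|\cS}\otimes\sigma_{Z|\cS})$. Since the factors are normalized states, this gives support containment for each factor separately. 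Hence every conditional divergence appearing below is finite.

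Next, expand both sides blockwise. By the direct-sum formula, with $\Delta_{\cS}\defvar D_\alpha\Dargs{\rho_{Q|\cS}}{\sigma_{Q|\cS}}$ and $\zeta_{\cS}\defvar D_\alpha\Dargs{\rho_{Z|\cS}}{\sigma_{Z|\cS}}$, we get
\[
2^{(\alpha-1)D_\alpha(\rho_{ZQ\CS}\Vert\sigma_{ZQ\CS})}
=\sum_{\cS\in\supp(\sfP)}\sfP(\cS)^\alpha\sfQ(\cS)^{1-\alpha}2^{(\alpha-1)(\Delta_{\cS}+\zeta_{\cS})}.
\]
Here we used additivity of $D_\alpha$ under tensor products, which follows since $Q_\alpha$ is multiplicative on tensor products of normalized states. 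The same formula without the $\zeta_{\cS}$ terms gives $2^{(\alpha-1)D_\alpha(\rho_{Q\CS}\Vert\sigma_{Q\CS})}$.

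Now set $\zeta_{\min}=\min_{\cS\in\supp(\sfP)}\zeta_{\cS}$. Since $\alpha-1>0$, each summand in the $ZQ\CS$ expression is at least $2^{(\alpha-1)\zeta_{\min}}$ times the corresponding summand in the $Q\CS$ expression. Summing over $\cS$, taking $\frac{1}{\alpha-1}\log$, and rearranging gives the claim.

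The only delicate point is the bookkeeping of supports and infinities, namely making sure that $\zeta_{\min}$ is finite whenever the left-hand divergence is finite. This is handled by the blockwise support argument in the first step. Note that $\zeta_{\min}$ could be negative only if the states were subnormalized; here all states are normalized, so the inequality holds as stated in either case.
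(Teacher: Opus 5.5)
Your proof is correct and follows essentially the same route as the paper's. You dispose of the infinite case by the stated convention, expand both divergences with the classical direct-sum formula and tensor-product additivity, and pull out the factor $2^{(\alpha-1)\zeta_{\min}}$ term by term. The only minor difference is that you get finiteness of the $Z$-conditional divergences from the support factorization $\supp(A\otimes B)=\supp(A)\otimes\supp(B)$, whereas the paper obtains it by data processing under tracing out $Q$; both justifications are valid.
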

\begin{proof}
If $D_\alpha\Dargs{\rho_{ZQ\CS}}{\sigma_{ZQ\CS}}=+\infty$,
the claim holds by the stated convention. Otherwise, finiteness implies
$\supp(\sfP)\subseteq\supp(\sfQ)$ and, for every
$\cS\in\supp(\sfP)$, finiteness of
$D_\alpha\Dargs{\rho_{ZQ|\cS}}{\sigma_{ZQ|\cS}}$.
By data processing under tracing out $Q$, each
$D_\alpha\Dargs{\rho_{Z|\cS}}{\sigma_{Z|\cS}}$ is also finite.
Thus all conditional states appearing in the minimum are well-defined,
and the minimum is finite.
Focusing on this case, we apply Fact~\ref{th:renyi-properties}\ref{fact:classmix} to get
\begin{align*}
D_\alpha \Dargs{\rho_{ZQ\CS}}{\sigma_{ZQ\CS}}
&=\frac{1}{\alpha-1}\log\left(\sum_{\cS\in\supp(\sfP)} \sfP(\cS)^\alpha\sfQ(\cS)^{1-\alpha} 2^{(\alpha-1)D_\alpha(\rho_{ZQ|\cS}\Vert\sigma_{ZQ|\cS})}\right) \nonumber \\
&=\frac{1}{\alpha-1}\log\Biggl(\sum_{\cS\in\supp(\sfP)}
\sfP(\cS)^\alpha\sfQ(\cS)^{1-\alpha}\\
&\hspace{35mm}\cdot 2^{(\alpha-1)\left(D_\alpha(\rho_{Z|\cS}\Vert\sigma_{Z|\cS})+D_\alpha(\rho_{Q|\cS}\Vert\sigma_{Q|\cS})\right)}\Biggr)\\
&\geq\min_{\cS\in\supp(\sfP)}D_\alpha(\rho_{Z|\cS}\Vert\sigma_{Z|\cS})\\
&\quad+\frac{1}{\alpha-1}\log\left(\sum_{\cS\in\supp(\sfP)}
\sfP(\cS)^\alpha\sfQ(\cS)^{1-\alpha}
2^{(\alpha-1)D_\alpha(\rho_{Q|\cS}\Vert\sigma_{Q|\cS})}\right)\\
&= D_\alpha \Dargs{\rho_{Q\CS}}{\sigma_{Q\CS}} + \min_{\cS\in\supp(\sfP)} D_\alpha \Dargs{\rho_{Z|\cS}}{\sigma_{Z|\cS}},
\end{align*}
where in the first equality we used the restriction to the $\supp(\sfP) \subseteq \supp(\sfQ)$ case, the second equality holds because $\rho_{Q Z|\cS} = \rho_{Q|\cS}\otimes\rho_{Z|\cS}$ and $\sigma_{Q Z|\cS} = \sigma_{Q|\cS}\otimes\sigma_{Z|\cS}$, and the inequality holds since $\frac{1}{\alpha-1}\log(\cdot)$ is a nondecreasing function given that $\alpha>1$.
\end{proof}

We now prove the two main lemmas we need.
In the following lemma, we state the LHS with an explicit minimization over $\nu$ in order to make it clear that the LHS has no dependence on the choice of $f$; therefore we are free to later minimize the RHS over the choice of $f$.

\begin{lemma}\label{lemma:fbound}
Consider channels and parameters satisfying the conditions of Theorem~\ref{th:robustselftest} (including in particular Postulate Set~\ref{posts:overall}), except that here we allow $S_\Omega$ to be an \emph{arbitrary} nonempty subset of the probability simplex on $\alphCS$.
Take any $f \in \mathbb{R}^{|\alphCS|}$, and define
\begin{align*}
\ubndaf = \sup_{\sigma\in\dop{=}(\inN)} \frac{1}{\alpha-1}\log\left(\sum_{\cS} \sfP(\cS) 2^{(\alpha-1) \left(D_\alpha\Dargs{\sigma^{\mapE}_{\outQ\stabE|\cS}}{\sigma^{\mapF}_{\outQ\stabE|\cS}}-f(\cS)\right) }\right).
\end{align*}
Then we have
\begin{align}
\min_{\nu\in\dop{=}(\allN \bR)}
D_\alpha\Dargs{\mapM\left[\realstate_{|\Omega} \right]}{\mapM\left[\mapF^{\otimes n}[\nu]\right]} &\leq n\left(\ubndaf + \sup_{\sfQ\in S_\Omega} f\cdot\sfQ\right) + \frac{\alpha\log(1/p_\Omega)}{\alpha-1}, \label{eq:f_Dalphaselftest} \\
\min_{\nu\in\dop{=}(\allN \bR)}
\Dmax^{\esmooth}\Dargs{\mapM\left[\realstate_{|\Omega} \right]}{\mapM\left[\mapF^{\otimes n}[\nu]\right]} &\leq n\left(\ubndaf + \sup_{\sfQ\in S_\Omega} f\cdot\sfQ\right) + \frac{\smf{\esmooth}}{\alpha-1} + \frac{\alpha\log(1/p_\Omega)}{\alpha-1},
\end{align}
where $f\cdot\sfQ$ denotes $\sum_{\cS} f(\cS) \sfQ(\cS)$.
\end{lemma}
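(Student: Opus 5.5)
We follow the blueprint of the proof of Theorem~\ref{th:main}. The tilting map $\mapF^f$ is replaced by an explicit ancillary register that encodes the $f$ values, so that everything stays a genuine quantum channel and Fact~\ref{lemma:extract_Z} can be used.

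\textbf{Step 1: tilted channel pair.} Set $f_{\max}=\max_c f(c)$, and choose for each $c$ a pair of classical states $\rho_{Z|c},\sigma_{Z|c}$ on a fresh register $Z$ with
\[
D_\alpha(\rho_{Z|c}\Vert\sigma_{Z|c})=f_{\max}-f(c)\ge0 .
\]
For instance, take two-outcome distributions with $\rho_{Z|c}$ deterministic and $\sigma_{Z|c}$ putting weight $2^{-(f_{\max}-f(c))}$ on the same outcome. Define $\widetilde\mapE$ and $\widetilde\mapF$ as $\mapE$ and $\mapF$ followed by reading $C$ and appending $\rho_{Z|c}$, respectively $\sigma_{Z|c}$. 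Both states then satisfy the Markov condition $Z\leftrightarrow C\leftrightarrow O\stabE$. By Fact~\ref{th:renyi-properties}\ref{fact:classmix}, for every input $\sigma$,
\[
D_\alpha\bigl(\widetilde\mapE[\pure\sigma]\Vert\widetilde\mapF[\pure\sigma]\bigr)=\ubndaf+f_{\max}.
\]
Taking the supremum gives $D_\alpha(\widetilde\mapE\otimes\idmap_\stabE\Vert\widetilde\mapF\otimes\idmap_\stabE)=\ubndaf+f_{\max}$.

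\textbf{Step 2: almost-subadditivity.} Build a measurement channel $\widetilde\mapM$ on $\bZ\bCS\boutQ\bR$. It measures $\bZ$ and $\bCS$, applies $\mapM$, and keeps $\bCS$, $F$ and $\bZ$ as classical outputs; the composite is still a measurement channel. Apply Theorem~\ref{th:almostadd} to $(\widetilde\mapE,\widetilde\mapF,\widetilde\mapM)$ with input $\rho^0$. This yields a $\nu$ such that
\[
D_\alpha\bigl(\widetilde\mapM[\widetilde\mapE^{\otimes n}\rho^0]\Vert\widetilde\mapM[\widetilde\mapF^{\otimes n}\nu]\bigr)\le n(\ubndaf+f_{\max}).
\]
The minimum over $\nu$ is attained by compactness and lower semicontinuity.

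\textbf{Step 3: strip off $\bZ$.} Conditioned on $\bCS=\bcS$, the $\bZ$ registers are independent of everything else in both states. Their divergence is $\sum_i (f_{\max}-f(c_i))$, which is additive over product classical states. Apply Fact~\ref{lemma:extract_Z} with the conditioning register taken to be $\bCS$; this requires that $\mapM$ be applied to a state that still carries a copy of $\bCS$, which is why $\widetilde\mapM$ keeps $\bCS$. This removes $\bZ$ at the cost of subtracting $\min_{\bcS}\sum_i(f_{\max}-f(c_i))$.

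The minimum must be taken only over $\bcS\in\Omega$, so the order of operations matters. First condition on $\Omega$: restrict to the accepted $\bcS$ blocks using Fact~\ref{th:renyi-properties}\ref{fact:eventconditioning}, paying $\frac{\alpha}{\alpha-1}\log(1/p_\Omega)$ and using monotonicity in the second argument. Only then use Fact~\ref{lemma:extract_Z} on the conditioned state, whose support lies in $\Omega$. On $\Omega$,
\[
\sum_i\bigl(f_{\max}-f(c_i)\bigr)\ge n f_{\max}-n\sup_{\sfQ\in S_\Omega}f\cdot\sfQ .
\]
Combining with Step~2 gives $n(\ubndaf+\sup_{\sfQ\in S_\Omega} f\cdot\sfQ)$ plus the conditioning penalty. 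Finally, trace out the retained $\bCS$ copy by data processing to reach $\mapM[\realstate_{|\Omega}]$ against $\mapM[\mapF^{\otimes n}\nu]$.

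\textbf{Step 4: smooth bound.} The $\Dmax^{\esmooth}$ bound follows from Fact~\ref{th:renyi-properties}\ref{th:smoothing}.

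\textbf{Main obstacle.} The hardest point is Step~3: the conditioning and the extraction of $\bZ$ have to commute correctly with the final measurement. Fact~\ref{lemma:extract_Z} needs an explicit classical $\bCS$ register on which to condition, and $\mapM$ may otherwise destroy it. I would handle this by having $\widetilde\mapM$ copy $\bCS$ before applying $\mapM$ and discarding the copy only at the very end.
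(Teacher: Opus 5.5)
Your proposal is correct and follows essentially the same route as the paper: encode $f$ in an ancilla $Z$ appended by read-and-prepare channels, apply Theorem~\ref{th:almostadd} with a measurement channel that keeps a copy of $\bCS$, pay $\frac{\alpha}{\alpha-1}\log(1/p_\Omega)$ for conditioning, strip $\bZ$ with Fact~\ref{lemma:extract_Z} over the accepted $\bcS$ only, and bound $\max_{\bcS\in\Omega}\sum_i f(c_i)$ by $n\sup_{\sfQ\in S_\Omega} f\cdot\sfQ$. The only difference is cosmetic and lies in the encoding: the paper pairs a $c$-dependent real-side state against a fixed $\id_Z/|Z|$ (offset $K$, which needs $|Z|$ large), whereas you pair a deterministic real-side $Z$ against a $c$-dependent ideal-side $Z$ (offset $f_{\max}$), which works equally well because Fact~\ref{lemma:extract_Z} allows $\sigma_{Z|c}$ to depend on $c$, and it lets a single qubit suffice.
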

\begin{proof}
In the notation throughout this proof we use the formal definition of the event $\Omega$ as being technically a subset of the alphabet of $\bCS$.
We focus only on proving the bound on $D_\alpha$ in~\eqref{eq:f_Dalphaselftest}; the bound on $\Dmax^{\esmooth}$ follows immediately by Fact~\ref{th:renyi-properties}\ref{th:smoothing}. Also, we first define a new measurement channel $\cmapM \in \CPTP{\boutQ \bCS \bR}{\bCS F}$ acting on states with classical $\bCS$ as follows:\footnote{To extend the action of this channel to all states in $\dop{=}(\boutQ \bCS \bR)$, we could simply say it always first applies a pinching channel on $\bCS$ in the classical basis; this makes it well-defined on general states without changing the way it acts on states that are already classical on $\bCS$.}  it first creates a copy of $\bCS$  in another classical register $\bL$, then it applies $\mapM$ to $\boutQ \bL \bR$ (using $\bL$ in lieu of $\bCS$).
For this channel, we straightforwardly have for any states $\omega,\omega' \in \dop{=}(\boutQ \bCS \bR)$ classical on $\bCS$:
\begin{align*}
D_\alpha\Dargs{\mapM [\omega]}{\mapM[\omega']} = D_\alpha\Dargs{\Tr_{\bCS} \circ \cmapM [\omega]}{\Tr_{\bCS} \circ \cmapM[\omega']} \leq D_\alpha\Dargs{\cmapM [\omega]}{\cmapM[\omega']}
\end{align*}
where the equality holds since the copying leaves the reduced states on $\boutQ \bCS \bR$ invariant (and $\bL$ is a copy of $\bCS$), and the inequality holds by data-processing.
Hence to prove our desired result, it suffices to prove it with $\cmapM$ in place of $\mapM$. 

We now again apply the proof technique developed in~\cite{DFR20,ZFK20,arx_AHT24}.
We choose $K \in \mathbb{R}$ such that $K-f(\cS)>0$ for all $\cS$, and a register $Z$ with dimension large enough that $\log|Z| > K-f(\cS)$ for all $\cS$. 
We choose any read-and-prepare channel $\mapD \in \CPTP{\CS}{\CS Z}$ such that the output states it prepares conditioned on $\cS$ are classical on $Z$ and satisfy
\begin{align}\label{eq:D_entropy}
D_\alpha \Dargs{\mapD[\rho]_{Z|\cS}}{\id_Z/|Z|} = K-f(\cS).
\end{align}
Note that such a channel indeed exists since $D_\alpha \Dargs{\mapD[\rho]_{Z|\cS}}{\id_Z/|Z|} = \log|Z| - H_\alpha(Z)_{\mapD[\rho]_{|\cS}}$, thus we are just requiring that for each $\cS$ the channel prepares some classical state with entropy $H_\alpha(Z)_{\mapD[\rho]_{|\cS}} = \log|Z|-K+f(\cS) \in [0, \log|Z|]$, which is easily doable (e.g.~a mixture between a point distribution and a uniform distribution). Furthermore, let $\mapA$ denote a channel that simply appends the maximally mixed state $\id_Z/|Z|$ to its input state, without any other effect.
With this, we extend the channels $\mapE,\mapF$ to new channels $\newmapE,\newmapF \in \CPTP{\inN}{\CS \outQ Z}$ as follows:
\begin{align*}
\newmapE \defvar \mapD \circ \mapE, \quad
\newmapF \defvar \mapA \circ \mapF,
\end{align*}
which have the property that their outputs are always classical on $Z$ and $\CS$. 

Let $\mapP$ denote a pinching channel on $\bZ$ in its classical basis. With this, we note that $\mapP \otimes \cmapM \in \CPTP{\bZ \boutQ \bCS \bR}{\bZ \bCS F}$ is a valid measurement channel. Therefore we can apply Theorem~\ref{th:almostadd} to the channels constructed above, to obtain the central bound in this proof:
\begin{align}\label{eq:applyalmostadd}
\inf_{\nu\in\dop{=}(\allN \bR)}
D_\alpha \Dargs{\mapP \otimes \cmapM \left[\newmapE^{\otimes n} [\rho^0]\right]}{\mapP \otimes \cmapM \left[\newmapF^{\otimes n} [\nu]\right]} \leq n D_\alpha\Dargs{\newmapE \otimes \idmap_{\stabE}}{\newmapF \otimes \idmap_{\stabE}}.
\end{align}
We now relate the LHS and RHS of this bound to our desired quantities of interest. 

We start by relating the LHS to the sandwiched divergence on the LHS of our final desired bound~\eqref{eq:f_Dalphaselftest}. First note that we can define a valid extension of the state $\realstate_{\bCS \boutQ \bR}$ as follows:
\begin{align*}
\realstate_{\bZ \bCS \boutQ \bR} &\defvar \newmapE^{\otimes n} [\rho^0] = \mapD^{\otimes n} \left[\realstate_{\bCS \boutQ \bR}\right],
\end{align*} 
since $\mapD$ is a read-and-prepare channel. Furthermore, the construction of the channel $\cmapM$ ensures that this extension has the following property: writing 
$\suppOmega=\{\bcS\in\Omega \mid
\Pr_{\realstate}[\bcS]>0\}$ to denote the support of the classical distribution of
$\realstate_{\bCS|\Omega}$ (mainly to avoid conditioning on zero-probability $\bcS$ values), we have
\begin{align}
\cmapM \left[\realstate_{\bZ \boutQ \bCS \bR|\Omega}\right] &= \sum_{\bcS\in\suppOmega} \frac{\Pr_{\realstate}[\bcS] }{p_\Omega} \cmapM \left[\pure{\bcS} \otimes \realstate_{\bZ \boutQ \bR|\bcS}\right] \nonumber\\
&= \sum_{\bcS\in\suppOmega} \frac{\Pr_{\realstate}[\bcS] }{p_\Omega} \pure{\bcS} \otimes \mapM \left[\pure{\bcS} \otimes \realstate_{\bZ \boutQ \bR|\bcS}\right] \nonumber\\
&= \sum_{\bcS\in\suppOmega} \frac{\Pr_{\realstate}[\bcS] }{p_\Omega}  \pure{\bcS} \otimes \realstate_{\bZ|\bcS} \otimes \mapM \left[\pure{\bcS} \otimes \realstate_{\boutQ \bR|\bcS}\right],\label{eq:rhoMarkov}
\end{align}
where the first line holds by simply filling in the definition of conditional states and applying linearity of $\cmapM$, the second line holds fundamentally from the definition of $\cmapM$, and in the third line we use the fact that $\realstate_{\bZ \boutQ \bR|\bcS} = \realstate_{\bZ|\bcS} \otimes \realstate_{\boutQ \bR|\bcS}$ by how $\realstate$ is constructed using the channels $\mapD$ (along with the fact that $\mapM$ does not act on $\bZ$).
This means that $\cmapM \left[\realstate_{\bZ \boutQ \bCS \bR|\Omega}\right]$ has the correct Markov structure to apply Fact~\ref{lemma:extract_Z} (noting that each $\realstate_{\bZ|\bcS} $ and $\mapM \left[\pure{\bcS}\otimes  \realstate_{\boutQ \bR|\bcS}\right]$ term is a normalized state and therefore must in fact be the conditional states of  $\cmapM \left[\realstate_{\bZ \boutQ \bCS \bR|\Omega}\right]$ on $\bZ$ and $F$ respectively).

With this in mind, for \emph{any}  $\nu\in\dop{=}(\allN \bR)$ we can write
\begin{align*}
& D_\alpha\Dargs{\cmapM \left[\realstate_{\boutQ \bCS \bR|\Omega}\right]}{\cmapM \left[\mapF^{\otimes n} [\nu]\right]}   \nonumber\\
\leq& D_\alpha\Dargs{\cmapM \left[\realstate_{\bZ \boutQ \bCS \bR|\Omega}\right]}{\cmapM \left[\newmapF^{\otimes n} [\nu]\right]} - \min_{\bcS \in \suppOmega} D_\alpha\Dargs{\realstate_{\bZ|\bcS}}{\id_{\bZ}/|\bZ|}   \nonumber\\
\leq& D_\alpha\Dargs{\cmapM \left[\realstate_{\bZ \boutQ \bCS \bR}\right]}{\cmapM \left[\newmapF^{\otimes n} [\nu]\right]} + \frac{\alpha\log(1/p_\Omega)}{\alpha-1} - \min_{\bcS \in \suppOmega} D_\alpha\Dargs{\realstate_{\bZ|\bcS}}{\id_{\bZ}/|\bZ|} \nonumber\\
=& D_\alpha \Dargs{\mapP \otimes \cmapM \left[\newmapE^{\otimes n} [\rho^0]\right]}{\mapP \otimes \cmapM \left[\newmapF^{\otimes n} [\nu]\right]}\nonumber\\ 
&\qquad + \frac{\alpha\log(1/p_\Omega)}{\alpha-1} - \min_{\bcS \in \suppOmega} D_\alpha\Dargs{\realstate_{\bZ|\bcS}}{\id_{\bZ}/|\bZ|}
.
\end{align*}
In the above, in the second line we applied Fact~\ref{lemma:extract_Z} by noting that the state in the first argument has the correct Markov structure as stated above in~\eqref{eq:rhoMarkov}, together with the fact that we straightforwardly have $\cmapM \left[\newmapF^{\otimes n} [\nu]\right] = \id_{\bZ}/|\bZ| \otimes \cmapM \left[\mapF^{\otimes n} [\nu] \right]$ and hence the state in the second argument also has that Markov structure in a ``trivial'' sense (in that $\id_{\bZ}/|\bZ|$ is independent of $\bcS$). 
To get the third line, we use the fact that $\realstate_{\bZ \boutQ \bCS \bR} \geq p_\Omega\, \realstate_{\bZ \boutQ \bCS \bR|\Omega}$ and thus $\cmapM \left[\realstate_{\bZ \boutQ \bCS \bR}\right] \geq p_\Omega\, \cmapM \left[\realstate_{\bZ \boutQ \bCS \bR|\Omega}\right] $ as well, then apply Fact~\ref{th:renyi-properties}\ref{fact:eventconditioning}. Finally, in the last line we filled in the definition of $\realstate_{\bZ \boutQ \bCS \bR}$ and used the fact that $
\newmapE^{\otimes n} [\rho^0]$ and $
\newmapF^{\otimes n} [\nu]$ are already classical on $\bZ$, hence they are invariant under the pinching channel $\mapP$ on those registers.

Since the above bound holds for any $\nu$, it still holds by taking the infimum on both sides.
Furthermore, the $D_\alpha\Dargs{\realstate_{\bZ|\bcS}}{\id_{\bZ}/|\bZ|}$ terms can be simplified as follows:
\begin{align*}
D_\alpha\Dargs{\realstate_{\bZ|\bcS}}{\id_{\bZ}/|\bZ|} = \sum_{i=1}^n D_\alpha\Dargs{\realstate_{Z_i|\cS_i}}{\id_{Z_i}/|Z_i|}
= nK - \sum_{i=1}^n f(\cS_i),
\end{align*}
where the first equality holds by recalling that $\realstate_{\bZ}$ was constructed by applying $\mapD^{\otimes n}$ and thus we have $\realstate_{\bZ|\bcS} = \bigotimes_{i=1}^n \realstate_{Z_i|\cS_i}$, while the second equality holds by the defining property \eqref{eq:D_entropy} of the $\mapD$ channels.
Putting everything together, we conclude that the relationship between the LHS expressions in~\eqref{eq:f_Dalphaselftest} and~\eqref{eq:applyalmostadd} is
\begin{align}
& \inf_{\nu\in\dop{=}(\allN \bR)} D_\alpha\Dargs{\cmapM \left[\realstate_{\boutQ \bCS \bR|\Omega}\right]}{\cmapM \left[\mapF^{\otimes n} [\nu]\right]}   \nonumber\\
\leq& \inf_{\nu\in\dop{=}(\allN \bR)} D_\alpha \Dargs{\mapP \otimes \cmapM \left[\newmapE^{\otimes n} [\rho^0]\right]}{\mapP \otimes \cmapM \left[\newmapF^{\otimes n} [\nu]\right]}\nonumber\\ 
&\qquad + \frac{\alpha\log(1/p_\Omega)}{\alpha-1} - nK + \max_{\bcS \in \suppOmega} \sum_{i=1}^n f(\cS_i)
.\label{eq:LHSbound}
\end{align}
Furthermore, the maximum in the last line can be bounded using the acceptance set:
\begin{align*}
\max_{\bcS \in \suppOmega} \sum_{i=1}^n f(\cS_i) = \max_{\bcS \in \suppOmega} n f \cdot \freq_{\bcS} \leq n\sup_{\sfQ \in S_\Omega} f \cdot \sfQ.
\end{align*}

We now turn to the RHS of~\eqref{eq:applyalmostadd}.
Similar to Appendix~\ref{app:minimax}, we highlight that all subsequent steps in this analysis (including Lemma~\ref{lemma:dual_sand}) are equalities, and do not admit any further sharpening.
Recall that the channel divergence is a supremum over input states to $\newmapE\otimes\idmap_{\stabE}$ and $\newmapF\otimes\idmap_{\stabE}$; by data processing it suffices to use purifications $\pure\sigma$ of input marginals on $\inN$. In terms of the $\sigma^{\mapE},\sigma^{\mapF}$ notation in the theorem statement, $\newmapE[\pure\sigma] = \mapD [\sigma^{\mapE}_{\CS \outQ \stabE}]$ and $\newmapF[\pure\sigma] = \mapA [\sigma^{\mapF}_{\CS \outQ \stabE}]$. By hypothesis, the channels $\mapE,\mapF$ are such that these one-round states have the same distribution $\sfP$ on $\CS$. Hence every feasible point in the supremum satisfies
\begin{align}
D_\alpha\Dargs{\mapD [\sigma^{\mapE}]}{\mapA [\sigma^{\mapF}]} &=\frac{1}{\alpha-1}\log\left(\sum_{\cS} \sfP(\cS) 2^{(\alpha-1)D_\alpha\Dargs{\mapD [\sigma^{\mapE}]_{Z\outQ\stabE|\cS}}{\mapA [\sigma^{\mapF}]_{Z\outQ\stabE|\cS}}}\right) \nonumber\\
&=\frac{1}{\alpha-1}\log\left(\sum_{\cS} \sfP(\cS) 2^{(\alpha-1) \left(D_\alpha\Dargs{\mapD [\sigma^{\mapE}]_{Z|\cS}}{\id_Z/|Z|} + D_\alpha\Dargs{\sigma^{\mapE}_{\outQ\stabE|\cS}}{\sigma^{\mapF}_{\outQ\stabE|\cS}}\right) }\right) \nonumber\\
&=\frac{1}{\alpha-1}\log\left(\sum_{\cS} \sfP(\cS) 2^{(\alpha-1) \left(K + D_\alpha\Dargs{\sigma^{\mapE}_{\outQ\stabE|\cS}}{\sigma^{\mapF}_{\outQ\stabE|\cS}}-f(\cS)\right) }\right) \nonumber\\
&=\frac{1}{\alpha-1}\log\left(\sum_{\cS} \sfP(\cS) 2^{(\alpha-1) \left(D_\alpha\Dargs{\sigma^{\mapE}_{\outQ\stabE|\cS}}{\sigma^{\mapF}_{\outQ\stabE|\cS}}-f(\cS)\right) }\right)  + K ,\label{eq:RHSbound}
\end{align}
where the first line is Fact~\ref{th:renyi-properties}\ref{fact:classmix} together with the fact that the two one-round states have the same distribution $\sfP$ on $\CS$, the second line is by recalling that since $\mapD$ is a read-and-prepare channel we have $\mapD[\sigma^{\mapE}]_{Z\outQ\stabE|\cS} = \mapD[\sigma^{\mapE}]_{Z|\cS} \otimes \sigma^{\mapE}_{\outQ\stabE|\cS}$ (and also we have $\mapA[\sigma^{\mapF}]_{Z\outQ\stabE|\cS} = \id_Z/|Z| \otimes \sigma^{\mapF}_{\outQ\stabE|\cS}$), and the third line is by the defining property~\eqref{eq:D_entropy} of $\mapD$.

Putting together the bounds~\eqref{eq:LHSbound}--\eqref{eq:RHSbound} with the central bound~\eqref{eq:applyalmostadd}, and noting that the value $K$ cancels off, we arrive at the claimed bound in~\eqref{eq:f_Dalphaselftest}. To address a final technicality we observe that it is indeed valid to write $\min$ instead of $\inf$ in that bound, because it is minimizing a convex lower semicontinuous function over a compact set and thus the infimum is attained.
\end{proof}

Since Lemma~\ref{lemma:fbound} holds for \emph{any} choice of $f$ on the RHS, we can optimize over it to get the tightest possible bound. The following lemma gives the resulting bound, which then immediately yields Theorem~\ref{th:robustselftest} (taking $\nu^0$ to be the state achieving the minimum in~\eqref{eq:f_Dalphaselftest}).

\begin{lemma}\label{lemma:dual_sand}
In Lemma~\ref{lemma:fbound}, if furthermore $S_\Omega$ is closed, convex, and contains at least one full-support distribution, then the following duality relation holds:
\begin{align}
\inf_{f\in\mathbb{R}^{|\alphCS|}}\left(\ubndaf + \sup_{\sfQ\in S_\Omega} f\cdot\sfQ\right)
= \frac{1}{n} \compexpsand{\alpha}, \label{eq:duality}
\end{align}
keeping in mind the convention following
\eqref{eq:alpharatedefn}.
\end{lemma}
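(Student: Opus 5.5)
The plan mirrors the proof of Lemma~\ref{lemma:dual_sharp}, with sandwiched divergences in place of sharp ones. The work is a variational rewriting of $\ubndaf$ followed by a minimax exchange.

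First, for a fixed input $\sigma\in\dop{=}(\inN)$, define the weights
\[
h_{\cS}(\sigma)=\sfP(\cS)\,2^{(\alpha-1)D_\alpha\left(\sigma^{\mapE}_{\outQ\stabE|\cS}\middle\Vert\sigma^{\mapF}_{\outQ\stabE|\cS}\right)}
=Q_\alpha\!\left(\sigma^{\mapE}_{\land(\CS=\cS)}\middle\Vert\sigma^{\mapF}_{\land(\CS=\cS)}\right),
\]
with $h_{\cS}=0$ whenever $\sfP(\cS)=0$. The Gibbs variational principle (the KL nonnegativity argument behind \eqref{eq:logmoment-kl}) then gives
\[
\ubndaf=\sup_{\sigma,\sfQ'}\left\{-f\cdot\sfQ'+\frac{1}{\alpha-1}\sum_{\cS}\sfQ'(\cS)\log\frac{h_{\cS}(\sigma)}{\sfQ'(\cS)}\right\}.
\]
The second term equals $\sum_{\cS}\sfQ'(\cS)D_\alpha(\cdots)-D(\sfQ'\Vert\sfP)/(\alpha-1)$. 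Next, write $\sup_{\sfQ\in S_\Omega}f\cdot\sfQ$ as a supremum over $\sfQ^0\in S_\Omega$. The left-hand side of \eqref{eq:duality} becomes $\inf_f\sup_{\sigma,\sfQ',\sfQ^0}\{\Phi(\sigma,\sfQ')+f\cdot(\sfQ^0-\sfQ')\}$. If we can exchange $\inf$ and $\sup$, the infimum over $f$ forces $\sfQ'=\sfQ^0\in S_\Omega$ (otherwise it is $-\infty$), and what remains is exactly $\frac1n\compexpsand{\alpha}$.

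To justify the exchange with Fact~\ref{th:minimax}, I need three properties of $h_{\cS}$, paralleling Lemma~\ref{lem:h-prop}:
\begin{enumerate}[label=\textup{(\alph*)}]
\item It depends only on the marginal $\sigma_\inN$. This follows from data processing under isometries on the purifying register.
\item It is concave in $\sigma_\inN$. This follows from purified-input concavity, Fact~\ref{th:renyi-properties}\ref{prop:renyi-purified-concavity}, applied to the CP maps ``apply $\mapE$ or $\mapF$, then project onto block $\cS$''.
\item It is continuous on positive-definite inputs. This follows from concavity once $h_{\cS}$ is finite.
\end{enumerate}
With these, $(u,v)\mapsto u\log(v/u)$ is nondecreasing in $v$ and jointly concave. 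So the objective is concave and upper semicontinuous in $(\sigma,\sfQ')$ and affine in $f$, and the domain (states, simplex, $S_\Omega$) is compact and convex. Boundary discontinuities are handled as in Appendix~\ref{app:minimax}: restrict to inputs $\sigma\geq z\id/|\inN|$, use concavity to get $h_{\cS}(\sigma^z)\geq(1-z)h_{\cS}(\sigma)$, conclude that the restricted and unrestricted values differ by at most $-\log(1-z)/(\alpha-1)$ on both sides, and let $z\to0$. Blocks with $\sfP(\cS)=0$ are handled as before: these vanish for all inputs once they vanish for a full-rank one, so $\sfQ'(\cS)=0$ is forced.

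The main obstacle is finiteness, since Postulate Set~\ref{posts:overall} contains no domination condition analogous to Postulate Set~\ref{posts:seq}\ref{post:domination}. Here $h_{\cS}$ may be $+\infty$, and the lemma then relies on the convention after \eqref{eq:alpharatedefn} declaring the objective $+\infty$. I would first dispose of the case where some feasible $\sigma$ has $\sfP(\cS)>0$ and infinite block divergence. In that case both sides are $+\infty$: the right side by the convention, choosing $\sfQ$ near a full-support element of $S_\Omega$, and the left side because $\ubndaf=+\infty$ for every $f$. Otherwise, infinite divergence on some block implies support non-containment, but finiteness for every input, combined with quasi-concavity and compactness, should give a uniform bound. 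This is where the hypothesis that $S_\Omega$ contains a full-support distribution enters: it guarantees that the constrained supremum is not artificially $-\infty$, and it makes the minimax value match. I expect this support and finiteness bookkeeping, rather than the minimax exchange itself, to be the delicate part.
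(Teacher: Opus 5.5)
Your proof is correct. It follows a different route from the one the paper uses for this lemma, namely the route of Appendix~\ref{app:minimax} (Lemma~\ref{lemma:dual_sharp}) carried over to sandwiched divergences.

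\textbf{The paper's route.} After the same degenerate-case reduction, the paper applies Sion's theorem directly to $F_{\alpha,\sigma}(f)+f\cdot\sfQ$, with $(\sfQ,\sigma)$ on the maximizing side.
\begin{itemize}
\item Concavity in $\sigma$ comes from the extended channels $\newmapE,\newmapF$.
\item Continuity is needed on the \emph{whole} state space. The paper gets it from Lemma~\ref{lem:renyi-dominated-continuity}: finiteness at the maximally entangled input gives Choi-support inclusion, hence a uniform domination constant $\Lambda$.
\item The inner infimum over $f$ is then evaluated as a convex conjugate of $\lse$, which needs a second minimax exchange over the simplex.
\end{itemize}

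\textbf{Your route.} You use the Gibbs variational principle to pull the conjugate variable $\sfQ'$ into the supremum \emph{before} swapping.
\begin{itemize}
\item The objective is therefore affine in $f$, only one exchange is needed, and the post-swap infimum over $f$ is trivial: it forces $\sfQ'=\sfQ^0$.
\item Because of your $z$-regularization, you only need continuity on $\{\sigma\geq z\id/|\inN|\}$. There it follows from finite-valued concavity alone, so you avoid the domination/continuity lemma.
\item The price is the $z\to0$ sandwich argument.
\end{itemize}

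\textbf{Two small clean-ups.}
\begin{itemize}
\item No uniform bound is needed in your non-degenerate branch. Pointwise finiteness of $h_{\cS}$ plus concavity already gives continuity on the relative interior. If you want a uniform bound anyway, the clean source is Choi-state domination, not ``quasi-concavity and compactness''. The sentence about infinite divergence implying support non-containment is vacuous in that branch and can be dropped.
\item The full-support hypothesis on $S_\Omega$ is used only in the degenerate case. There you should take the full-support element itself rather than something ``near'' it. In the finite case your minimax equality holds without this hypothesis, with both sides possibly equal to $-\infty$ together.
\end{itemize}
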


\begin{proof}
Throughout this proof, whenever we state that some expression is concave in multiple variables, it should be understood that we mean \emph{joint} concavity (and analogously for convexity). The {\Renyi} parameter $\alpha>1$ should be viewed throughout as a constant.

We first handle a degenerate case: if there exists some $\sigma\in\dop{=}(\inN)$ such that $D_\alpha\Dargs{\sigma^{\mapE}_{\outQ\stabE|\cS}}{\sigma^{\mapF}_{\outQ\stabE|\cS}} = \infty$ for some $\cS$, by our convention following
\eqref{eq:alpharatedefn} this means we have $\sfP(\cS)>0$. Then we would have $\ubndaf[f]=\infty$ for all $f$ due to this feasible point in the supremum over $\sigma$, hence the LHS of~\eqref{eq:duality} would be $\infty$. As for the RHS, by hypothesis $S_\Omega$ contains a full-support distribution $\sfQ$, and thus for the aforementioned $\sigma$ and $\cS$ we have $\sfQ(\cS) D_\alpha\Dargs{\sigma^{\mapE}_{\outQ\stabE|\cS}}{\sigma^{\mapF}_{\outQ\stabE|\cS}} = \infty$ and the RHS of~\eqref{eq:duality} is $\infty$ as well, hence equality holds.
Thus for the remainder of the proof, we assume the channels have the property that all the  $D_\alpha\Dargs{\sigma^{\mapE}_{\outQ\stabE|\cS}}{\sigma^{\mapF}_{\outQ\stabE|\cS}}$ terms are always finite.

We begin by highlighting some properties of the following function of $f\in\mathbb{R}^{|\alphCS|}$ (which is always finite, by the above assumption):
\begin{align}\label{eq:defineF}
F_{\alpha,\sigma}(f) =
\frac{1}{\alpha-1}\log\left(\sum_{\cS} \sfP(\cS) 2^{(\alpha-1) \left(D_\alpha\Dargs{\sigma^{\mapE}_{\outQ\stabE|\cS}}{\sigma^{\mapF}_{\outQ\stabE|\cS}}-f(\cS)\right) }\right) ,
\end{align}
so that the objective function in the definition of $\ubndaf[f]$ can be written as $F_{\alpha,\sigma}(f)$.
First, note that following the same construction of the extended channels $\newmapE,\newmapF$ as in the Lemma~\ref{lemma:fbound} proof, from~\eqref{eq:RHSbound} we have
\begin{align*}
F_{\alpha,\sigma}(f) = D_\alpha\Dargs{\mapD [\sigma^{\mapE}]}{\mapA [\sigma^{\mapF}]} - K = D_\alpha\Dargs{\newmapE [\pure\sigma]}{\newmapF [\pure\sigma]} - K,
\end{align*}
and thus by Fact~\ref{th:renyi-properties}\ref{prop:renyi-purified-concavity} we see that (for fixed $f$) it is a concave function of $\sigma \in \dop{=}(\inN)$.

We now handle a delicate technical point. Recall we have restricted to the case where the channels are such that all  $D_\alpha\Dargs{\sigma^{\mapE}_{\outQ\stabE|\cS}}{\sigma^{\mapF}_{\outQ\stabE|\cS}}$ are always finite. It turns out this property implies that $D_\alpha\Dargs{\newmapE [\pure\sigma]}{\newmapF [\pure\sigma]}$ is a continuous function of $\sigma \in \dop{=}(\inN)$ (as long as we take the purifying function to be continuous), as follows. First, in particular it implies that $D_\alpha$ between the Choi states of these channels is finite. In turn this implies that there exists a constant $\Lambda >0$ such that every feasible input state (with the stabilizing register) yields output states satisfying the domination condition ${\newmapE [\pure\sigma]} \leq \Lambda \, {\newmapF [\pure\sigma]}$. This suffices to ensure continuity, by Lemma~\ref{lem:renyi-dominated-continuity} below. Then $F_{\alpha,\sigma}(f)$ is thus also continuous in $\sigma \in \dop{=}(\inN)$.

Moreover, another way to express $F_{\alpha,\sigma}(f)$ is as follows. For brevity, we write $\vecD$ for the ``vector'' of conditional divergence terms, i.e.~its components are
\[
\vecDc{\cS}=\begin{cases}
D_\alpha\Dargs{\sigma^{\mapE}_{\outQ\stabE|\cS}}{\sigma^{\mapF}_{\outQ\stabE|\cS}} & \text{if }\sfP(\cS)>0,\\
0 & \text{if }\sfP(\cS)=0,
\end{cases}
\]
to maintain consistency with the convention following~\eqref{eq:alpharatedefn}.
Let $f_{\fullsupp},\sfP_{\fullsupp},\vecD_{\fullsupp}$ denote the restrictions of $f,\sfP,\vecD$ to $\supp(\sfP)$ (which depends on $\sigma$). For brevity we also write ``componentwise'' logarithms of vectors, e.g.~$\ln\sfP_{\fullsupp}=(\ln(\sfP(\cS)))_{\cS\in\supp(\sfP)}$. With this we can write
\begin{align*}
F_{\alpha,\sigma}(f) &=
\frac{1}{(\alpha-1)\ln2}\ln\left(\sum_{\cS\in\supp(\sfP)} e^{\ln(\sfP(\cS)) + (\alpha-1) (\ln2) \left(D_\alpha\Dargs{\sigma^{\mapE}_{\outQ\stabE|\cS}}{\sigma^{\mapF}_{\outQ\stabE|\cS}}-f(\cS) \right) }\right) \nonumber\\
&=\frac{\lse\left(
\ln\sfP_{\fullsupp} + (\alpha-1) (\ln2) \left(\vecD_{\fullsupp}-f_{\fullsupp} \right)
\right)}{(\alpha-1)\ln2}
.
\end{align*}
where $\lse$ denotes the base-$e$ log-sum-exponential function (see e.g.~\cite{BV04v8}). Since $\lse$ is a convex function~\cite{BV04v8}, we see that (for fixed $\sigma$) $F_{\alpha,\sigma}(f)$ is a convex function of $f$.\footnote{While the $\lse$ function is not applied to all components of $f$ if $\sfP$ does not have full support, we see that in such cases $F_{\alpha,\sigma}(f)$ does not depend on those components of $f$ at all, hence it trivially maintains convexity with respect to the full vector $f$.} Also, from the definition~\eqref{eq:defineF} it is clear that it is continuous with respect to $f$.

With this we see that $F_{\alpha,\sigma}(f) + f\cdot\sfQ$ is always finite, concave and continuous with respect to $(\sigma,\sfQ)$, and convex and continuous with respect to $f$.
Furthermore, by hypothesis $S_\Omega$ is a compact convex set (and so is $\dop{=}(\inN)$, under the background finite-dimensionality assumption), thus when minimizing over $f$ to find the best possible bound, we can perform an optimization order swap via Sion's minimax theorem:
\begin{align}
\inf_{f\in\mathbb{R}^{|\alphCS|}}\left(\ubndaf[f] + \sup_{\sfQ\in S_\Omega} f\cdot\sfQ\right)
&= \inf_{f\in\mathbb{R}^{|\alphCS|}} \sup_{\sfQ\in {S}_\Omega} \sup_{\sigma\in\dop{=}(\inN)} \left( F_{\alpha,\sigma}(f) + f\cdot\sfQ \right)\nonumber\\
&= \sup_{\sfQ\in {S}_\Omega} \sup_{\sigma\in\dop{=}(\inN)} \inf_{f\in\mathbb{R}^{|\alphCS|}} \left( F_{\alpha,\sigma}(f) + f\cdot\sfQ \right).\label{eq:minimax}
\end{align}

Our goal will now be to show that for any fixed $\sfQ\in {S}_\Omega,\sigma \in \dop{=}(\inN)$, the minimization over $f$ simplifies as follows:
\begin{align}
\inf_{f\in\mathbb{R}^{|\alphCS|}} \left( F_{\alpha,\sigma}(f) + f\cdot\sfQ \right)
= \sfQ\cdot\vecD -\frac{1}{\alpha-1} D\Dargs{\sfQ}{{\sfP}} .
\label{eq:conjugate}
\end{align}
Note that the LHS is exactly $-F^*_{\alpha,\sigma}(-\sfQ)$ where $F^*_{\alpha,\sigma}$ denotes the convex conjugate~\cite{BV04v8} of $F_{\alpha,\sigma}(f)$, so we can view this result as finding a formula for $F^*_{\alpha,\sigma}$ at $-\sfQ$ (for $\sfQ\in {S}_\Omega,\sigma \in \dop{=}(\inN)$).\footnote{Note that as shown in~\cite[Lemma~5.3]{arx_AHT24}, if $\sfQ$ is not a probability distribution then~\eqref{eq:conjugate} does not hold (various steps can fail in our following argument, such as the existence of a feasible point $\lambda_{\fullsupp}=\sfQ_{\fullsupp}$), but this does not cause an issue in our context since we only considered $\sfQ\in {S}_\Omega$.}
This was done in Appendix~\ref{app:minimax}, and also essentially found in~\cite[Lemmas~5.3--5.4]{arx_AHT24}, together with in-depth discussion of its connection to a particular Lagrange dual problem; we defer details to that work. 

For the sake of variety we present here a proof using only convex-conjugate properties of $\lse$; specifically the fact that for any $d$-dimensional real vector $v$, we have~\cite{BV04v8}
\begin{align*}
\lse(v) = \sup_{\lambda\in\mathbb{P}_d} \left(\lambda\cdot v + H_{\mathrm{nat}}(\lambda)\right),
\end{align*}
where here we write $\mathbb{P}_d$ for the probability simplex on an alphabet of size $d$, and $H_{\mathrm{nat}}$ denotes base-$e$ Shannon entropy.

First focusing on the case where $\supp(\sfQ) \subseteq \supp(\sfP)$, let us analogously write $\sfQ_{\fullsupp}$ to denote the restriction of ${\sfQ}$ to $\supp(\sfP)$, and similarly we write $F_{\alpha,\sigma,\fullsupp}$  to denote $F_{\alpha,\sigma}$ with the summation restricted to $\supp(\sfP)$. Given that $\supp(\sfQ) \subseteq \supp(\sfP)$, observe that we have $F_{\alpha,\sigma}(f) + f\cdot\sfQ = F_{\alpha,\sigma,\fullsupp}(f_{\fullsupp}) + f_{\fullsupp}\cdot\sfQ_{\fullsupp}$ for all $f$.
With this we can write (letting $d\defvar\left|\supp(\sfP)\right|$, indexing the corresponding vector space $\mathbb{R}^{d}$ and probability simplex $\mathbb{P}_{d}$ the obvious way in terms of $\supp(\sfP)$):
\begin{align*}
&\inf_{f\in\mathbb{R}^{|\alphCS|}} \left( F_{\alpha,\sigma}(f) + f\cdot\sfQ \right) \\
=&\inf_{f_{\fullsupp}\in\mathbb{R}^{d}} \left( F_{\alpha,\sigma,\fullsupp}(f_{\fullsupp}) + f_{\fullsupp}\cdot\sfQ_{\fullsupp}\right) \nonumber\\
=& \inf_{f_{\fullsupp}\in\mathbb{R}^{d}} \left( \frac{
\lse\left(
\ln\sfP_{\fullsupp} + (\alpha-1) (\ln2) \left(\vecD_{\fullsupp}-f_{\fullsupp} \right)
\right)
}{(\alpha-1)\ln2}
 + f_{\fullsupp}\cdot\sfQ_{\fullsupp} \right) \nonumber\\
=& \inf_{f_{\fullsupp}\in\mathbb{R}^{d}} \sup_{\lambda_{\fullsupp}\in\mathbb{P}_{d}} \left( \frac{\lambda_{\fullsupp}\cdot\!\left(
\ln\sfP_{\fullsupp} + (\alpha-1) (\ln2) \left(\vecD_{\fullsupp}-f_{\fullsupp} \right)
\right) +
{H_{\mathrm{nat}}(\lambda_{\fullsupp})} }{(\alpha-1)\ln2} + f_{\fullsupp}\cdot\sfQ_{\fullsupp} \right) \nonumber\\
=& \inf_{f_{\fullsupp}\in\mathbb{R}^{d}}  \sup_{\lambda_{\fullsupp}\in\mathbb{P}_{d}} \left( \lambda_{\fullsupp}\cdot \vecD_{\fullsupp} -\frac{1}{\alpha-1} D\Dargs{\lambda_{\fullsupp}}{\sfP_{\fullsupp}} + f_{\fullsupp}\cdot(\sfQ_{\fullsupp}-\lambda_{\fullsupp}) \right) \nonumber\\
=& \sup_{\lambda_{\fullsupp}\in\mathbb{P}_{d}} \inf_{f_{\fullsupp}\in\mathbb{R}^{d}} \left( \lambda_{\fullsupp}\cdot \vecD_{\fullsupp} -\frac{1}{\alpha-1} D\Dargs{\lambda_{\fullsupp}}{\sfP_{\fullsupp}} + f_{\fullsupp}\cdot(\sfQ_{\fullsupp}-\lambda_{\fullsupp}) \right) \nonumber\\
=& \sfQ_{\fullsupp}\cdot \vecD_{\fullsupp} -\frac{1}{\alpha-1} D\Dargs{\sfQ_{\fullsupp}}{\sfP_{\fullsupp}}  \nonumber\\
=& \sfQ\cdot\vecD -\frac{1}{\alpha-1} D\Dargs{\sfQ}{{\sfP}} .
\end{align*}
The interchange of the infimum and supremum follows again from Sion's minimax theorem, noting that $-D\Dargs{\lambda_{\fullsupp}}{\sfP_{\fullsupp}}$ is continuous and concave in $\lambda_{\fullsupp}$ (and all other terms in the expression are affine in $\lambda_{\fullsupp}$ and $f_{\fullsupp}$). In the following equality we use the fact that there exists a specific feasible point $\lambda_{\fullsupp}=\sfQ_{\fullsupp}$ in the supremum where the objective becomes independent of $f_{\fullsupp}$, in which case the infimum over $f_{\fullsupp}$ trivially simplifies, and for all other points (i.e.~$\lambda_{\fullsupp}\neq\sfQ_{\fullsupp}$) the infimum of $f_{\fullsupp}\cdot(\sfQ_{\fullsupp}-\lambda_{\fullsupp})$ over $f_{\fullsupp}$ is $-\infty$ and the other terms are all independent of $f_{\fullsupp}$.
Finally, in the last line we recall that we are in the case  $\supp(\sfQ) \subseteq \supp(\sfP)$, hence enlarging the sums in $\sfQ_{\fullsupp}\cdot \vecD_{\fullsupp} $ and $D\Dargs{\sfQ_{\fullsupp}}{\sfP_{\fullsupp}}$ to terms outside of $\supp(\sfP)$ does not change the value.

For the other case $\supp(\sfQ) \not\subseteq \supp(\sfP)$, this means there exists some $\cS$ such that $\sfQ(\cS)>0$ but $\sfP(\cS)=0$. Observe that in that case,  $F_{\alpha,\sigma}(f)$ is independent of the corresponding $f(\cS)$ value, but $f\cdot\sfQ$ is not. Therefore by taking $f(\cS)\to-\infty$ for that value, the infimum over $f$ evaluates to $-\infty$, which means the above relation~\eqref{eq:conjugate} still holds in this case, in a ``degenerate'' sense:
\begin{align*}
\inf_{f\in\mathbb{R}^{|\alphCS|}} \left( F_{\alpha,\sigma}(f) + f\cdot\sfQ \right) =& -\infty \nonumber\\
=& \sfQ\cdot\vecD -\frac{1}{\alpha-1} D\Dargs{\sfQ}{{\sfP}}  ,
\end{align*}
where the second line holds since  $\supp(\sfQ) \not\subseteq \supp(\sfP)$ and thus $-\frac{1}{\alpha-1}D\Dargs{\sfQ}{{\sfP}}=-\infty$.

Since~\eqref{eq:conjugate} holds in both these cases (covering all $\sfQ\in {S}_\Omega,\sigma \in \dop{=}(\inN)$), we finally substitute this relation into~\eqref{eq:minimax}, which yields the desired result.
\end{proof}
\begin{lemma}[Continuity under uniform domination]
\label{lem:renyi-dominated-continuity}
Fix a finite-dimensional Hilbert space, $\alpha \in (1,\infty)$, and
$0<\Lambda <\infty$. The sandwiched divergence $D_\alpha(A\Vert B)$ is
jointly continuous on the set of pairs satisfying
\[
A,B\geq0,\qquad \operatorname{Tr}A=1,\qquad A\leq \Lambda  B.
\]
\end{lemma}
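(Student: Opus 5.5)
It suffices to show that the quantity $Q_\alpha(A\Vert B)=\Tr[(B^{(1-\alpha)/(2\alpha)}AB^{(1-\alpha)/(2\alpha)})^\alpha]$ is jointly continuous on the set $\mathcal S_\Lambda=\{(A,B):A,B\geq0,\ \Tr A=1,\ A\leq\Lambda B\}$ and bounded away from zero there. Then $D_\alpha=\frac1{\alpha-1}\log Q_\alpha$ (with $\Tr A=1$) inherits continuity. Positivity is immediate: $Q_\alpha(A\Vert B)\geq (\Tr A)^\alpha(\Tr B)^{1-\alpha}$ by data processing to the trivial system. Moreover $\Tr B\geq \Lambda^{-1}\Tr A=\Lambda^{-1}$, and $\Tr B$ is locally bounded. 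So the only issue is continuity of $Q_\alpha$.

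The plan is to regularize the second argument and use uniform convergence. For $\epsilon>0$ set $B_\epsilon=B+\epsilon\id$, which is positive definite. On positive-definite $B$ the map $(A,B)\mapsto Q_\alpha(A\Vert B_\epsilon)$ is a composition of continuous functional calculus (negative powers of operators bounded below by $\epsilon\id$, and the power $\alpha$ on positive operators), so it is jointly continuous for each fixed $\epsilon$. By the monotonicity in Fact~\ref{th:renyi-properties}\ref{fact:eventconditioning}, $Q_\alpha(A\Vert B_\epsilon)\leq Q_\alpha(A\Vert B)$, and this converges as $\epsilon\to0$. The key step is to make the convergence \emph{uniform} on $\mathcal S_\Lambda$, since a uniform limit of continuous functions is continuous.

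For uniformity we use the domination. Since $A\leq\Lambda B$, we have $B\geq \Lambda^{-1}A$, and hence $B_\epsilon\leq B+\epsilon\Lambda\,\Lambda^{-1}\id$. The cleaner route is to compare $B_\epsilon$ with a multiple of $B$ on the relevant directions. We have $A\leq\Lambda B\leq\Lambda B_\epsilon$, and we want a lower bound of the form $Q_\alpha(A\Vert B_\epsilon)\geq (1+c\epsilon)^{1-\alpha}Q_\alpha(A\Vert B)-o(1)$, uniform in the pair. The first attempt uses the $Q_\alpha$ scaling rule together with a convex split. Write $B_\epsilon=(1-t)\frac{B}{1-t}+t\frac{\epsilon\id}{t}$ and use joint convexity or concavity properties of $Q_\alpha$ in the second argument, which hold for $\alpha>1$ in the sandwiched range. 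Since $A$ is dominated by $B$, the contribution of the $\epsilon\id$ part should be controlled by $\epsilon\Lambda$ times a bounded quantity. If this fails, the fallback is a direct operator argument. Restrict to $\supp B$ and note that $X=B^{-1/2}AB^{-1/2}\leq\Lambda\,\Pi_B$ is uniformly bounded. Expressing $Q_\alpha$ through the variational formula $Q_\alpha(A\Vert B)^{1/\alpha}=\sup_{H\geq0,\Tr H^{\alpha'}\le1}\Tr[A\,B^{(1-\alpha)/(2\alpha)}HB^{(1-\alpha)/(2\alpha)}]$ makes the dependence on $B$ appear only through a bounded functional on the support, where the domination gives Lipschitz-type control.

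I expect the main obstacle to be exactly this uniform estimate near pairs where $B$ is singular. There, negative powers of $B$ are discontinuous in general, and only the domination $A\leq\Lambda B$ prevents blow-up. The eigenvalues of $B$ near zero must be shown to carry weight of $A$ at most proportional to themselves, so that their contribution to $Q_\alpha$ vanishes uniformly as $\epsilon\to0$. Once this is established, upper semicontinuity follows from $Q_\alpha(A\Vert B_\epsilon)$ being continuous, and lower semicontinuity from the uniform bound, giving joint continuity of $D_\alpha$ on $\mathcal S_\Lambda$.
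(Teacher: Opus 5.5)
\textbf{There is a genuine gap: the key uniform estimate is never proved.}

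The reduction itself is fine. Positivity of $Q_\alpha$ via data processing, continuity of $(A,B)\mapsto Q_\alpha(A\Vert B+\epsilon\id)$ for fixed $\epsilon>0$, and the principle that a locally uniform limit of continuous functions is continuous are all correct. But the step that carries all the content is only sketched: you need $Q_\alpha(A\Vert B)-Q_\alpha(A\Vert B+\epsilon\id)\to0$ locally uniformly near pairs with singular $B$. Neither of your suggested routes delivers this as stated.
\begin{itemize}
\item Joint convexity of $Q_\alpha$ for $\alpha>1$ gives \emph{upper} bounds on $Q_\alpha(A\Vert B_\epsilon)$ when $B_\epsilon$ is split convexly. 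What you need is an upper bound on $Q_\alpha(A\Vert B)$ in terms of $Q_\alpha(A\Vert B_\epsilon)$, which is the opposite direction.
\item Comparing $B_\epsilon$ with $(1+c\epsilon)B$ fails uniformly, because nearby $B$ can have arbitrarily small positive eigenvalues.
\item The variational formula still contains the negative powers $B^{(1-\alpha)/(2\alpha)}$, which are exactly the source of the discontinuity.
\end{itemize}
Your semicontinuity labels are also swapped. Since $Q_\alpha(A\Vert B)=\sup_{\epsilon>0}Q_\alpha(A\Vert B+\epsilon\id)$ is a supremum of continuous functions, the regularization gives \emph{lower} semicontinuity for free, with no domination needed. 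It is upper semicontinuity that fails in general and needs $A\leq\Lambda B$.

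\textbf{The missing idea is one you wrote down but did not use.} Set $X=B^{-1/2}AB^{-1/2}$. Then $A=B^{1/2}XB^{1/2}$, because $A\leq\Lambda B$ forces support inclusion, and $0\leq X\leq\Lambda\id$. Hence
\[
B^{(1-\alpha)/(2\alpha)}AB^{(1-\alpha)/(2\alpha)}=B^{1/(2\alpha)}XB^{1/(2\alpha)},
\]
which involves only nonnegative powers of $B$.

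The paper's proof finishes directly from this, with no regularization. Take a sequence $(A_n,B_n)\to(A,B)$. The operators $X_n$ lie in the compact set $\{0\leq X\leq\Lambda\id\}$, so every subsequence has a further subsequence with $X_n\to X$. In the limit $A=B^{1/2}XB^{1/2}$. Continuity of $t\mapsto t^{1/(2\alpha)}$ and of $Y\mapsto\Tr[Y^\alpha]$ then gives $Q_\alpha(A_n\Vert B_n)\to\Tr[(B^{1/(2\alpha)}XB^{1/(2\alpha)})^\alpha]=Q_\alpha(A\Vert B)$ along it. Hence the whole sequence converges.

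The same identity also closes the gap in your own route. Write $(B+\epsilon\id)^{(1-\alpha)/(2\alpha)}A(B+\epsilon\id)^{(1-\alpha)/(2\alpha)}=f_\epsilon(B)Xf_\epsilon(B)$ with $f_\epsilon(t)=(t+\epsilon)^{(1-\alpha)/(2\alpha)}t^{1/2}$. The functions $f_\epsilon$ converge to $t^{1/(2\alpha)}$ uniformly on bounded intervals. Combined with $\norm{X}_\infty\leq\Lambda$, this gives the locally uniform convergence you were after.
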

\begin{proof}
Let $(A_n,B_n)\to(A,B)$ be a convergent sequence in this set, and put
\[
X_n=B_n^{-1/2}A_nB_n^{-1/2},
\]
where the negative powers are interpreted via Moore--Penrose pseudoinverses.
Then $0\leq X_n\leq \Lambda \id$ and $A_n=B_n^{1/2}X_nB_n^{1/2}$.
By compactness, every subsequence admits a further subsequence along
which $X_n\to X$. Along this further subsequence,
$A=B^{1/2}XB^{1/2}$, and continuity of positive matrix powers gives
\begin{align*}
Q_\alpha(A_n\Vert B_n)
=\operatorname{Tr}\!\left[
  \left(B_n^{1/(2\alpha)}X_nB_n^{1/(2\alpha)}\right)^\alpha
  \right] \longrightarrow\operatorname{Tr}\!\left[
  \left(B^{1/(2\alpha)}XB^{1/(2\alpha)}\right)^\alpha
  \right]
=Q_\alpha(A\Vert B).
\end{align*}
The last equality follows from $A=B^{1/2}XB^{1/2}$, since the
surrounding powers of $B$ remove any components outside its support.
Thus the full sequence $Q_\alpha(A_n\Vert B_n)$ converges to
$Q_\alpha(A\Vert B)$. This limit is positive and finite, so taking the logarithm proves the claim.
\end{proof}

\section{Iterated-mean divergences}\label{app:IM-divergence}
\newcommand{\DIM}[1]{D^\mathrm{IM}_{(#1)}}
\newcommand{\QIM}[1]{Q^\mathrm{IM}_{(#1)}}
\newcommand{\hDIM}[1]{\widehat{D}^\mathrm{IM}_{(#1)}}
\newcommand{\hQIM}[1]{\widehat{Q}^\mathrm{IM}_{(#1)}}
\newcommand{\bQIM}[1]{\bar{Q}^\mathrm{IM}_{(#1)}}
\newcommand{\inIM}{Q}

In this appendix, we prove a chain rule for the IM divergences, as a prospective candidate for future study.

\begin{definition}[Iterated-mean divergences]
\label{def:IM}
For any $\rho\in\dop{=}(\inIM)$ and $\sigma\in\Pos(\inIM)$, and $k \in \mathbb{N}$, with $\alpha_k=1+\frac{1}{2^k-1} \in (1,2]$, the iterated-mean divergence between $\rho,\sigma$ is defined as:
\begin{align}\label{eq:IM}
\DIM{\alpha_k}\Dargs{\rho}{\sigma} \defvar\begin{cases}
    \frac{1}{\alpha_k-1}\log\QIM{\alpha_k}\Dargs{\rho}{\sigma} &\supp(\rho)\subseteq\supp(\sigma)\\
    +\infty & \text{otherwise},
\end{cases}
\end{align}
where
\begin{align}\label{eq:IM_Q}
\QIM{\alpha_k}\Dargs{\rho}{\sigma} \defvar &\min_{A_1,\dots,A_k\in\Pos(\inIM)}\Tr[A_1]\nonumber\\
&\qquad\suchthat\quad\Tr[A_1]=\Tr[A_2]=\cdots=\Tr[A_k]\nonumber\\
&\qquad\qquad\quad\rho\leq A_1\#(A_2\#(\dots\#(A_k\#\sigma)\dots)).
\end{align}
Here $\#$ denotes the geometric mean, defined as $A\mathbin{\#}B=A^{1/2}(A^{-1/2}B A^{-1/2})^{1/2}A^{1/2}$ for positive-definite operators, and extended to positive operators via a decreasing sequence of positive-definite approximations.
\end{definition}
The IM divergence satisfies most\footnote{One somewhat nonstandard property is that it only has subadditivity rather than additivity across tensor products, similar to the sharp-divergences but unlike the sandwiched or Petz divergences. This does not affect our intended applications, which were already compatible with the subadditivity of sharp-divergences.} desirable properties of a {\Renyi} divergence, such as data-processing, reducing to the classical case when the states are diagonal, and so on; see~\cite[Proposition~3.3]{BFF21}. 
Moreover, by inspection we see from the definitions that $\DIM{2}\Dargs{\rho}{\sigma}
= \Dsharp_2\Dargs{\rho}{\sigma}$. Together with monotonicity in $\alpha_k$, this yields a promising property for \emph{any} value of the {\Renyi} parameter:
\begin{align}\label{eq:IM_vs_sharp2}
\forall k \in \mathbb{N}, \quad
\DIM{\alpha_k}\Dargs{\rho}{\sigma}
\leq \DIM{2}\Dargs{\rho}{\sigma}
= \Dsharp_2\Dargs{\rho}{\sigma}.
\end{align}
For the purposes of our analysis, this ensures that in single rounds we can immediately apply bounds such as Lemma~\ref{lem:round}; the main challenge is in the chain rule. More broadly, in~\cite[Sec.~5]{BFF21} it was observed that there exist states where all the IM divergences are smaller than the BS divergence; the above bound is another way to see this (when combined with~\cite[Appendix~B]{arx_MD23}, which implies the existence of states where $\Dsharp_2$ is smaller than BS divergence).

The above definition is one of three equivalent definitions of $\DIM{\alpha_k}$ from~\cite{BFF21}; we have presented the one most convenient for our purposes. To begin, we first note that the proof of equivalence between those definitions in that work straightforwardly yields a fourth equivalent definition, as follows:

\begin{lemma}\label{lemma:IM_equiv}
Definition~\ref{def:IM} is equivalent to
\begin{align}\label{eq:IM_equiv}
\DIM{\alpha_k}\Dargs{\rho}{\sigma} \defvar\begin{cases}
    \frac{1}{\alpha_k-1}\log\bQIM{\alpha_k}\Dargs{\rho}{\sigma} &\supp(\rho)\subseteq\supp(\sigma)\\
    +\infty & \text{otherwise},
\end{cases}
\end{align}
where
\begin{align}\label{eq:IM_equiv_Q}
\bQIM{\alpha_k}\Dargs{\rho}{\sigma} \defvar &\alpha_k\min_{A_1,\dots,A_k\in\Pos(\inIM)}\sum_{i=1}^k2^{-i}\Tr[A_i]\nonumber\\
&\qquad\suchthat\quad\rho\leq A_1\#(A_2\#(\dots\#(A_k\#\sigma)\dots)).
\end{align}
\end{lemma}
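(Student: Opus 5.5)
We will show that for fixed $\rho,\sigma$ with $\supp(\rho)\subseteq\supp(\sigma)$, the quantity in \eqref{eq:IM_equiv_Q} equals the quantity in \eqref{eq:IM_Q}. The support conventions in the two definitions coincide, so no separate argument is needed there. The key tool is joint homogeneity of the nested geometric mean. Write $G(A_1,\dots,A_k)=A_1\#(A_2\#(\dots\#(A_k\#\sigma)\dots))$. Since $(sA)\#(tB)=\sqrt{st}\,(A\#B)$ for positive scalars, induction from the innermost mean outwards shows that rescaling $A_i\mapsto t_iA_i$ gives
\[
G(t_1A_1,\dots,t_kA_k)=\prod_{i=1}^k t_i^{2^{-i}}\,G(A_1,\dots,A_k).
\]

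\textbf{Step 1: reduce both problems to the same feasible set.} For \eqref{eq:IM_equiv_Q}, take any feasible tuple and rescale each $A_i$ to have a common trace, say $\Tr[A_i]=c_i$ becomes $\Tr[A_i']=T$. By the scaling identity, feasibility is preserved whenever $\prod_i (T/c_i)^{2^{-i}}\cdot\prod_i$-type constraint holds, i.e.\ we may trade off the $c_i$ subject to keeping $\prod_i c_i^{2^{-i}}$ fixed. Note also that $\sum_{i=1}^k 2^{-i}=1-2^{-k}$. So both optimizations range over tuples of \emph{normalized directions} $\hat A_i=A_i/c_i$ together with scalars $c_i$, with the constraint $\rho\le \prod_i c_i^{2^{-i}}\,G(\hat A_1,\dots,\hat A_k)$.

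\textbf{Step 2: solve the scalar problem.} For fixed directions, let $\kappa$ be the minimal value of $\prod_i c_i^{2^{-i}}$ needed for feasibility; this depends only on the directions. There are two scalar problems to compare.
\begin{itemize}
\item In \eqref{eq:IM_Q} all $c_i$ equal a common value $c$, and we minimize $c$ subject to $c^{1-2^{-k}}\ge\kappa$. This gives the value $\kappa^{1/(1-2^{-k})}$.
\item In \eqref{eq:IM_equiv_Q} we minimize $\alpha_k\sum_i 2^{-i}c_i$ subject to $\prod_i c_i^{2^{-i}}\ge\kappa$.
\end{itemize}
For the second problem, apply the weighted AM--GM inequality with weights $2^{-i}/(1-2^{-k})$. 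This gives
\[
\sum_i 2^{-i}c_i\ \ge\ (1-2^{-k})\prod_i c_i^{2^{-i}/(1-2^{-k})}\ \ge\ (1-2^{-k})\,\kappa^{1/(1-2^{-k})},
\]
with equality when all $c_i$ are equal. Since $\alpha_k=\frac{2^k}{2^k-1}=1/(1-2^{-k})$, the prefactor cancels, and the optimal value is again $\kappa^{1/(1-2^{-k})}$. Optimizing over directions then yields the same value in both problems. Equivalently, any feasible point of \eqref{eq:IM_Q} gives an equal objective value in \eqref{eq:IM_equiv_Q}, and conversely any feasible point of \eqref{eq:IM_equiv_Q} can be rescaled to equal traces without increasing the objective.

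\textbf{Main obstacle.} The delicate step is justifying the scaling identity and monotonicity of $\#$ for singular positive operators, since $\#$ is defined through decreasing positive-definite approximations. For this I would invoke the standard properties of the Kubo--Ando mean: joint homogeneity and monotonicity are preserved under decreasing limits. I would also need to handle degenerate cases with some $\Tr[A_i]=0$. In that case $G=0$, so the tuple is feasible only if $\rho=0$, which is excluded because $\rho$ is normalized. Finally, attainment of the minimum, needed so that ``$\min$'' is meaningful, follows as in \cite{BFF21}. I would cite \cite[Proposition~3.3 and its proof]{BFF21}, whose equivalence argument is essentially this rescaling computation.
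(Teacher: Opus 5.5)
Your proposal is correct and follows essentially the same route as the paper: joint homogeneity $G(t_1A_1,\dots,t_kA_k)=\prod_i t_i^{2^{-i}}G(A_1,\dots,A_k)$ together with $\sum_i 2^{-i}=1/\alpha_k$ gives the forward inequality. For the converse, weighted AM--GM lets you rescale any feasible tuple of \eqref{eq:IM_equiv_Q} to the common trace $t=\alpha_k\sum_i 2^{-i}\Tr[A_i]$ without losing feasibility, and the zero-trace case is excluded exactly as you do. The only difference is that the paper first compares infima and then proves attainment separately by a compactness argument (Lemma~\ref{lemma:bounded-closed-feasible}), whereas you cite \cite{BFF21} for it.
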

\begin{proof}
We follow the same proof structure as~\cite{BFF21}. If  $\supp(\rho)\nsubseteq\supp(\sigma)$, both feasible sets are empty, since the support of the nested geometric mean is contained in $\supp(\sigma)$. Hence, we may assume support inclusion. We initially compare the infima; attainment follows from Lemma~\ref{lemma:bounded-closed-feasible}. We first prove that $\bQIM{\alpha_k}\Dargs{\rho}{\sigma}\leq\QIM{\alpha_k}\Dargs{\rho}{\sigma}$. Let $(A_1,\dots,A_k)$ be feasible for Eq.~\eqref{eq:IM_Q}. Let $\Tr[A_i]=t$ be its corresponding objective value, then, 
\begin{align}
    \label{eq:<=}
    \sum_{i=1}^k2^{-i}\Tr[A_i]&=t\sum_{i=1}^k2^{-i}\nonumber\\
    &=t(1-2^{-k})\nonumber\\
    &=\frac{t}{\alpha_k}.
\end{align}
Therefore, every feasible point of~\eqref{eq:IM_Q}, is a feasible point of~\eqref{eq:IM_equiv_Q} with the same objective value, hence $\bQIM{\alpha_k}\Dargs{\rho}{\sigma}\leq\QIM{\alpha_k}\Dargs{\rho}{\sigma}$. 
Conversely, let $(A_1,\ldots,A_k)$ be feasible for \eqref{eq:IM_equiv_Q}, and set
\begin{align*}
 t\defvar \alpha_k\sum_{i=1}^k2^{-i}\Tr[A_i],\qquad
 \mu_i\defvar \frac{t}{\Tr[A_i]},\qquad B_i\defvar \mu_iA_i.
\end{align*}
Each $\Tr[A_i]$ is strictly positive: otherwise $A_i=0$, making the nested mean zero, contrary to $\Tr[\rho]=1$. The weighted arithmetic--geometric mean inequality gives
\begin{align*}
 t\geq\prod_{i=1}^k\bigl(\Tr[A_i]\bigr)^{\alpha_k2^{-i}}.
\end{align*}
Taking both sides to the power $1/\alpha_k$, we obtain
\begin{align*}
 t^{1/\alpha_k}\geq
 \prod_{i=1}^k\bigl(\Tr[A_i]\bigr)^{2^{-i}}.
\end{align*}
Since $\sum_{i=1}^k2^{-i}=1/\alpha_k$, it follows that
\begin{align*}
 \prod_{i=1}^k\mu_i^{2^{-i}}
 &=\frac{t^{\sum_{i=1}^k2^{-i}}}
         {\prod_{i=1}^k\bigl(\Tr[A_i]\bigr)^{2^{-i}}}\\
 &=\frac{t^{1/\alpha_k}}
         {\prod_{i=1}^k\bigl(\Tr[A_i]\bigr)^{2^{-i}}}
 \geq1.
\end{align*}
By homogeneity of the geometric mean,
\begin{align*}
  B_1\#(B_2\#(\dots\#(B_k\#\sigma)\dots))
 =\left(\prod_{i=1}^k\mu_i^{2^{-i}}\right) A_1\#(A_2\#(\dots\#(A_k\#\sigma)\dots))
 \geq\rho.
\end{align*}
Moreover, $\Tr[B_i]=t$ for every $i$. Thus $(B_1,\ldots,B_k)$ is feasible for \eqref{eq:IM_Q} with the same objective value as the original tuple in \eqref{eq:IM_equiv_Q}. Taking infima proves the reverse inequality.
\end{proof}

We next establish the existence of optimal tuples for the two state optimizations under support inclusion.
\begin{lemma}\label{lemma:bounded-closed-feasible}
Let $\rho\in\dop{=}(\inIM)$ and $\sigma\in\Pos(\inIM)$ satisfy $\supp(\rho)\subseteq\supp(\sigma)$. The optimization in Lemma~\ref{lemma:IM_equiv} may, without loss of generality, be restricted to a nonempty convex and compact set $\mathcal K\subseteq\Pos(\inIM)^k$. In particular, the minima in \eqref{eq:IM_Q} and \eqref{eq:IM_equiv_Q} are attained.
\end{lemma}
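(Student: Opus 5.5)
We will construct $\mathcal K$ explicitly from a priori trace bounds and a lower bound, and then invoke continuity of the objective and closedness of the constraint. The plan has three steps: find a feasible point, use its value to bound the traces, and handle degeneracies by a positivity normalization.

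\emph{Feasible point and trace bounds.} For nonemptiness we need a feasible tuple. Since $\supp(\rho)\subseteq\supp(\sigma)$, there is $\lambda>0$ with $\rho\leq\lambda\sigma$ (take $\lambda=\Dmax$-exponent). We choose $A_i=c\,\Pi_\sigma$, where $\Pi_\sigma$ is the projector onto $\supp(\sigma)$. On that support the geometric mean with $\sigma$ is explicit and homogeneous. The nested mean $A_1\#(\cdots\#(A_k\#\sigma))$ equals $c^{1-2^{-k}}\sigma^{2^{-k}}$ restricted to $\supp(\sigma)$, since every operator commutes. This dominates $\rho$ for $c$ large, because $\rho\leq\lambda\sigma\leq \lambda\|\sigma\|^{1-2^{-k}}\sigma^{2^{-k}}$. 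Denote the resulting objective value in \eqref{eq:IM_equiv_Q} by $v_0<\infty$. Any tuple with objective at most $v_0$ then satisfies $\Tr[A_i]\leq 2^i v_0/\alpha_k$, so we may restrict to this bounded set. For \eqref{eq:IM_Q}, the common trace is bounded by its own feasible value in the same way.

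\emph{Convexity and closedness.} Define $\mathcal K$ as the set of tuples with $A_i\geq0$, $\Tr[A_i]\leq 2^iv_0/\alpha_k$, and $\rho\leq A_1\#(A_2\#(\dots\#(A_k\#\sigma)\dots))$. Boundedness is immediate. Convexity follows from joint concavity and monotonicity of $\#$: the map $(A_1,\dots,A_k)\mapsto$ nested mean is jointly concave by induction, since $(A,B)\mapsto A\#B$ is jointly concave and monotone in $B$. Hence the superlevel constraint is convex. For \eqref{eq:IM_Q} the extra equal-trace constraints are linear, so convexity is preserved there as well.

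\emph{Closedness is the main obstacle.} For positive-definite arguments the nested mean is continuous. At boundary points, however, it is only upper semicontinuous, being defined via decreasing limits. Upper semicontinuity is exactly what closedness of $\{\rho\leq G(A)\}$ requires. If $A^{(n)}\to A$ with $\rho\leq G(A^{(n)})$, we want $\rho\leq G(A)$. We would obtain this by noting that $G(A)=\inf_{\eps\downarrow0}G(A+\eps\id)$, with $\sigma$ also perturbed. For fixed $\eps$, continuity and monotonicity give $G(A+\eps\id)=\lim_n G(A^{(n)}+\eps\id)\geq\limsup_n G(A^{(n)})\geq\rho$, where the middle inequality holds for $n$ large up to an arbitrarily small error absorbed into $\eps$. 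Letting $\eps\to0$ then preserves $\geq\rho$ by monotone convergence.

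We expect the delicate point to be the uniformity in this limit interchange. If it fails, we would instead first project everything onto $\supp(\sigma)$ and lower-bound each $A_i$ on that support using $\Tr[\rho]=1$, to keep the arguments uniformly away from singularity. With compactness and linear objectives established, the minima are attained by Weierstrass.
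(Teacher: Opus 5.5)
Your proof is correct and follows essentially the paper's route: a feasible point to bound the traces (you take $c\,\Pi_\sigma$, the paper takes $c^{\alpha_k}\sigma$), convexity from joint concavity and monotonicity of $\#$, closedness via an $\epsilon$-perturbation together with monotonicity and continuity from above, and attainment by compactness. The one difference is that you treat \eqref{eq:IM_Q} directly through its linear equal-trace constraints, whereas the paper rescales the minimizer of \eqref{eq:IM_equiv_Q}.

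Your worry about uniformity in a limit interchange is unfounded, and the fallback is unnecessary; it also could not work in general, since optimal tuples can be singular. For fixed $\epsilon$ you have $G_\epsilon(A^{(n)})\geq G(A^{(n)})\geq\rho$ for every $n$, so continuity at positive-definite arguments and closedness of $\{X\geq\rho\}$ already give $G_\epsilon(A)\geq\rho$, with no ``$\limsup$'' of operators needed. Letting $\epsilon\downarrow0$ is then a separate limit, not an interchange; the paper reaches the same point by using the eventual domination $A_i^{(n)}\leq A_i+\epsilon\id$ instead.
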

\begin{proof}
Choose $c>0$ such that $\rho\leq c\sigma$. The tuple $A_i=c^{\alpha_k}\sigma$ is feasible, since its nested geometric mean is $c\sigma$. Let
\begin{align*}
 t\defvar c^{\alpha_k}\Tr[\sigma],\qquad p_i\defvar \alpha_k2^{-i},
\end{align*}
and define
\begin{align*}
\mathcal K\defvar \left\{(A_1,\ldots,A_k)\in\Pos(\inIM)^k:
 \rho\leq A_1\#(A_2\#(\dots\#(A_k\#\sigma)\dots)),\quad
 \sum_{i=1}^kp_i\Tr[A_i]\leq t\right\}.
\end{align*}
This set is nonempty by the above feasible tuple. Restricting to it does not change the infimum, since every omitted tuple has objective value greater than $t$. Moreover,
\begin{align*}
 \norm{A_i}_\infty\leq\Tr[A_i]\leq\frac{t}{p_i}
\end{align*}
for every tuple in $\mathcal K$, so $\mathcal K$ is bounded.

To prove closedness, let $(A_1^{(n)},\ldots,A_k^{(n)})\in\mathcal K$ converge to $(A_1,\ldots,A_k)$. The limits are positive semidefinite. For every $\epsilon>0$, we have $A_i^{(n)}\leq A_i+\epsilon \id$ for all $i$ and all sufficiently large $n$. Monotonicity therefore gives
\begin{align*}
 \rho\leq(A_1+\epsilon \id)\#\bigl((A_2+\epsilon \id)\#
       (\cdots\#((A_k+\epsilon \id)\#\sigma)\cdots)\bigr).
\end{align*}
Letting $\epsilon\downarrow0$ and applying continuity from above successively yields the original feasibility inequality. The trace constraint is preserved by continuity. Thus $\mathcal K$ is closed, and therefore compact in finite dimension. Joint concavity and monotonicity of the geometric mean imply joint concavity of the nested mean, so $\mathcal K$ is convex.

Since the objective is continuous and $\mathcal K$ is nonempty and compact, its minimum is attained. Let $(A_1,\ldots,A_k)$ be such a minimizer, and write
\begin{align*}
    t_* \defvar \sum_{i=1}^k p_i\Tr[A_i]
    = \bQIM{\alpha_k}\Dargs{\rho}{\sigma}.
\end{align*}
Since each $\Tr[A_i]$ is strictly positive, we may define
\begin{align*}
    B_i \defvar \frac{t_*}{\Tr[A_i]}A_i,
    \qquad i\in\{1,\dots,k\}.
\end{align*}
The calculation in the proof of Lemma~\ref{lemma:IM_equiv} shows that this rescaling preserves feasibility, namely,
\begin{align*}
    \rho\leq B_1\#(B_2\#(\dots\#(B_k\#\sigma)\dots)).
\end{align*}
Moreover, $\Tr[B_i]=t_*$ for every $i$. Thus, $(B_1,\dots,B_k)$ is feasible for~\eqref{eq:IM_Q} with objective value $t_*$. By Lemma~\ref{lemma:IM_equiv}, the two optimization problems have the same infimum, so
\begin{align*}
    \QIM{\alpha_k}\Dargs{\rho}{\sigma}
    =\bQIM{\alpha_k}\Dargs{\rho}{\sigma}
    =t_*.
\end{align*}
Hence $(B_1,\dots,B_k)$ attains the minimum in~\eqref{eq:IM_Q} as well.
\end{proof}
We now turn to the corresponding channel divergences.
\begin{definition}[Channel iterated-mean divergences]
We again highlight (see Remark~\ref{rem:stab}) that our definition here does not inherently include a purifying/stabilizer register.
For any $k \in\mathbb{N}$, with $\alpha_k=1+\frac{1}{2^k-1}$, and $\mapE,\mapF \in \CPTP{\inIM}{\outQ}$, the channel IM divergence between the maps is defined as
\begin{align}
\DIM{\alpha_k} \Dargs{\mapE}{\mapF} &\defvar \sup_{\rho\in\dop{=}(\inIM)} \DIM{\alpha_k} \Dargs{\mapE[\rho]}{\mapF[\rho]} .
\end{align}
\end{definition}

Our main finding regarding the channel IM divergence is that {when there is a stabilizing register}, it can be reformulated in a ``dual'' form that is a minimization instead of a maximization. This essentially arises from its similarity to the $\#$-divergences defined in~\cite{FF21sharp}, allowing us to carry over the proof of the analogous result for the latter.
\begin{theorem}\label{thrm:channel div}
Let $k \in\mathbb{N}$, with $\alpha_k=1+\frac{1}{2^k-1}$,  and $\mapE,\mapF \in \CPTP{\inIM}{\outQ}$; let $\stabE$ be a register isomorphic to $\inIM$ with $d=\dim(\inIM)$. 
Then 
\begin{align}\label{eq:IM D channel div}
\DIM{\alpha_k} \Dargs{\mapE \otimes \idmap_{\stabE}}{\mapF \otimes \idmap_{\stabE}} = \frac{1}{\alpha_k-1}\log\QIM{\alpha_k}\Dargs{\mapE\otimes \idmap_{\stabE}}{\mapF \otimes \idmap_{\stabE}},
\end{align}
where
\begin{align}\label{eq:IM Qchannel div}
    \QIM{\alpha_k}\Dargs{\mapE\otimes \idmap_{\stabE}}{\mapF \otimes \idmap_{\stabE}}=&\inf_{A_1,\dots,A_k\in\Pos(\stabE\outQ)}\norm{\sum_{i=1}^kp_i\Tr_{\outQ}[A_i]}_\infty\nonumber\\
&\qquad\suchthat\quad J^\mapE\leq A_1\#(A_2\#(\dots\#(A_k\# J^\mapF)\dots)),
\end{align}
with $p_i=\alpha_k2^{-i}$ and $J^\mapE,J^\mapF\in\Pos(\stabE\outQ)$ being the unnormalized Choi matrices corresponding to $\mapE,\mapF$, respectively. In fixed corresponding orthonormal bases of $\inIM$ and $\stabE$, our convention is
\[
J^\mapE_{\stabE\outQ}=\sum_{a,b}\ketbra{a}{b}_{\stabE}\otimes\mapE[\ketbra{a}{b}_{\inIM}],
\]
and analogously for $J^\mapF$. In particular, $\Tr_{\outQ}[J^\mapE]=\Tr_{\outQ}[J^\mapF]=\id_{\stabE}$.
\end{theorem}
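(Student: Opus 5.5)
The plan is to adapt the proof of the analogous dual formula for the channel sharp-divergence in~\cite{FF21sharp}, working with the weighted-sum form $\bQIM{\alpha_k}$ of Lemma~\ref{lemma:IM_equiv} rather than the equal-trace form~\eqref{eq:IM_Q}. The identity~\eqref{eq:IM D channel div} then follows by taking $\frac{1}{\alpha_k-1}\log$, since this map is monotone and commutes with the supremum over inputs. We first dispose of the support issue. If $\supp(J^\mapE)\not\subseteq\supp(J^\mapF)$, then the maximally entangled input already gives an infinite divergence, and the constraint in~\eqref{eq:IM Qchannel div} is infeasible because the nested mean is supported in $\supp(J^\mapF)$. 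So both sides equal $+\infty$, and henceforth we assume support inclusion.

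\textbf{Step 1 (reduction to Choi form).} By data processing (Proposition~3.3 of~\cite{BFF21}), the supremum may be restricted to pure inputs on $\inIM\stabE$. Every such input equals $(\id\otimes X)\ket{\Gamma}$ with $X=\rho_{\stabE}^{1/2}$ (in the fixed basis, up to transpose), where $\ket\Gamma$ is the unnormalized maximally entangled vector. Hence the two outputs are $XJ^\mapE X$ and $XJ^\mapF X$, with $X$ acting on $\stabE$. For full-rank $\rho$, $X$ is invertible, and we use the transformer equality $X(A\#B)X=(XAX)\#(XBX)$, iterated through the nest. This gives a bijection $A_i\mapsto A_i'=X^{-1}A_iX^{-1}$ between feasible tuples for $\bQIM{\alpha_k}\Dargs{XJ^\mapE X}{XJ^\mapF X}$ and tuples satisfying $J^\mapE\leq A_1'\#(\cdots\#(A_k'\#J^\mapF)\cdots)$. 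Under this bijection the objective becomes
\[
\sum_i p_i\Tr[XA_i'X]=\Tr\Bigl[\rho\sum_i p_i\Tr_{\outQ}[A_i']\Bigr].
\]
So for full-rank $\rho$, the state quantity is $\inf_{A'\in\mathcal K_J}\Tr[\rho\,M(A')]$, where $\mathcal K_J$ is the Choi-feasible set and $M(A')=\sum_ip_i\Tr_{\outQ}[A_i']$.

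\textbf{Step 2 (minimax swap).} We then need
\[
\sup_\rho\inf_{A'}\Tr[\rho M(A')]=\inf_{A'}\sup_\rho\Tr[\rho M(A')].
\]
The objective is bilinear. The set of states is compact and convex. The set $\mathcal K_J$ is convex by joint concavity and monotonicity of the nested geometric mean, exactly as in the proof of Lemma~\ref{lemma:bounded-closed-feasible}. So Fact~\ref{th:minimax} applies with the states as the compact variable. Finally, $\sup_\rho\Tr[\rho M]=\norm{M}_\infty$ since $M\geq0$, which yields~\eqref{eq:IM Qchannel div}.

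\textbf{Step 3 (rank-deficient inputs; main obstacle).} Step~1 only covers full-rank $\rho$, so we must show that the supremum over all states equals the supremum over full-rank ones. The inequality $\geq$ is trivial. For $\leq$, we do not want a general continuity result for $\QIM{\alpha_k}$. Instead we plan to show directly that, for any state $\rho$ (possibly singular), $\bQIM{\alpha_k}\Dargs{XJ^\mapE X}{XJ^\mapF X}\leq\Tr[\rho M(A')]$ for every $A'\in\mathcal K_J$. Given $A'\in\mathcal K_J$, set $A_i=XA_i'X$ with $X$ now singular. We need the one-sided transformer inequality $X(A\#B)X\leq(XAX)\#(XBX)$ for arbitrary $X$, which holds for operator means by Hansen's inequality. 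This makes $(A_i)$ feasible, and its objective is exactly $\Tr[\rho M(A')]$. Hence $\sup_\rho\inf_{A'}\leq\inf_{A'}\sup_\rho$ holds directly, with no rank condition. The reverse inequality $\inf\sup\leq\sup\inf$ only needs full-rank $\rho$ and follows from Steps~1 and~2. The delicate points are (a) verifying Hansen's inequality for the nested means defined via decreasing positive-definite limits, and (b) ensuring that Sion's theorem is applied over full-rank states or their closure consistently. For (b), we would apply minimax on all states using the inequality just derived, together with the lower semicontinuity in $\rho$ of $\inf_{A'}\Tr[\rho M(A')]$, which holds as an infimum of linear functions.
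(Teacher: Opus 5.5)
Your architecture matches the paper's proof. Both reduce to the pair $\rho^{1/2}J^{\mapE}\rho^{1/2}$, $\rho^{1/2}J^{\mapF}\rho^{1/2}$ (a pure input equals $(\id\otimes X)\ket\Gamma$ only up to a unitary on the reference, which is harmless by unitary invariance). Both use the congruence bijection to identify the state quantity with $g(\rho)=\inf_{A'\in\mathcal K_J}\Tr[\rho M(A')]$ for full-rank $\rho$, and both apply Sion with the compact state space. Your use of the one-sided transformer inequality at arbitrary, possibly singular, $\rho$ gives $\sup_\rho Q^{\mathrm{IM}}_{(\alpha_k)}\leq\inf_{A'}\norm{M(A')}_\infty$, and that upper bound is fine.

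The gap is in (b), the lower bound. Sion gives $\sup_\rho g(\rho)=\inf_{A'}\norm{M(A')}_\infty$ with the supremum over \emph{all} states. However, $Q^{\mathrm{IM}}_{(\alpha_k)}=g$ is only known at full-rank $\rho$; at singular $\rho$ you only have $Q^{\mathrm{IM}}_{(\alpha_k)}\leq g$, which points the wrong way. So you need $\sup_{\rho>0}g=\sup_\rho g$.

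Your justification, that $g$ is lower semicontinuous because it is an infimum of linear functions, is backwards. A pointwise infimum of continuous linear functions is concave and \emph{upper} semicontinuous. A concave function on the non-polyhedral state space can fail to be lower semicontinuous at boundary points. So this argument does not let you approximate singular $\rho$ from the interior. You also cannot instead run Sion over the full-rank states, since that set is not compact and neither is $\mathcal K_J$.

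The missing ingredient is concavity of $g$ together with nonnegativity, the latter from $M(A')\geq0$ and finiteness from $\mathcal K_J\neq\emptyset$. Use these along the segment to the maximally mixed state. For the full-rank state $\rho_\epsilon=(1-\epsilon)\rho+\epsilon\id_{\stabE}/d$,
\[
g(\rho_\epsilon)\geq(1-\epsilon)g(\rho)+\epsilon\, g(\id_{\stabE}/d)\geq(1-\epsilon)g(\rho),
\]
so $\sup_{\rho>0}g\geq g(\rho)$ for every state. This is exactly the step the paper uses. With it your chain $\sup_\rho Q^{\mathrm{IM}}_{(\alpha_k)}\geq\sup_{\rho>0}g=\sup_\rho g=\inf_{A'}\norm{M(A')}_\infty$ closes, with no semicontinuity argument needed.
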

\begin{proof}
If $\supp(J^{\mapE})\nsubseteq\supp(J^{\mapF})$, the optimization in \eqref{eq:IM Qchannel div} is infeasible. Taking the normalized maximally entangled input gives output states $J^{\mapE}/d$ and $J^{\mapF}/d$, whose divergence is infinite. Hence both sides of \eqref{eq:IM D channel div} are infinite, and we may assume support inclusion.

Following the analysis in~\cite[Eq.~15]{FF21sharp}, the stabilized channel divergence can be written as
\begin{align*}
 \DIM{\alpha_k}\Dargs{\mapE\otimes\idmap_{\stabE}}{\mapF\otimes\idmap_{\stabE}}
 =\sup_{\rho\in\dop{=}(\stabE)}
 \DIM{\alpha_k}\Dargs{\rho^{1/2}J^{\mapE}\rho^{1/2}}{\rho^{1/2}J^{\mapF}\rho^{1/2}},
\end{align*}
where we have implicitly restricted to pure input states without loss of generality (via a standard data-processing argument on the stabilizing register, since the IM divergences satisfy data-processing~\cite[Proposition~3.3(6)]{BFF21}). Here $\rho$ acts on $\stabE$, and the factors $\rho^{1/2}$ in these expressions mean $\rho^{1/2}\otimes\id_{\outQ}$.

Let $\mathcal K$ be the full feasible set of tuples in \eqref{eq:IM Qchannel div}, and define, for $\rho\in\dop{=}(\stabE)$,
\begin{align*}
 g(\rho)\defvar
 \inf_{(A_1,\ldots,A_k)\in\mathcal K}
 \Tr\!\left[\rho\sum_{i=1}^kp_i\Tr_{\outQ}[A_i]\right].
\end{align*}
The set $\mathcal K$ is convex and nonempty. For nonemptiness, choose $c>0$ with $J^{\mapE}\leq cJ^{\mapF}$ and take $A_i=c^{\alpha_k}J^{\mapF}$.

For $\rho>0$, the change of variables $B_i=\rho^{1/2}A_i\rho^{1/2}$ is a bijection of positive semidefinite tuples. Congruence invariance of the geometric mean under invertible operators and Lemma~\ref{lemma:IM_equiv} therefore give
\begin{align*}
 \QIM{\alpha_k}\Dargs{\rho^{1/2}J^{\mapE}\rho^{1/2}}{\rho^{1/2}J^{\mapF}\rho^{1/2}}=g(\rho).
\end{align*}
The function $g$ is finite, nonnegative, and concave, being the infimum of linear nonnegative functions. For any $\rho\in\dop{=}(\stabE)$ and $\epsilon\in(0,1)$, let $\rho_\epsilon=(1-\epsilon)\rho+\epsilon \id_{\stabE}/d$. Then
\begin{align*}
 g(\rho_\epsilon)\geq(1-\epsilon)g(\rho)+\epsilon g(\id_{\stabE}/d)
 \geq(1-\epsilon)g(\rho).
\end{align*}
Thus
\begin{align*}
 \sup_{\substack{\rho\in\dop{=}(\stabE)\\\rho>0}}g(\rho)
 =\sup_{\rho\in\dop{=}(\stabE)}g(\rho).
\end{align*}

The set $\dop{=}(\stabE)$ is compact and convex, $\mathcal K$ is convex, and the map
\begin{align*}
 (\rho,(A_i))\longmapsto
 \Tr\!\left[\rho\sum_{i=1}^kp_i\Tr_{\outQ}[A_i]\right]
\end{align*}
is continuous and affine in each argument. Sion's minimax theorem~\cite{seqSion58} therefore yields
\begin{align*}
 &\sup_{\rho\in\dop{=}(\stabE)}
 \QIM{\alpha_k}\Dargs{\rho^{1/2}J^{\mapE}\rho^{1/2}}{\rho^{1/2}J^{\mapF}\rho^{1/2}}\\
 &\geq\sup_{\substack{\rho\in\dop{=}(\stabE)\\\rho>0}}g(\rho)\\
 &=\sup_{\rho\in\dop{=}(\stabE)}\inf_{(A_i)\in\mathcal K}
 \Tr\!\left[\rho\sum_{i=1}^kp_i\Tr_{\outQ}[A_i]\right]\\
 &=\inf_{(A_i)\in\mathcal K}\sup_{\rho\in\dop{=}(\stabE)}
 \Tr\!\left[\rho\sum_{i=1}^kp_i\Tr_{\outQ}[A_i]\right]\\
 &=\inf_{(A_i)\in\mathcal K}
 \norm{\sum_{i=1}^kp_i\Tr_{\outQ}[A_i]}_\infty.
\end{align*}
Compactness of $\mathcal K$ is not required.

For the reverse inequality, take any $(A_i)\in\mathcal K$ and any $\rho\in\dop{=}(\stabE)$, possibly singular, and set $B_i=\rho^{1/2}A_i\rho^{1/2}$. The transformer inequality, applied successively, gives
\begin{align*}
 \rho^{1/2}J^{\mapE}\rho^{1/2}
 \leq B_1\#\bigl(B_2\#(\cdots\#(B_k\#\rho^{1/2}J^{\mapF}\rho^{1/2})\cdots)\bigr).
\end{align*}
Hence $(B_i)$ is feasible for the state program in Lemma~\ref{lemma:IM_equiv}, and
\begin{align*}
 \QIM{\alpha_k}\Dargs{\rho^{1/2}J^{\mapE}\rho^{1/2}}{\rho^{1/2}J^{\mapF}\rho^{1/2}}
 &\leq\sum_{i=1}^kp_i\Tr[B_i]\\
 &=\Tr\!\left[\rho\sum_{i=1}^kp_i\Tr_{\outQ}[A_i]\right]\\
 &\leq\norm{\sum_{i=1}^kp_i\Tr_{\outQ}[A_i]}_\infty.
\end{align*}
Taking the supremum over $\rho$ and then the infimum over $(A_i)$ proves the required equality. Applying $\frac{1}{\alpha_k-1}\log$ proves \eqref{eq:IM D channel div}.
\end{proof}
Replacing the norm of the weighted sum
in~\eqref{eq:IM Qchannel div} by the weighted sum of the norms gives a relaxation of the channel optimization. The resulting quantity yields the following chain rule.
\begin{theorem}
    \label{thrm:IM ch div chain rule}
    Let $k\in\mathbb{N}$, with $\alpha_k=1+\frac{1}{2^k-1}$, and $\mapE,\mapF \in \CPTP{\inIM}{\outQ}$; let $\rho,\sigma\in\dop{=}(\inIM\stabE)$, where $\stabE$ is a register isomorphic to $\inIM$. Then, 
    \begin{align}
        \label{eq:IM ch div chain rule}
        \DIM{\alpha_k} \Dargs{\left(\mapE \otimes \idmap_{\stabE}\right)[\rho]}{\left(\mapF \otimes \idmap_{\stabE}\right)[\sigma]}\leq\DIM{\alpha_k}\Dargs{\rho}{\sigma}+\hDIM{\alpha_k} \Dargs{\mapE \otimes \idmap_{\stabE}}{\mapF \otimes \idmap_{\stabE}}.
    \end{align}
    Here
    \begin{align}\label{eq:IM Drelax}
\hDIM{\alpha_k} \Dargs{\mapE \otimes \idmap_{\stabE}}{\mapF \otimes \idmap_{\stabE}} = \frac{1}{\alpha_k-1}\log\hQIM{\alpha_k}\Dargs{\mapE\otimes \idmap_{\stabE}}{\mapF \otimes \idmap_{\stabE}},
\end{align}
where
\begin{align}\label{eq:IM Qrelax}
    \hQIM{\alpha_k}\Dargs{\mapE\otimes \idmap_{\stabE}}{\mapF \otimes \idmap_{\stabE}}=&\inf_{A_1,\dots,A_k\in\Pos(\stabE\outQ)}\sum_{i=1}^kp_i\norm{\Tr_{\outQ}[A_i]}_\infty\nonumber\\
&\qquad\suchthat\quad J^\mapE\leq A_1\#(A_2\#(\dots\#(A_k\# J^\mapF)\dots)),
\end{align}
with $p_i=\alpha_k2^{-i}$ and $J^\mapE,J^\mapF\in\Pos(\stabE\outQ)$ being the unnormalized Choi matrices corresponding to $\mapE,\mapF$, respectively, in the convention of Theorem~\ref{thrm:channel div}.
\end{theorem}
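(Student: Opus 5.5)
We follow the Fawzi--Fawzi proof of the sharp-divergence chain rule (Fact~\ref{th:FF}), working with the weighted-sum form $\bQIM{\alpha_k}$ of Lemma~\ref{lemma:IM_equiv}. The goal is the multiplicative form
\[
\QIM{\alpha_k}\Dargs{(\mapE\otimes\idmap_\stabE)[\rho]}{(\mapF\otimes\idmap_\stabE)[\sigma]}\leq \hQIM{\alpha_k}\Dargs{\mapE\otimes\idmap_\stabE}{\mapF\otimes\idmap_\stabE}\cdot\QIM{\alpha_k}\Dargs{\rho}{\sigma}.
\]
Applying $\frac{1}{\alpha_k-1}\log$ to this gives \eqref{eq:IM ch div chain rule}. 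If $\supp(\rho)\not\subseteq\supp(\sigma)$ or $\supp(J^\mapE)\not\subseteq\supp(J^\mapF)$, the right-hand side is $+\infty$ and there is nothing to prove. Otherwise, Lemma~\ref{lemma:bounded-closed-feasible} gives an optimal tuple $(B_1,\dots,B_k)$ for the state problem \eqref{eq:IM_Q} with common trace $t=\QIM{\alpha_k}(\rho\Vert\sigma)$. We also fix a near-optimal feasible tuple $(A_1,\dots,A_k)$ for \eqref{eq:IM Qrelax}.

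The key structural step is to write the channel action through the Choi matrix: $(\mapE\otimes\idmap_\stabE)[X]=\Tr_{\inIM'}\!\left[(X^{T_{\inIM'}}\otimes\id_\outQ)\,J^\mapE\right]$, with a suitable relabelling of registers. We then build a candidate tuple for the output problem. Using the geometric-mean identity $Y\#Z = Y^{1/2}(\cdots)Y^{1/2}$, we put $C_i = \bigl(B_i^{1/2}\otimes\id\bigr) A_i \bigl(B_i^{1/2}\otimes\id\bigr)$, partially traced over the input copy, or equivalently the link product $C_i=B_i\star A_i$. Here $B_i$ plays the role of the input state and $A_i$ that of the Choi operator.

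To check feasibility, we want
\[
(\mapE\otimes\idmap)[\rho]\leq C_1\#(C_2\#(\cdots\#(C_k\#(\mapF\otimes\idmap)[\sigma])\cdots)).
\]
This should follow from two facts. The first is the constraint $J^\mapE\leq A_1\#(\cdots\#(A_k\#J^\mapF))$ combined with $\rho\leq B_1\#(\cdots\#(B_k\#\sigma))$. The second is the key inequality that the link product is jointly super-multiplicative over geometric means, $(X\#Y)\star(Z\#W)\leq (X\star Z)\#(Y\star W)$, applied $k$ times in a nested fashion. That inequality is exactly what drives \cite[Prop.~4.5]{FF21sharp}, where it is proved via the transformer inequality plus joint concavity and monotonicity of $\#$ under positive maps (the partial trace). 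Since all the means here are ordinary $\#$, the argument there transfers one level at a time by induction on $k$.

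For the objective, $\Tr[C_i]=\Tr\!\left[B_i^{T}\Tr_\outQ[A_i]\right]\leq\norm{\Tr_\outQ[A_i]}_\infty\Tr[B_i]=\norm{\Tr_\outQ[A_i]}_\infty\, t$. Plugging into \eqref{eq:IM_equiv_Q} gives an output value of at most $\alpha_k\sum_i 2^{-i}\Tr[C_i]\leq t\sum_i p_i\norm{\Tr_\outQ[A_i]}_\infty$. This is precisely why the relaxation $\hQIM{\alpha_k}$, with the sum of norms rather than the norm of the sum, appears: each $B_i$ is different, so we cannot pair a single $\rho$ against the weighted sum. The equal-trace normalization of the $B_i$ is what lets the $\Tr[B_i]$ factor out as a common $t$. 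Taking the infimum over $(A_i)$ finishes the proof.

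The main obstacle is the nested link-product/geometric-mean inequality, in particular handling singular operators and the transpose and partial-trace bookkeeping inside each level of nesting. We would first prove it for positive-definite operators via the variational characterization $X\#Y=\max\{Z: \begin{psmallmatrix}X&Z\\Z&Y\end{psmallmatrix}\geq0\}$, which makes the link product manifestly preserve block positivity. We would then extend to general operators by the decreasing approximations used in Definition~\ref{def:IM}.
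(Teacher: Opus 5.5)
Your proposal is correct and is essentially the paper's proof. The paper likewise takes an optimal equal-trace tuple $(B_i)$ for $\rho,\sigma$ and any feasible $(A_i)$, and sets $C_i=d\,(\bra{\Phi^+}\otimes\id)(B_i\otimes A_i)(\ket{\Phi^+}\otimes\id)$, which is your link product. It gets feasibility by tensoring the two constraints, using $(X\#Y)\otimes(Z\#W)=(X\otimes Z)\#(Y\otimes W)$, and applying the transformer inequality level by level; it then bounds $\Tr[C_i]\leq\lambda\norm{\Tr_{\outQ}[A_i]}_\infty$.

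Two small remarks. First, the sandwiched expression $(B_i^{1/2}\otimes\id)A_i(B_i^{1/2}\otimes\id)$ traced over the input is not equivalent to the link product: it has no transpose, and $B_i$ also acts on $\stabE$. So you should commit to the link product only. Second, no separate positive-definite approximation is needed, since the tensor identity and the transformer inequality already hold for all positive semidefinite operators.
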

\begin{proof}
The IM divergence is nonnegative on normalized states (an immediate consequence of using data-processing to pinch the states and reduce to the classical case), 
so Theorem~\ref{thrm:channel div} and the triangle inequality for $\infty$-norm give
\begin{align*}
 1\leq\QIM{\alpha_k}\Dargs{\mapE\otimes\idmap_{\stabE}}{\mapF\otimes\idmap_{\stabE}}
 \leq\hQIM{\alpha_k}\Dargs{\mapE\otimes\idmap_{\stabE}}{\mapF\otimes\idmap_{\stabE}}.
\end{align*}
Thus both terms on the right-hand side of \eqref{eq:IM ch div chain rule} are nonnegative. If either is infinite, the claim is immediate. Otherwise, let $(B_1,\ldots,B_k)$, with $B_i\in\Pos(\inIM\stabE)$
for every $i$, be an optimal tuple for \eqref{eq:IM_Q} applied to $\rho,\sigma$. By Lemma~\ref{lemma:bounded-closed-feasible}, such a tuple exists, and
\begin{align*}
 \rho\leq B_1\#(B_2\#(\dots\#(B_k\#\sigma)\dots)),\qquad
 \Tr[B_1]=\cdots=\Tr[B_k]=\lambda\defvar \QIM{\alpha_k}\Dargs{\rho}{\sigma}.
\end{align*}
Let $(A_1,\ldots,A_k)$ be any feasible tuple in \eqref{eq:IM Qrelax}.

To distinguish the Choi input from the reference $\stabE$ of $\rho,\sigma$, we introduce a separate register ${\inIM}^{\prime}\simeq \inIM$ and relabel only the Choi input and the $A_i$ accordingly; the input states $\rho,\sigma$ and the $B_i$ remain on $\inIM\stabE$. Thus $A_i,J^{\mapE},J^{\mapF}$ now act on ${\inIM}^{\prime}\outQ$. Writing $d\defvar\dim(\inIM)$, we define
\begin{align*}
 C_i\defvar d\,
 \bigl(\bra{\Phi^+}_{{\inIM}{\inIM}^{\prime}}\otimes\id_{\stabE\outQ}\bigr)
 (B_i\otimes A_i)
 \bigl(\ket{\Phi^+}_{{\inIM}{\inIM}^{\prime}}\otimes\id_{\stabE\outQ}\bigr).
\end{align*}
Here $\ket{\Phi^+}_{{\inIM}{\inIM}^{\prime}}=d^{-1/2}\sum_a\ket a_{\inIM}\ket a_{{\inIM}^{\prime}}$ is normalized in the bases defining the Choi matrices, and $C_i\in\Pos(\stabE\outQ)$. In these contractions, tensor factors are reordered as $\inIM{\inIM}^{\prime}\stabE\outQ$; the contraction acts only on $\inIM{\inIM}^{\prime}$ and retains $\stabE\outQ$. By the unnormalized Choi convention, up to the canonical ordering of the output registers,
\begin{align*}
 (\mapE\otimes\idmap_{\stabE})[\rho]
 &=d\,\bigl(\bra{\Phi^+}_{{\inIM}{\inIM}^{\prime}}\otimes\id_{\stabE\outQ}\bigr)
 (\rho\otimes J^{\mapE})
 \bigl(\ket{\Phi^+}_{{\inIM}{\inIM}^{\prime}}\otimes\id_{\stabE\outQ}\bigr),\\
 (\mapF\otimes\idmap_{\stabE})[\sigma]
 &=d\,\bigl(\bra{\Phi^+}_{{\inIM}{\inIM}^{\prime}}\otimes\id_{\stabE\outQ}\bigr)
 (\sigma\otimes J^{\mapF})
 \bigl(\ket{\Phi^+}_{{\inIM}{\inIM}^{\prime}}\otimes\id_{\stabE\outQ}\bigr).
\end{align*}
Tensoring the two feasible inequalities and using the tensor-product identity for the geometric mean gives
\begin{align*}
 \rho\otimes J^{\mapE}
 \leq(B_1\otimes A_1)\#\bigl((B_2\otimes A_2)\#
       (\cdots\#((B_k\otimes A_k)\#(\sigma\otimes J^{\mapF}))\cdots)\bigr).
\end{align*}
Applying the map
\begin{align*}
 X\longmapsto d\,
 \bigl(\bra{\Phi^+}_{{\inIM}{\inIM}^{\prime}}\otimes\id_{\stabE\outQ}\bigr)
 X
 \bigl(\ket{\Phi^+}_{{\inIM}{\inIM}^{\prime}}\otimes\id_{\stabE\outQ}\bigr)
\end{align*}
and successively using the transformer inequality, we obtain
\begin{align*}
 (\mapE\otimes\idmap_{\stabE})[\rho]
 \leq C_1\#\bigl(C_2\#(\cdots\#(C_k\#(\mapF\otimes\idmap_{\stabE})[\sigma])\cdots)\bigr).
\end{align*}
Thus $(C_1,\ldots,C_k)$ is feasible for the state program in Lemma~\ref{lemma:IM_equiv}.

The contraction also gives
\begin{align*}
 \Tr[C_i]
 &=\Tr\!\left[(\Tr_{\stabE}[B_i])^{\mathsf T}\Tr_{\outQ}[A_i]\right]\\
 &\leq\norm{\Tr_{\outQ}[A_i]}_\infty\Tr[B_i]
 =\lambda\norm{\Tr_{\outQ}[A_i]}_\infty.
\end{align*}
Here $\Tr_{\stabE}[B_i]$ acts on $\inIM$, while $\Tr_{\outQ}[A_i]$ acts on ${\inIM}^{\prime}$; the transpose and the identification of $\inIM$ with ${\inIM}^{\prime}$ refer to the basis defining $\ket{\Phi^+}$. The inequality follows from positivity and $\Tr_{\outQ}[A_i]\leq\norm{\Tr_{\outQ}[A_i]}_\infty \id_{{\inIM}^{\prime}}$. Consequently,
\begin{align*}
 \QIM{\alpha_k}\Dargs{(\mapE\otimes\idmap_{\stabE})[\rho]}{(\mapF\otimes\idmap_{\stabE})[\sigma]}
 &\leq\sum_{i=1}^kp_i\Tr[C_i]\\
 &\leq\lambda\sum_{i=1}^kp_i\norm{\Tr_{\outQ}[A_i]}_\infty.
\end{align*}
Taking the infimum over feasible $(A_1,\ldots,A_k)$ and then applying $\frac{1}{\alpha_k-1}\log$ proves the claim.
\end{proof}
For rank-1 projective measure-and-prepare channels with a common measurement basis, the relaxation introduced above is exact. The following lemma expresses both channel quantities in terms of the prepared states.
\begin{lemma}
    Let $k\in\mathbb{N}$, with $\alpha_k=1+\frac{1}{2^k-1}$. Let $\mapE,\mapF\in\CPTP{\inIM}{\outQ}$ be such that $\mapE=\Tr_{\inIM}\circ\widetilde{\mapE}$ and $\mapF=\Tr_{\inIM}\circ\widetilde{\mapF}$ with $\widetilde{\mapE},\widetilde{\mapF}\in \CPTP{\inIM}{\inIM\outQ}$ being projective measure-and-prepare channels of the form
    \begin{align}
        \widetilde{\mapE}[\rho_{\inIM}]=\sum_j\ketbra{j}{j}_{\inIM}\rho_{\inIM}\ketbra{j}{j}_{\inIM}\otimes\sigma_{\outQ|j},\qquad\widetilde{\mapF}[\rho_{\inIM}]=\sum_j\ketbra{j}{j}_{\inIM}\rho_{\inIM}\ketbra{j}{j}_{\inIM}\otimes\omega_{\outQ|j},
    \end{align}
    for some $\sigma_{\outQ|j},\omega_{\outQ|j}\in\dop{=}(\outQ)$.
    Then, letting $\stabE$ be a register isomorphic to $\inIM$,
    \begin{align}\label{eq:im-block-equality}
        \hDIM{\alpha_k} \Dargs{\mapE \otimes \idmap_{\stabE}}{\mapF \otimes \idmap_{\stabE}}=\DIM{\alpha_k} \Dargs{\mapE \otimes \idmap_{\stabE}}{\mapF \otimes \idmap_{\stabE}}=\max_j\DIM{\alpha_k}\Dargs{\sigma_{\outQ|j}}{\omega_{\outQ|j}},
    \end{align}
    with $\hDIM{\alpha_k}$ as defined in~\eqref{eq:IM Drelax}.
\end{lemma}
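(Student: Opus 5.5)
We prove the chain of inequalities $\max_j \DIM{\alpha_k}(\sigma_{\outQ|j}\Vert\omega_{\outQ|j}) \le \DIM{\alpha_k}(\mapE\otimes\idmap_\stabE\Vert\mapF\otimes\idmap_\stabE) \le \hDIM{\alpha_k}(\mapE\otimes\idmap_\stabE\Vert\mapF\otimes\idmap_\stabE) \le \max_j \DIM{\alpha_k}(\sigma_{\outQ|j}\Vert\omega_{\outQ|j})$. The middle inequality was already noted in the proof of Theorem~\ref{thrm:IM ch div chain rule}: Theorem~\ref{thrm:channel div} expresses $\QIM{\alpha_k}$ of the stabilized channels as the infimum of the norm of a weighted sum, and the triangle inequality bounds this by $\hQIM{\alpha_k}$.

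For the first inequality, feed the basis state $\ket{j}\bra{j}_\inIM$ (tensored with any state on $\stabE$) into both channels. The outputs are $\sigma_{\outQ|j}\otimes(\cdot)$ and $\omega_{\outQ|j}\otimes(\cdot)$, with the same state on $\stabE$. The IM divergence is invariant under appending a common state: data processing in both directions (append, then partial trace) gives this from~\cite[Proposition~3.3]{BFF21}. Taking the maximum over $j$ then gives the lower bound on the channel divergence by its definition as a supremum.

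For the last inequality we construct a feasible tuple for~\eqref{eq:IM Qrelax}. First compute the Choi matrices. Since $\mapE[\ketbra{a}{b}]=\delta_{ab}\sigma_{\outQ|a}$, we have
\[
J^\mapE=\sum_j \pure{j}_\stabE\otimes\sigma_{\outQ|j},\qquad J^\mapF=\sum_j\pure{j}_\stabE\otimes\omega_{\outQ|j},
\]
so both are block diagonal in the $\stabE$ basis. For each $j$, take an optimal tuple $(B^j_1,\dots,B^j_k)$ for \eqref{eq:IM_Q} applied to $(\sigma_{\outQ|j},\omega_{\outQ|j})$; it exists by Lemma~\ref{lemma:bounded-closed-feasible}, provided support inclusion holds. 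If support inclusion fails for some $j$, the right-hand side is $+\infty$ and there is nothing to prove. This tuple has all traces equal to $\lambda_j=\QIM{\alpha_k}(\sigma_{\outQ|j}\Vert\omega_{\outQ|j})$. Set $A_i=\sum_j\pure{j}\otimes B^j_i$. Because the geometric mean of block-diagonal operators with a common block structure acts blockwise (direct-sum property of $\#$, which follows from its definition by functional calculus and the positive-definite approximations), the nested mean is $\sum_j\pure{j}\otimes B^j_1\#(\cdots\#(B^j_k\#\omega_{\outQ|j}))\ge J^\mapE$, so the tuple is feasible. Then $\Tr_\outQ[A_i]=\sum_j\lambda_j\pure{j}$, hence $\norm{\Tr_\outQ[A_i]}_\infty=\max_j\lambda_j$, and the objective equals $\sum_i p_i\max_j\lambda_j=\max_j\lambda_j$ because $\sum_i p_i=\alpha_k(1-2^{-k})=1$. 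Applying $\frac1{\alpha_k-1}\log$, which is monotone, gives $\hDIM{\alpha_k}\le\max_j\DIM{\alpha_k}(\sigma_{\outQ|j}\Vert\omega_{\outQ|j})$, closing the chain.

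The main obstacle is rigorously justifying that the nested geometric mean respects the direct-sum decomposition when blocks may be singular, since $\#$ on singular operators is defined via decreasing limits. I would handle this by approximating each $B^j_i$ and $\omega_{\outQ|j}$ by $+\epsilon\id$, which keeps the blocks positive definite. In that regime the blockwise identity is immediate from the formula, and continuity from above then passes it to the limit blockwise. A minor point is that the first inequality uses a non-entangled input with the stabilizer, which is allowed since the supremum ranges over all states on $\inIM\stabE$.
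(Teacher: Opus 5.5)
Your proposal is correct and establishes the same chain of inequalities as the paper. Two of the three steps coincide with the paper's proof:
\begin{itemize}
\item the middle inequality, obtained from the triangle inequality on the common feasible set;
\item the upper bound on the relaxed quantity, obtained from the block-diagonal tuple $A_i=\sum_j\pure{j}\otimes B^j_i$ built from optimal state tuples, together with $\sum_i p_i=1$.
\end{itemize}

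You differ from the paper in the lower bound $\max_j D^{\mathrm{IM}}_{(\alpha_k)}(\sigma_{O|j}\Vert\omega_{O|j})\leq D^{\mathrm{IM}}_{(\alpha_k)}(\mathcal{E}\otimes\idmap_R\Vert\mathcal{F}\otimes\idmap_R)$. The paper proves it inside the dual Choi-matrix program of Theorem~\ref{thrm:channel div}. It takes an arbitrary feasible tuple $(A_1,\dots,A_k)$ and compresses each $A_i$ to the $j$-th diagonal block on $R$. The transformer inequality then makes the compressed tuple feasible for the weighted state program of Lemma~\ref{lemma:IM_equiv}. Finally it bounds $\bra{j}\sum_i p_i\Tr_O[A_i]\ket{j}$ by $\norm{\sum_i p_i\Tr_O[A_i]}_\infty$.

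You instead use the primal definition directly: feed $\pure{j}_Q\otimes\tau_R$ and apply data processing under $\Tr_R$. Your route is more elementary, since it needs neither the compression/transformer argument nor the weighted reformulation. Combined with the upper bound, it also shows the supremum is attained on product inputs. The paper's route keeps all three quantities at the level of the optimization programs over Choi-matrix tuples, so the equality chain is proved there. Theorem~\ref{thrm:channel div} then only serves to translate back to the primal channel divergence. Since you still need that theorem for the middle inequality, neither route avoids it.

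One small point the paper spells out and you skip concerns your block-diagonal formulas for $J^{\mathcal{E}},J^{\mathcal{F}}$. These presume the Choi basis is the measurement basis $\{\ket{j}\}$. This is harmless: a change of Choi basis is a unitary conjugation on $R$, which preserves both feasibility and each $\norm{\Tr_O[A_i]}_\infty$.
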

\begin{proof}
First note that, for any $A_1,\ldots,A_k\in\Pos(\stabE\outQ)$, the triangle inequality gives
\begin{align*}
 \norm{\sum_{i=1}^kp_i\Tr_{\outQ}[A_i]}_\infty
 \leq\sum_{i=1}^kp_i\norm{\Tr_{\outQ}[A_i]}_\infty,
 \qquad p_i=\alpha_k2^{-i}.
\end{align*}
Taking the infimum over the common feasible set in~\eqref{eq:IM Qchannel div} and~\eqref{eq:IM Qrelax}, we conclude that
\begin{align}\label{eq:im-block-order}
 \QIM{\alpha_k}\Dargs{\mapE\otimes\idmap_{\stabE}}{\mapF\otimes\idmap_{\stabE}}
 \leq\hQIM{\alpha_k}\Dargs{\mapE\otimes\idmap_{\stabE}}{\mapF\otimes\idmap_{\stabE}}.
\end{align}
It now suffices to prove the following two inequalities:
\begin{align}
 \QIM{\alpha_k}\Dargs{\mapE\otimes\idmap_{\stabE}}{\mapF\otimes\idmap_{\stabE}}
 &\geq\max_j\QIM{\alpha_k}\Dargs{\sigma_{\outQ|j}}{\omega_{\outQ|j}},
 \label{eq:im-block-lower}\\
 \hQIM{\alpha_k}\Dargs{\mapE\otimes\idmap_{\stabE}}{\mapF\otimes\idmap_{\stabE}}
 &\leq\max_j\QIM{\alpha_k}\Dargs{\sigma_{\outQ|j}}{\omega_{\outQ|j}}.
 \label{eq:im-block-upper}
\end{align}

We begin with~\eqref{eq:im-block-lower}. Choose the basis defining the unnormalized Choi operators to be the measurement basis. This choice does not affect either channel optimization, since a change of Choi basis corresponds to a unitary conjugation on $\stabE$. Then
\begin{align}\label{eq:im-block-choi}
 J^{\mapE}=\sum_j\ketbra{j}{j}_{\stabE}\otimes\sigma_{\outQ|j},
 \qquad
 J^{\mapF}=\sum_j\ketbra{j}{j}_{\stabE}\otimes\omega_{\outQ|j}.
\end{align}
If the optimization in~\eqref{eq:IM Qchannel div} is infeasible, its value is $+\infty$ and~\eqref{eq:im-block-lower} holds immediately. Otherwise, let $(A_1,\ldots,A_k)$ be a feasible tuple, so that
\begin{align}\label{eq:im-block-channel-constraint}
 J^{\mapE}\leq A_1\#(A_2\#(\dots\#(A_k\#J^{\mapF})\dots)).
\end{align}
For every $i\in\{1,\ldots,k\}$ and every $j$, define
\begin{align}\label{eq:im-block-compression}
 A_i^j\defvar
 \bigl(\bra{j}_{\stabE}\otimes\id_{\outQ}\bigr)
 A_i
 \bigl(\ket{j}_{\stabE}\otimes\id_{\outQ}\bigr).
\end{align}
These operators belong to $\Pos(\outQ)$. Applying the transformer inequality and monotonicity successively to~\eqref{eq:im-block-channel-constraint}, we obtain
\begin{align*}
 \sigma_{\outQ|j}
 &=\bra{j}_{\stabE}J^{\mapE}\ket{j}_{\stabE}\\
 &\leq A_1^j\#\bigl(A_2^j\#(\dots\#(A_k^j\#
       \bra{j}_{\stabE}J^{\mapF}\ket{j}_{\stabE})\dots)\bigr)\\
 &=A_1^j\#(A_2^j\#(\dots\#(A_k^j\#\omega_{\outQ|j})\dots)),
\end{align*}
where identities on $\outQ$ are left implicit and the equalities follow from~\eqref{eq:im-block-choi}. Hence $(A_1^j,\ldots,A_k^j)$ is feasible for the weighted state optimization in~\eqref{eq:IM_equiv_Q}. By Lemma~\ref{lemma:IM_equiv},
\begin{align*}
 \QIM{\alpha_k}\Dargs{\sigma_{\outQ|j}}{\omega_{\outQ|j}}
 &\leq\sum_{i=1}^kp_i\Tr[A_i^j]\\
 &=\sum_{i=1}^kp_i\bra{j}_{\stabE}\Tr_{\outQ}[A_i]\ket{j}_{\stabE}\\
 &=\bra{j}_{\stabE}\sum_{i=1}^kp_i\Tr_{\outQ}[A_i]\ket{j}_{\stabE}\\
 &\leq\norm{\sum_{i=1}^kp_i\Tr_{\outQ}[A_i]}_\infty.
\end{align*}
Since this holds for every $j$, we have
\begin{align*}
 \max_j\QIM{\alpha_k}\Dargs{\sigma_{\outQ|j}}{\omega_{\outQ|j}}
 \leq\norm{\sum_{i=1}^kp_i\Tr_{\outQ}[A_i]}_\infty.
\end{align*}
Taking the infimum over all feasible tuples $(A_1,\ldots,A_k)$ proves~\eqref{eq:im-block-lower}.

We now prove~\eqref{eq:im-block-upper}. First, if
$\supp(\sigma_{\outQ|j})\nsubseteq\supp(\omega_{\outQ|j})$
for some $j$, then
$\QIM{\alpha_k}\Dargs{\sigma_{\outQ|j}}{\omega_{\outQ|j}}=+\infty$.
In this case,~\eqref{eq:im-block-lower} and~\eqref{eq:im-block-order} imply that both channel quantities are also infinite, and the claim follows. We may therefore assume
\begin{align*}
 \supp(\sigma_{\outQ|j})\subseteq\supp(\omega_{\outQ|j})
 \qquad\text{for every }j.
\end{align*}

For each $j$, let $(B_1^j,\ldots,B_k^j)$ be an optimal tuple for~\eqref{eq:IM_Q} applied to $\sigma_{\outQ|j},\omega_{\outQ|j}$. Such a tuple exists by Lemma~\ref{lemma:bounded-closed-feasible}. Thus
\begin{align}\label{eq:im-block-state-constraint}
 \sigma_{\outQ|j}\leq B_1^j\#(B_2^j\#(\dots\#(B_k^j\#\omega_{\outQ|j})\dots)),
\end{align}
and
\begin{align}\label{eq:im-block-equal-traces}
 \Tr[B_1^j]=\Tr[B_2^j]=\cdots=\Tr[B_k^j]
 =\QIM{\alpha_k}\Dargs{\sigma_{\outQ|j}}{\omega_{\outQ|j}}.
\end{align}
For every $i\in\{1,\ldots,k\}$, define
\begin{align}\label{eq:im-block-constructed-tuple}
 B_i\defvar\sum_j\ketbra{j}{j}_{\stabE}\otimes B_i^j.
\end{align}
Since the geometric mean respects direct sums, we have
\begin{align*}
 &B_1\#(B_2\#(\dots\#(B_k\#J^{\mapF})\dots))\\
 &=\sum_j\ketbra{j}{j}_{\stabE}\otimes
       \bigl(B_1^j\#(B_2^j\#(\dots\#(B_k^j\#\omega_{\outQ|j})\dots))\bigr)\\
 &\geq\sum_j\ketbra{j}{j}_{\stabE}\otimes\sigma_{\outQ|j}\\
 &=J^{\mapE},
\end{align*}
where the inequality follows from~\eqref{eq:im-block-state-constraint} and the last equality follows from~\eqref{eq:im-block-choi}. Thus $(B_1,\ldots,B_k)$ is feasible for~\eqref{eq:IM Qrelax}.

Furthermore, for every $i$,
\begin{align}\label{eq:im-block-norm}
 \norm{\Tr_{\outQ}[B_i]}_\infty
 &=\norm{\sum_j\Tr[B_i^j]\ketbra{j}{j}_{\stabE}}_\infty\nonumber\\
 &=\norm{\sum_j
    \QIM{\alpha_k}\Dargs{\sigma_{\outQ|j}}{\omega_{\outQ|j}}
    \ketbra{j}{j}_{\stabE}}_\infty\nonumber\\
 &=\max_j\QIM{\alpha_k}\Dargs{\sigma_{\outQ|j}}{\omega_{\outQ|j}}.
\end{align}
The first equality follows from~\eqref{eq:im-block-constructed-tuple}, the second follows from~\eqref{eq:im-block-equal-traces}, and the last uses that the operator is diagonal with nonnegative entries. Consequently,
\begin{align*}
 \hQIM{\alpha_k}\Dargs{\mapE\otimes\idmap_{\stabE}}{\mapF\otimes\idmap_{\stabE}}
 &\leq\sum_{i=1}^kp_i\norm{\Tr_{\outQ}[B_i]}_\infty\\
 &=\max_j\QIM{\alpha_k}\Dargs{\sigma_{\outQ|j}}{\omega_{\outQ|j}},
\end{align*}
where the equality follows from~\eqref{eq:im-block-norm} and $\sum_{i=1}^kp_i=1$. This proves~\eqref{eq:im-block-upper}. Combining~\eqref{eq:im-block-order},~\eqref{eq:im-block-lower}, and~\eqref{eq:im-block-upper}, and taking the logarithm, proves~\eqref{eq:im-block-equality}.
\end{proof}

\end{document}